\tikzset{snake it/.style={decorate, decoration={snake,segment length=7pt, amplitude=1pt}}}
\tikzstyle{vertex}=[circle, draw,fill=gray!30, inner sep=0pt, minimum size=16pt]
\tikzstyle{svertex}=[circle, draw,fill=gray!30, inner sep=0pt, minimum size=10pt]
\tikzstyle{sgvertex}=[circle, draw,fill=gray!15, inner sep=0pt, minimum size=10pt]
\tikzstyle{ssgvertex}=[circle, draw,fill=gray!15, inner sep=0pt, minimum size=6pt]
\tikzstyle{sssgvertex}=[circle, draw,fill=gray!15, inner sep=0pt, minimum size=4pt]
\tikzstyle{sssagvertex}=[circle, draw,fill=gray!15, inner sep=0pt, minimum size=3.2pt]
\tikzstyle{ssssgvertex}=[circle, draw,fill=gray!15, inner sep=0pt, minimum size=2pt]
\tikzstyle{cutb} = [fill=blue!30]
\tikzset{
>= stealth'}
\newcommand{\bl}[1]{ #1}
\newcommand{\bll}[1]{ #1}
\def\showauthornotes{1}
\def\showkeys{0}
\def\showdraftbox{0}
\def\showcolorlinks{0}
\def\usemicrotype{1}
\def\showfixme{0}
\newtheorem{theorem}{Theorem}[section]
\newtheorem*{theorem*}{Theorem}
\newtheorem*{proposition*}{Proposition}
\newtheorem{lemma}[theorem]{Lemma}
\newtheorem*{lemma*}{Lemma}
\newtheorem{corollary}[theorem]{Corollary}
\newtheorem*{conjecture*}{Conjecture}
\newtheorem{fact}[theorem]{Fact}
\newtheorem*{fact*}{Fact}
\newtheorem*{hypothesis*}{Hypothesis}
\theoremstyle{definition}
\newtheorem{definition}[theorem]{Definition}
\newtheorem{example}[theorem]{Example}
\newtheorem{openquestion}[theorem]{Open Question}
\theoremstyle{remark}
\newtheorem{claim}[theorem]{Claim}
\newtheorem*{claim*}{Claim}
\newtheorem{remark}[theorem]{Remark}
\newtheorem*{remark*}{Remark}
\newtheorem*{observation*}{Observation}
\definecolor{OliveGreen}{rgb}{0,0.6,0}
\newcommand{\savehyperref}[2]{\texorpdfstring{\hyperref[#1]{#2}}{#2}}
\newcommand{\Sref}[1]{\hyperref[#1]{\S\ref*{#1}}}
\newcommand{\Authornote}[2]{{\sffamily\small\color{red}{[#1: #2]}}}
\newcommand{\Authornotecolored}[3]{{\sffamily\small\color{#1}{[#2: #3]}}}
\newcommand{\Authorcomment}[2]{{\sffamily\small\color{gray}{[#1: #2]}}}
\newcommand{\Authorstartcomment}[1]{\sffamily\small\color{gray}[#1: }
\newcommand{\Authorfnote}[2]{\footnote{\color{red}{#1: #2}}}
\newcommand{\Authorfixme}[1]{\Authornote{#1}{\textbf{??}}}
\newcommand{\Authormarginmark}[1]{\marginpar{\textcolor{red}{\fbox{\Large #1:!}}}}
\newcommand{\Authornote}[2]{}
\newcommand{\Authornotecolored}[3]{}
\newcommand{\Authorcomment}[2]{}
\newcommand{\Authorstartcomment}[1]{}
\newcommand{\Authorfnote}[2]{}
\newcommand{\Authorfixme}[1]{}
\newcommand{\Authormarginmark}[1]{}
\newcommand{\Brac}[1]{\left[#1\right]}
\newcommand{\textparen}[1]{\text{(#1)}}
\newcommand{\because}[1]{\textparen{because #1}}
\renewcommand{\because}[1]{\textparen{because #1}}
\renewcommand{\vec}[1]{{\bm{#1}}}
\newcommand\bdot\bullet
\newcommand{\cost}[0]{w}
\DeclareMathOperator{\lb}{lb}
\DeclareMathOperator{\LP}{LP}
\DeclareMathOperator{\DUAL}{DUAL}
\DeclareMathOperator{\supp}{supp}
\newcommand{\Z}{\mathbb Z}
\newcommand{\R}{\mathbb R}
\newcommand{\cA}{\mathcal A}
\newcommand{\cC}{\mathcal C}
\newcommand{\cI}{\mathcal I}
\newcommand{\cL}{\mathcal L}
\renewcommand{\leq}{\leqslant}
\renewcommand{\le}{\leqslant}
\renewcommand{\geq}{\geqslant}
\renewcommand{\ge}{\geqslant}
\newcommand{\draftbox}{\begin{center}
  \fbox{\begin{minipage}{2in}\begin{center}\Large\textsc{Working Draft}\\Please do not distribute\end{center}\end{minipage}}\end{center}
\vspace{0.2cm}}
\newcommand{\draftbox}{}
\let\epsilon=\varepsilon
\numberwithin{equation}{section}
\newcommand{\MYstore}[2]{\global\expandafter \def \csname MYMEMORY #1 \endcsname{#2}}
\newcommand{\MYload}[1]{\csname MYMEMORY #1 \endcsname }
\newcommand{\MYnewlabel}[1]{\newcommand\MYcurrentlabel{#1}\MYoldlabel{#1}}
\newcommand{\MYdummylabel}[1]{}
\newcommand{\torestate}[1]{\let\MYoldlabel\label \let\label\MYnewlabel #1\MYstore{\MYcurrentlabel}{#1}\let\label\MYoldlabel }
\newcommand{\restatetheorem}[1]{\let\MYoldlabel\label
  \let\label\MYdummylabel
  \begin{theorem*}[Restatement of \prettyref{#1}]
    \MYload{#1}
  \end{theorem*}
  \let\label\MYoldlabel
}
\newcommand{\restatelemma}[1]{\let\MYoldlabel\label
  \let\label\MYdummylabel
  \begin{lemma*}[Restatement of \prettyref{#1}]
    \MYload{#1}
  \end{lemma*}
  \let\label\MYoldlabel
}
\newcommand{\restateprop}[1]{\let\MYoldlabel\label
  \let\label\MYdummylabel
  \begin{proposition*}[Restatement of \prettyref{#1}]
    \MYload{#1}
  \end{proposition*}
  \let\label\MYoldlabel
}
\newcommand{\restatefact}[1]{\let\MYoldlabel\label
  \let\label\MYdummylabel
  \begin{fact*}[Restatement of \prettyref{#1}]
    \MYload{#1}
  \end{fact*}
  \let\label\MYoldlabel
}
\newcommand{\restate}[1]{\let\MYoldlabel\label
  \let\label\MYdummylabel
  \MYload{#1}
  \let\label\MYoldlabel
}
\let\origparagraph\paragraph
\renewcommand{\paragraph}[1]{\origparagraph{#1.}}
\newcommand{\one}{\mathbbm{1}}
\DeclareMathOperator*{\con}{\cC}
\DeclareMathOperator*{\low}{\mathrm{ind}}
\DeclareMathOperator{\valu}{value}
\DeclareMathOperator{\lbs}{\overline{\lb}}
\newcommand{\EPC}{Subtour Partition Cover}
\newcommand{\ceil}[1]{\left\lceil #1 \right\rceil}
\newcommand{\xs}{x}
\newcommand{\cLmax}{\cL_\textrm{max}}
\newcommand{\level}[0]{\textrm{level}}
\newcommand{\deltaval}{0.78}
\newcommand{\finalval}{506}
\newcommand{\finalvalFS}{1012}
\newcommand{\finalintegralitygap}{319}
\newcommand{\finalintegralitygapKTV}{1273}
\newcommand{\finalvalKTV}{2021}
\newcommand{\hide}[1]{}
\newcommand{\nw}[0]{\ensuremath{\alpha_{\textrm{\tiny S}}}\xspace}
\newcommand{\wD}{\beta}
\title{\bf A Constant-Factor Approximation Algorithm for the Asymmetric Traveling Salesman Problem\footnote{
	This paper is the joint journal version of the conference
        publications \cite{Svensson15} and \cite{SvenssonTV18}.}}
	\newcommand{\fgsfdsspace}{\hspace{-0.8em}}
	\author{\fgsfdsspace Ola Svensson\thanks{\'Ecole Polytechnique
            F\'ed\'erale de Lausanne. Supported by 
the European Research Council (ERC) under the European Union's Horizon 2020 research and innovation programme (grant agreement 335288--OptApprox) and the Swiss National Science Foundation project
200021-184656 ``Randomness in Problem Instances and Randomized Algorithms''.}\\
  \texttt{ola.svensson@epfl.ch} \fgsfdsspace\\
	\and
	\fgsfdsspace Jakub Tarnawski\thanks{\'Ecole Polytechnique
            F\'ed\'erale de Lausanne. Supported by 
the European Research Council (ERC) under the European Union's Horizon 2020 research and innovation programme (grant agreement 335288--OptApprox).} \\ \texttt{jakub.tarnawski@gmail.com}\fgsfdsspace \\
	\and
	\fgsfdsspace L\'aszl\'o A. V\'egh\thanks{London School of
          Economics and Political Science.  Supported by EPSRC First Grant EP/M02797X/1 and 
the European Research Council (ERC) under the European Union's Horizon 2020 research and innovation programme (grant agreement 757481--ScaleOpt).}\\ \texttt{l.vegh@lse.ac.uk} 	\fgsfdsspace \\
}
\begin{document}

	\maketitle
	\draftbox
	\thispagestyle{empty}

\begin{abstract}
  We give a constant-factor approximation algorithm for the asymmetric
  traveling salesman problem (ATSP). Our approximation guarantee is analyzed with
  respect to the standard LP relaxation, and thus our result
  confirms the conjectured  constant integrality gap of that relaxation.

The main idea of our approach is a reduction to Subtour Partition
Cover, an easier problem obtained by significantly relaxing the general connectivity
requirements into local connectivity conditions. We first show that
any algorithm for Subtour Partition Cover
 can be turned into an algorithm 
for ATSP while only losing a small constant factor in the performance guarantee.
Next, we present a reduction from general ATSP
instances to structured instances, on which we then solve  Subtour Partition Cover, yielding our constant-factor approximation algorithm for ATSP. 
\end{abstract}

	\medskip
	\noindent
	{\small \textbf{Keywords:}
	approximation algorithms, asymmetric traveling salesman problem, combinatorial optimization, linear programming
	}

	\clearpage

\tableofcontents

\section{Introduction}
The traveling salesman problem---to find the shortest tour visiting
$n$ given cities---is  one of the best-known NP-hard optimization problems.

Without any assumptions on the distances, a simple reduction from the problem
of deciding whether a graph is Hamiltonian shows that it is NP-hard to
approximate the shortest tour to within any factor.   
Therefore it is common to relax the problem by allowing the tour to visit
cities more than once. This is equivalent to assuming that the distances satisfy
the triangle inequality:  the distance from  city $i$ to $k$ is no larger than
the distance from $i$ to $j$ plus the distance from $j$ to $k$. 
All results mentioned and proved in this paper refer to this
setting.

If we also assume the distances to be symmetric, then Christofides' classic
algorithm from 1976~\cite{Ch76},
also discovered independently by Serdyukov~\cite{Serdyukov78,Bevern2020},
is guaranteed  to find a tour of length at most
$\nicefrac{3}{2}$ times the optimum. Improving this approximation guarantee is
a notorious open question in approximation algorithms. There has been a flurry of recent
progress in the special case of unweighted graphs~\cite{GharanSS11,MomkeS16,Mucha12,SeboV14}. However, even though the standard linear programming (LP) relaxation
is conjectured to approximate the optimum within a factor of $\nicefrac{4}{3}$, it remains an elusive problem to improve upon the Christofides--Serdyukov algorithm. 

If we do not restrict ourselves to symmetric distances (undirected graphs), we obtain the more
general \emph{asymmetric} traveling salesman problem (ATSP).
Compared to the symmetric setting, the gap in our understanding is much larger, and the current algorithmic techniques have failed to give \emph{any} constant
approximation guarantee. This is intriguing especially since
the standard LP relaxation, also known as the \emph{Held--Karp lower bound},  is conjectured to approximate the optimum to within a small
constant. In fact, it is only known that its integrality gap\footnote{Recall that the integrality gap is defined as  the maximum ratio between
the optimum values of the exact (integer) formulation and of its relaxation.} is at least $2$~\cite{CharikarGK06}. We also note that the best known inapproximability bound for ATSP is $75/74$~\cite{KarpinskiLS13}.

One can easily show that a multiset of edges forms a feasible solution
to ATSP if and only if it is connected and Eulerian.
The first approximation algorithm for ATSP was given
by Frieze, Galbiati and
Maffioli~\cite{FriezeGM82},  achieving an approximation guarantee of  $\log_2(n)$. 
Their elegant ``repeated cycle cover'' approach maintains an Eulerian
edge multiset throughout, but it initially relaxes connectivity, and
enforces connectivity gradually.
This approach
was refined in several papers~\cite{Blaser08,KaplanLS05,FeigeS07},  but there was no
\emph{superconstant} improvement in the approximation guarantee  until the more recent 2010 $O(\log
n \ / \log\log n)$-approximation algorithm by Asadpour et al.~\cite{AsadpourGMGS10}. They
introduced a new and influential approach to ATSP based on relaxing
the Eulerian degree constraints but maintaining connectivity
throughout. The key idea is establishing a connection to the graph-theoretic concept of thin spanning trees. This  has further led to improved algorithms
for special cases of ATSP, such as graphs of bounded genus~\cite{GharanS11}. Moreover,
Anari and Oveis Gharan recently exploited this connection to significantly improve the
best known upper bound on the integrality gap of the standard LP relaxation to
$O(\textrm{poly}\log\log n)$~\cite{AnariG15}. This implies an efficient
algorithm for estimating the optimal value of a tour within a factor
$O(\textrm{poly}\log\log n)$ but,  as their arguments are non-constructive,
no approximation algorithm for finding a tour of matching guarantee. 

\medskip

In this paper, we follow an approach more akin to the one by Frieze, Galbiati and
Maffioli~\cite{FriezeGM82}. Namely, we maintain the Eulerian degree
constraints but relax the connectivity requirements, by introducing a
problem called {\em Subset Partition Cover}. We prove our main
theorem using this auxiliary problem.
\begin{theorem}
  There is a polynomial-time algorithm for ATSP that returns a tour of
  value at most $\finalval$ times the Held--Karp lower bound.
  \label{thm:constantATSP}
\end{theorem}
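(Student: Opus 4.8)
The plan is to route everything through the auxiliary problem \EPC, in three stages matching the abstract. Fix an optimal solution $x^*$ of the Held--Karp relaxation, so that $x^*$ lies in the subtour-elimination polytope and $w(x^*)$ equals the Held--Karp lower bound; the goal is to output a connected Eulerian multiset of edges (a tour) of weight $O(w(x^*))$.

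\emph{Stage 1: \EPC{} $\Rightarrow$ ATSP.} Treat an algorithm $\cA$ for \EPC{} as a black box that, from a partition of the vertices into ``subtours'', returns a cheap Eulerian multiset meeting only a \emph{local} connectivity demand at each part. I would build a tour by iterating in the spirit of Frieze--Galbiati--Maffioli: run $\cA$, contract the connected components of its output, transport $x^*$ to the contracted instance by contracting the corresponding tight sets, and recurse. Since each round strictly coarsens the partition, this terminates; the point is that the weights must telescope. For that I would demand that $\cA$ be \emph{light} --- that it charge to each part $V_i$ only $O(1)\cdot x^*(\delta(V_i))$ plus a controlled backbone term --- so that summing these charges over the laminar family of components produced across all levels yields a geometric series bounded by $O(w(x^*))$.

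\emph{Stage 2: general ATSP $\Rightarrow$ structured (vertebrate) instances.} A structured instance is a pair $(I,B)$ where the \emph{backbone} $B$ is a connected Eulerian subgraph of weight $O(w(x^*))$ such that every vertex lies within small LP-distance of $V(B)$. To reach such an instance I would recurse along a laminar family of tight sets of $x^*$: inside each tight set, solve the induced sub-instance, contract it, and glue the pieces; the genuinely new work occurs only on ``irreducible'' instances with no nontrivial tight set, where a backbone can be produced by recursively building a tour on a sub-instance of \emph{strictly smaller weight} and using it as $B$. One must verify that the gluing along the laminar family costs only a constant factor, again by a charging argument against $x^*$.

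\emph{Stage 3: solving \EPC{} on a vertebrate pair.} Given the partition $V_1,\dots,V_k$ and the backbone $B$, the task is to attach each subtour to $B$ Eulerianly and cheaply. Here $x^*$ is used directly: since $x^*(\delta(V_i))\ge 1$ for every part, one can set up a min-cost flow / circulation that routes each subtour to the backbone, whose feasibility follows from the subtour-elimination constraints and whose cost decomposes, part by part, as $O(1)\cdot x^*(\delta(V_i))$ plus $O(1)$ times the weight of $B$ --- precisely the lightness that Stage~1 requires. I expect this stage, together with making Stages~2 and~3 share the \emph{same} lightness budget, to be the main obstacle: the reduction to vertebrate pairs must waste only a constant factor, the \EPC{} algorithm there must pay only in proportion to the local LP mass, and because these constraints interact, the constants and the charging schemes have to be arranged so that all three stages close up simultaneously.
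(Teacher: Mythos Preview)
Your three-stage architecture matches the paper's, but Stage~1 contains a genuine gap that would leave you at $O(\log n)$ rather than $O(1)$. The contract-components-and-recurse scheme you describe is essentially Frieze--Galbiati--Maffioli, and nothing forces your ``geometric series'' to be geometric: the LP value need not drop by a constant factor when you contract the components produced by one call to \EPC, and a per-part charge of $O(1)\cdot x^*(\delta(V_i))$ sums to $O(w(x^*))$ \emph{per level}, not across all levels. The paper does \emph{not} contract and recurse. It fixes once a collection of disjoint subtours $T_1^*,\dots,T_k^*$ maximizing a lexicographic order of $\lb$-values, then repeatedly calls the \EPC{} oracle against the current components and adds only a carefully chosen subset of the returned edges; the accounting shows that each index $i$ is charged by at most one cycle and by at most one iteration of the oracle (Lemmas~\ref{lem:bound1} and~\ref{lem:bound2}). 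Crucially, the right notion of lightness is not ``$O(1)\cdot x^*(\delta(V_i))$'' but ``$w(T)\le\alpha\cdot\lb(T)$'' where $\lb(T)=2\sum_{v\in V(T)}y_v$ comes from an optimal \emph{dual} solution; the point is that $\sum_i\lb(T_i^*)\le\valu(\cI)$, a single global budget that the merge procedure spends at most a constant number of times.

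Two further gaps. First, you omit the reduction to \emph{laminarly-weighted} instances (Lemma~\ref{lem:lam-support}, Theorem~\ref{thm:laminar}): one replaces $w$ by an equivalent weight $w(e)=\sum_{S\in\cL:\,e\in\delta(S)}y_S$ for a laminar family $\cL$ of tight sets, and this structure is what makes $\lb$, irreducibility, and the whole downstream machinery meaningful. Second, your definition of a vertebrate pair is wrong: the backbone is not a subtour with all vertices at small LP-distance, but one that \emph{visits every non-singleton set of $\cL$}. This is exactly what the witness-flow argument in Stage~3 exploits: it guarantees that any subtour of the \EPC{} output that crosses a non-singleton set of $\cL$ must touch $B$, so subtours disjoint from $B$ live in a locally singleton (node-weighted) world where a degree bound of~$4$ on tight vertices controls their cost against $\lb$. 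A plain min-cost flow attaching parts to $B$ does not give you this structural property.
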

This paper is a joint version of the two conference
publications \cite{Svensson15} and \cite{SvenssonTV18}.
The paper \cite{Svensson15} introduced the relaxed problem
{\em Local-Connectivity ATSP} and used it to obtain a constant-factor
approximation algorithm for unweighted digraphs, and more generally
for \emph{node-weighted graphs},
i.e., graphs whose weight function
can be written as $w(u,v) = f(u) + f(v)$ for some $f : V \to \R_+$.\footnote{
	In \cite{Svensson15}, the definition is slightly different: $w(u,v)=g(u)$ for every $(u,v)\in E$ for a
        function $g:V\to \R_+$. The two definitions are equivalent: by
        assigning $g(u)=2 f(u)$, the weight of any tour is equal for the
        weights $w(u,v)=g(u)$ and for the weights $w(u,v)=f(u)+f(v)$.}
 The Subset Partition Cover problem described in this paper is a more refined version of
 Local-Connectivity ATSP; the terminology has been changed in order to
 emphasize the differences between the problems. 

The publication \cite{SvenssonTV18} proved
Theorem~\ref{thm:constantATSP} with an approximation ratio of 5500. The
significant improvement in the guarantee presented here is due to
using the more refined Subset Partition Cover problem along with some
more efficient reductions.

We remark that we can obtain a tighter upper bound of $\finalintegralitygap$ for the integrality gap of the Held--Karp relaxation,
and that our results also imply a constant-factor approximation algorithm for the Asymmetric Traveling Salesman \emph{Path} Problem
via black-box reductions, given by Feige and Singh for the
approximation guarantee \cite{FeigeS07} and recently by K{\"{o}}hne,
Traub and Vygen \cite{KohneTV18} for the integrality gap---see \cref{sec:completepuzzle}.

In a recent development, Traub and Vygen \cite{Traub2020} have improved the approximation guarantee to $22+\varepsilon$, and the integrality gap to $22$. Their results are attained by improving and simplifying the techniques in this paper. They use a simpler chain of reductions by avoiding the use of irreducible instances inherent in our argument, as well as refining other parts of the reduction.
See the conclusions (Section~\ref{sec:conclusion}) for further discussion and open problems.

\subsection{Brief overview of approach and outline of paper}

It will be convenient to define ATSP in terms of its \emph{graphic formulation}:
\begin{definition}\label{def:ATSP}
  The input for ATSP is a pair $(G,w)$, where $G$ is a strongly connected directed graph (digraph) and $w$ is a nonnegative weight function defined on the edges. The
  objective is to find a closed walk of minimum weight that visits every vertex
  at least once. 
\end{definition} 
Another standard definition of ATSP asks for a minimum-weight Hamiltonian cycle, that is, a closed walk that visits every vertex exactly once. If the weight function satisifies the triangle inequality, these two forms are equivalent: a tour that visits every vertex at least once can be shortcut to a tour visiting every vertex exactly once, without increasing the weight of the tour. Throughout the paper, we use the graphic formulation as in Definition~\ref{def:ATSP}.

Without loss of generality, one could assume that
$G$ is a complete digraph.
However, for our reductions,
it will be important that $G$ may not be complete. We also remark that a closed walk that visits every vertex at least once is equivalent to an Eulerian multiset of edges that connects the graph.  (An edge set of a digraph is Eulerian if the in-degree of each vertex equals its out-degree.) 

The first main step of our argument is introducing the problem
\emph{\EPC{}} in \cref{sec:LCdef}.
The main technical contribution of Part~I is a reduction
which shows that
if one can solve \EPC{} on some class of graphs,
then one can obtain a constant-factor approximation for ATSP on that class of graphs
(cf.~\cref{thm:LocalToGlobal}).
In \cite{Svensson15}, a constant-factor
approximation algorithm was obtained for ATSP on node-weighted graphs
by solving \EPC{} (more precisely, the earlier variant
 Local-Connectivity ATSP) for this class.
Subsequently, \cite{SvenssonTV16} solved Local-Connectivity ATSP for
graphs with two different edge-weights.
This required a rather difficult technical argument, and it appears
to be very challenging to solve \EPC{} directly on general graphs.

In
this paper, we follow a different approach.
Before applying the reduction (\cref{thm:LocalToGlobal}),
we remain in the realm of ATSP
and use a series of natural reductions to gradually simplify the structure of instances that we are dealing with.
The first of these reductions (in \cref{sec:lam-weight}) crucially uses the laminar structure
arising from the Held--Karp relaxation and its dual linear program (see \cref{lem:lam-support,thm:laminar}).
All further reductions are described in Part~II.
The most structured instances, for which we apply the reduction to \EPC{}
and on which we then solve \EPC{}
in Part~III,
are called \emph{vertebrate pairs}.

\medskip

The outline of the paper is as follows.
In \cref{sec:prelim}
we present preliminaries and introduce notation used throughout the paper.
We introduce the standard Held--Karp relaxation
in \cref{sec:relaxation}.
\cref{sec:lam-weight} is devoted to our first reduction.
There, we show that we can focus
on \emph{laminarly-weighted ATSP instances}: there is a laminar family $\cL$ of vertex
sets and a nonnegative vector $(y_S)_{S\in \cL}$ such that any edge $e$ has  $w(e) = \sum_{S\in \cL: \, e\in
\delta(S)} y_S$.
See the left part of Figure~\ref{fig:intro} for an example.
Note that the special case
when the laminar family consists only of singletons
roughly corresponds to node-weighted instances.
We call laminarly-weighted ATSP instances where $\cL \subseteq \{ \{ v \} : v \in V \}$
\emph{singleton instances}.\footnote{Every singleton instance is a node-weighted instance, but not vice versa; see the discussion following \cref{def:singleton}.}

Next, in Part~I we define the \EPC{} problem,
where the connectivity requirements are relaxed in comparison to ATSP,
and reduce the task of solving ATSP (with a constant-factor approximation) to that problem. We also solve \EPC{} on singleton instances (\cref{thm:lcapprox}),
thus illustrating the power of the reduction
as well as
developing a tool necessary later in Part~II.

In Part~II we turn our attention back to ATSP and,
starting from laminarly-weighted instances,
show that we can obtain very structured ATSP instances
called vertebrate pairs
by only increasing the approximation guarantee by a constant factor. A vertebrate pair consists of a laminar instance and a subtour $B$, called the backbone,  that crosses every non-singleton set of $\cL$.  An example is depicted on the right part of Figure~\ref{fig:intro}.

\begin{figure}[t]
  \centering
  \begin{tikzpicture}
\tikzset{arrow data/.style 2 args={decoration={markings,
         mark=at position #1 with \arrow{#2}},
         postaction=decorate}
      }

  \begin{scope}[scale=0.8]
    \draw[fill=gray!15!white, draw=gray!80!black] (-2.5, -0.5) ellipse (1.5cm and 2cm); \begin{scope}
      \draw[fill=gray!40!white, draw=gray!80!black] (-2.5, 0) ellipse (0.75cm and 1cm); \end{scope}
    \begin{scope}[xshift=2.0cm,rotate=30]
      \draw[fill=gray!15!white, draw=gray!80!black] (0, 0) ellipse (1cm and 1.5cm); \draw[fill=gray!40!white, draw=gray!80!black,rotate=-5] (0, -.75) ellipse (0.5cm and 0.6cm); \end{scope}
      \draw[fill=gray!15!white, draw=gray!80!black,rotate=-5] (0.5, -1.70) ellipse (0.5cm and 0.6cm); \draw[fill=gray!40!white, draw=gray!80!black,rotate around={-55:(0.45, -1.65)}] (0.45, -1.65) ellipse (0.4cm and 0.25cm); \node[ssssgvertex, fill=black] (u) at (-2.4, 0.5) {};
    \node[ssssgvertex, fill=black] (a) at (-2.7, 0.0) {};
    \node[ssssgvertex, fill=black] (b) at (-2.4, -0.5) {};
    \node[ssssgvertex, fill=black] (c) at (-3.0, -1.5) {};
    \node[ssssgvertex, fill=black] (d) at (-2.0, -1.8) {};

    \node[ssssgvertex, fill=black] (e) at (0.2, -2.0) {};
    \node[ssssgvertex, fill=black] (f) at (0.35, -1.5) {};
    \node[ssssgvertex, fill=black] (g) at (0.55, -1.8) {};

    \node[ssssgvertex, fill=black] (h) at (1.6, 0.8) {};
    \node[ssssgvertex, fill=black] (i) at (1.4, 0.4) {};

    \node[ssssgvertex, fill=black] (j) at (2.3, 0.3) {};
    \node[ssssgvertex, fill=black] (k) at (2.3, -0.4) {};
    \node[ssssgvertex, fill=black] (l) at (2.5, -1.0) {};

    \draw (u) edge[thick,->] node[above] {\scriptsize $e$} (h); \end{scope}
  \begin{scope}[xshift=8cm, scale=0.8]
    \draw[fill=gray!15!white, draw=gray!80!black] (-2.5, -0.5) ellipse (1.5cm and 2cm); \begin{scope}
      \draw[fill=gray!40!white, draw=gray!80!black] (-2.5, 0) ellipse (0.75cm and 1cm); \end{scope}
    \begin{scope}[xshift=2.0cm,rotate=30]
      \draw[fill=gray!15!white, draw=gray!80!black] (0, 0) ellipse (1cm and 1.5cm); \draw[fill=gray!40!white, draw=gray!80!black,rotate=-5] (0, -.75) ellipse (0.5cm and 0.6cm); \end{scope}
      \draw[fill=gray!15!white, draw=gray!80!black,rotate=-5] (0.5, -1.70) ellipse (0.5cm and 0.6cm); \draw[fill=gray!40!white, draw=gray!80!black,rotate around={-55:(0.45, -1.65)}] (0.45, -1.65) ellipse (0.4cm and 0.25cm); \node[ssssgvertex, fill=black] (u) at (-2.4, 0.5) {};
    \node[ssssgvertex, fill=black] (a) at (-2.7, 0.0) {};
    \node[ssssgvertex, fill=black] (b) at (-2.4, -0.5) {};
    \node[ssssgvertex, fill=black] (c) at (-3.0, -1.5) {};
    \node[ssssgvertex, fill=black] (d) at (-2.0, -1.8) {};

    \node[ssssgvertex, fill=black] (e) at (0.2, -2.0) {};
    \node[ssssgvertex, fill=black] (f) at (0.35, -1.5) {};
    \node[ssssgvertex, fill=black] (g) at (0.55, -1.8) {};

    \node[ssssgvertex, fill=black] (h) at (1.6, 0.8) {};
    \node[ssssgvertex, fill=black] (i) at (1.4, 0.4) {};

    \node[ssssgvertex, fill=black] (j) at (2.3, 0.3) {};
    \node[ssssgvertex, fill=black] (k) at (2.3, -0.4) {};
    \node[ssssgvertex, fill=black] (l) at (2.5, -1.0) {};

    \draw [black, thick] plot [smooth,tension=1.0] coordinates { (h) (u) (d)  (f) (l) (j) };
    \draw (j) edge[bend right=7, thick] (h);
\end{scope}
\end{tikzpicture}
   \caption{On the left we give an example of a laminarly-weighted  ATSP instance. The sets of the laminar family are shown in gray. We depict a single edge $e$ that crosses three sets in the laminar family, say  $S_1$, $S_2$, $S_3$, and so $w(e) = y_{S_1} + y_{S_2} + y_{S_3}$.
On the right, we give an example of a vertebrate pair. Notice that the backbone (depicted as the cycle) crosses all non-singleton sets of the laminar family, though it may not visit all the vertices.}
  \label{fig:intro}
\end{figure}

Finally,
in Part~III
we give an algorithm for \EPC{}
on such instances. By the aforementioned reductions, solving \EPC{} for vertebrate pairs is sufficient for obtaining a constant-factor approximation algorithm for general ATSP. We combine all the ingredients and calculate the obtained ratio in \cref{sec:completepuzzle}. 
We discuss future research directions in  \cref{sec:conclusion}.

\section{Preliminaries}
\label{sec:prelim}

For a directed graph $G$, we let $V(G)$ and $E(G)$ denote the set of
vertices and edges, respectively. All graphs in the paper will be directed; we will use the term `graph' to refer to a directed graph.
We  use simply $V$ and $E$ whenever the graph is
clear from the context. By an edge set $F\subseteq E$, we always mean
an \emph{edge multiset}: the same edge can be present in multiple
copies. We will refer to the union of two edge (multi)sets as a
multiset (adding up the multiplicities of every edge).

For  vertex sets $S,T\subseteq V$ we let $\delta(S,T)=\{(u,v)\in E:
u\in S\setminus T, v\in T\setminus S\}$. For a set $S\subseteq V$ we let
$\delta^+(S)=\delta(S,V\setminus S)$ denote the set
of outgoing edges, and we let $\delta^-(S)=\delta(V\setminus S,S)$
denote the set of incoming edges. Further, let $\delta(S)=\delta^-(S)\cup \delta^+(S)$
be all boundary edges
and $E(S) = \{ (u,v) \in E : u,v \in S \}$
be interior edges
of a vertex set $S$.
For a vertex $v\in V$ we define
$\delta^+(v)=\delta^+(\{v\})$ and $\delta^-(v)=\delta^-(\{v\})$.
For an edge (multi)set $F\subseteq E$, we use $\delta_F(S,T)=\delta(S,T)\cap F$,
$\delta^+_F(S)=\delta^+(S)\cap F$, etc.
We let $V(F)$ denote the set of vertices
incident to at least one edge in $F$,
and
$\one_F$
denote the indicator vector of $F$, which has a coordinate for each
edge $e$ with value equal to the multiplicity of $e$ in $F$.

For a  vertex set $U \subsetneq V$, we let $G[U] = (U,E(U))$
denote the subgraph induced by $U$.  That is, $G[U]$ is the  subgraph of $G$
whose vertex set is $U$ and whose edge set consists of all edges in
$E(G)$ with both endpoints in $U$.
We also let $G/U$ denote the graph obtained by contracting the vertex
set $U$, i.e., by replacing the vertices in $U$ by
a single new vertex $u$ and redirecting every edge with one endpoint
in $U$ to point from/to the new vertex $u$.  
This may create parallel edges in $G/U$. We keep all parallel copies;
thus, every edge in $G/U$ will have a unique preimage in $G$. 

For a set $S\subsetneq V$ we let $S_{\mathrm{in}}$  and $S_{\mathrm{out}}$ be those vertices of $S$ that have an incoming edge from outside of $S$ and those that have an outgoing edge to outside of $S$, respectively. That is,
\begin{align*}
  S_\mathrm{in} = \{ v \in S : \delta^-(S) \cap \delta^-(v) \ne \emptyset \} \quad \mbox{and} \quad S_\mathrm{out} = \{ v \in S : \delta^+(S) \cap \delta^+(v) \ne \emptyset \}\,.
\end{align*}

We let $\R_+$ denote the set of nonnegative real numbers.
The \emph{support} of a function/vector $f:X\to \R_+$ is the subset $\{x\in
X: f(x)>0\}$. For a subset $Y\subseteq X$, we also use
$f(Y)=\sum_{x\in Y} f(x)$.

When talking about graphs, we shall
slightly abuse notation and sometimes write $w(G)$ instead of $w(E)$ and $f(G)$
instead of $f(V)$ when it is clear from the context that $w$ and $f$ are
functions on the edges and vertices.

Finally, a closed walk will be called a subtour:

\begin{definition} \label{def:subtour}
We call $F \subseteq E$
a {\em subtour}
if $F$ is Eulerian
(we have $|\delta^+_F(v)| = |\delta^-_F(v)|$ for every $v \in V$)
and
the graph $(V(F), F)$ is connected.
By convention, $F = \emptyset$ is a subtour. A {\em tour} is a subtour
$F$ with $V(F)=V$.
\end{definition}
Therefore a subtour is an Eulerian multiset of edges that form a single connected component
(or an empty set), and a tour is an Eulerian multiset of edges that connects the graph.

For any Eulerian multiset $F$ of edges,
we refer to the connected components of $(V(F), F)$
as \emph{subtours in $F$}.
We often refer to $F$ as a collection of subtours.
Note that if $T$ is a subtour in $F$, then $T \ne \emptyset$.

We say that a subtour $T$ intersects another subtour $T'$
if we have $V(T) \cap V(T') \ne \emptyset$.

\subsection{Held--Karp Relaxation}
\label{sec:relaxation}
Given an edge-weighted digraph $(G,w)$, the Held--Karp
relaxation has a variable $x(e) \geq 0$ for every edge $e\in E$. The intended solution is that $x(e)$
should equal the number of times $e$ is used in the solution. The  linear programming relaxation
$\LP(G,w)$  is now defined as follows:
\begin{equation}
\begin{aligned}
\arraycolsep=1.4pt\def\arraystretch{1.2}
\begin{array}{lrlr}
\mbox{minimize} \qquad & \displaystyle \sum_{e\in E}  w(e)x(e) \\[7mm]
\mbox{subject to} \qquad  & \displaystyle x(\delta^+(v)) = & \displaystyle x(\delta^-(v)) & \text{ for } v\in V, \\  
& \displaystyle x(\delta(S)) \geq & 2 & \text{ for } \emptyset \neq S \subsetneq V, \\
& x \geq & 0.
\end{array}
  \end{aligned}
\tag{$\LP(G,w)$}
\end{equation}
The optimum value of this LP  is called the \emph{Held--Karp lower bound}.
The first set of constraints says that the in-degree should equal the
out-degree for each vertex, i.e., the solution should be Eulerian. We
call a non-negative vector $x$ satisfying these constraints a \emph{circulation}. The
second set of constraints enforces that the solution is connected. 
These are sometimes referred to as subtour elimination
constraints. Notice that the Eulerian
property implies  $x(\delta^-(S))=x(\delta^+(S))$ for every set
$S \subseteq V$, and therefore these
constraints are equivalent to $x(\delta^+(S))
\geq 1$ for all $\emptyset \neq S \subsetneq V$, which appear more
frequently in the literature. We use the above
formulation as it enables some simplifications in the presentation.

We say that a set $S\subseteq V$ is \emph{tight} with respect to a
solution $x$ to $\LP(G,w)$ if $x(\delta(S))=2$, that is,  $x(\delta^-(S))=x(\delta^+(S))=1$. 

Let us now  formulate the dual linear program
$\DUAL(G,w)$.   We associate variables $(\alpha_v)_{v\in V}$  and $(y_S)_{\emptyset \neq
  S \subset V}$ with the first and second set of constraints of $\LP(G,w)$, respectively.

\begin{equation}
\begin{aligned}
  \arraycolsep=1.4pt\def\arraystretch{1.2}
  \begin{array}{lrlr}
  \mbox{maximize} \qquad & \displaystyle \sum_{\emptyset \neq S \subsetneq V} 2 \cdot y_S \qquad \qquad  \\[7mm]
  \mbox{subject to} \qquad  & \displaystyle \sum_{S: (u,v) \in \delta(S)} y_S + \alpha_u - \alpha_v \leq & \displaystyle w(u,v) & \quad \text{ for } (u,v)\in E, \\  
  & y \geq & 0.
  \end{array}
  \end{aligned}
\tag{$\DUAL(G,w)$}
\end{equation}

For  singleton sets $\{u\}$, we will also use the notation
$y_u=y_{\{u\}}$.

The Held--Karp relaxation has exponentially many constraints, but it can be solved in polynomial time using the ellipsoid method with a separation oracle.
 Moreover, an optimal solution to $\DUAL(G,w)$ can also be found in polynomial
time. We now briefly explain how the ellipsoid method can be applied.
Let $P$ be the feasible region. Given a point $x\in \R^E$, we can decide whether $x\in P$ by checking the Eulerian constraints and computing the minimum cut value. \bll{If $x$ is not feasible, this yields a separating inequality that is one of the inequalities of the system $\LP(G,w)$}.  We have a \emph{well-described polyhedron} as in Definition 6.2.2 in \cite{gls}: $P$ is an $m$-dimensional polyhedron defined by inequalities of encoding length at most $2m+3$. 
Theorems 6.4.9 and 6.5.14-15 in \cite{gls} show that optimal primal and dual solutions can be found using a polynomial number of oracle calls. Since the encoding length is $O(m)$, and separation can be done in strongly polynomial time, the overall running time is strongly polynomial. We note that an optimal primal solution can also be found by formulating an equivalent compact (polynomial-size) linear program~\cite{Arthanari82,Carr96}; but such a reduction would not directly yield a dual optimal solution.

The ellipsoid method provides an optimal solution 
to $\DUAL(G,w)$ with a polynomial-size support. We will need a dual optimal solution with a ``nice'' support, as stated next. Recall that a family ${\cL}\subseteq
2^V$ of
vertex subsets is \emph{laminar} if for any $A,B\in{\cL}$ we have either
$A\subseteq B$ or $B\subseteq A$ or $A\cap B=\emptyset$.

\begin{lemma}\label{lem:lam-support}
For every edge-weighted digraph $(G,w)$ there exists an optimal solution $(\alpha,y)$ to $\DUAL(G,w)$ such that
the support of $y$ is a laminar family of vertex subsets. Moreover,
such a solution can be computed in polynomial time.
\end{lemma}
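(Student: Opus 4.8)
The plan is to first establish the existence of an optimal dual solution supported on a laminar family, and then to argue that such a solution can be found efficiently. For the existence part, I would start from an arbitrary optimal solution $(\alpha, y)$ to $\DUAL(G,w)$ having \emph{finite} support (as observed in the text, the ellipsoid method applied to $\LP(G,w)$ yields a polynomial-size family $\cal S$ containing the support of some dual optimum, so we may assume $\supp(y) \subseteq \cal S$ is finite to begin with). The key idea is an \emph{uncrossing} argument: suppose two sets $A, B \in \supp(y)$ cross, i.e., $A \cap B$, $A \setminus B$, $B \setminus A$ are all nonempty. Using the standard submodularity-type identity for the cut function, namely that for every edge $e$ one has $\one_{e \in \delta(A)} + \one_{e \in \delta(B)} \geq \one_{e \in \delta(A \cap B)} + \one_{e \in \delta(A \cup B)}$, I would shift weight $\epsilon = \min(y_A, y_B)$ from $y_A$ and $y_B$ onto $y_{A \cap B}$ and $y_{A \cup B}$ (creating these entries if necessary). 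This keeps $y \geq 0$, preserves the dual objective $\sum_S 2 y_S$ since $|A| + |B|$-type counting gives $2\epsilon$ removed and $2\epsilon$ added, and only \emph{relaxes} each dual constraint because the coefficient $\sum_{S : e \in \delta(S)} y_S$ of each edge does not increase, by the inequality above. Hence the modified solution is still dual feasible and optimal.

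The remaining issue for the existence part is termination: repeatedly uncrossing could in principle cycle. I would fix this by choosing, among all optimal dual solutions with support inside a fixed finite ground family (closed under the operations $\cap, \cup$ of its members — which keeps everything finite), one that lexicographically minimizes a suitable potential, e.g.\ first minimize $\sum_{S \in \supp(y)} |S|^2$ (or $\sum_S |S|(n - |S|)$); the uncrossing step strictly decreases $|A|^2 + |B|^2$ relative to $|A \cap B|^2 + |A \cup B|^2$ whenever $A, B$ properly cross, so no crossing pair can remain at the optimum of the potential. This yields an optimal dual solution whose support is laminar.

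For the algorithmic part, I would turn the above into an explicit polynomial-time procedure. Start by running the ellipsoid method (or solving the compact LP) to obtain the polynomial family $\cal S$ and an optimal dual solution $(\alpha, y)$ with $\supp(y) \subseteq \cal S$. Let $\cal F$ be the closure of $\cal S$ under intersection and union — still of polynomial size (at most $O(|\cal S|^2)$ new sets, and one can iterate to a fixpoint in polynomially many rounds since all sets are subsets of $V$). Now repeatedly find a crossing pair $A, B$ in the current support and perform the uncrossing step above; each step strictly decreases the integer potential $\sum_{S} y\text{-support-indicator} \cdot |S|^2$... more carefully, since the $y$-values may be irrational, I would instead track the potential $\Phi = \sum_{S \in \cal F} \one[y_S > 0] \cdot |S|(n-|S|)$, which is a nonnegative integer bounded by $|\cal F| \cdot n^2$ and strictly decreases at every uncrossing (the pair $A,B$ leaves or the term shrinks), so the process terminates in polynomially many steps, each clearly polynomial-time. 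The output is an optimal dual solution with laminar support.

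I expect the main obstacle to be the bookkeeping around termination and the fact that uncrossing does not \emph{by itself} decrease the support size (it can keep it constant when $A \cap B$ or $A \cup B$ was already in the support) — so one must be careful to exhibit a genuine monotone potential that is polynomially bounded, rather than merely arguing "each step removes a crossing pair." The cut-submodularity inequality and objective-preservation are routine; the subtlety is entirely in making the iteration provably finite and polynomial. A secondary point to handle cleanly is that after uncrossing we may need $\alpha$ unchanged (it is — we only moved $y$-mass), so dual feasibility of the $\alpha$-part is automatic, and we only need to recheck the edge constraints, which we argued can only become slacker.
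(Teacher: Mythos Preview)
Your existence argument via uncrossing is sound in spirit and parallels the paper's, though you uncross to $A\cap B,\,A\cup B$ whereas the paper uncrosses to $A\setminus B,\,B\setminus A$ and uses the potential $\sum_S |S|\,y_S$ (which strictly decreases under that operation). Two minor wobbles: (i) your claim about the $\sum|S|^2$ potential has the direction backwards---one has $|A\cap B|^2+|A\cup B|^2>|A|^2+|B|^2$, so it is the concave choice $\sum_S y_S\,|S|(n-|S|)$ that actually works for your uncrossing rule; (ii) with your rule, $A\cup B$ may equal $V$, which is not a valid dual set (fixable by complementation or by switching to the paper's rule).

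The genuine gap is in the polynomial-time part. First, the closure of a polynomial family $\mathcal S$ under $\cap$ and $\cup$ is \emph{not} of polynomial size in general: with $\mathcal S=\{V\setminus\{v_i\}:i\le k\}$, intersections already yield $2^k$ distinct sets. Second, your support-indicator potential $\Phi=\sum_{S}\one[y_S>0]\cdot|S|(n-|S|)$ does not strictly decrease at every step. After uncrossing with $\epsilon=\min(y_A,y_B)$, typically only one of $A,B$ leaves the support while both $A\cap B$ and $A\cup B$ may enter; e.g.\ with $n=100$, $|A|=3$, $|B|=50$, $|A\cap B|=1$, $y_A<y_B$, and neither $A\cap B$ nor $A\cup B$ previously in the support, $\Phi$ changes by $-3\cdot 97+1\cdot 99+52\cdot 48=2304>0$. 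The paper does not attempt a direct potential argument here either: it observes that na\"ive uncrossing may take super-polynomially many steps and cites Karzanov's result that a \emph{carefully ordered} uncrossing sequence terminates in polynomial time.
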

We note that the existence of a laminar solution was also used previously 
in~\cite{Vempala1999}. \bll{For finding one in polynomial time we invoke a result by Karzanov~\cite{Karzanov}.}
\begin{proof}
We start by showing the existence of a laminar optimal solution
using a standard uncrossing argument (see e.g.~\cite{Cornuejols1985} for an early application of this technique to the Held--Karp relaxation of the symmetric traveling salesman problem).
 Select $(\alpha, y)$ to be an optimal solution to $\DUAL(G,w)$
 minimizing $\sum_{S} |S|y_S$. That is, among all dual solutions that maximize the dual objective $2 \sum_S y_S$, we select one that minimizes $\sum_S |S| y_S$.  We claim that the support $\cL = \{S: y_S > 0 \}$ is a laminar family. Suppose not, i.e., that there are sets $A, B \in \cL$ such that $A \cap B, A \setminus B, B\setminus A \neq \emptyset$. Then we can obtain a new dual solution $(\alpha, \hat y)$, where  $\hat y$ is defined, for $\epsilon = \min (y_A, y_B) > 0$, as
  \begin{align*}
    \hat y_S = \begin{cases}
      y_S - \epsilon & \mbox{ if $S = A$ or $S=B$,} \\
      y_S + \epsilon & \mbox{ if $S = A\setminus B$ or $S = B \setminus A$,} \\
      y_S & \mbox{ otherwise.}
    \end{cases}
  \end{align*}
  That $(\alpha, \hat y)$ remains a feasible solution follows since
  $\hat y$ remains non-negative (by the selection of $\epsilon$) and since
  for any edge $e$  we have 
  \begin{align*}
    \one_{e\in \delta(A)} + \one_{e\in \delta(B)} \geq \one_{e\in \delta(A\setminus B)} + \one_{e\in \delta(B\setminus A)}\,.
  \end{align*}
Therefore $\sum_{S: e\in \delta(S)} \hat y_S \leq  \sum_{S: e\in \delta(S)}  y_S$ and so the constraint corresponding to edge $e$ remains satisfied. Further, we clearly have $2\sum_S \hat y_S = 2 \sum_S y_S$. In other words, $(\alpha, \hat y)$ is an optimal dual solution. However,
  \begin{align*}
    \sum_S |S| ( y_S - \hat y_S ) = (|A| + |B| - |A\setminus B| - |B \setminus A|) \epsilon > 0 \,,
  \end{align*}
  which contradicts that $(\alpha, y)$ was selected to be an optimal dual solution minimizing $\sum_S |S| y_S$. Therefore, there can be no such sets $A$ and $B$ in  $\cL$, and so it is a laminar family.

To find a laminar optimal solution in polynomial time, we start with
an arbitrary dual optimal solution. As noted above, one can be
computed in polynomial time. Now we apply
the above uncrossing operation to obtain a laminar optimal
solution. 
A result by Karzanov \cite[Theorem 2]{Karzanov}
shows that if we carefully select the sequence of pairs $A, B$ to uncross, this
can be performed in polynomial time (although for an arbitrary
sequence, the number of uncrossing steps may not be polynomially bounded).
\end{proof}

\subsection{Laminarly-Weighted ATSP and Singleton Instances} \label{sec:lam-weight}

In this section we show that
without loss of generality
(i.e., without any loss in the approximation factor)
we can focus on instances whose weights come from a sparse
and highly structured family of sets.
Below we define the crucial notion of laminarly-weighted instances
that we will work with throughout the paper.

\begin{definition}\label{def:lam}A tuple $\cI=(G,\cL,x,y)$ is called a
  \emph{laminarly-weighted ATSP instance} if $G$ is a
  strongly connected digraph, $\cL$ is a laminar family of
vertex subsets, $x$ is a feasible solution
to $\LP(G,\mathbf{0})$, and $y: \cL\to \R_+$. We further require that 
$x_e>0$ for every $e\in E$ and that every set $S \in \cL$ be tight with respect to $x$,
i.e., that $x(\delta^+(S)) = x(\delta^-(S)) = 1$.
We define the \emph{induced weight function} $w_{\cI} : E \to \R_+$ as
\[
w_{\cI}(e) = \sum_{S\in \cL: \ e\in\delta(S)} y_S \qquad \mbox{for every
}e\in E.
\]
\end{definition}
Here, $\mathbf{0}$ denotes the zero weight function.

Given an instance $\cI$ as in the definition, the vectors $x$ and
$y$ have the following important property. Define a dual solution $(\bar \alpha, \bar y)$ by setting $\bar
\alpha_u=0$ for all $u\in V$, and $\bar y_S=y_S$ if $S\in \cL$ and
$\bar y_S=0$ otherwise. Then complementary slackness implies that  for
the induced weight function $w_{\cI}$, the vector $x$
is an optimal solution to $\LP(G,w_{\cI})$ and $(\bar\alpha,\bar y)$ is an optimal solution to
$\DUAL(G,w_{\cI})$.

Our first main insight is that ATSP with arbitrary weights can be
reduced to the laminarly-weighted ATSP problem.

\begin{theorem}\label{thm:laminar}
Assume we have a polynomial-time algorithm that
finds a solution of weight
at most $\alpha$ times the Held--Karp lower bound for every laminarly-weighted ATSP instance. Then there is a polynomial-time
algorithm for the general ATSP problem  that finds a solution of weight
at most $\alpha$ times the Held--Karp lower bound.
\end{theorem}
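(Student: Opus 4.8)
The plan is to push the arbitrary weights into the ``laminar part'' $w_\cI$ by means of the laminar dual solution of \cref{lem:lam-support}, to discard the vertex-potential part, and to exploit the fact that vertex potentials are invisible to Eulerian edge multisets. Concretely, first solve $\LP(G,w)$ for an optimal primal solution $x^*$, and apply \cref{lem:lam-support} to obtain an optimal dual solution $(\alpha,y)$ whose support $\cL \defeq \{S : y_S > 0\}$ is laminar. Let $E' \defeq \supp(x^*)$ and $G' \defeq (V,E')$; as $x^*$ is a circulation with $x^*(\delta^+(S)) \ge 1$ for every $\emptyset \neq S \subsetneq V$, the digraph $G'$ is strongly connected. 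Define $\cI \defeq (G', \cL, x^*|_{E'}, y|_\cL)$. Complementary slackness for the pair $\big(x^*,(\alpha,y)\big)$ supplies exactly what is needed: each $S \in \cL$ has $x^*(\delta(S)) = 2$, so $\cI$ is a valid laminarly-weighted instance in the sense of \cref{def:lam}; and each edge $e = (u,v) \in E'$ has its dual constraint tight, i.e., using $\supp(y) = \cL$,
\[
  w(u,v) \;=\; \sum_{S \in \cL :\, e \in \delta(S)} y_S \;+\; \alpha_u - \alpha_v \;=\; w_\cI(e) + \alpha_u - \alpha_v .
\]

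Next I would compare the two Held-Karp values. With $\bar\alpha \equiv 0$ and $y$ extended by zeros outside $\cL$, one gets a feasible solution of $\DUAL(G',w_\cI)$ --- the constraint of each $e \in E'$ holds with equality, by the display above with the $\alpha$-terms removed --- while $x^*|_{E'}$ is feasible for $\LP(G',w_\cI)$; this pair satisfies complementary slackness (edge constraints tight throughout $E'$, and $y_S>0 \Rightarrow S$ tight), so both are optimal, whence
\[
  \mathrm{HK}(\cI) \;=\; \val(\LP(G',w_\cI)) \;=\; 2\sum_{S \in \cL} y_S \;=\; \val(\DUAL(G,w)) \;=\; \mathrm{HK}(G,w),
\]
the third equality because $(\alpha,y)$ is optimal for $\DUAL(G,w)$ and supported on $\cL$. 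Running the assumed $\alpha$-approximation on $\cI$ produces a tour $F \subseteq E'$ with $w_\cI(F) \le \alpha\,\mathrm{HK}(\cI) = \alpha\,\mathrm{HK}(G,w)$. Since $E' \subseteq E$, this $F$ is a feasible ATSP solution for $(G,w)$; and since $F$ is Eulerian,
\[
  \sum_{(u,v) \in F}(\alpha_u - \alpha_v) \;=\; \sum_{v \in V} \alpha_v\big(|\delta^+_F(v)| - |\delta^-_F(v)|\big) \;=\; 0 ,
\]
so summing the edge identity over $F$ gives $w(F) = w_\cI(F) \le \alpha\,\mathrm{HK}(G,w)$, as required; all steps run in polynomial time.

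The one delicate point is the restriction to the support $E' = \supp(x^*)$. It is needed on three counts at once: it makes $\cI$ meet the ``$x_e>0$ for every $e$'' requirement of \cref{def:lam}; it turns the inequality $w(e) \ge w_\cI(e) + \alpha_u - \alpha_v$ --- all that holds off the support, and in the direction useless to us --- into an equality; and it is exactly what preserves strong connectivity and the Held-Karp value. Once this restriction is in place the rest is routine LP duality together with the telescoping identity for Eulerian multisets; it remains only to handle the degenerate case $\cL=\emptyset$ (then $w_\cI\equiv 0$ and the claim is immediate).
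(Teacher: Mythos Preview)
Your proof is correct and follows essentially the same approach as the paper: restrict to the support of an optimal primal solution, use the laminar dual from \cref{lem:lam-support}, invoke complementary slackness to verify the laminarly-weighted instance and the edge identity $w = w_\cI + \alpha_u - \alpha_v$, and then use that vertex potentials vanish on Eulerian multisets to transfer both the Held-Karp value and the tour weight. The paper packages the last two steps into a single claim (``for any circulation the $w$- and $w'$-weights agree''), while you spell out the primal--dual certificate for $\mathrm{HK}(\cI)$ and the telescoping sum separately, but the underlying argument is the same.
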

\begin{proof}
Consider an arbitrary edge-weighted strongly connected digraph $(G,w)$. Let $x$ be an
optimal solution to $\LP(G,w)$ and let $(\alpha,y)$ be an optimal solution to $\DUAL(G,w)$ as guaranteed by
Lemma~\ref{lem:lam-support}, that is, $y$ has a laminar support $\cL$. We now define a pair $(G',w')$ as 
\[V(G') = V(G)\,, \quad E(G')  = \{e\in E(G): x(e)> 0\}\,, \quad \mbox{and}\quad  w'(u,v)  = w(u,v)
-\alpha_u + \alpha_v\,.\]

We claim that  $\cI=(G',\cL,x,y)$ is a laminarly-weighted ATSP
instance whose induced weight function $w_\cI$ equals $w'$.
To see this, recall that $x$ is a primal optimal solution and that $(\alpha,y)$ is a dual optimal solution (for $(G,w)$). Therefore complementary slackness implies that every
set in $\cL$ is tight with respect to $x$ and that for every
edge $(u,v)\in E(G')$, the weight $w'(u,v) = w(u,v) - \alpha_u + \alpha_v $ equals the sum of $y_S$-values for
the sets $S$ crossed by $(u,v)$. Finally, we have $x_e > 0$ for every $e\in E(G')$ by definition.  So $\cI$ satisfies all the properties of Definition~\ref{def:lam}, i.e., it is a laminarly-weighted instance.

We now argue that an $\alpha$-approximate solution for $\cI$ with respect to the Held--Karp relaxation $\LP(G',w')$ implies an $\alpha$-approximate solution for the original instance $(G,w)$ with respect to $\LP(G,w)$. To this end, we make the following observation:
\begin{claim*}\label{claim:cost-shift}
For any circulation $x \in \R_+^{E(G')}$,
we have $\sum_{e \in E(G')} w(e)x(e) = \sum_{e \in E(G')} 
w'(e)x(e)$.
\end{claim*}
Therefore
the Held--Karp lower bound is the same for $(G,w)$ and for $(G',w')$, and 
any solution for $(G',w')$ is a solution of the same weight for
$(G,w)$.
\end{proof}

In the rest of the paper we work exclusively with laminarly-weighted ATSP
instances $\cI = (G, \cL, x, y)$. We will refer to them as simply
\emph{instances}.
\begin{definition} \label{def:singleton}
We say that an instance
$\cI = (G, \cL, x, y)$
is a \emph{singleton instance}
if 
all sets in $\cL$ are singletons.
\end{definition}
Such instances will play an
important role in our algorithm. In particular, note that for singleton
instances, the weight
function $w_{\cI}$ is induced by nodes
($w(u,v) = y_u + y_v$ for all $(u,v) \in E$)
just like in a node-weighted instance.
The difference between singleton and node-weighted instances
is that singleton instances are those
\emph{laminarly-weighted} instances
whose weight function is induced by nodes
after having performed the reduction of \cref{thm:laminar}.
A node-weighted instance does not necessarily give rise to a singleton instance:
for singleton instances we will also require that $x(\delta^+(v))=x(\delta^-(v))=1$ for every node $v$ with $y_v>0$.

Recall that $\cost_\cI(F)$ is the induced weight of an
edge multiset $F\subseteq E$ in the
instance $\cI$. We will omit the subscript and use simply $\cost(F)$
whenever $\cI$ is clear from the context.
\begin{definition}
For 
an instance
$\cI = (G, \cL, x, y)$
and
a set $S \subseteq V$ we define
\[ \valu_\cI(S) = 2 \cdot \sum_{R \in \cL : \ R \subsetneq S} y_R \]
to be the fractional dual value associated with the sets strictly inside $S$.
\end{definition}
Again, we will omit  the subscript whenever clear from the context.
We also use
$\valu(\cI) = \valu_\cI(V)$; note that this equals the Held--Karp lower
bound of the instance. Indeed, as noted above, $y$ can be extended to
an optimal dual solution to $\DUAL(G,w)$, and hence the optimum value for
$\DUAL(G,w)$ equals $2\cdot \sum_{S\in \cL} y_S$, which is equal to the
primal optimum value  $\sum_{e\in E} w(e) x(e)$ for $\LP(G,w)$ by
strong duality.

\part{Reducing ATSP to \EPC}
\label{part:epc}
In this part we define the \EPC{} problem and reduce the task of solving  ATSP to that problem.  This reduction will be used to solve general instances in Part~\ref{part:solvingvert}. Here, we illustrate its power by giving a constant-factor approximation algorithm for singleton instances.

Let us begin with some intuition. It is illustrative to consider the following
``naive'' algorithm:
\begin{enumerate}
  \item Select a random cycle cover $C$ using the Held--Karp relaxation.

    It is well known that one can  sample such a cycle cover $C$ of   expected
    weight equal to the optimal value $\valu(\cI)$  of the
    Held--Karp relaxation.
  \item While there exist more than one component, add the lightest cycle (i.e., the cycle of smallest weight) that
    decreases the number of components. 
\end{enumerate}
It is clear that the above algorithm always returns a solution to ATSP: we
start with an Eulerian graph, and the graph stays Eulerian during the execution
of the while-loop, which does not terminate until the graph is connected. This
gives a tour.
However, what is its weight? First, as remarked above, we
have that the expected weight of the cycle cover is $\valu(\cI)$. So if $C$
contains $k =|C|$ cycles, we would expect that a cycle in $C$ has weight
$\valu(\cI)/k$ (at least on average).  Moreover,  the number of cycles added in
Step~$2$ is at most $k-1$ since each cycle decreases the number of components
by at least one.  Thus, if each cycle in Step~$2$ has weight at most  the average
weight $\valu(\cI)/k$ of a cycle in $C$,  we obtain a $2$-approximate tour
of weight at most $\valu(\cI) + \frac{k-1}{k} \valu(\cI) \leq 2 \valu(\cI)$.

Unfortunately, it seems hard to find a cycle cover $C$ such that we can always
connect it with light cycles. Instead, what we can do is to first select
a cycle cover $C$, then add light cycles that decrease the number of components
as long as possible.  When there are no more light cycles to add, the
vertices are partitioned into connected components $V_1, \ldots, V_k$.
In order to make progress from this point, we would like to find a ``light''
Eulerian set $F$ of edges that crosses  the cuts $\{(V_i, \bar V_i)\mid i=1,2,
\ldots, k\}$. We could then hope to add $F$ to our solution and continue from
there. It turns out that the meaning of ``light'' in this
context is crucial.  For our arguments to work, we need that $F$ is selected so that  the
edges in each component have weight at most $\alpha$ times what
the linear programming solution ``pays'' for the vertices in that component. This is the intuition behind
the definitions of \EPC{} (formerly Local-Connectivity ATSP) and
``light'' algorithms for that problem. We also need to be very careful as
to how we add edges from light cycles and how to use the light
algorithm for \EPC{}. In \cref{sec:LocalToGlobal}, our
algorithm will iteratively solve \EPC{} and, in each
iteration, it will  add a carefully chosen subset of the found edges, together with
light cycles.

We remark that in \EPC{} we have relaxed the global
connectivity properties of ATSP into local connectivity conditions that only
say that we need to find an Eulerian set of edges that crosses at most $n=|V|$
cuts defined  by a partitioning of the vertices. In spite of that, we are able
to leverage the intuition above to obtain our main technical result
of this part (\cref{thm:LocalToGlobal}).

Its proof is based on
generalizing  and, as alluded to above, deviating from the above intuition in several ways. First, we  start with
a carefully chosen collection of subtours which  generalizes the role of the
cycle cover $C$ in Step~1 above. Second, both the iterative use of the light
algorithm for \EPC{} and the way we add light cycles  are 
done  in a careful and dependent manner so as to be able to bound the total
weight of the returned solution.

\section{\EPC}
\label{sec:LCdef}

In this section we define the {\EPC{}} problem,
which is obtained from ATSP by relaxing the connectivity
requirements.
Consider a laminarly-weighted instance $\cI = (G, \cL, x, y)$. 
For notational convenience we extend the vector $y$ to all singletons
so that $y_v = 0$ if $\{v\}\not \in \cL$.  Let $\lb_{\cI}: V \rightarrow \R$
be the \emph{lower bound function} defined by $\lb_{\cI}(v) =2
y_v$. We simplify notation and write $\lb$ instead of $\lb_{\cI}$ if
$\cI$ is clear from the context. Note that $\lb(V)$ is at most the
Held--Karp lower bound $\valu(\cI)$, with equality only for singleton
instances.
For an edge set $F$, we use the simplified notation $\lb(F)=\lb(V(F))$ to denote the total lower bound of the vertices
incident to $F$.

Perhaps the main difficulty of ATSP is to satisfy the connectivity requirement,
i.e., to select an Eulerian subset $F$ of edges that satisfies all
subtour elimination constraints.
In \emph{\EPC}, this condition is relaxed and we only require that
the subtour elimination constraints be satisfied for some disjoint sets.

\begin{mdframed}
\begin{center} \textbf{\EPC} \end{center}
\begin{description}
  \item[\textnormal{\emph{Given:}}]
An instance $\cI = (G, \cL, x, y)$,
a subtour $B$ in $G$,
and a partition $(V_1, V_2, \ldots, V_k)$
	of $V\setminus V(B)$ such that the graph induced by $V_i$ is strongly
	connected for $i=1,\ldots, k$.  
\item[\textnormal{\emph{Find:}}] A collection $F$ of subtours  of $E$   such
that $|\delta^+_{F}(V_i)| \geq 1$ for  $i=1, 2, \ldots, k$.
\end{description}
\end{mdframed}
\vspace{2mm}

An example of an instance and a solution to \EPC{} can be found in Figure~\ref{fig:algo} on page~\pageref{fig:algo}. In that figure, the vertices $V \setminus V(B)$ are partitioned into six sets depicted in gray (the right-most  set is $V(B)$) and the blue (solid) subtours show a solution. Note that the subtours are not required to connect the whole graph; they are only required to cross the boundaries defined by the partitioning of $V \setminus V(B)$.

\begin{definition} \label{def:light_algorithm}
We say that an algorithm for \EPC{} is 
\emph{$(\alpha,\wD)$-light}  for an instance $\cI$ and subtour $B$ if, for any input partition of strongly connected
subsets, the collection $F$ of subtours satisfies
\begin{itemize}
\item $w_{\cI}(T)\le \alpha \lb(T)$ for every subtour $T$ in $F$ with   $V(T)\cap
  V(B)=\emptyset$, and 
\item $w_{\cI}(F_B)\le \wD$, where $F_B\subseteq F$ is the collection of subtours in $F$ that intersect $B$.\footnote{Recall that we say that a subtour $T$ intersects another subtour $B$ if they visit a common vertex, i.e., $V(T) \cap V(B) \neq \emptyset$. Hence, $F_B = \{T \mbox{ subtour in F}: V(T) \cap V(B) \neq \emptyset\}$.}
\end{itemize}
\end{definition}

We use the $(\alpha,\wD)$-light
terminology to avoid any ambiguities with the concept of approximation
algorithms. 

If we let $c$ be the scaling factor such that $c\lb(V) = \valu(\cI)$, then   
an $\alpha$-approximation algorithm for ATSP with respect to the Held--Karp
relaxation is trivially an $(\alpha\cdot c, 0)$-light algorithm for \EPC{} with $B=\emptyset$: output the same tour $F$ as the
algorithm for ATSP. However, \EPC{} seems like a significantly easier problem than ATSP, as the set of subtours $F$ only needs to cross $k$ cuts formed  by a partitioning of the vertices $V\setminus V(B)$.
We substantiate this intuition by proving,  in Section~\ref{sec:lcATSPalg},
that there exists a simple $(2,0)$-light algorithm for
\EPC{} on singleton instances with $B = \emptyset$.  Perhaps more
surprisingly, in Section~\ref{sec:LocalToGlobal} we show that an
$(\alpha,\wD)$-light algorithm for \EPC{}  for an instance $\cI$ with
subtour $B$ can be turned into an approximation
algorithm for ATSP that always returns a tour of cost at most
$(9+\varepsilon)\alpha\lb(V\setminus V(B))+\wD+w(B)$ for
any $\varepsilon>0$.

The main difference between the \EPC{} problem and the Local-Connectivity ATSP problem introduced in~\cite{Svensson15} is the introduction of the subtour $B$ and  the more general definition of lightness. While this flexibility is unnecessary for singleton instances with $B=\emptyset$ (which are closely related to the node-weighted instances considered in that paper), it will be useful in the general case: in Section~\ref{sec:solvingvertebrate} we give an algorithm for \EPC{} that, in turn, implies the constant-factor approximation algorithm for general instances.

\begin{remark}
  \label{rem:lb}Our generic reduction from ATSP to \EPC{} (Theorem~\ref{thm:LoctoGlo}) is robust with respect to
  the definition of $\lb$ and there are many possibilities to define such
  a lower bound. Another natural example is $\lb(v) = \sum_{e\in
    \delta^+(v)} x_e w(e)$. In \cite{Svensson15}, 
Local-Connectivity
ATSP was defined with this $\lb$ function, and with
$B=\emptyset$; in this case we can set $\wD=0$.
In fact, in order to get a constant bound on
  the integrality gap of the Held--Karp relaxation, our results say that it is
  enough to find an $(O(1),0)$-light algorithm for
  \EPC{} with respect to  some nonnegative $\lb$ that only needs to
  satisfy that $\lb(V)$ is at most the value $\valu(\cI)$ of the optimal solution
  to the LP. 
Even more generally,  if $\lb(V)$ is at most the value of an optimal tour
  (rather than the LP value)
  then our methods would give a similar approximation guarantee (but not with
  respect to the Held--Karp relaxation). 

A variant of \EPC{} was used in \cite{SvenssonTV16} to
  obtain a constant factor approximation guarantee for ATSP with two
  different edge weights; a key idea of that paper is the careful
  choice of the $\lb$ function.
\end{remark}

 \section{\EPC{} for Singleton Instances}
\label{sec:lcATSPalg}

We give a simple $(2,0)$-light algorithm for \EPC{} for
the special case of singleton instances, that is, when $\cL$ is a
singleton family, and for $B=\emptyset$.

The proof is based on finding  an integral circulation that sends flow across
the cuts $\{(V_i, \bar V_i): i=1, 2, \ldots, k\}$ and, in addition, satisfies
that the outgoing flow of each vertex $v\in V$ with $y_v > 0$ is at most $2$, which in turn, by the assumptions on the instance,  implies
a $(2,0)$-light algorithm.
\begin{theorem}
  \label{thm:lcapprox}
  There exists a  polynomial-time algorithm for
  \EPC{} that is $(2,0)$-light for singleton instances with $B=\emptyset$.
\end{theorem}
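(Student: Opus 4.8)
The plan is to reduce the $(2,0)$-lightness requirement to a purely combinatorial property of an integral circulation, and then to obtain such a circulation from the Held--Karp solution $x$ by a flow argument.

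First I would unwind what $(2,0)$-lightness asks for on a singleton instance. Since every set of $\cL$ is a singleton, $w_{\cI}(u,v)=y_u+y_v$ for every edge $(u,v)$ (recalling the convention $y_v=0$ when $\{v\}\notin\cL$), and $\lb(v)=2y_v$. Hence for any subtour $T$,
\[
  w_{\cI}(T)=\sum_{(u,v)\in T}(y_u+y_v)=\sum_{v\in V}y_v\bigl(|\delta^+_T(v)|+|\delta^-_T(v)|\bigr)=2\sum_{v\in V}y_v\,|\delta^+_T(v)|,
\]
because $T$ is Eulerian. Consequently, if a collection $F$ of subtours has $|\delta^+_F(v)|\le 2$ for every vertex $v$ with $y_v>0$, then every subtour $T$ in $F$ satisfies $w_{\cI}(T)\le 4\sum_{v\in V(T)}y_v=2\,\lb(T)$; as $B=\emptyset$ this is exactly the first requirement of $(2,0)$-lightness, while the second holds trivially since $F_B=\emptyset$. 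So it suffices to produce, in polynomial time, an integral circulation $F$ --- i.e.\ an Eulerian edge multiset, equivalently a collection of subtours --- with $|\delta^+_F(V_i)|\ge 1$ for all $i$ and $|\delta^+_F(v)|\le 2$ for all $v$ with $y_v>0$.

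To build $F$ I would use $x$. Let $Q$ be the set of circulations $f\ge 0$ with $f(\delta^+(v))\le 2$ for every $v$ with $y_v>0$ and $f(\delta^+(V_i))\ge 1$ for $i=1,\dots,k$. (We may assume each $V_i\subsetneq V$, since otherwise $k=1$, $\delta^+(V)=\emptyset$, and no feasible solution exists.) Then $2x\in Q$: it is a circulation; each $v$ with $y_v>0$ has $\{v\}\in\cL$, hence $\{v\}$ is tight, so $2x(\delta^+(v))=2$; and $x(\delta(V_i))\ge 2$ gives $2x(\delta^+(V_i))\ge 2\ge 1$. Thus $Q\neq\emptyset$. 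The inequalities defining $Q$ are flow conservation together with lower and upper bounds on the cut values $f(\delta^+(S))$ for $S$ in the family $\cF=\{\{v\}:y_v>0\}\cup\{V_1,\dots,V_k\}$, which is laminar because the $V_i$ partition $V$ and each singleton lies in some $V_i$. It is a standard fact that a circulation polyhedron constrained by the cut values of a laminar family (combined with per-vertex degree bounds) has a totally unimodular constraint matrix, hence is integral. Since $Q$ is nonempty and pointed, a vertex $f^*$ of $Q$ exists, is integral, and is found in polynomial time (e.g.\ by minimizing $\sum_e f_e$ by linear programming, or combinatorially as a min-cost circulation). Output $F$ := the multiset containing $f^*_e$ copies of each edge $e$. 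Then $F$ is a collection of subtours with $|\delta^+_F(V_i)|=f^*(\delta^+(V_i))\ge 1$ for all $i$ and $|\delta^+_F(v)|=f^*(\delta^+(v))\le 2$ whenever $y_v>0$, so by the first paragraph it is $(2,0)$-light. (Note that this argument does not even use the hypothesis that the $G[V_i]$ are strongly connected.)

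The computations in the first two paragraphs are routine. The one point that needs care is the integrality claim --- that imposing the cut lower bounds $f(\delta^+(V_i))\ge 1$ for the laminar family $\cF$ keeps the circulation polyhedron integral. I would establish it by verifying total unimodularity of the constraint matrix directly (the directed-cut constraints of a laminar family combine with the node--arc incidence matrix into a totally unimodular system), invoking the standard theory; this is where essentially all of the remaining (brief) work lies, and it is the main --- and only --- obstacle.
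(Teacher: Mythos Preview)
Your reduction in the first paragraph is correct and matches the paper: for singleton instances with $B=\emptyset$, it suffices to find an integral circulation $F$ with $|\delta^+_F(V_i)|\ge 1$ for all $i$ and $|\delta^+_F(v)|\le 2$ whenever $y_v>0$.

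The gap is in the second step: the polytope $Q$ is \emph{not} integral, and the claimed total unimodularity is false. Here is a small counterexample that even satisfies the strong-connectivity hypothesis you said was unnecessary. Take $V=\{1,2,3,4\}$ with the partition $V_1=\{1,2\}$, $V_2=\{3,4\}$, and edges
\[
e_1=(1,2),\;e_2=(2,1),\;e_3=(3,4),\;e_4=(4,3),\;e_5=(1,3),\;e_6=(3,2),\;e_7=(2,4),\;e_8=(4,1).
\]
Let $x\equiv\tfrac12$; then $x$ is Held--Karp feasible, all singletons are tight, and $G[V_1],G[V_2]$ are strongly connected. The point $f$ with $f_{e_5}=f_{e_6}=f_{e_7}=f_{e_8}=\tfrac12$ and $f_{e_1}=\cdots=f_{e_4}=0$ lies in $Q$ and is a vertex: the tight constraints $f_{e_1}=\cdots=f_{e_4}=0$, flow conservation, and $f(\delta^+(V_1))=1$ determine it uniquely. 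Concretely, the $4\times 4$ submatrix formed by the conservation rows at vertices $1,2,3$ and the cut row for $V_1$, restricted to columns $e_5,e_6,e_7,e_8$, is
\[
\begin{pmatrix}1&0&0&-1\\0&-1&1&0\\-1&1&0&0\\1&0&1&0\end{pmatrix},
\]
which has determinant $-2$; so the constraint matrix is not totally unimodular. Minimizing $\sum_e f_e$ over $Q$ in this instance returns exactly this half-integral vertex. The issue is that a cut $\delta^+(V_i)$ can be crossed twice by a single cycle, so the lower bound $f(\delta^+(V_i))\ge 1$ introduces an implicit factor of $2$ once you pass to the cycle space.

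The paper avoids this by \emph{not} imposing cut lower bounds. Instead it introduces an auxiliary vertex $a_i$ per class $V_i$, reroutes a unit of flow through $a_i$, and imposes the degree equality $z'(\delta^+(a_i))=1$; this keeps the system a pure circulation-with-degree-bounds problem, which is TU. Mapping back breaks conservation at one pair of vertices in each $V_i$, and the paper repairs this with a path inside $V_i$ --- which is exactly where the strong-connectivity hypothesis is used. So that hypothesis is not incidental; it is what lets one trade the (non-TU) cut constraint for a (TU) degree constraint.
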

\begin{proof}
  Let $\cI=(G,\cL,x,y), B, (V_1, V_2, \ldots, V_k)$ be an instance of \EPC{} where $\cI$ is a  singleton instance and $B = \emptyset$. Let also $w=w_{\cI}$ denote the induced weight
function.
 We prove the theorem by giving a polynomial-time algorithm that finds a collection $F$ of subtours satisfying 
\begin{align}
  \label{eq:cycprop}
|\delta^+_{F}(V_i)|  \geq 1  \mbox{ for } i =1, \ldots, k \quad \mbox{and} \quad
|\delta^+_{F}(v)| & \leq 2 \mbox{ for $v \in V$ with $y_v > 0$}.
\end{align}
The first condition means that $F$ crosses every cut $(V_i, \bar
V_i)$, thus the algorithm indeed solves the \EPC{} problem.
We show that the second condition implies $(2,0)$-lightness.
Since $B = \emptyset$, we need to show that $w(T) \le 2 \lb(T)$ for every subtour $T$ in $F$.
Since $\cI$ is a singleton instance, $w(u,v)=y_u+y_v$ for all
$(u,v)\in E$ (recall the convention $y_u=0$ if $\{u\}\notin
\cL$). Consider now any subtour $T$ in $F$. We have
\begin{align*}
w(T) = \sum_{e \in T} w(e) =
  \sum_{v\in V(T)}|\delta_F(v)| y_v =
  2 \sum_{v\in V(T)}|\delta^+_F(v)| y_v \le
  4 \sum_{v\in V(T)} y_v =
  2\lb(T).
\end{align*}

We proceed by describing a polynomial-time algorithm for finding an Eulerian
set $F$ satisfying~\eqref{eq:cycprop}.
The set $F$ will be obtained by rounding the circulation $x$ to integrality
while maintaining that it crosses each cut $(V_i, \bar V_i)$.
For each cut $(V_i, \bar V_i)$
we introduce a new auxiliary vertex $a_i$
to represent it.
In lieu of requiring a flow of at least $1$ through $V_i$,
we will require such a flow through $a_i$.
To show that such a (fractional) circulation exists, we modify $x$
by redirecting an arbitrary flow of value $1$ that passes through $V_i$
to instead pass through $a_i$.

First, we transform $G$ into a new graph $G'$
and $x$ into a new circulation $x'$
by performing the following for each $i = 1, \ldots, k$ (see also Figure~\ref{fig:Gaux}):
\begin{itemize}
	\item
		Select a subset of incoming edges $X_i^- \subseteq \delta^-(V_i)$ with $x(X_i^-) = 1$.
		This is possible since $x(\delta^-(V_i))\ge 1$.\footnote{To obtain exactly $1$, we might need to
			break an edge up into two copies,
			dividing its $x$-value between them appropriately,
			and include one copy in $X_i^-$ but not the other;
			we omit this for simplicity of notation, and
			assume there is such an edge set with exactly $x(X_i^-)= 1$.
		}
	\item
		Consider a cycle decomposition of $x$ and follow the incoming edges in $X_i^-$ in the decomposition.
		Select $X_i^+ \subseteq \delta^+(V_i)$
		to be the set of outgoing edges on which these cycles first leave $V_i$
		after entering on an edge in $X_i^-$,
		such that $x(X_i^+) = 1$.\footnote{Again, to obtain exactly $1$, we proceed as in the above footnote.}
		We define a flow $x_i$ to be the $x$-flow on these cycle
		segments connecting  the heads of edges in $X_i^-$ and the tails
		of edges in $X_i^+$. 
	\item
		We introduce a new auxiliary vertex $a_i$
		and redirect all edges in $X_i^-$ to point to $a_i$,
		and those in $X_i^+$ to point from $a_i$.
		We subtract the flow $x_i$ from $x$.
\end{itemize}

\begin{figure}[t!]
\centering

\begin{tikzpicture}[scale=0.65]

\begin{scope}
	\draw[fill=gray!10!white, draw=gray!80!black] (0, 0) ellipse (3.2cm and 1.7cm) node[above right=0.5cm and 1.6cm] {$V_i$};
	\node[sssgvertex, fill=black] (u1) at (-2,-0) {};
	\node[sssgvertex, fill=black] (u2) at (0,0.5) {};
	\node[sssgvertex, fill=black] (u3) at (2,-0) {};
	\draw (u1) edge[->, bend left] node[above] {$e_{12}$} (u2);
	\draw (u2) edge[->, bend left] node[above] {$e_{23}$} (u3);
	\draw (u3) edge[->, bend left=15, decorate,decoration={snake,amplitude=.2mm,segment length=2pt,post length=1mm}] node[below, pos=0.3] {$e_{31}$} (u1);
	\node[sssgvertex, fill=black] (v1) at (-2.5,3) {};
	\node[sssgvertex, fill=black] (v2) at (-0.5,3) {};
	\node[sssgvertex, fill=black] (v3) at (1.5,3) {};
	\node[sssgvertex, fill=black] (w1) at (-2.5,-3) {};
	\node[sssgvertex, fill=black] (w2) at (-0.5,-3) {};
	\node[sssgvertex, fill=black] (w3) at (1.5,-3) {};
	\draw (v1) edge[->, bend left=10] node[left, pos=0.3] {$e^-_1$} (u1);
	\draw (v2) edge[->, bend left=10, very thick] node[right, pos=0.2] {$e^-_2$} (u2);
	\draw (v3) edge[->, bend left=10, very thick] node[right, pos=0.25] {$e^-_3$} (u3);
	\draw (u1) edge[->, bend left=10, very thick] node[left, pos=0.7] {$e^+_1$} (w1);
	\draw (u2) edge[->, bend left=10, very thick] node[right, pos=0.82] {$e^+_2$} (w2);
	\draw (u3) edge[->, bend left=10] node[right, pos=0.75] {$e^+_3$} (w3);
	
	\node at (0,-4) {\textbf{(a)} $x$: $x(e) = \nicefrac 12$ for all $e$};
\end{scope}

\begin{scope}[xshift=11cm]
	\draw[fill=gray!10!white, draw=gray!80!black] (0, 0) ellipse (3.2cm and 1.7cm) node[above right=0.5cm and 1.6cm] {$V_i$};
	\node[sssgvertex, fill=black] (u1) at (-2,-0) {};
	\node[sssgvertex, fill=black] (u2) at (0,0.5) {};
	\node[sssgvertex, fill=black] (u3) at (2,-0) {};
	\draw (u1) edge[->, bend left] node[above left=-0.1] {} (u2);
	\draw (u2) edge[->, bend left] node[above right=-0.1] {} (u3);
\node[sssgvertex, fill=black] (v1) at (-2.5,3) {};
	\node[sssgvertex, fill=black] (v2) at (-0.5,3) {};
	\node[sssgvertex, fill=black] (v3) at (1.5,3) {};
	\node[sssgvertex, fill=black] (w1) at (-2.5,-3) {};
	\node[sssgvertex, fill=black] (w2) at (-0.5,-3) {};
	\node[sssgvertex, fill=black] (w3) at (1.5,-3) {};
	\draw (v1) edge[->, bend left=10] node[left, pos=0.3] {} (u1);
\draw (u3) edge[->, bend left=10] node[right, pos=0.75] {} (w3);
	
	\node[sgvertex,minimum size=14pt] (ai) at (-4,0) {$a_i$};
	\draw (v2) edge[->, bend right] (ai);
	\draw (v3) edge[->, bend right=20] (ai);
	\draw (ai) edge[->, bend right] (w1);
	\draw (ai) edge[->, bend right=20] (w2);
	
	\node at (0,-4) {\textbf{(b)} $x'$: $x'(e) = \nicefrac 12$ for all $e$};
\end{scope}

\begin{scope}[yshift=-9cm]
	\draw[fill=gray!10!white, draw=gray!80!black] (0, 0) ellipse (3.2cm and 1.7cm) node[above right=0.5cm and 1.6cm] {$V_i$};
	\node[sssgvertex, fill=black] (u1) at (-2,-0) {};
	\node[sssgvertex, fill=black] (u2) at (0,0.5) {};
	\node[sssgvertex, fill=black] (u3) at (2,-0) {};
	\draw (u1) edge[->, bend left] node[above left=-0.1] {} (u2);
	\draw (u2) edge[->, bend left] node[above right=-0.1] {} (u3);
\node[sssgvertex, fill=black] (v1) at (-2.5,3) {};
	\node[sssgvertex, fill=black] (v2) at (-0.5,3) {};
	\node[sssgvertex, fill=black] (v3) at (1.5,3) {};
	\node[sssgvertex, fill=black] (w1) at (-2.5,-3) {};
	\node[sssgvertex, fill=black] (w2) at (-0.5,-3) {};
	\node[sssgvertex, fill=black] (w3) at (1.5,-3) {};
	\draw (v1) edge[->, bend left=10] node[left, pos=0.3] {} (u1);
\draw (u3) edge[->, bend left=10] node[right, pos=0.75] {} (w3);
	
	\node[sgvertex,minimum size=14pt] (ai) at (-4,0) {$a_i$};
	\draw (v2) edge[->, bend right] (ai);
\draw (ai) edge[->, bend right] (w1);

	\node at (0,-4) {\textbf{(c)} $z'$ (integral)};
\end{scope}

\begin{scope}[xshift=11cm, yshift=-9cm]
	\draw[fill=gray!10!white, draw=gray!80!black] (0, 0) ellipse (3.2cm and 1.7cm) node[above right=0.5cm and 1.6cm] {$V_i$};
	\node[sssgvertex, fill=black] (u1) at (-2,-0) {};
	\node[sssgvertex, fill=black] (u2) at (0,0.5) {};
	\node[sssgvertex, fill=black] (u3) at (2,-0) {};
	\draw (u1) edge[->, bend left] node[above left=-0.1] {} (u2);
	\draw (u2) edge[->, bend left] node[above right=-0.1] {} (u3);
	\draw (u2) edge[->, bend left=75, dashed] node[above right=-0.1] {} (u3);
	\draw (u3) edge[->, bend left=15, dashed] node[below, pos=0.3] {} (u1);
	\node[sssgvertex, fill=black] (v1) at (-2.5,3) {};
	\node[sssgvertex, fill=black] (v2) at (-0.5,3) {};
	\node[sssgvertex, fill=black] (v3) at (1.5,3) {};
	\node[sssgvertex, fill=black] (w1) at (-2.5,-3) {};
	\node[sssgvertex, fill=black] (w2) at (-0.5,-3) {};
	\node[sssgvertex, fill=black] (w3) at (1.5,-3) {};
	\draw (v1) edge[->, bend left=10] node[left, pos=0.3] {} (u1);
	\draw (v2) edge[->, bend left=10, very thick] node[right, pos=0.2] {} (u2);
\draw (u1) edge[->, bend left=10, very thick] node[left, pos=0.7] {} (w1);
\draw (u3) edge[->, bend left=10] node[right, pos=0.75] {} (w3);
	
	\node at (0.1,0.1) {$u_i$};
	\node at (-2.4,0) {$v_i$};

	\node at (0,-4) {\textbf{(d)} $F$};
\end{scope}

\end{tikzpicture}

\caption{A depiction of the
proof of Theorem~\ref{thm:lcapprox}.
  The neighborhood of a component $V_i$ is shown.
  \\
  \textbf{(a)}~shows $x$, with $x(e) = \nicefrac 12$ on every shown edge.
  We select $X_i^- = \{e_2^-, e_3^-\}$ (thick incoming edges).
  Suppose that the cycle decomposition of $x$ has
  a cycle containing $e_1^-$, $e_{12}$, $e_{23}$, $e_3^+$,
  a cycle containing $e_2^-$, $e_2^+$,
  and
  a cycle containing $e_3^-$, $e_{31}$, $e_1^+$.
  Thus we have $X_i^+ = \{e_1^+, e_2^+\}$ (thick outgoing edges),
  and the flow $x_i$ (wiggly) puts value $\nicefrac 12$ on $e_{31}$.
  \\
  \textbf{(b)}~shows $x'$, with $x'(e) = \nicefrac 12$ on every shown edge.
  We redirect $e_2^-$, $e_3^-$ to point to $a_i$
  and $e_1^+$, $e_2^+$ to point from $a_i$.
We also subtract $x_i$, removing $e_{31}$.
  \\
  \textbf{(c)}~shows $z'$, which is integral.
  Note that $z'(\delta^-(a_i)) = z'(\delta^+(a_i)) = 1$.
  \\
  \textbf{(d)}~shows the final solution $F$.
  The thick edges,
  which are redirected from the edges incident to $a_i$ in $z'$,
  guarantee that $F$ crosses $V_i$.
  The path $P_i$ is dashed.
  }
\label{fig:Gaux}
\end{figure}
 
Note that $x'$ is a circulation,
and that it satisfies the following conditions:
\begin{itemize}
	\item $x'(\delta^+(v)) \le 1$ for all $v \in V$ with $y_v > 0$,
	\item $x'(\delta^+(a_i)) = 1$ for all $i = 1, \ldots, k$.
\end{itemize}
Here, the first condition holds since all sets in $\cL$ are tight and thus $x(\delta^+(v)) = 1$ whenever $y_v > 0$;
the rest is by construction.
As the vertex-degree bounds are integral, 
we can also, in polynomial time, find an \emph{integral} circulation $z'$ that satisfies these two conditions (see e.g.
  Chapter~11~in~\cite{Schrijver03}).

Next, we map $z'$ from $G'$ to a flow $z$ in $G$ in the natural way:
by 
reversing the redirection of
 the edges incident to the auxiliary vertices $a_i$
while retaining their flow.
Now, the flow $z$ so obtained may not be a circulation. Specifically,
since the in- and out-degree of $a_i$ were exactly $1$ in $z'$,
in each component $V_i$ there is a pair of vertices
$u_i$, $v_i$ that are the head and tail, respectively,
of the mapped-back edges adjacent to $a_i$.
These are the only vertices whose in-degree in $z$
may differ from their out-degree.
(They differ unless $u_i = v_i$.)
To repair this,
for each $i = 1, \ldots, k$
we route a path $P_i$ from $u_i$ to $v_i$ in $V_i$;
this is always possible as we assumed that $V_i$ is strongly connected (by the definition of \EPC{}).

We obtain our final solution $F$ from $z$
by taking every edge $e \in E$
with multiplicity $z_e$
and adding the paths $P_i$.
Note that $F$ is Eulerian, i.e, a collection of subtours.
To see that $F$ satisfies~\eqref{eq:cycprop},
note that $F$ crosses every cut $(V_i, \bar V_i)$
(since the edges redirected from $a_i$
are boundary edges of $V_i$).
Moreover, by the first property above
and the fact that paths $P_i$ are vertex-disjoint
(being inside disjoint subsets $V_i$),
we have $|\delta_F^+(v)| \le 2$
for each $v \in V$ with $y_v > 0$.
This concludes the proof of Theorem~\ref{thm:lcapprox}.

\end{proof}

 \section{From Local to Global Connectivity}
\label{sec:LocalToGlobal}
In this section, we reduce the task of approximating ATSP to that of solving \EPC{}. 
To simplify the notation, for a subtour $B$, we
let 
\begin{equation}\label{eq:lb-B}
\lb_\cI(\bar B)=\lb_\cI(V\setminus V(B))=2\sum_{v\in V\setminus V(B)} y_v.
\end{equation}
The main theorem can be stated as follows.

\begin{theorem} \label{thm:LocalToGlobal}
Let $\cA$ be an algorithm for \EPC{}.
For any instance $\cI=(G,\cL,x,y)$ and subtour $B$,
if $\cA$ is $(\alpha,\wD)$-light for $\cI$ and $B$,
then there exists a tour of $G$ of weight at most
$5\alpha \lb_\cI(\bar B)+\wD +w_{\cI}(B)$.
Moreover, for any $\varepsilon >0$, a tour of weight at most
  $9(1+\varepsilon)\alpha \lb_\cI(\bar B) +\wD +w_{\cI}(B)$ can be found in time polynomial in the number
  $n=|V|$ of vertices, in $1/\varepsilon$,  and in the running time of $\cA$.
  \label{thm:LoctoGlo}
\end{theorem}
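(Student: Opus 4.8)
The plan is to convert an $(\alpha,\wD)$-light algorithm for \EPC{} into a tour by iteratively invoking it and carefully stitching the returned subtours together with "light" connecting cycles, following (and generalizing) the intuition sketched before \cref{sec:LCdef}. Throughout, the backbone $B$ will be kept as a fixed "anchor" component; the goal is to connect everything that currently sits in components disjoint from $B$ into a single component containing $B$, paying at each step in proportion to $\lb$ of the newly absorbed vertices.

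\textbf{Setup and the role of light cycles.} First I would recall the standard fact that using $x$ one can sample (or deterministically extract) a collection of subtours covering $V$ of total $w_\cI$-weight equal to $\valu(\cI)$ — more precisely, a cycle cover — and that one can always augment a collection of subtours by the \emph{lightest} subtour that strictly decreases the number of components. The key quantitative observation to isolate is that if $\cC$ is a collection of subtours on a vertex set $U$ with $t$ components and $w_\cI(\cC)\le c$, then the average component weight is $c/t$, so repeatedly adding the cheapest merging cycle from a reservoir whose total weight is bounded lets one connect $\lceil t \rceil$ components at cost $O(c)$. This is where the factor $9$ (vs.\ the factor $5$ in the existential statement) will ultimately come from: the existential bound can use an optimal reservoir, but to run in polynomial time one settles for an approximately-cheapest family, introducing the $(1+\varepsilon)$ and a worse constant.

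\textbf{The main loop.} The algorithm maintains a current Eulerian edge multiset $H$ (initially $B$ plus a cycle cover of $V\setminus V(B)$, say). In each iteration, look at the components of $H$ that are disjoint from $B$; their vertex sets form a partition $(V_1,\dots,V_k)$ of a subset of $V\setminus V(B)$ into strongly connected pieces (strong connectivity of each piece holds because each is itself a subtour, hence Eulerian and connected, hence strongly connected). Feed $(\cI, B, (V_1,\dots,V_k))$ to $\cA$, obtaining a collection $F$ of subtours with $|\delta^+_F(V_i)|\ge 1$ for all $i$. Adding $F$ to $H$ strictly reduces the number of non-$B$ components (each $V_i$ now has an edge leaving it, so it gets merged with something); by $(\alpha,\wD)$-lightness, every subtour $T$ in $F$ disjoint from $B$ has $w_\cI(T)\le\alpha\,\lb(T)$, and the part $F_B$ touching $B$ has $w_\cI(F_B)\le\wD$. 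The crucial accounting point is that a vertex $v$ whose $\lb(v)=2y_v$ is charged is charged only a bounded number of times over the whole run — once it has been absorbed into the $B$-component it is never in any future $V_i$, and one shows that the structure (e.g.\ taking $H$ to always be a union of subtours and merging greedily, plus interleaving the light-cycle connection steps) guarantees $O(1)$ charges per vertex. Summing: the total $\alpha\,\lb$ contribution telescopes to $O(\alpha\,\lb_\cI(\bar B))$, the $\wD$ term is incurred $O(1)$ times (or, with a slightly cleverer argument, essentially once amortized — this needs care), and $w_\cI(B)$ is paid once at the start.

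\textbf{The obstacle and the polynomial-time version.} The hard part will be the bookkeeping that bounds the number of iterations \emph{and} the number of times each vertex's lower bound (and the $\wD$ term) is charged — a naive loop could call $\cA$ and pay $\wD$ once per iteration, which is too expensive, so one must argue that the $F_B$-type cost is only incurred a constant number of times, presumably by separating "phases" where one first uses $\cA$ to reduce the number of components geometrically and only occasionally lets a subtour attach to $B$, and by using the cheap-reservoir/light-cycle mechanism for most merges. For the polynomial-time claim, instead of exact cheapest cycles I would use an approximation: find, in polynomial time, a family of subtours of total weight at most $(1+\varepsilon)$ times optimal that connects the current components (this is where $\varepsilon$ and the degraded constant $9$ enter), and cap the number of iterations at $O(\log n)$ or $\poly(n)$ by ensuring each iteration at least halves (or decrements) the component count. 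Assembling these pieces — the greedy merging lemma, the lightness guarantee of $\cA$, the telescoping charge argument, and the $(1+\varepsilon)$-approximate reservoir — yields a single tour of $G$ with $w_\cI$-weight at most $5\alpha\,\lb_\cI(\bar B)+\wD+w_\cI(B)$ existentially and $9(1+\varepsilon)\alpha\,\lb_\cI(\bar B)+\wD+w_\cI(B)$ computably, as claimed.
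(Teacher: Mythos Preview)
Your proposal misses the central mechanism of the proof. The telescoping argument you sketch---``once a vertex is absorbed into the $B$-component it is never in any future $V_i$''---does not control the cost: when $\cA$ is invoked, the subtours it returns may merge non-$B$ components \emph{with each other} rather than with $B$, so a vertex $v$ can sit in a non-$B$ component for many iterations, each time contributing $\alpha\,\lb(v)$ to the weight of that round's $F$. Nothing in your scheme bounds how often this happens, and your parenthetical ``one shows that the structure\ldots guarantees $O(1)$ charges per vertex'' is exactly the missing content. Similarly, you recognise that $\wD$ would be paid once per call to $\cA$, but ``separating into phases'' is not a mechanism; in the paper's argument $\wD$ is incurred exactly once, and this requires a specific structural invariant.

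The paper's actual argument is quite different. For the existential bound, it begins not with a cycle cover but with a carefully chosen family $T_1^*,\ldots,T_k^*$ of disjoint subtours (disjoint also from $B$) satisfying $w(T_i^*)\le 2\alpha\,\lb(T_i^*)$ and \emph{maximising the lexicographic order} of $(\lb(T_1^*),\ldots,\lb(T_k^*))$. In each merge step, after calling $\cA$, an update phase repeatedly selects the component $T$ with the \emph{largest} index $\low(T)=\min\{i:T\text{ meets }T_i^*\}$ and tries to attach it via a cycle of weight at most $\alpha\,\lb(T^*_{\low(T)})$; only the edges of the final selected component are added to $T^*$. This guarantees that $B$ remains a separate component until the very last iteration (so $\wD$ is paid once), that each index $j$ is the ``mark'' of at most one connecting cycle (so the total cycle cost is at most $\alpha\sum_j\lb(T_j^*)\le\alpha\,\lb(\bar B)$), and---via the lexicographic maximality---that the new $F$-edges added with $\low=i$ have total $\lb$ at most $2\,\lb(T_i^*)$ and this happens for at most one round per $i$. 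For the polynomial-time version, since such an initialization cannot be found efficiently, the paper replaces lexicographic maximality by a potential-function argument: start with the empty initialization, run the merge procedure, and if a relaxed version of the key inequality ever fails, exploit the failure to build a new initialization with strictly larger $\sum_i\lbs(T_i^*)^2$; the increment is at least $\varepsilon^2\lb(\bar B)^2/(3n^2)$, bounding the number of restarts. None of these ideas---the lexicographic initialization, the $\low$-based update rule, or the potential-based reinitialization---appear in your outline.
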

Using Theorem~\ref{thm:lcapprox}, we immediately obtain a constant-factor approximation for ATSP on singleton instances.
\begin{corollary}\label{cor:singleton}
For any $\varepsilon>0$, there exists a polynomial-time
$(18+\varepsilon)$-approximation algorithm for ATSP on singleton instances.
\end{corollary}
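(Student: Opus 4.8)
The plan is to obtain \cref{cor:singleton} as a direct combination of \cref{thm:lcapprox} and \cref{thm:LocalToGlobal}. Note first that a singleton instance is in particular a laminarly-weighted instance (\cref{def:singleton} is a special case of \cref{def:lam}), so no further preprocessing such as \cref{thm:laminar} is needed: we are handed an instance $\cI=(G,\cL,x,y)$ with $\cL$ consisting only of singletons, and we must output a tour of $G$ of small $w_\cI$-weight, compared against the Held-Karp lower bound.

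First I would apply \cref{thm:lcapprox} to obtain a polynomial-time algorithm $\cA$ for \EPC{} that is $(2,0)$-light for $\cI$ with the empty subtour $B=\emptyset$; that is, in the notation of \cref{thm:LocalToGlobal} we have $\alpha=2$ and $\wD=0$. Next I would feed $\cA$, the instance $\cI$, and $B=\emptyset$ into \cref{thm:LocalToGlobal}. Since $V(B)=\emptyset$, we have $\lb_\cI(\bar B)=\lb_\cI(V)=2\sum_{v\in V}y_v$ and $w_\cI(B)=0$. Hence, for any $\varepsilon'>0$, the ``moreover'' part of \cref{thm:LocalToGlobal} produces, in time polynomial in $n=|V|$, in $1/\varepsilon'$, and in the (polynomial) running time of $\cA$, a tour of $G$ of weight at most $9(1+\varepsilon')\cdot 2\cdot\lb_\cI(V)=18(1+\varepsilon')\,\lb_\cI(V)$.

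It remains to relate $\lb_\cI(V)$ to the Held-Karp lower bound. As observed right after the definition of the lower bound function, one always has $\lb_\cI(V)\le\valu(\cI)$, with equality precisely when $\cI$ is a singleton instance; so here $\lb_\cI(V)=\valu(\cI)$ equals the Held-Karp lower bound (and is therefore at most the optimal tour value). Consequently the computed tour has weight at most $18(1+\varepsilon')$ times the Held-Karp lower bound. Given a target $\varepsilon>0$, choosing $\varepsilon'=\varepsilon/18$ yields a polynomial-time algorithm returning a tour of weight at most $(18+\varepsilon)$ times the Held-Karp lower bound, as claimed. There is essentially no obstacle to surmount here: the only points that need care are the bookkeeping of the parameters $\alpha=2$, $\wD=0$, $B=\emptyset$, the identity $\lb_\cI(V)=\valu(\cI)$ that holds exactly because the instance is singleton, and the harmless rescaling of $\varepsilon$ at the end.
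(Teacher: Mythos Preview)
Your proof is correct and follows exactly the approach the paper intends: the corollary is stated immediately after \cref{thm:LocalToGlobal} with the remark that it follows at once from \cref{thm:lcapprox}, and your argument carries out precisely that combination with the correct bookkeeping of $\alpha=2$, $\wD=0$, $B=\emptyset$, and $\lb_\cI(V)=\valu(\cI)$.
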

In the sequel, we will use $\nw = 18 + \varepsilon$ for the approximation ratio for ATSP
on singleton instances to make the dependence on this factor
transparent.

Throughout this section we let $\cI=(G,\cL,x,y)$, $B$ and $\cA$ be fixed as in the
statement of the theorem; we let $w=w_{\cI}$ throughout. The proof of the theorem is by giving an algorithm
that uses $\cA$ as a subroutine. We first give the non-polynomial-time algorithm (with the better guarantee) in
Section~\ref{sec:ExistDescAlg}, followed by Section~\ref{sec:polyalg} where we
modify the arguments so that we also efficiently find a tour (with a slightly worse guarantee).

\subsection{Existence of a Good Tour}
\label{sec:ExistDescAlg}
The idea of the  algorithm is to start with a collection of subtours and then
iteratively merge/connect them into a single tour that visits all vertices  by adding additional  (cheap) subtours. We remark that since we
will only add Eulerian subsets of edges, the algorithm always maintains a collection of subtours. So  
the {state of the algorithm} is described by a collection of subtours $T^*$.

\paragraph{Initialization}
For the rest of the section, we assume that $V(B)\neq V$. Otherwise, $B$ itself is a tour and the
algorithm simply returns $B$.
The algorithm starts by selecting non-empty subtours $T^*_1, T^*_2, \ldots, T^*_k$ such that
\begin{enumerate}[label=I\arabic*:, ref=I\arabic* ]
  \item $B, T^*_1, T^*_2, \ldots, T^*_k$ are disjoint subtours; \label{en:i1}
  \item $w(T^*_i) \leq 2\alpha \lb(T^*_i)$ for $i=1,2, \ldots, k$; \label{en:i2}
  \item the tuple  $\langle \lb(T^*_1), \lb(T^*_2), \ldots, \lb(T^*_k) \rangle$ is lexicographically maximal.  \label{eq:lexord} \label{en:i3}
\end{enumerate}
As the lexicographic order is maximized, the subtours are ordered so
that $\lb(T^*_1) \geq \lb(T^*_2) \geq \cdots \geq \lb(T^*_k)$. 
The set $T^*$ is initialized 
as $T^* = T^*_0\cup T^*_1 \cup T^*_2 \cup \cdots \cup T^*_k$, where we let $T^*_0 = B$.

During the
execution of the algorithm we will also use the following concept. 
For a
subtour $T$ of $G$,  let $\low(T)$ be the smallest index of a subtour in $T^*_0, \ldots, T^*_k$ that it intersects (or $\infty$ if it intersects none). That is,
\begin{align*}
  \low(T)=
\min\{i: V(T^*_i) \cap V(T) \neq \emptyset\}
      .
\end{align*}
Moreover, an important quantity will be $\lb(T^*_{\low(T)})$, with the convention that $\lb(T^*_\infty)= 0$.

\begin{remark}
  \label{rem:init}
  The main difference in the polynomial-time algorithm is the initialization,
  as we do not know how to find an initialization satisfying \ref{en:i1}-\ref{en:i3}  in polynomial time. 
  Indeed, it is consistent with our knowledge that $2\alpha$ (even $2$)  is an upper bound on the
  integrality gap and, in that case,  such an algorithm would  always find a
  tour for singleton instances with  $B=\emptyset$. \end{remark}
\begin{remark}
  For intuition, let us mention that the reason to maximize the lexicographic order  (subject to \ref{en:i1}-\ref{en:i2}) is that we
  will use the following properties to bound the weight of the final
  tour:
  \begin{enumerate}
    \item Let $T$ be a subtour with $\low(T) = i > 0$ and $w(T)\leq
      2\alpha \lb(T)$. Then  $\lb(T) \leq
      \lb(T^*_i)$.
    \item For any disjoint subtours $T_1, T_2, \ldots, T_\ell$ with $\low(T_j) = i > 0$ and $w(T_j) \leq \alpha \lb(T_j)$ for $j=1, \ldots, \ell$, we have
  \begin{align*}
	\sum_{j=1}^\ell \lb(T_j) \le 2 \lb(T^*_i).
  \end{align*}
  \end{enumerate}
  These claims will be used to bound the weight of the subtours added in the merge procedure (see below). Their proofs are easy and can be found in the analysis (see the proofs of Claim~\ref{claim:boundlex1} and Claim~\ref{claim:boundlex2}).
\end{remark}

\paragraph{Merge procedure}
After the initialization, $T^*$ contains a collection of subtours that do not necessarily form a tour. The goal of the ``merge procedure'' is to form a tour of the entire graph, connecting these subtours by adding additional (cheap) subtours. We will do so while maintaining the invariant that $T^*_0 = B$ is a disjoint subtour  in $T^*$ until the very last step when a tour is formed. We emphasize that even though $T^*$ changes throughout the procedure, the index $\low(T)$ will always be defined with respect to the original subtours $T^*_0,\ldots, T^*_k$.

Specifically, the procedure repeats the following until $T^*$ is a tour (that visits all the vertices). 
Let $T_0^*, T_1, \ldots, T_\ell$ be the collection of subtours in $T^*$ (recall that $T_0^*=B$). As they are disjoint, $T_1, T_2, \ldots, T_\ell$ naturally partition the vertex set $V\setminus V(B)$ into $V(T_1), V(T_2), \ldots, V(T_\ell)$ plus  singleton sets for the remaining vertices in $V \setminus \left(V(B) \cup V(T_1) \cup \ldots \cup V(T_\ell)\right)$. Let $\con$ be this partitioning of $V \setminus V(B)$. By construction, each nonsingleton set in $\con$ corresponds to a subtour and thus, for each $V' \in \con$, the subgraph induced by $V'$ is strongly connected.   We can therefore use $\cA$ to find a collection $F$ of subtours 
such that   
\begin{enumerate}[(i)]
  \item $|\delta^+_{F}(V')|  \geq 1$ for all $ V' \in \con$, \item  $w_{\cI}(T)\le \alpha \lb(T)$ for every subtour $T$ in $F$ disjoint from $B$, and \label{cond:light}
\item $w(F_B)\leq \wD$, where $F_B\subseteq F$ is the collection of subtours in $F$ that intersect $B$. \label{cond:beta}
\end{enumerate}

Note that $\cA$ is guaranteed to find such a collection $F$ since it is assumed to be an $(\alpha,\wD)$-light algorithm for \EPC{} on $(\cI, B)$. 
Furthermore, we
may assume that a subtour $T$ in $F$ does not only visit a subset $V(T)$ of the vertices $V(T')$ visited by a subtour $T'$ in $T^*$.  Indeed, such a subtour can
safely be removed from $F$, yielding a new (smaller) collection of subtours that satisfies the
above conditions.
Having selected $F$, we now proceed to explain the ``update phase''.
\begin{enumerate}[label=U\arabic*:, ref=U\arabic* ]\item \label{en:u1}  Let $X = \emptyset$.  
	\item \label{en:u2} Select a subtour $T$ in $T^* \cup F \cup X$
          that \emph{maximizes}	$\low(T)$. Let $j=\low(T)$.
  \item \label{en:u3}If  there exists a cycle $C$ 
    of weight $w(C) \leq \alpha \lb(T^*_j)$
	that connects $T$ to other vertices, i.e., $V(T) \cap V(C) \neq \emptyset$ and $V(C) \not \subseteq V(T)$,   then add $C$ to $X$ and
	repeat from Step~\ref{en:u2}.
\item \label{en:u4}Otherwise, update $T^*$ by adding the ``new'' edges in $T$, i.e.,  $T^* \leftarrow T^* \cup (T \cap F) \cup (T \cap X)$.

\end{enumerate}

Some comments about the update of $T^*$ are in order. We emphasize that we do \emph{not}
add all edges of $F \cup X$ to $T^*$. Instead, we only add those new edges that belong to  
 the component $T$ selected in the final iteration of the update phase. Among other things, this ensures the invariant that $B$ is a subtour in $T^*$ until the very end. Indeed, the iteration where we add a subtour intersecting $B = T^*_0$ must be the last iteration. This is because, in that case, the selected $T$ that maximizes $\low(T)$ must satisfy $\low(T)=0$, which in turn implies that $T^* \cup F \cup X$ is a single tour $T$ that visits all vertices.
 
 Finally, let us remark that the update maintains that $T^*$  is a collection of subtours (i.e., an Eulerian multiset of edges).  As $T$ is a subtour in $T^* \cup F \cup X$, and $F$ and $X$ themselves are collections of subtours, we have that $T^*$
remains a collection of subtours  after the update. This finishes the description of the merging procedure and the algorithm (see also the example below). 

\begin{figure}[h]
\centering
\newcommand{\base}{
\draw[fill=gray!20] plot [smooth cycle,tension=0.7] coordinates { (-1,-1) 
  (1,-1) (1.4,1) (-1,1)};
  \draw[gray] (-1,-1) edge[-, bend right = 10 ] (0,0);
  \draw[gray] (1,-1) edge[-, bend left = 20] (0,0);
  \draw[gray] (1.4,1) edge[-, bend right = 15 ] (0,0);
  \draw[gray] (-1,1) edge[-, bend left = 10] (0,0);
\node at (0.1, 0.8) {$T^*_{10}$};
  \node at (-0.8, 0.0) {$T^*_9$};
  \node at (0.1, -0.8) {$T^*_7$};
  \node at (0.9, 0.0) {$T^*_6$};

  \begin{scope}[xshift=4.5cm]
	\draw[fill=gray!20] plot [smooth cycle,tension=0.7] coordinates { (-0.7,-0.7) 
  		(0.5,-0.7) (1, 0) (0.8,0.7) (-0.8,0.8) (-1.2, 0)};
	\node at (0.0, 0.0) {$T^*_3$};
  \end{scope}

  \begin{scope}[xshift=8cm]
	\draw[fill=gray!20] plot [smooth cycle,tension=0.7] coordinates { (-0.7,-0.7) 
  		(0.5,-0.7) (1, 0) (0.8,0.7) (-0.8,0.8) (-0.6, 0)};
  	\draw[gray] (-0.6,0) edge[-, bend left =20] (1,0);
	\node at (0.2, -0.35) {$T^*_8$};
	\node at (0.2, 0.5) {$T^*_5$};
  \end{scope}

  \begin{scope}[yshift = -3.5cm]
  \begin{scope}[xshift=0cm]
	\draw[fill=gray!20] plot [smooth cycle,tension=0.7] coordinates { (-0.7,-0.7) 
  		(0.5,-0.7) (1, 0) (0.8,0.7) (0, 1) (-0.8,0.8) (-1.2, 0)};
	\node at (0.0, 0.1) {$T^*_4$};
  \end{scope}
  \begin{scope}[xshift=4.5cm]
	\draw[fill=gray!20] plot [smooth cycle,tension=0.7] coordinates { (-0.7,-0.7) 
  		(0.5,-0.7)  (0.8,0.7) (0, 1.2) (-0.8,0.8) (-1.2, 0)};
	\node at (-0.1, 0.1) {$T^*_2$};
  \end{scope}
  \begin{scope}[xshift=8cm, yshift=0.3cm]
	\draw[fill=gray!20] plot [smooth cycle,tension=0.7] coordinates { (-1,-1) 
  		(0.5,-1)  (0.8,0.7) (0, 1.2) (-1,1) (-1.4, 0) };
	\node at (-0.2, -0.0) {$T^*_1$};
  \end{scope}
  \end{scope}
  \begin{scope}[xshift=11.5cm, yshift=-1.3cm]
	\draw[fill=gray!20] plot [smooth cycle,tension=0.7] coordinates { (-1,-0.6) 
  		(0.5,-0.5)  (0.8,0.4) (0, 0.7) (-1,0.6) (-1.4, 0) };
	\node at (-0.2, -0.0) {$T^*_0 = B$};
  \end{scope}
}
\begin{tikzpicture}
\base
\draw[thick, blue] plot [smooth cycle, tension = 1] coordinates {(-0.4,-0.5) (-0.4, -2.5) (0.7, -2.6) (1, -0.5) };
\draw[ultra thick, blue] plot [smooth cycle, tension = 1] coordinates {(5.6,0.0) (5.6, 0.7) (7.5, 0.7) (7.5, -0.2) };
\draw[thick, blue] plot [smooth cycle, tension = 1] coordinates {(5.3, -3.0) (5.5,-4.2) (7.5, -3.9) (10.1, -3.2) (10.1, -1.0)  (7.5, -2.5)};
\draw[thick, red, dashed] plot [smooth cycle, tension = 1] coordinates {(4.2, -2.5) (3.5,-2.5)  (0.2, -0.2) (1.5, 0.1) };
\end{tikzpicture}
 \caption{An illustration of the merge procedure. The gray areas depict the subtours in $T^*$. Blue  (solid) cycles depict $F$
and the red (dashed) cycle depicts $X$ after one iteration of the update
phase. The thick cycle represents the edges that this merge procedure would add to
$T^*$.}
\label{fig:algo}
\end{figure}

\begin{example}
  In Figure~\ref{fig:algo}, we have that, at the start of a merging step, $T^*$ consists of $7$ subtours containing
$\{T^*_6, T^*_7, T^*_9, T^*_{10}\}, \{T^*_3\}, \{T^*_5, T^*_8\}, \{T^*_4\}, \{T^*_2\},$ and  $\{T^*_1\}$.
The blue (solid) cycles depict the subtours  of $F$. First, we set $X =\emptyset$ and the algorithm selects the
subtour $T$ in $T^* \cup F \cup X$ that maximizes $\low(T)$. In
this example, it would be the leftmost of the three subtours in $T^* \cup
F$, with $\low(T) = 4$. The algorithm now tries to connect this component
to another component by adding a cycle with weight at most
$\alpha \lb(T^*_4)$. The red (dashed) cycle corresponds to such a cycle and its edge set  is
added to $X$. In the next iteration, the algorithm considers the two
subtours in $T^* \cup F \cup X$.  The one that maximizes $\low(T)$ contains $T^*_3,
T^*_5$, and $T^*_8$. Now suppose that there is no cycle of weight at
most $\alpha \lb(T^*_3)$ that connects this component to another component.
Then the set $T^*$ is updated by adding those subtours (edges) of $F\cup X$ that
belong to this component (depicted by the thick cycle).
\end{example}

\subsubsection{Analysis}

\paragraph{Termination} We show that the algorithm terminates by arguing that
the update phase decreases the number of connected components and the merge procedure is therefore 
 repeated at most $k\leq n$ times. 

\begin{lemma}
  \label{lem:termination}
The update phase terminates in polynomial time and decreases the number of connected components in $(V,T^*)$.
 \end{lemma}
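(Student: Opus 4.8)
The plan is to prove the two assertions of the lemma separately: that a single run of the update phase (Steps~\ref{en:u1}--\ref{en:u4}) takes polynomial time, and that it strictly decreases the number of connected components of $(V,T^*)$.

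\emph{Running time.} The only loop is \ref{en:u2}--\ref{en:u3}. I would bound its number of iterations by $n=|V|$ by observing that whenever \ref{en:u3} adds a cycle $C$ to $X$, the number of connected components of $(V,T^*\cup F\cup X)$ strictly decreases: $C$ is connected, meets $V(T)$, and contains a vertex $u\notin V(T)$, so adding it merges the component $T$ with the (distinct) component containing $u$. As this count always lies between $1$ and $n$, there are at most $n$ iterations. Each iteration is polynomial: the connected components of $T^*\cup F\cup X$ and the indices $\low(\cdot)$ are computed by graph search, and the test in \ref{en:u3} asks whether there is a closed walk of $w_\cI$-weight at most $\alpha\lb(T^*_j)$ meeting both $V(T)$ and $V\setminus V(T)$. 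Since $X$ is a collection of subtours, $C$ may be taken to be an arbitrary closed walk, and the cheapest admissible one has $w_\cI$-weight $\min_{t\in V(T),\,u\notin V(T)}\bigl(\dist(t,u)+\dist(u,t)\bigr)$ (distances with respect to $w_\cI$): any such walk decomposes into a $t\to u$ part and a $u\to t$ part, and conversely two shortest paths concatenate into a valid subtour of exactly this weight. As $G$ is strongly connected and $w_\cI\ge 0$, this minimum is computed in polynomial time (for instance by all-pairs shortest paths with respect to $w_\cI$). Step~\ref{en:u4} is a single batch of edge additions.

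\emph{Decrease in component count.} Here the crux is the claim that, if $T$ is the subtour selected in \ref{en:u2} in the iteration that triggers \ref{en:u4}, then $V(T)$ meets at least two connected components of $(V,T^*)$. Granting this, the lemma follows: \ref{en:u4} adds the edges $(T\cap F)\cup(T\cap X)$, which all lie inside $V(T)$ (as $T$ is a connected component of $T^*\cup F\cup X$), so components of $(V,T^*)$ disjoint from $V(T)$ are untouched, while $T=(T\cap T^*)\cup(T\cap F)\cup(T\cap X)$ is connected and spans $V(T)$, so all components meeting $V(T)$ are merged into one; hence the count drops by at least $2-1=1$. To prove the claim I would argue by contradiction, using the invariant that $B$ is a disjoint subtour of $T^*$, so that the connected components of $(V,T^*)$ are exactly $V(B)$ (if nonempty) together with the members of $\con$. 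Suppose $V(T)$ is contained in a single such component $K$. Since adding edges only merges components and $V(T)$ is itself a component of $T^*\cup F\cup X\supseteq T^*$ while $K$ is connected in this graph, we get $V(T)=K$; and $K$ is not a singleton (as $T$ contains an edge), so $K=V(T')$ for some non-$B$ subtour $T'$ of $T^*$, or $K=V(B)$. If $K=V(T')$ then $K\in\con$, so $\delta^+_F(V(T'))\neq\emptyset$, and any such edge has its tail in $V(T)$ and its head outside $V(T)$, contradicting that $V(T)$ is a full component of $T^*\cup F\cup X\supseteq F$. If $K=V(B)$, then since $V(B)\neq V$ there is some $V'\in\con$, and an edge of $F$ in $\delta^+(V')$ has its tail in $V'\subseteq V\setminus V(B)=V\setminus V(T)$, hence lies in a component $K'\ne T$; this $K'$ is disjoint from $V(B)$ (else it would be connected to $T$), so $\low(K')\ge 1>0=\low(T)$, contradicting that \ref{en:u2} chose $T$ to maximize $\low$. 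This proves the claim.

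The main obstacle is the second assertion, and within it the claim above: one has to identify the components of $(V,T^*)$ precisely (which relies on the invariant that $B$ remains a separate subtour) and then combine the covering guarantee $|\delta^+_F(V')|\ge 1$ for every $V'\in\con$ with the maximality of $\low(T)$ to rule out that the chosen $T$ lies inside a single old component. The polynomial running time is comparatively routine once one observes that the component count of $(V,T^*\cup F\cup X)$ is monotone along the inner loop.
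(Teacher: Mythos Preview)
Your proof is correct and follows essentially the same approach as the paper: bounding the inner loop by the component count of $(V,T^*\cup F\cup X)$ for termination, and using the crossing property $|\delta^+_F(V')|\ge 1$ together with the maximality of $\low(T)$ to show the selected $T$ meets at least two components of $(V,T^*)$. Your treatment of the case $V(T)=V(B)$ (finding another subtour $K'$ with $\low(K')>0=\low(T)$) is a minor variant of the paper's observation that if $T$ intersects $B$ then by maximality of $\low$ it must be the full tour---both rest on the same maximality argument.
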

 \begin{proof}
   First, observe that each single step of the update phase can be implemented
   in polynomial time. The only nontrivial part is Step~\ref{en:u3}, which can
   be implemented as follows: for each edge $(u,v)\in
   \delta^+(V(T))$ consider the cycle consisting of $(u,v)$ and 
   a shortest path from $v$ to $u$. Moreover, the entire update phase terminates
   in polynomial time, because  each time the if-condition of Step~\ref{en:u3}
   is satisfied, we add a cycle to $X$ that decreases the number of connected
   components in $(V,T^* \cup F \cup X)$.  The if-condition of
   Step~\ref{en:u3} can therefore be satisfied at most $ k \leq n$ times. 
 
We proceed by proving that, at termination, the update phase decreases the number
of connected components  in $(V,T^*)$. Recall that, once the algorithm
   reaches Step~\ref{en:u4}, it has selected a subtour $T$ in $T^* \cup F \cup X$. We claim that $T$ visits vertices in at least two components of $(V, T^*)$. This is because the subtours in $F$ satisfy $|\delta^+_F(V')| \geq 1$ for all $V'\in \con$, where $\con$ denotes the connected components of $(V\setminus V(B), T^* \setminus B)$. Moreover, as already noted, $T$ intersects $B$ only in the last iteration, when $T$ forms a tour.  Therefore, when
   the algorithm updates $T^*$ by adding the edges $(F \cup X) \cap
   T$, it decreases the number of components in $(V, T^*)$ by at least
   one. 
 \end{proof}

\newcommand{\Fatstep}[1]{\tilde{F}_{#1}}
\newcommand{\Xatstep}[1]{\tilde{X}_{#1}}
\newcommand{\Fatstepxing}[2]{\tilde{F}_{#1}^{#2}}
 \paragraph{Performance Guarantee}
We split our analysis of the performance guarantee into two
 parts. Note that when one execution of the merge procedure terminates (Step~\ref{en:u4}), we add the edge set
 $(F\cap T) \cup (X \cap T)$ to our solution. We will analyze the
 contribution of these two sets ($F \cap T$ and $X \cap T$) separately.
 More formally, suppose that the algorithm performs $R$ repetitions of
 the merge procedure. Let $ T_1, T_2, \ldots,  T_R$, $F_1, F_2, \ldots, F_R$, and
 $X_1, X_2, \ldots, X_R$ denote the selected subtour $T$, the edge set
 $F$, and the edge set $X$, respectively, at the end of each repetition.
  To simplify notation, we denote the edges added to $T^*$ in the $r$-th
 repetition by $\Fatstep{r} = F_r \cap T_r$ and $\Xatstep{r} =
 X_r \cap T_r$.
 
 With this notation, we proceed to bound the total weight of the solution by
 \begin{align*}
   \underbrace{w\left(\bigcup_{r=1}^R  \Fatstep{r}\right)}_{\leq 2\alpha\lb(\bar B)+\wD\mbox{\scriptsize\  by Lemma~\ref{lem:bound2}}} 
   + \underbrace{w\left(\bigcup_{r=1}^R  \Xatstep{r}\right)}_{\leq \alpha \lb(\bar B)\mbox{\scriptsize\ by Lemma~\ref{lem:bound1}}} 
   + \ w(B)
+{\sum_{i=1}^k w(T^*_i)}\\
   \leq 5\alpha\lb(\bar B)+\wD+w(B), \end{align*}
as claimed in Theorem~\ref{thm:LoctoGlo}.
Here we used that $\sum_{i=1}^k w(T^*_i) \leq 2\alpha \lb(\bar B)$, which is implied by the selection of $T_1^*, T_2^*, \ldots, T^*_k$ (\ref{en:i1}-\ref{en:i2}). It remains to prove Lemmas~\ref{lem:bound1} and~\ref{lem:bound2}.

 \begin{lemma} 
   \label{lem:bound1}We have 
   $w\left(\bigcup_{r=1}^R \Xatstep{r} \right)  \leq \alpha \lb(\bar B)$.
\end{lemma}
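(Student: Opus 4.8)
The plan is to charge, for each repetition $r$, every cycle that Step~\ref{en:u3} places into $X_r$ \emph{and} that survives into the committed subtour $T_r$ (i.e.\ that lies in $\Xatstep{r}=X_r\cap T_r$) to an index in $\{1,\dots,k\}$, and then to show that each index is charged at most once over the entire run of the algorithm. When Step~\ref{en:u3} adds a cycle $C$ to $X$ it is examining some subtour --- call it $T_C$ --- with $\low(T_C)=i$, and it guarantees $w(C)\le\alpha\,\lb(T^*_i)$, with the convention $\lb(T^*_\infty)=0$; I charge $C$ to $i$. Since $T_r$ is a connected component of $(V,T^*\cup F\cup X)$ and $X\subseteq T^*\cup F\cup X$, every cycle added to $X_r$ is, as an edge set, either contained in or disjoint from $T_r$, so
\[
w\Bigl(\bigcup_{r=1}^R \Xatstep{r}\Bigr)\;\le\;\sum_{r=1}^R\ \sum_{\substack{C\ \text{added to}\ X_r\\ C\subseteq T_r}} w(C)\;\le\;\sum_{C}\alpha\,\lb\bigl(T^*_{\low(T_C)}\bigr),
\]
where the last sum ranges over all \emph{committed} cycles (those in some $\Xatstep{r}$). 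If $V(B)=V$ then $R=0$ and the claim is vacuous, so assume $V(B)\ne V$.

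The main claim is: \emph{no committed cycle is charged to index $0$, and for each $i\in\{1,\dots,k\}$ at most one committed cycle is charged to $i$}. For the first part, if $\low(T_C)=0$ then $T_C$ meets $V(B)$; since $B$ is connected, exactly one component of $(V,T^*\cup F\cup X)$ meets $V(B)$, and since $T_C$ maximizes $\low$ in Step~\ref{en:u2}, every component then has $\low=0$, forcing $T_C=V$ --- but a cycle with $V(C)\not\subseteq V(T_C)$ cannot exist in that case, so Step~\ref{en:u3} adds nothing. For the second part, fix a committed cycle $C$ charged to $i\in\{1,\dots,k\}$ in some repetition $r$, so $T^*_i\subseteq T_C\subseteq T_r$. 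At the moment $C$ is added, $T_C$ is the \emph{unique} component meeting $V(T^*_i)$ (as $T^*_i$ is connected and contained in $T_C$), and $T_C$ maximizes $\low$; hence the component $T''\neq T_C$ to which $C$ attaches $T_C$ has $\low(T'')<i$, so the merged component $T_C\cup T''\cup C\subseteq T_r$ has $\low<i$. Because these edges are committed to $T^*$, from the end of repetition $r$ onwards the component of $(V,T^*)$ containing $T^*_i$ also contains some $T^*_{i'}$ with $i'<i$; and since $T^*$ only grows and components of $(V,T^*\cup F\cup X)$ only merge, every subtour containing $T^*_i$ that is examined later in repetition $r$, or in any later repetition, has $\low<i$. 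The same argument applied to a hypothetical earlier charge to $i$ rules out charging $i$ before repetition $r$ as well; hence $i$ is charged at most once.

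Plugging the claim into the displayed inequality --- committed cycles charged to $\infty$ contribute $\alpha\lb(T^*_\infty)=0$, none is charged to $0$, and each $i\in\{1,\dots,k\}$ is charged at most once --- and using that $V(T^*_1),\dots,V(T^*_k)$ are pairwise disjoint subsets of $V\setminus V(B)$ by \ref{en:i1}, we obtain
\[
w\Bigl(\bigcup_{r=1}^R \Xatstep{r}\Bigr)\;\le\;\sum_{i=1}^k \alpha\,\lb(T^*_i)\;=\;\alpha\sum_{i=1}^k\lb\bigl(V(T^*_i)\bigr)\;\le\;\alpha\,\lb\bigl(V\setminus V(B)\bigr)\;=\;\alpha\,\lb(\bar B),
\]
as required. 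The delicate point is the cross-repetition part of the claim: a cycle that is added to $X_r$ but later discarded only lowers $\low$ in the transient graph $(V,T^*\cup F\cup X)$, which is reset at the beginning of the next repetition, so one cannot rely on it. This is precisely why the charging must count only cycles that actually enter $T^*$, i.e.\ why the lemma is stated for $\bigcup_r\Xatstep{r}$ rather than $\bigcup_r X_r$.
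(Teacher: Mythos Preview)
Your proof is correct and follows essentially the same charging argument as the paper: decompose $\bigcup_r \Xatstep{r}$ into the cycles added in Step~\ref{en:u3}, charge each to the index $j=\low(T_C)$, and show that among committed cycles each index in $\{1,\dots,k\}$ is charged at most once because once a committed cycle merges $T^*_j$ with some $T^*_{j'}$ ($j'<j$), this connection persists in $T^*$. Two small remarks: in the argument that $\low(T_C)=0$ forces $V(T_C)=V$, you implicitly use that every vertex of $V$ lies in some subtour of $T^*\cup F\cup X$, which holds because $F$ crosses every class of the partition $\con$ (so isolated vertices of $(V,T^*)$ are incident to $F$); and the phrase ``components of $(V,T^*\cup F\cup X)$ only merge'' is literally true only within a repetition --- across repetitions you correctly rely on ``$T^*$ only grows'' instead.
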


\begin{proof}
  Note that $\Xatstep{r}$ consists of a subset of the cycles added to $X_r$
  in Step~\ref{en:u3} of the update phase: specifically, of those cycles
   that were contained  in the
  subtour $T_r$ selected in Step~\ref{en:u2} in the last iteration
  of the update phase during the  $r$-th repetition of the merge procedure. We can therefore decompose
  $\bigcup_{r=1}^R  \Xatstep{r}$ into cycles $C_1, C_2, \ldots, C_c$,  indexed in the order they were added by the
  algorithm.  We assume that all these cycles have strictly positive weight, as $0$-weight cycles do not affect $w\left(\bigcup_{r=1}^R \Xatstep{r} \right)$.
  At the time a cycle $C_i$  (with $w(C_i) > 0$) was selected in Step~\ref{en:u3} of the update phase, it satisfied the following two properties:
  \begin{itemize}
	\item[(i)] it connected the subtour $T$
          with $\low(T)=j>0$
          selected in Step~\ref{en:u2} with at least one other
        subtour $T'$ such that $\low(T') < \low(T)$; and
	\item[(ii)] it had weight $w(C_i)\le \alpha\lb(T^*_{j})$. 
  \end{itemize}
  In this case, we say that $C_i$ is marked by $j$. Note that $1\leq j \leq k$, since $\alpha \lb(T^*_j) \ge w(C_i) >0$ and by convention $\lb(T^*_\infty) = 0$.
  
  We claim that at most one cycle in $C_1, C_2, \ldots, C_c$  is
 marked by each of the numbers $\{1,2,\ldots,k\}$. 
To see this, consider the
  first cycle $C_i$ marked by $j$ (if any). By {\em (i)} above,
  when $C_i$ was added, it connected two subtours $T$ and
  $T'$ such that $\low(T')< \low(T)=j$.   As the algorithm
  only adds edges, $T$ and $T'$ will remain connected throughout the
  execution of the algorithm. Therefore, by the definition of $\low$ and by
  the fact that $\low(T')<\low(T)$, we have that a subtour $T''$ selected in Step~\ref{en:u2} later in the algorithm  always
  has $\low(T'') \neq j$. Hence, no other cycle will be marked
  by $j$.

 The bound now follows since at most one  cycle with positive weight is marked by $j$, and such a cycle has weight at most $\alpha
 \lb(T^*_{j})$.
 Moreover, we have $\sum_{j=1}^k \alpha \lb(T^*_j) \leq \alpha \lb(\bar B)$, which is again implied by the selection of $T_1^*, T_2^*, \ldots, T^*_k$ (\ref{en:i1}-\ref{en:i2}).
\end{proof}
We complete the analysis of the performance guarantee with the following lemma. We remark that this is the only part of the proof that relies on the initial subtours $T_1^*, \ldots, T_k^*$ maximizing the lexicographic order (\ref{en:i3}). 
\begin{lemma}
  \label{lem:bound2}
  We have
  $w\left(\bigcup_{r=1}^R \Fatstep{r} \right)  \leq  2 \alpha\lb(\bar B)
  +\wD$. 
\end{lemma}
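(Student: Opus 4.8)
The plan is to split $\bigcup_{r=1}^{R}\Fatstep{r}$ into the individual subtours of the collections $F_r$, to dispose of those meeting $B$ using the budget $\wD$, and to charge the remaining (``light'') ones against $T^*_1,\dots,T^*_k$ via the lexicographic maximality~\ref{en:i3}. Since every subtour of $F_r$ is connected, it lies wholly inside or wholly outside the component $T_r$, so $\Fatstep{r}=F_r\cap T_r$ is a disjoint union of entire subtours of $F_r$. If one such subtour meets $B=T^*_0$ then $T_r$ meets $B$, so $\low(T_r)=0$, which (as noted in the description of the update phase) can happen only in the last repetition $r=R$; those subtours form a sub-collection of the subtours of $F_R$ intersecting $B$, so by condition~(\ref{cond:beta}) their total weight is at most $\wD$. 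Every other subtour $T$ is disjoint from $B$, hence $w(T)\le\alpha\lb(T)$ by condition~(\ref{cond:light}); call these the \emph{light} subtours. It follows that $w\bigl(\bigcup_{r=1}^{R}\Fatstep{r}\bigr)\le\wD+\alpha\sum_{T\text{ light}}\lb(T)$, and it remains to prove $\sum_{T\text{ light}}\lb(T)\le 2\lb(\bar B)$.

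To each light subtour $T$, produced in some repetition $r$, attach the index $i(T)=\low(T)\ge 1$ (possibly $\infty$). Since $w(T)\le\alpha\lb(T)\le 2\alpha\lb(T)$, the first property from the remark preceding the merge procedure (Claim~\ref{claim:boundlex1}) gives $\lb(T)\le\lb(T^*_{i(T)})$, so light subtours of index $\infty$ contribute nothing and can be dropped. The key step is to show that \emph{for every $i\in\{1,\dots,k\}$, all light subtours of index $i$ are produced within a single repetition}. Granting this, fix $i$ and let $r_i$ be that repetition; the light subtours of index $i$ are then distinct connected components of $F_{r_i}$, hence pairwise disjoint, all with $\low=i$ and $w\le\alpha\lb$, so the second property in that remark (Claim~\ref{claim:boundlex2}) gives $\sum_{T\text{ light},\,i(T)=i}\lb(T)\le 2\lb(T^*_i)$. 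Summing over $i$, and using that $T^*_1,\dots,T^*_k$ are pairwise disjoint subtours inside $V\setminus V(B)$ so that $\sum_{i=1}^{k}\lb(T^*_i)\le 2\sum_{v\in V\setminus V(B)}y_v=\lb(\bar B)$, we obtain $\sum_{T\text{ light}}\lb(T)\le 2\lb(\bar B)$, which finishes the lemma.

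The remaining task --- and the one I expect to be the main obstacle --- is the one-repetition claim. Suppose a light subtour $T$ with $\low(T)=i$ is produced in repetition $r$; then $T\subseteq T_r$, so $i\ge j:=\low(T_r)$. When $T^*$ is updated at the end of repetition $r$, the vertex set $V(T_r)$ becomes exactly one connected component of $T^*$; since $V(T)$ meets $V(T^*_i)$ and both then lie in $T^*$, this forces $V(T^*_i)\subseteq V(T_r)$, an inclusion that persists in later repetitions as components of $T^*$ only grow. Suppose for contradiction a later repetition $r'>r$ produces a light subtour $T'$ with $\low(T')=i$. Then $T'$ meets $V(T^*_i)\subseteq V(T_r)$; but after the redundancy-removal step $T'$ lies in no single component of the then-current $T^*$, in particular not in the one containing $V(T_r)$, so $T'$ also visits a vertex outside $V(T_r)$. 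As $T'$ is Eulerian and connected with vertices both inside and outside $V(T_r)$, its cycle decomposition contains a cycle $C$ with $V(C)\cap V(T_r)\ne\emptyset$ and $V(C)\not\subseteq V(T_r)$, and $w(C)\le w(T')\le\alpha\lb(T')\le\alpha\lb(T^*_i)\le\alpha\lb(T^*_j)$ --- using Claim~\ref{claim:boundlex1} once more, the ordering $\lb(T^*_1)\ge\dots\ge\lb(T^*_k)$, and $i\ge j$. But the update phase of repetition $r$ reaching Step~\ref{en:u4} means exactly that no cycle of weight at most $\alpha\lb(T^*_j)$ connects $T_r$ to the rest of the graph (the test in Step~\ref{en:u3} failed for $T=T_r$), a contradiction. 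The delicate points to verify carefully are that $V(T_r)$ is \emph{exactly} a component of the updated $T^*$, that the redundancy-removal assumption really forces $T'$ out of $V(T_r)$, and the weight bookkeeping relating $\low(T')$, $\low(T_r)$ and the ordering of the $\lb(T^*_i)$; this is also the only place where the lexicographic maximality~\ref{en:i3} enters (through Claims~\ref{claim:boundlex1} and~\ref{claim:boundlex2}).
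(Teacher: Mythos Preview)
Your proof is correct and follows essentially the same approach as the paper: partition the subtours of $\Fatstep{r}$ according to $\low$, handle those with $\low=0$ via the budget $\wD$ and those with $\low=\infty$ trivially, bound the rest using Claims~\ref{claim:boundlex1} and~\ref{claim:boundlex2}, and argue that each index $i\in\{1,\dots,k\}$ occurs in at most one repetition by exhibiting a cycle that would have satisfied the test in Step~\ref{en:u3}. Your argument for the one-repetition claim is organized a little differently (you first verify that $V(T_r)$ becomes exactly one component of the updated $T^*$ and that $V(T^*_i)\subseteq V(T_r)$ via the intersection with $V(T)$, whereas the paper argues $V(T^*_i)\subseteq V(T_{r_0})$ directly from ``otherwise no edges incident to $T^*_i$ would have been added''), but the content is the same.
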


\begin{proof}
  Consider the $r$-th repetition of the merge procedure. Partition the collection of subtours $\Fatstep{r}$ into 
\begin{align*}
    \Fatstepxing{r}{i} = \{T\mbox{ subtour in $\Fatstep{r}$}: \low(T) = i\}\quad \mbox{for $i\in \{0,1 \ldots, k, \infty\}$.}
  \end{align*}
  That is, $\Fatstepxing{r}{i}$  contains those  subtours in
  $\Fatstep{r}$ that intersect $T^*_i$  and do not intersect any of the subtours
  $T^*_0 = B, T^*_1, T^*_2, \ldots, T^*_{i-1}$ (or intersect none if $i=\infty$).
  The total weight $w(\Fatstep{r})$ of $\Fatstep{r}$ thus equals
  \begin{align*}
    w(\Fatstepxing{r}{0}) + w(\Fatstepxing{r}{\infty}) + \sum_{i=1}^k w(\Fatstepxing{r}{i})\,.
  \end{align*}
   We bound the weight of $\Fatstep{r}$
   by considering these terms separately. Let us start with $\Fatstepxing{r}{0}$.

\begin{claim}\label{cl:bound0}
  The set $\Fatstepxing{r}{0}$ can be
  non-empty only for $r=R$, and $w(\Fatstepxing{R}{0})\le \wD$.
\end{claim}
\begin{proof}
  The first claim follows by the invariant that $B$ is a subtour in $T^*$ until the very last iteration of the merge procedure. Indeed,  if $\Fatstepxing{r}{0}\neq\emptyset$,
then the algorithm must terminate after  the
$r$-th merge procedure: the subtour $T$ selected in
Step~\ref{en:u2} must visit all vertices. For the second part, we have that every
$T\in \Fatstepxing{R}{0}$ must intersect $T^*_0 = B$. Therefore,
property~\ref{cond:beta} of the edge set $\Fatstep{R}$ returned by $\cA$ asserts that
$w(\Fatstepxing{R}{0})\le \wD$.
\end{proof}

For $i>0$, we start by two simple claims that follow since each subtour $T$ in $\Fatstep{r}$
  satisfies $w(T) \leq \alpha \lb(T)$ (by property~\ref{cond:light} of $\cA$) and the choice of $T^*_1, \ldots, T^*_k$ to maximize the
  lexicographic order~\ref{en:i3} subject to~\ref{en:i1}-\ref{en:i2}. 
  \begin{claim}
	\label{claim:boundlex1}
  For $i>0$ and $T\in \Fatstepxing{r}{i}$ we have $\lb(T) \leq \lb(T_i^*)$.
  \end{claim}
  \begin{proof}
	The inequality $\lb(T) > \lb(T_i^*)$ together with the fact that $w(T) \leq \alpha \lb(T) \leq 2\alpha \lb(T)$ would contradict that $T^*_1, \ldots, T^*_k$
	was chosen to maximize the lexicographic order~\ref{en:i3}. Indeed, in that case, a
  an initialization satisfying~\ref{en:i1}-\ref{en:i2} of higher lexicographic order
	would  be $T^*_1, \ldots, T^*_{i-1}, T$. 	\end{proof}

Together with $\lb(T^*_\infty)  = 0$, this claim implies that $w(\Fatstepxing{r}{\infty})=0$. 
We now present a more general claim that also applies to $\Fatstepxing{r}{i}$ with $1\leq i \leq k$. 

  \begin{claim}
	\label{claim:boundlex2}
  If $i>0$, then we have $\lb(\Fatstepxing{r}{i}) \leq  2\lb(T^*_i)$.
  \end{claim}
  \begin{proof}
    Suppose towards a contradiction that $\lb(\Fatstepxing{r}{i}) >  2\lb(T^*_i)$.  Let
    $T_1, T_2, \ldots, T_\ell$ be the subtours in $\Fatstepxing{r}{i}$ and define $T$ to be the subtour obtained by taking the union of the subtours $T_i^*$ and $T_1, \ldots,
    T_\ell$.  Consider the initialization $T^*_1, \ldots,T^*_{i-1}, T$. By construction these subtours are disjoint and disjoint from $B$ since $i>0$.  Thus~\ref{en:i1} is satisfied. Moreover, 
     we have $\lb(T) > \lb(T_i^*)$
    and therefore this initialization is lexicographically larger
    than $T^*_1, \ldots, T^*_k$.  This is
    a contradiction if $T$ also satisfies~\ref{en:i2}, i.e.,
    if ${w(T)}\leq 2\alpha \lb(T)$. 
    
    Therefore, we must have $w(T) > 2\alpha \lb(T)$.  By the facts
    that $w(T_j) \leq \alpha \lb(T_j)$ (by property~\ref{cond:light} of $\cA$)
    and that $w(T^*_i) \leq 2\alpha \lb(T^*_i)$ (by~\ref{en:i2}),
    $$ w(T)= w(T^*_i) + \sum_{j=1}^\ell w(T_j) \leq 2\alpha
    \lb(T^*_i) + \sum_{j=1}^\ell \alpha \lb(T_j)\quad\mbox{ and }\quad
    \lb(T)\geq \sum_{j=1}^\ell \lb(T_j).  $$ 
    Together with $w(T)
    >  2\alpha \lb(T)$, we see that
    $$  2\alpha \sum_{j=1}^\ell \lb(T_j)< 2\alpha
    \lb(T^*_i) +\alpha \sum_{j=1}^\ell \lb(T_j).$$
    From here, we can conclude that  $\lb(\Fatstepxing{r}{i}) = \sum_{j=1}^\ell \lb(T_j)
    \leq 2 \lb(T^*_i)$.
  \end{proof}
  
  Using the above claims, we can write $w\left(\bigcup_{r=1}^R  \Fatstep{r} \right)$ as
  \begin{align*}
    \sum_{r=1}^R\left( w(\Fatstepxing{r}{0}) + w(\Fatstepxing{r}{\infty}) + \sum_{i=1}^k w(\Fatstepxing{r}{i}) \right) & \leq  \beta + \sum_{r=1}^R\sum_{i=1}^k w(\Fatstepxing{r}{i})\\
    & \leq \beta + \alpha \sum_{r=1}^R \sum_{i=1}^k  \lb(\Fatstepxing{r}{i}) \\
    & = \beta + \alpha \sum_{i=1}^k \sum_{r: \ \Fatstepxing{r}{i} \neq \emptyset} \lb(\Fatstepxing{r}{i}) \\
    &\leq \beta +  2\alpha \sum_{i=1}^k \sum_{r: \ \Fatstepxing{r}{i} \neq \emptyset} \lb(T^*_i). 
  \end{align*}
  We complete the proof of the lemma by using Claim~\ref{claim:boundlex1} to prove that
  $\Fatstepxing{r}{i}$ is non-empty for at most one repetition $r$ of the merge procedure.
  Suppose towards a contradiction that  there exist $1\leq r_0< r_1 \leq R$ such
  that both $\Fatstepxing{r_0}{i} \neq \emptyset$ and $\Fatstepxing{r_1}{i} \neq \emptyset$. In
  the $r_0$-th repetition of the merge procedure, $T^*_i$ was contained in the
  subtour $T_{r_0}$ (selected in Step~\ref{en:u2}) since otherwise no edges incident to $T^*_i$ would
  have been added to $T^*$. Therefore $j=\low(T_{r_0}) \leq i$.
  Now consider a subtour $T$ in $\Fatstepxing{r_1}{i}$.
  First, recall that we have assumed that $T$, being a subtour in $F_{r_1}$, does not only visit
  a subset $V(T)$ of vertices $V(T')$ visited by a subtour $T'$ in $T^*$.
  In particular, since $T_{r_0}$ is a subset of some subtour $T'$ in $T^*$ during the $r_1$-th repetition,
  we have $V(T) \not \subseteq V(T_{r_0})$.
  Second, by Claim~\ref{claim:boundlex1},
  we have  $w(T) \leq \alpha \lb(T) \leq \alpha \lb(T_i^*)$.

  In short, $T$ is a subtour that connects $T_{r_0}$
  to another component and it has weight at most $\alpha \lb(T_i^*)\le
  \alpha \lb(T_j^*)$, where $j=\low(T_{r_0}) \leq i$. As $T$ is Eulerian, it can be decomposed into cycles. One
  of these cycles, say $C$,  connects $T_{r_0}$ to another component and 
  \begin{align}
	w(C) \leq w(T) \leq \alpha \lb(T^*_j).
	\label{eq:cycleweight}
  \end{align}
  In other words, there exists a cycle $C$ that, in the $r_0$-th repetition of
  the merge procedure, satisfied the if-condition of Step~\ref{en:u3}, which
  contradicts the fact that $C$ was not added during the $r_0$-th repetition.
\end{proof}

\subsection{Polynomial-Time Algorithm}
\label{sec:polyalg}
In this section we describe how to modify the arguments in
Section~\ref{sec:ExistDescAlg} to obtain an algorithm that runs in time
polynomial in the number $n$ of vertices, in $1/\varepsilon$, and in the running time of $\cA$.

By Lemma~\ref{lem:termination}, the update phase can be implemented in
time polynomial in $n$. Therefore, the merge procedure described in
Section~\ref{sec:ExistDescAlg} runs in time polynomial in $n$ and in the
running time of $\cA$.  The problem is the initialization: as mentioned in
Remark~\ref{rem:init}, it seems difficult to give a polynomial-time algorithm for
finding subtours  $T_1^*, \ldots, T_k^*$  that satisfy~\ref{en:i1} and \ref{en:i2} together with the third condition~\ref{en:i3}  that we should maximize the lexicographic
order of
\begin{align*}
  \langle \lb(T^*_1), \lb(T^*_2), \ldots, \lb(T^*_k) \rangle.
\end{align*}

We overcome this obstacle by first identifying the properties that we actually
use from selecting the subtours as above. We then show that we can
obtain an initialization that satisfies these properties in polynomial
time. Our initialization will still satisfy~\ref{en:i1} and a relaxed variant of~\ref{en:i2} that we now describe. 
To simplify notation,  define
\begin{align*}
  \lbs(T) = \lb(T) + \varepsilon \cdot \frac{|V(T)|}{n} \cdot \lb(\bar B)
\end{align*}
for a subtour $T$. Note that $\lbs(T)$ is a slightly increased version of $\lb(T)$. This increase is used to lower-bound the progress in  Lemma~\ref{lem:polytime}. Also note that $\lbs$, like $\lb$, is additive over vertex-disjoint subtours. Finally, we reprise the convention that $\lbs(T^*_\infty) = \lb(T^*_\infty) = 0$. 

Our initializations will be collections $T_1^*, \ldots, T_k^*$ of subtours satisfying
\begin{enumerate}[label=I\arabic*':, ref=I\arabic*' ]
  \item[I1:] $B, T^*_1, T^*_2, \ldots, T^*_k$ are disjoint subtours; 
    \setcounter{enumi}{1} 
  \item $w(T^*_i) \leq 3\alpha \lbs(T^*_i)$ for $i=1,2, \ldots, k$. \label{en:ii2}
\end{enumerate}
While we do not maximize the lexicographic order, we assume that the subtours are indexed so that $\lbs(T_1^*) \geq \lbs(T_2^*) \geq \ldots \geq \lbs(T_k^*)$.
Note that the difference between~\ref{en:ii2} and~\ref{en:i2} is that here we require $w(T^*_i) \leq 3\alpha \lbs(T^*_i)$ instead of $w(T^*_i) \leq 2 \alpha \lb(T^*_i)$. The reason why
we use a factor of $3$ instead of $2$ is that it leads to a better constant when balancing the parameters and, as previously mentioned, we use $\lbs$ instead of $\lb$ to lower-bound the progress in Lemma~\ref{lem:polytime}.

The main change to our initialization to achieve polynomial running time is that we do \emph{not} maximize the lexicographic order~\eqref{en:i3}. As mentioned in the analysis in Section~\ref{sec:ExistDescAlg}, the only way
we use that the initialization  maximizes the lexicographic order is for the proof of Lemma~\ref{lem:bound2}. In particular,
this is used in the proofs of Claims~\ref{claim:boundlex1}
and~\ref{claim:boundlex2}. 
Instead of maximizing the lexicographic order, our polynomial-time algorithm will ensure a relaxed variant of those claims (formalized in the lemma below: see Condition~\eqref{eq:relax}). The
claimed polynomial-time algorithm is then obtained by first proving that
a slight modification of the merge procedure returns a tour of value at most
$9(1+\varepsilon)\alpha\lb(\bar B) + \beta + w(B) $ if Condition~\eqref{eq:relax} holds, and then showing that an initialization
satisfying this condition (and~\ref{en:i1},\ref{en:ii2}) can be found in time polynomial in
$n$, \bl{$1/\varepsilon$,} and in the running time of $\cA$. We start by describing the 
modification to the merge procedure. 

\paragraph{Modified merge procedure} The only modification to the merge
procedure in Section~\ref{sec:ExistDescAlg} is that we change the update phase
by  relaxing the condition of the if-statement in Step~\ref{en:u3} from $w(C)
\leq  \alpha \lb(T^*_j)$ to $w(C) \leq 3\alpha \lbs(T^*_j)$, where $j = \low(T)$ and $T$ is the subtour selected in Step~\ref{en:u2}. In other words, Step~\ref{en:u3} is replaced by:
\begin{enumerate}[label=U\arabic*':, ref=U\arabic*']
	 \setcounter{enumi}{2}

  \item \label{en:uu3}If  there exists a cycle $C$ 
    of weight $w(C) \leq 3\alpha \lbs(T^*_j)$
	that connects $T$ to other vertices, i.e., $V(T) \cap V(C) \neq \emptyset$ and $V(C) \not \subseteq V(T)$,   then add $C$ to $X$ and
	repeat from Step~\ref{en:u2}.
\end{enumerate}
Clearly the modified merge procedure still runs in time polynomial in $n$ and
in the running time of $\cA$. Moreover, we show that if Condition~\eqref{eq:relax} holds then the returned tour will have the desired weight. Recall from
Section~\ref{sec:ExistDescAlg}  that $\Fatstep{r}$ denotes the subset of $F$ and $\Xatstep{r}$
  denotes the subset of $X$ that were added in the $r$-th repetition of the
  (modified) merge procedure.  Furthermore, we define (as in the previous section)
  $\Fatstepxing{r}{i} = \{T\mbox{ subtour in $\Fatstep{r}$}: \low(T) = i\}$.
\begin{lemma}
  \label{lem:polybound}
  Suppose that the algorithm is initialized with subtours $T^*_1, T^*_2, \ldots, T^*_k$ satisfying~\ref{en:i1} and~\ref{en:ii2}. If  in each repetition $r$ of the
  modified merge procedure we add a subset $\Fatstep{r}$  such that 
  \begin{align}
    \lb(\Fatstepxing{r}{i}) \leq 3\lbs(T_i^*) \qquad \mbox{for all }i\in \{1,2, \ldots, k, \infty\} \,,
	\label{eq:relax}
  \end{align}
  then the returned tour has weight at most $9(1+\varepsilon)\alpha\lb(\bar B) + \beta + w(B)$.
\end{lemma}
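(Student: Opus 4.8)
The plan is to mirror the analysis of Section~\ref{sec:ExistDescAlg} --- in particular the proofs of Lemmas~\ref{lem:bound1} and~\ref{lem:bound2} --- replacing the update-phase threshold $\alpha\lb(T^*_j)$ by $3\alpha\lbs(T^*_j)$ throughout and using Condition~\eqref{eq:relax} in place of Claims~\ref{claim:boundlex1} and~\ref{claim:boundlex2}. First I would record the one arithmetic input that is genuinely new: since $B,T^*_1,\ldots,T^*_k$ are vertex-disjoint subtours, the sets $V(T^*_i)$ are pairwise disjoint subsets of $V\setminus V(B)$, so $\sum_{i=1}^k\lb(T^*_i)\le\lb(\bar B)$ and $\sum_{i=1}^k|V(T^*_i)|\le n$; hence, by the definition of $\lbs$,
\[
\sum_{i=1}^k\lbs(T^*_i)\ \le\ \lb(\bar B)+\varepsilon\cdot\frac{\sum_{i}|V(T^*_i)|}{n}\cdot\lb(\bar B)\ \le\ (1+\varepsilon)\,\lb(\bar B).
\]
I would also note that Lemma~\ref{lem:termination} applies verbatim to the modified merge procedure (only the threshold in Step~\ref{en:uu3} changed), so it is run at most $k\le n$ times and ends with a tour; I keep the notation $T_r,F_r,X_r,\Fatstep{r},\Xatstep{r},\Fatstepxing{r}{i}$ of Section~\ref{sec:ExistDescAlg}.

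Exactly as before, the returned tour has weight at most $w\left(\bigcup_{r}\Fatstep{r}\right)+w\left(\bigcup_{r}\Xatstep{r}\right)+w(B)+\sum_{i=1}^k w(T^*_i)$. By~\ref{en:ii2} the last sum is at most $3\alpha\sum_i\lbs(T^*_i)\le 3\alpha(1+\varepsilon)\lb(\bar B)$. For $\bigcup_r\Xatstep{r}$ I would rerun the marking argument of Lemma~\ref{lem:bound1} without change: every positive-weight cycle that ends up in some $\Xatstep{r}$ is marked by the index $j=\low(T)\in\{1,\ldots,k\}$ of the component it was attached to, at most one cycle is marked by each $j$ (since two connected components stay connected), and by the modified Step~\ref{en:uu3} it has weight at most $3\alpha\lbs(T^*_j)$; summing yields $w\left(\bigcup_r\Xatstep{r}\right)\le 3\alpha\sum_{j=1}^k\lbs(T^*_j)\le 3\alpha(1+\varepsilon)\lb(\bar B)$.

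For $\bigcup_r\Fatstep{r}$ I would rerun the proof of Lemma~\ref{lem:bound2}, splitting each $\Fatstep{r}$ into the sets $\Fatstepxing{r}{i}$. Claim~\ref{cl:bound0} carries over verbatim (it only uses the invariant, preserved by the unchanged update rule, that $B$ is a component of $T^*$ until the last step), so $\Fatstepxing{r}{0}=\emptyset$ for $r<R$ and $w(\Fatstepxing{R}{0})\le\wD$ by property~\ref{cond:beta} of $\cA$. For $i\in\{1,\ldots,k,\infty\}$, Condition~\eqref{eq:relax} plays the role of Claims~\ref{claim:boundlex1} and~\ref{claim:boundlex2}: the subtours in $\Fatstep{r}$ are vertex-disjoint and each satisfies $w(T)\le\alpha\lb(T)$ by property~\ref{cond:light}, so $w(\Fatstepxing{r}{i})\le\alpha\lb(\Fatstepxing{r}{i})\le 3\alpha\lbs(T^*_i)$ (which is $0$ when $i=\infty$), and likewise $\lb(T)\le 3\lbs(T^*_i)$ for each individual $T\in\Fatstepxing{r}{i}$. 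It then remains to show that $\Fatstepxing{r}{i}$ is nonempty for at most one $r$, and this is the one place where I expect to have to be careful, since we no longer have the sharp per-subtour bound $\lb(T)\le\lb(T^*_i)$ of Claim~\ref{claim:boundlex1}. I would argue as in Lemma~\ref{lem:bound2}: if $\Fatstepxing{r_0}{i}\ne\emptyset$ and $\Fatstepxing{r_1}{i}\ne\emptyset$ with $r_0<r_1$, then $V(T^*_i)\subseteq V(T_{r_0})$, so $j:=\low(T_{r_0})\le i$; a subtour $T\in\Fatstepxing{r_1}{i}$ has $V(T)\not\subseteq V(T_{r_0})$ (by the ``we may assume'' reduction) and $w(T)\le\alpha\lb(T)\le 3\alpha\lbs(T^*_i)\le 3\alpha\lbs(T^*_j)$, using that the $T^*_i$ are ordered by decreasing $\lbs$; hence some cycle $C$ in the cycle decomposition of $T$ connects $T_{r_0}$ to another component and has $w(C)\le 3\alpha\lbs(T^*_j)$, so it would have been added during Step~\ref{en:uu3} of repetition $r_0$ --- a contradiction. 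Therefore $w\left(\bigcup_r\Fatstep{r}\right)\le\wD+\sum_{i=1}^k 3\alpha\lbs(T^*_i)\le\wD+3\alpha(1+\varepsilon)\lb(\bar B)$.

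Adding the three estimates gives a tour of weight at most $9\alpha(1+\varepsilon)\lb(\bar B)+\wD+w(B)$, as claimed. The only genuinely new content over Section~\ref{sec:ExistDescAlg} is the $\lbs$-bookkeeping (the inequality $\sum_i\lbs(T^*_i)\le(1+\varepsilon)\lb(\bar B)$ and the tracking of the extra factor $3$ coming from~\ref{en:ii2} and the modified Step~\ref{en:uu3}) together with checking that the ``at most one $r$'' step still goes through once Claim~\ref{claim:boundlex1} is weakened to Condition~\eqref{eq:relax}; I expect that last check to be the main --- still minor --- obstacle.
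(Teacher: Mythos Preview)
Your proposal is correct and follows essentially the same route as the paper's own proof: decompose the tour weight into the contributions of $\bigcup_r\Fatstep{r}$, $\bigcup_r\Xatstep{r}$, $w(B)$, and $\sum_i w(T^*_i)$; rerun Lemma~\ref{lem:bound1} with the modified threshold $3\alpha\lbs(T^*_j)$; rerun Lemma~\ref{lem:bound2} using Claim~\ref{cl:bound0} verbatim and Condition~\eqref{eq:relax} in place of Claims~\ref{claim:boundlex1} and~\ref{claim:boundlex2}; and verify the ``at most one $r$'' step via the same contradiction, now using the relaxed cycle bound $w(C)\le 3\alpha\lbs(T^*_j)$ and the decreasing-$\lbs$ ordering. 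Your explicit computation of $\sum_i\lbs(T^*_i)\le(1+\varepsilon)\lb(\bar B)$ is exactly the arithmetic the paper invokes at the final step.
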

Let us comment on the above statement before giving its proof. The bound~\eqref{eq:relax}  is a relaxation of the bound
of Claim~\ref{claim:boundlex2} from $\lb(\Fatstepxing{r}{i}) \leq 2 \lb(T_i^*)$ to
$\lb(\Fatstepxing{r}{i}) \leq 3 \lbs(T_i^*)$; and it also implies a relaxed version of
Claim~\ref{claim:boundlex1}: from $\lb(T) \leq \lb(T_i^*)$ to $\lb(T) \leq 3\lbs(T_i^*)$
(for every $T$ in $\Fatstepxing{r}{i}$).
It is because of this relaxed bound that we modified the if-condition of the
update phase (by relaxing it by the same amount); this will be apparent in
the proof.
\begin{proof}

  As in the analysis of the performance guarantee in
  Section~\ref{sec:ExistDescAlg}, we can write the weight of the returned tour
  as 
  \begin{align*}
w\left(\bigcup_{r=1}^R \Fatstep{r}\right) + 
    w\left(\bigcup_{r=1}^R \Xatstep{r}\right) + 
    w(B) + 
    \sum_{i=1}^k w(T^*_i).  
  \end{align*} 
 
  To bound $w\left(\bigcup_{r=1}^R \Xatstep{r}\right)$, we  observe that the proof of
  Lemma~\ref{lem:bound1} generalizes verbatim except that the weight of a cycle
  marked by $i$ is now bounded by $3\alpha\lbs(T_i^*)$ instead of by
  $\alpha \lb(T_i^*)$ (because of the relaxation of the bound in the
  if-condition in Step~\ref{en:uu3}). Hence \[w\left(\bigcup_{r=1}^R \Xatstep{r}\right) \leq \sum_{i=1}^k 3\alpha\lbs(T_i^*)\,. \]

  We proceed to bound $w\left(\bigcup_{r=1}^R \Fatstep{r}\right)$. Using the
  same arguments as in the proof of Lemma~\ref{lem:bound2}, we get
  \begin{align*}
    w\left(\bigcup_{r=1}^R \Fatstep{r}\right)   = \sum_{r=1}^R \left( w(\Fatstepxing{r}{0}) + w(\Fatstepxing{r}{\infty}) + \sum_{i=1}^k w(\Fatstepxing{r}{i}) \right) \leq \beta +  \sum_{r=1}^R  \sum_{i=1}^k w(\Fatstepxing{r}{i}) \\
    \leq \beta +   \alpha \sum_{i=1}^k \sum_{r: \, \Fatstepxing{r}{i} \neq \emptyset} \lb(\Fatstepxing{r}{i}) 
    \leq \beta + \alpha \sum_{i=1}^k \sum_{r: \, \Fatstepxing{r}{i} \neq \emptyset} 3\lbs(T_i^*)
  \end{align*}
  where, for the first inequality, we used that Claim~\ref{cl:bound0}: $\sum_{r=1}^R w(\Fatstepxing{r}{0}) \leq \beta$ generalizes verbatim from the non-constructive analysis and we have $w(\Fatstepxing{r}{\infty}) \leq \alpha\lb(\Fatstepxing{r}{\infty}) \leq 3\alpha \lbs(T^*_\infty) =0$ by the assumption of the lemma; similarly,
   the last inequality is by the assumption of the lemma. 
   
   Now using that the subtours are indexed so that $\lbs(T^*_1) \geq \lbs(T^*_2) \geq \ldots \geq \lbs(T^*_k)$  we
  apply exactly the same arguments as in the end of the proof of
  Lemma~\ref{lem:bound2} to prove that $\Fatstepxing{r}{i}$ is non-empty for at most one
  repetition $r$ of the merge procedure. The only difference is 
  that~\eqref{eq:cycleweight} becomes 
  \begin{align*}
    w(C) \leq w(T) \leq  3\alpha\lbs(T_j^*)  \end{align*}
  (because~\eqref{eq:relax} can be seen as a relaxed version of
  Claim~\ref{claim:boundlex1}). However, as we also updated the bound in the
  if-condition, the argument that $C$ would satisfy the if-condition of
  Step~\ref{en:uu3} is still valid.  Hence, we conclude that $\Fatstepxing{r}{i}$ is non-empty in at most one repetition and therefore
  \begin{align*}
     w\left(\bigcup_{r=1}^R \Fatstep{r}\right) 
     \leq \beta + \alpha   \sum_{i=1}^k \sum_{r: \, \Fatstepxing{r}{i} \neq \emptyset} 3\lbs(T_i^*) \leq \beta + \sum_{i=1}^k 3\alpha \lbs(T_i^*)\,.
\end{align*}

  By the above bounds and  since the initialization $T_1^*, T_2^*, \ldots, T_k^*$ satisfies~\ref{en:i1} and~\ref{en:ii2}, the weight of the returned
  tour is
  \begin{align*}
w\left(\bigcup_{r=1}^R \Fatstep{r}\right) + 
    w&\left(\bigcup_{r=1}^R \Xatstep{r}\right) + 
    w(B) + 
    \sum_{i=1}^k w(T^*_i) \\
    &\leq \beta + \sum_{i=1}^k 3\alpha \lbs(T_i^*) + \sum_{i=1}^k 3\alpha \lbs(T_i^*) + w(B) + \sum_{i=1}^k w(T^*_i)\\
    &\leq 9\alpha \sum_{i=1}^k \lbs(T_i^*) + \beta + w(B) \\
    &\leq 9(1+\varepsilon)\alpha\lb(\bar B) + \beta + w(B).
  \end{align*} 
\end{proof}

\paragraph{Finding a good initialization in polynomial time}
By the above Lemma~\ref{lem:polybound}, it is sufficient to find an initialization such that~\ref{en:i1},~\ref{en:ii2} are satisfied and Condition~\eqref{eq:relax} holds during the execution of the modified
merge procedure. However, how can we do it in polynomial time? We proceed as
follows. First, we select the trivial \emph{empty} initialization that consists of no subtours.  Then we run the modified merge
procedure and, in each repetition, we verify that Condition~\eqref{eq:relax}
holds. Note that this condition is easy to verify in time polynomial in $n$. If
it holds until we return  a tour, then we know by Lemma~\ref{lem:polybound}
that the tour has weight at most $9(1+\varepsilon)\alpha\lb(\bar B) + \wD + w(B)$. If it does not hold during
some repetition, then we will restart the algorithm with  a new
initialization that we find using the following lemma. We
continue in this manner until the merge procedure executes without violating
Condition~\eqref{eq:relax} and therefore returns a tour of weight at most
$9(1+\varepsilon)\alpha\lb(\bar B) + \wD + w(B)$.

\begin{lemma}
  \label{lem:polytime}
  Suppose that some repetition of the (modified) merge procedure violates
  Condition~\eqref{eq:relax} when run starting from an initialization $T_1^*, T_2^*, \ldots, T_k^*$ satisfying~\ref{en:i1} and~\ref{en:ii2}. Then we can, in time polynomial in
  $n$, find a new initialization  $ T_1', T_2', \ldots,
   T'_{k'}$ such that~\ref{en:i1},~\ref{en:ii2} are satisfied and
  \begin{align}
    \sum_{j=1}^{ k'} \lbs( T'_j)^2 - \sum_{j=1}^k \lbs(T^*_j)^2 \geq \frac{\varepsilon^2}{3n^2}\lb(\bar B)^2.
	\label{eq:potential}
  \end{align}
\end{lemma}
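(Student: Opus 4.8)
The plan is to read off from the violated condition a family of cheap subtours that ``should have'' been part of the initialization, and to fold them in so as to increase the potential $\sum_j\lbs(T_j^*)^2$. First I would unpack the hypothesis: if repetition $r$ violates~\eqref{eq:relax}, there is an index $i\in\{1,\dots,k,\infty\}$ with $\lb(\Fatstepxing{r}{i})>3\lbs(T_i^*)$. Every subtour $T$ in $\Fatstepxing{r}{i}$ has $\low(T)=i\neq 0$, hence is disjoint from $B=T_0^*$, so property~\ref{cond:light} of $\cA$ gives $w(T)\le\alpha\lb(T)$; and the subtours in $\Fatstepxing{r}{i}$ are pairwise vertex-disjoint, being distinct connected components of $\Fatstep{r}$. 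I would also record that we may assume $\lb(\bar B)>0$, since otherwise $\lbs$ agrees with $\lb$ and vanishes on every subtour disjoint from $B$, so~\eqref{eq:relax} can never fail.

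The case $i=\infty$ is easy and I would dispatch it first: here $\lbs(T_\infty^*)=0$, so $\Fatstepxing{r}{\infty}$ is nonempty and its subtours are disjoint from $B,T_1^*,\dots,T_k^*$ and from one another. The new initialization is simply $T_1^*,\dots,T_k^*$ together with all subtours of $\Fatstepxing{r}{\infty}$, reindexed so that $\lbs$ is nonincreasing. Condition~\ref{en:i1} is immediate, \ref{en:ii2} holds for a new subtour $T$ because $w(T)\le\alpha\lb(T)\le 3\alpha\lbs(T)$, and the potential grows by $\sum_{T}\lbs(T)^2\ge\lbs(T)^2\ge(\varepsilon\lb(\bar B)/n)^2\ge\tfrac{\varepsilon^2}{3n^2}\lb(\bar B)^2$ for any single such $T$ (using $|V(T)|\ge 1$).

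For the main case $1\le i\le k$, let $T_1,\dots,T_\ell$ be the subtours of $\Fatstepxing{r}{i}$, let $\mathcal J=\{m:i<m\le k,\ V(T_m^*)\cap\bigcup_s V(T_s)\neq\emptyset\}$ be the set of later initialization subtours they touch, and take $T:=T_i^*\cup\bigcup_s T_s\cup\bigcup_{m\in\mathcal J}T_m^*$, which is connected (every $T_s$ meets $T_i^*$, every $T_m^*$ with $m\in\mathcal J$ meets some $T_s$) and Eulerian, hence a subtour. The new initialization is $\{T_m^*:m\neq i,\ m\notin\mathcal J\}\cup\{T\}$, reindexed by nonincreasing $\lbs$; since this only requires finding connected components of the explicit edge multiset $T_1^*\cup\dots\cup T_k^*\cup\bigcup_s T_s$, it is computable in polynomial time. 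Checking~\ref{en:i1} is routine (the retained $T_m^*$ are pairwise disjoint and disjoint from $B$; $T$ is disjoint from $B$, from $T_{m'}^*$ with $m'<i$ since each $T_s$ has $\low(T_s)=i$, and from $T_{m'}^*$ with $m'>i$, $m'\notin\mathcal J$ by definition of $\mathcal J$). For~\eqref{eq:potential}, write $P=\lbs(T_i^*)$, $Q=\sum_{m\in\mathcal J}\lbs(T_m^*)$, and let $\Delta\ge 0$ be the $\lbs$-contribution of the vertices of $\bigcup_s T_s$ lying outside $A:=V(T_i^*)\cup\bigcup_{m\in\mathcal J}V(T_m^*)$, so $\lbs(T)=P+Q+\Delta$ by additivity of $\lbs$ over disjoint pieces. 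Since $\lbs(T)\ge\lb(\bigcup_s V(T_s))=\lb(\Fatstepxing{r}{i})>3P$ we get $\Delta+Q>2P$; expanding the square and using $\sum_{m\in\mathcal J}\lbs(T_m^*)^2\le Q^2$ yields $\lbs(T)^2-\lbs(T_i^*)^2-\sum_{m\in\mathcal J}\lbs(T_m^*)^2\ge\Delta^2+2PQ$, and a one-line split on whether $\Delta\ge P$ shows this is at least $P^2\ge(\varepsilon\lb(\bar B)/n)^2\ge\tfrac{\varepsilon^2}{3n^2}\lb(\bar B)^2$ (the unchanged part of the initialization contributes $0$ to the difference).

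The step I expect to be the main obstacle is verifying~\ref{en:ii2} for the new subtour $T$, i.e.\ $w(T)\le 3\alpha\lbs(T)$. Summing $w(T_i^*)\le 3\alpha\lbs(T_i^*)$, the bounds $w(T_m^*)\le 3\alpha\lbs(T_m^*)$ for $m\in\mathcal J$, and $w(T_s)\le\alpha\lb(T_s)$ gives $w(T)\le 3\alpha(P+Q)+\alpha\sum_s\lb(T_s)$, so it suffices that $\sum_s\lb(T_s)\le 3\Delta$. Pairwise disjointness of the $V(T_s)$ bounds the $\lb$-overlap of $\bigcup_s V(T_s)$ with $A$ by $\lb(A)\le P+Q$, so $\Delta\ge\sum_s\lb(T_s)-(P+Q)$, which together with $\sum_s\lb(T_s)>3P$ already settles the bound when $Q$ is not too large compared to $P$. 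The hard regime is when $\mathcal J$ carries a lot of mass: there one must not fold in \emph{all} of $T_1,\dots,T_\ell$, but only a subfamily chosen so that each kept $T_s$ contributes a constant fraction of its own $\lb$-mass in genuinely new vertices (the discarded ones have almost all their mass already captured, so~\eqref{eq:potential} is unharmed) — and making this trade-off balance is precisely why the modified merge procedure relaxes the Step~\ref{en:uu3} threshold by the same factor $3$. Combining the three cases then proves the lemma.
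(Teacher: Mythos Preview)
Your case $i=\infty$ is fine, and for $1\le i\le k$ your potential-increase computation for the ``fold in everything'' construction is correct. The gap is exactly where you flag it: verifying~\ref{en:ii2} for the merged subtour. Your proposed repair --- drop some of the $T_s$ --- is not carried out, and it runs into two concrete obstacles. First, connectivity: each $T_m^*$ with $m\in\mathcal J$ is attached to $T_i^*$ only through some $T_s$, so dropping that $T_s$ disconnects $T_m^*$ from the merged subtour; you would then have to decide whether to keep $T_m^*$ separately (but it may overlap a \emph{kept} $T_s'$, violating~\ref{en:i1}) or discard it (which costs you in the potential). Second, your potential argument uses $\lbs(T)\ge\lb(\Fatstepxing{r}{i})>3P$, and this inequality does not survive dropping part of $\Fatstepxing{r}{i}$; the hand-wave ``the discarded ones have almost all their mass already captured'' does not by itself recover $\Delta+Q>2P$. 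A small example with four equal-mass $T_s$'s, each carrying all of its $\lb$-mass inside a distinct $T_{m_s}^*$ of mass $P$, already shows that no nonempty sub-selection of the $T_s$'s makes the naive weight bound $3\alpha P + 3\alpha Q' + \alpha\sum_{\text{kept}}\lb(T_s)\le 3\alpha\lbs(T)$ go through.

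The paper's fix is orthogonal to yours: it keeps \emph{all} of $\Fatstepxing{r}{i}$ and instead chooses a subset $S\subseteq I\setminus\{i\}$ of the touched initialization subtours to fold into $T^*$, discarding the remaining $T_j^*$ with $j\in I\setminus S\setminus\{i\}$ from the new initialization entirely (they overlap $T^*$, so they cannot be kept separately). The subset $S$ is a minimal prefix after sorting $I\setminus\{i\}$ by the ratio $\lbs(T_j^*\setminus\Fatstepxing{r}{i})/\lbs(T_j^*\cap\Fatstepxing{r}{i})$, chosen so that $\sum_{j\in S}\lbs(T_j^*\setminus\Fatstepxing{r}{i})$ is just large enough to make the potential increase work, while the sorting guarantees $\sum_{j\in S}\lbs(T_j^*\cap\Fatstepxing{r}{i})\le\tfrac13\sum_{j\in I\setminus\{i\}}\lbs(T_j^*\cap\Fatstepxing{r}{i})\le\tfrac13\lbs(\Fatstepxing{r}{i})$, which is exactly the slack needed for~\ref{en:ii2}. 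So the missing idea is to select among the $T_j^*$'s rather than among the $T_s$'s, together with the specific ratio-sorting that makes both the weight bound and the potential bound close simultaneously.
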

Note that the above lemma implies that we will reinitialize (in polynomial
time)  at most $3n^2 (1+\varepsilon)^2/\varepsilon^2$ times, because any
initialization $T_1^*, \ldots, T_k^*$ has $\sum_{i=1}^k \lbs(T_i^*)^2 \leq
\left((1+\varepsilon)\lb(\bar B)\right)^2$. As each execution of the merge procedure takes time polynomial in
$n$ and in the running time of $\cA$, we can therefore find a tour of weight at
most $9(1+\varepsilon)\alpha\lb(\bar B) + \wD + w(B)$ in the
time claimed in Theorem~\ref{thm:LoctoGlo}, i.e., polynomial in $n$,
$1/\varepsilon$, and the running time of $\cA$. It remains to prove the
lemma. 
\begin{proof}
  Suppose the $r$-th repetition of the merge procedure violates Condition~\eqref{eq:relax},  that is, there is an $ i \in \{1,2, \ldots, k, \infty\}$ such that
  \begin{align*}
    \lb(\Fatstepxing{r}{i}) > 3\lbs(T_i^*)\,.
  \end{align*}
  Suppose first $i= \infty$. Then $\Fatstepxing{r}{\infty} \neq \emptyset$. Let $T$ be a subtour in $\Fatstepxing{r}{\infty}$. We have $w(T) \leq \alpha \lb(T) \leq \alpha \lbs(T)$ by property~\ref{cond:light} of $\cA$ and $T$ is disjoint from $T^*_1, \ldots, T_k^*$ and $B$  by the definition of $\Fatstepxing{r}{\infty}$. We can therefore compute (in polynomial time) a new initialization  $T'_1 = T_1^*, T'_2 = T^*_2, \ldots, T'_k = T^*_k, T'_{k+1} = T$ with $k' = k+1$ such that~\ref{en:i1},~\ref{en:ii2} are satisfied and
  \begin{align*}
    \sum_{j=1}^{ k'} \lbs( T'_j)^2 - \sum_{j=1}^k \lbs(T^*_j)^2  = \lbs(T'_{k+1})^2 = \lbs(T)^2 \geq  \frac{\varepsilon^2}{n^2}\lb(\bar B)^2\,,
  \end{align*}
  where the last inequality is by the definition of $\lbs$ and  $|V(T)| \geq 1$.

  We now consider the case when $\lb(\Fatstepxing{r}{i}) > 3\lbs(T_i^*)$ for an $i\in \{1,\ldots, k\}$. 
Let $I \subseteq
  \{1, 2, \ldots, k\}$ be the indices of those subtours of
  $T_{1}^*, T_{2}^*, \ldots, T_k^*$ that intersect subtours in $\Fatstepxing{r}{i}$.
  Note that, by definition, we have $i\in I$ and $j\geq i$ for all $j \in I$.
  We construct a new initialization as follows:
  \begin{itemize}
    \item Sort  $I \setminus \{i\} = \{{t_1}, \ldots, {t_{|I|-1}}\}$ so that 
      \begin{align*}
        \frac{\lbs(T^*_{t_j} \setminus \Fatstepxing{r}{i})}{\lbs(T^*_{t_j} \cap \Fatstepxing{r}{i})} \geq 
        \frac{\lbs(T^*_{t_{j+1}} \setminus \Fatstepxing{r}{i})}{ \lbs(T^*_{t_{j+1}}\cap \Fatstepxing{r}{i})}\,,
      \end{align*}
      where for a subtour $T$ we simplify notation by writing  $\lbs(T \setminus \Fatstepxing{r}{i})$ and $\lbs(T \cap \Fatstepxing{r}{i})$ for $\lbs(V(T) \setminus V(\Fatstepxing{r}{i}))$ and $\lbs(V(T) \cap V(\Fatstepxing{r}{i}))$, respectively.
    \item Let $S$ be the minimal (possibly empty) prefix of indices $t_1, t_2, \ldots, t_s$  such that 
      \begin{align*}
        \sum_{j\in S} \lbs(T^*_{j} \setminus  \Fatstepxing{r}{i})  \geq \frac{1}{3} \sum_{j\in I \setminus \{i\}} \lbs(T^*_{j} \setminus  \Fatstepxing{r}{i}) - \lbs(T^*_i)\,.
      \end{align*}
    \item Define $T^*$ to be the subtour obtained by taking $T_i^*$, the union of all the subtours in $\Fatstepxing{r}{i}$, and all the subtours $\{T_j^*\}_{j\in S}$. 
Note that this is a single (i.e., connected) subtour since every subtour in $\Fatstepxing{r}{i}$ intersects $T_i^*$, and every subtour $T_j^*$ with $j\in S\subseteq I$ intersects a subtour in $\Fatstepxing{r}{i}$. 
    \item Reinitialize with subtours $T^*$ and  $\{T^*_j\}_{j\not \in I}$.
  \end{itemize}
  All the above steps can be computed in polynomial time. Moreover,  the new initialization still satisfies~\ref{en:i1} by the definition of $I$ and since $T^*$ does not intersect $B$. We now use the way $S$ was selected to prove that~\ref{en:ii2} still holds and that the ``potential'' function has increased as stated in~\eqref{eq:potential}. As we will calculate below, the increase of the potential function is simply because we required that $\sum_{j\in S} \lbs(T^*_{j} \setminus  \Fatstepxing{r}{i})  \geq \frac{1}{3} \sum_{j\in I \setminus \{i\}} \lbs(T^*_{j} \setminus  \Fatstepxing{r}{i}) - \lbs(T^*_i)$. That~\ref{en:ii2} holds, i.e., that $w(T^*) \leq 3\alpha \lbs(T^*)$, follows since we selected $S$ to be  the minimal prefix with respect to the prescribed ordering, which prefers subtours that have small overlap with $\Fatstepxing{r}{i}$ and therefore contribute significantly to $\lbs(T^*)$.  We now formalize this intuition.
  
  \begin{claim}
    We have $w(T^*) \leq 3\alpha\lbs(T^*)$.
  \end{claim}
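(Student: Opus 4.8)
The plan is to prove $w(T^*)\le 3\alpha\lbs(T^*)$ by bounding $w(T^*)$ from above, bounding $\lbs(T^*)$ from below, and then reducing the inequality to a combinatorial estimate on how much the chosen prefix $S$ overlaps $\Fatstepxing{r}{i}$.

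For the upper bound, write $w(T^*)\le w(T_i^*)+\sum_{T\in\Fatstepxing{r}{i}}w(T)+\sum_{j\in S}w(T_j^*)$, since $T^*$ is the multiset union of $T_i^*$, the subtours in $\Fatstepxing{r}{i}$, and the subtours $\{T_j^*\}_{j\in S}$. For the first and last groups of terms I would use property~\ref{en:ii2}: $w(T_i^*)\le 3\alpha\lbs(T_i^*)$ and $w(T_j^*)\le 3\alpha\lbs(T_j^*)$. For the middle group I would use the lightness of $\cA$: every subtour $T$ in $\Fatstepxing{r}{i}$ has $\low(T)=i>0$, hence is vertex-disjoint from $B$, so property~\ref{cond:light} gives $w(T)\le\alpha\lb(T)$, and since distinct subtours of $\Fatstep{r}$ are vertex-disjoint these sum to $\alpha\lb(\Fatstepxing{r}{i})$. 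Crucially, the $r$-th repetition violates \eqref{eq:relax} at this $i$, i.e.\ $\lb(\Fatstepxing{r}{i})>3\lbs(T_i^*)$, which lets me fold the $T_i^*$ term into the $\Fatstepxing{r}{i}$ term: $w(T_i^*)\le 3\alpha\lbs(T_i^*)<\alpha\lb(\Fatstepxing{r}{i})\le\alpha\lbs(\Fatstepxing{r}{i})$. Splitting $\lbs(T_j^*)=\lbs(T_j^*\cap\Fatstepxing{r}{i})+\lbs(T_j^*\setminus\Fatstepxing{r}{i})$, this gives
\[
w(T^*)<2\alpha\lbs(\Fatstepxing{r}{i})+3\alpha\sum_{j\in S}\lbs(T_j^*\cap\Fatstepxing{r}{i})+3\alpha\sum_{j\in S}\lbs(T_j^*\setminus\Fatstepxing{r}{i}).
\]

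For the lower bound I would use that $\lbs$ is modular on vertex subsets (both $\lb$ and $|V(\cdot)|$ are) and that $B,T_1^*,\dots,T_k^*$ are pairwise vertex-disjoint by \ref{en:i1}. Hence $V(T^*)\setminus V(\Fatstepxing{r}{i})$ is the disjoint union of $V(T_i^*)\setminus V(\Fatstepxing{r}{i})$ and the sets $V(T_j^*)\setminus V(\Fatstepxing{r}{i})$, $j\in S$, so $\lbs(T^*)\ge\lbs(\Fatstepxing{r}{i})+\sum_{j\in S}\lbs(T_j^*\setminus\Fatstepxing{r}{i})$. Substituting this into the target $w(T^*)\le 3\alpha\lbs(T^*)$ and cancelling the common $3\alpha\sum_{j\in S}\lbs(T_j^*\setminus\Fatstepxing{r}{i})$ terms, the claim reduces to proving $3\sum_{j\in S}\lbs(T_j^*\cap\Fatstepxing{r}{i})\le\lbs(\Fatstepxing{r}{i})$.

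This last inequality is the crux, and I expect it to be the main obstacle. The trivial bound only gives $\sum_{j\in I}\lbs(T_j^*\cap\Fatstepxing{r}{i})\le\lbs(\Fatstepxing{r}{i})$ — since the sets $V(T_j^*)\cap V(\Fatstepxing{r}{i})$, $j\in I$, are disjoint subsets of $V(\Fatstepxing{r}{i})$ — which is off by a factor $3$. To recover the factor, one must use both defining features of $S$: that it is a prefix in the order of \emph{decreasing} ratio $\lbs(T_j^*\setminus\Fatstepxing{r}{i})/\lbs(T_j^*\cap\Fatstepxing{r}{i})$, and that it is the \emph{minimal} such prefix meeting the threshold $\sum_{j\in S}\lbs(T_j^*\setminus\Fatstepxing{r}{i})\ge\tfrac{1}{3}\sum_{j\in I\setminus\{i\}}\lbs(T_j^*\setminus\Fatstepxing{r}{i})-\lbs(T_i^*)$. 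The case $S=\emptyset$ is immediate. Otherwise, writing $t_s$ for the last index of $S$, minimality bounds $\sum_{j\in S}\lbs(T_j^*\setminus\Fatstepxing{r}{i})$ from above by $\tfrac{1}{3}\sum_{j\in I\setminus\{i\}}\lbs(T_j^*\setminus\Fatstepxing{r}{i})-\lbs(T_i^*)+\lbs(T^*_{t_s}\setminus\Fatstepxing{r}{i})$, while a mediant/fractional-knapsack argument shows that the complementary suffix $I\setminus\{i\}\setminus S$ has $\setminus$-to-$\cap$ ratio at most that of $S$, hence carries a commensurately large share of the $\cap$-mass of $\Fatstepxing{r}{i}$; combining these with $\lbs(T_i^*)<\tfrac{1}{3}\lb(\Fatstepxing{r}{i})\le\tfrac{1}{3}\lbs(\Fatstepxing{r}{i})$ and the disjointness of the $V(T_j^*)\cap V(\Fatstepxing{r}{i})$ forces $\sum_{j\in S}\lbs(T_j^*\cap\Fatstepxing{r}{i})\le\tfrac{1}{3}\lbs(\Fatstepxing{r}{i})$. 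The factors of $3$ in \ref{en:ii2} and in the relaxed if-condition \ref{en:uu3} are tuned precisely so these estimates balance, and carefully tracking these constants (together with the sub-case where $S$ exhausts $I\setminus\{i\}$, again handled via minimality at the last index) is the bulk of the remaining work.
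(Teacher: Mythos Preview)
Your proposal is correct and follows essentially the same route as the paper. The paper also bounds $w(T^*)$ by $3\alpha\lbs(\Fatstepxing{r}{i})+3\alpha\sum_{j\in S}\lbs(T_j^*\setminus\Fatstepxing{r}{i})$ and compares against $\lbs(T^*)\ge\lbs(\Fatstepxing{r}{i})+\sum_{j\in S}\lbs(T_j^*\setminus\Fatstepxing{r}{i})$, with the crux being exactly your inequality $\sum_{j\in S}\lbs(T_j^*\cap\Fatstepxing{r}{i})\le\tfrac{1}{3}\sum_{j\in I\setminus\{i\}}\lbs(T_j^*\cap\Fatstepxing{r}{i})\le\tfrac{1}{3}\lbs(\Fatstepxing{r}{i})$, proved via minimality of $S$ (to bound the $\setminus$-mass of $S$ by a third of the total) combined with the ratio ordering (to transfer this to the $\cap$-mass). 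Your ``sub-case where $S$ exhausts $I\setminus\{i\}$'' is in fact vacuous: the minimality argument you already sketch shows that if $S\neq\emptyset$ then $S\subsetneq I\setminus\{i\}$, so no separate handling is needed.
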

  \begin{proof}
    As $S$ is selected to be a minimal prefix and every $j\in I$ satisfies $\lbs(T^*_j) \leq \lbs(T^*_i)$, we claim that
    \begin{align}
      \label{eq:differences}
      \sum_{j\in S} \lbs(T^*_{j} \setminus  \Fatstepxing{r}{i})  \leq \frac{1}{3} \sum_{j\in I \setminus \{i\}} \lbs(T^*_{j} \setminus  \Fatstepxing{r}{i})\,. 
    \end{align}
    This is trivially true if $S = \emptyset$; otherwise, since the prefix $S \setminus \{t_s\}$ was not chosen, we had
    \[ \sum_{j \in S \setminus \{t_s\}}  \lbs(T^*_{j} \setminus  \Fatstepxing{r}{i}) < \frac{1}{3} \sum_{j\in I \setminus \{i\}} \lbs(T^*_{j} \setminus  \Fatstepxing{r}{i}) - \lbs(T^*_i) \]
    and thus indeed
    \[ \sum_{j\in S} \lbs(T^*_{j} \setminus  \Fatstepxing{r}{i})  < \frac{1}{3} \sum_{j\in I \setminus \{i\}} \lbs(T^*_{j} \setminus  \Fatstepxing{r}{i}) \underbrace{- \lbs(T^*_i) + \lbs(T^*_{t_s} \setminus  \Fatstepxing{r}{i})}_{\le 0}
\,. \]
    Moreover, by the sorting of the indices in $I \setminus \{i\}$ we must then also have
    \begin{align}
      \label{eq:intersections}
      \sum_{j\in S} \lbs(T^*_{j} \cap  \Fatstepxing{r}{i})  \le \frac{1}{3} \sum_{j\in I \setminus \{i\}} \lbs(T^*_{j} \cap  \Fatstepxing{r}{i}) \,,
    \end{align}
    which is again trivially true if $S = \emptyset$; otherwise 
    we write
    \begin{align*}
      \frac23  \sum_{j \in S} \lbs(T^*_{j} \cap  \Fatstepxing{r}{i}) \cdot \frac{\lbs(T^*_{t_s} \setminus  \Fatstepxing{r}{i})}{\lbs(T^*_{t_s} \cap  \Fatstepxing{r}{i})}
      &\le
      \frac23  \sum_{j \in S} \lbs(T^*_{j} \cap  \Fatstepxing{r}{i}) \cdot \frac{\lbs(T^*_{j} \setminus  \Fatstepxing{r}{i})}{\lbs(T^*_{j} \cap  \Fatstepxing{r}{i})}
      =
      \frac23  \sum_{j \in S} \lbs(T^*_{j} \setminus  \Fatstepxing{r}{i})
      \\
      &\le 
      \frac{1}{3} \sum_{j\in I \setminus \{i\} \setminus S} \lbs(T^*_{j} \setminus  \Fatstepxing{r}{i})
      =
      \frac{1}{3} \sum_{j\in I \setminus \{i\} \setminus S} \lbs(T^*_{j} \cap  \Fatstepxing{r}{i}) \cdot \frac{\lbs(T^*_{j} \setminus  \Fatstepxing{r}{i})}{\lbs(T^*_{j} \cap  \Fatstepxing{r}{i})}
      \\
      &\le
      \frac13 \sum_{j\in I \setminus \{i\} \setminus S} \lbs(T^*_{j} \cap  \Fatstepxing{r}{i}) \cdot \frac{\lbs(T^*_{t_s} \setminus  \Fatstepxing{r}{i})}{\lbs(T^*_{t_s} \cap  \Fatstepxing{r}{i})}
      \,,
    \end{align*}
    where the middle inequality is by subtracting $\frac13 \sum_{j \in S} \lbs(T^*_{j} \setminus  \Fatstepxing{r}{i})$ from both sides of~\eqref{eq:differences}, and the remaining two are due to our sorting of indices.
    Next, we divide both sides by
$\frac{\lbs(T^*_{t_s} \setminus  \Fatstepxing{r}{i})}{\lbs(T^*_{t_s} \cap  \Fatstepxing{r}{i})}$
    (which is nonzero by minimality of $S$) and add $\frac13 \sum_{j \in S} \lbs(T^*_{j} \cap  \Fatstepxing{r}{i})$ to both sides to obtain~\eqref{eq:intersections}.
    
    By~\eqref{eq:intersections} we have
    \begin{align*}
      \sum_{j\in S} w(T_j^*) \leq 3\alpha \sum_{j\in S} \lbs(T_j^*) &= 3\alpha \sum_{j\in S} \lbs(T_j^* \setminus \Fatstepxing{r}{i}) + 3\alpha \sum_{j\in S} \lbs(T_j^* \cap \Fatstepxing{r}{i}) \\
&      \leq 3\alpha \sum_{j\in S} \lbs(T_j^* \setminus \Fatstepxing{r}{i}) + \alpha \sum_{j\in I \setminus \{i\}} \lbs(T_j^* \cap \Fatstepxing{r}{i}) \\
& \leq 3\alpha \sum_{j\in S} \lbs(T_j^* \setminus \Fatstepxing{r}{i}) + \alpha \lbs(\Fatstepxing{r}{i})\,,
    \end{align*}
    where the first inequality holds because $T^*_1, \ldots, T^*_k$ satisfy~\ref{en:ii2} and the last inequality holds because they are disjoint~(\ref{en:i1}). By property~\ref{cond:light} of $\cA$ and by the assumption that $\lb(\Fatstepxing{r}{i})> 3\lbs(T_i^*)$ we also have respectively that
    \begin{align*}
      w(\Fatstepxing{r}{i}) \leq \alpha \lb(\Fatstepxing{r}{i}) \leq \alpha \lbs(\Fatstepxing{r}{i}) \qquad \mbox{and} \qquad w(T_i^*) \leq 3\alpha \lbs(T_i^*) < \alpha \lbs(\Fatstepxing{r}{i})\,.
    \end{align*}
     These inequalities imply the claim since
    \begin{align*}
      w(T^*) & = w(\Fatstepxing{r}{i}) + w(T_i^*) + \sum_{j\in S } w(T^*_j) \\
      & < \alpha \lbs(\Fatstepxing{r}{i}) + \alpha \lbs(\Fatstepxing{r}{i})  + 3\alpha \sum_{j\in S} \lbs(T_j^* \setminus \Fatstepxing{r}{i}) + \alpha \lbs(\Fatstepxing{r}{i}) \\
      & =  3\alpha \lbs(\Fatstepxing{r}{i}) +  3\alpha \sum_{j\in S} \lbs(T_j^* \setminus \Fatstepxing{r}{i}) \\
      & \leq 3\alpha \lbs(\Fatstepxing{r}{i}) +  3\alpha \sum_{j\in S} \lbs(T_j^* \setminus \Fatstepxing{r}{i})  + 3\alpha \lbs(T_i^* \setminus \Fatstepxing{r}{i}) \\
     & = 3\alpha \lbs(T^*)\,.
    \end{align*}
  \end{proof}

  It remains to verify the increase of the ``potential'' function as stated in~\eqref{eq:potential}. By the definition of the new initialization, the increase is
  \begin{align*}
    \lbs(T^*)^2 - \sum_{j\in I} \lbs(T^*_j)^2\,.
  \end{align*}

  Let us concentrate on the first term:
	\begin{align*}
    \lbs(T^*)^2& = \left( \lbs(\Fatstepxing{r}{i}) + \lbs(T_i^* \setminus \Fatstepxing{r}{i}) + \sum_{j\in S} \lbs(T_j^* \setminus \Fatstepxing{r}{i})\right)^2 \\
    &\geq \lbs(\Fatstepxing{r}{i})\left( \lbs(\Fatstepxing{r}{i}) + \lbs(T_i^* \setminus \Fatstepxing{r}{i}) + \sum_{j\in S} \lbs(T_j^* \setminus \Fatstepxing{r}{i})\right).
  \end{align*}
  By the selection of $S$, the expression inside the parenthesis is at least
  \begin{align*}
    \lbs(\Fatstepxing{r}{i}) + \lbs(T_i^* \setminus \Fatstepxing{r}{i}) \, + & \, \frac{1}{3}\sum_{j\in I\setminus \{i\}} \lbs(T_j^* \setminus \Fatstepxing{r}{i}) - \lbs(T_i^*) \\
   & \geq 
   \lbs(\Fatstepxing{r}{i}) +  \frac{1}{3}\sum_{j\in I} \lbs(T_j^* \setminus \Fatstepxing{r}{i}) - \lbs(T_i^*).
  \end{align*}
  Using $\lbs(T_i^*) < \lbs(\Fatstepxing{r}{i})/3$, we can further lower-bound this expression by
  \begin{align*}
    \frac{1}{3}\lbs(\Fatstepxing{r}{i})  + \frac{1}{3}\left(\lbs(\Fatstepxing{r}{i}) + \sum_{j\in I} \lbs(T_j^* \setminus \Fatstepxing{r}{i})\right) \ge  \frac{1}{3}\lbs(\Fatstepxing{r}{i})  + \frac{1}{3}\sum_{j\in I} \lbs(T_j^*).
  \end{align*}
  Finally, as  $\lbs(T_j^*) \leq \lbs(T_i^*)$ for all $j\in I$, we have  
  \begin{align*}
    \lbs(T^*)^2 - \sum_{j\in I} \lbs(T_j^*)^2 & \geq \lbs(T^*)^2 - \lbs(T_i^*) \sum_{j\in I} \lbs(T_j^*) \\
    & \geq \lbs(\Fatstepxing{r}{i}) \left( \frac{1}{3}\lbs(\Fatstepxing{r}{i})  + \frac{1}{3}\sum_{j\in I} \lbs(T_j^*) \right) - \lbs(T_i^*) \sum_{j\in I} \lbs(T_j^*) \\
    &= \frac13 \lbs(\Fatstepxing{r}{i})^2 + \underbrace{\left(\frac{\lbs(\Fatstepxing{r}{i})}{3} - \lbs(T_i^*)\right)}_{>0} \sum_{j\in I} \lbs(T_j^*) \\
    &\geq  \frac13 \left( \varepsilon \frac{\lb(\bar B)}{n} \right)^2
    =
    \frac{\varepsilon^2}{3n^2} \lb(\bar B)^2 \,,
  \end{align*}
which completes the proof of Lemma~\ref{lem:polytime}.
\end{proof}

\part{Obtaining structured instances}
\label{part:structure}
In the previous part we have reduced the problem of approximating ATSP to that of designing algorithms for \EPC{}. Our approach for dealing with general instances is to first simplify their structure  and then to solve \EPC{} on the resulting structured instances. In this part, we show that we can obtain very structured instances by only increasing the approximation guarantee by a constant factor. In Part~\ref{part:solvingvert} we then solve \EPC{} on those instances.

The outline of this part is as follows. 
We begin by exploring the structure of sets in the laminar family $\cL$: in
Section~\ref{sec:paths} we study paths inside  sets $S\in \cL$ and, in
Section~\ref{sec:contractinduce}, we introduce analogues of the classic
graph-theoretic operations of contracting and inducing on such a set. These
operations naturally give rise to a  recursive algorithm that, intuitively, works
as long as the contraction of some set $S\in \cL$ results in a significant
decrease in the value of the LP relaxation.  In
Section~\ref{sec:reducetoirreducible} we formally analyze this recursive algorithm
and reduce the  task of approximating ATSP to that of
approximating ATSP on \emph{irreducible} instances: those where no set $S\in \cL$
brings about a significant decrease of the LP value if contracted.

Informally, every set $S\in \cL$ in an irreducible instance has two
vertices $u,v\in S$ such that the shortest path from $u$ to $v$ crosses
a large (weighted) fraction  of the sets $R\in \cL : R \subsetneq S$
(otherwise contracting $S$ into a single vertex,
endowed with a node-weight equal to the weight of the shortest path, would lead to a decrease in the LP
value).  This insight, together with the approximation algorithm for
singleton instances in \cref{part:epc} (\cref{cor:singleton}), allows us to construct a
low-weight subtour $B$  that does not necessarily visit every vertex, but crosses every non-singleton
set of $\cL$. See the right part of Figure~\ref{fig:intro} for an example. We refer
to $B$ as a \emph{backbone}, and to the ATSP instance and the backbone
together as a \emph{vertebrate pair}.
This reduction allows us to further assume that our input
is such a vertebrate pair; it
is presented in Section~\ref{sec:reducetovertebrate}.

In each of the above stages, we prove a theorem of the form:
if there is a constant-factor approximation for ATSP on more structured instances,
then there is a constant-factor approximation for ATSP on less structured instances.
For example, an algorithm for irreducible instances
implies an algorithm for laminarly-weighted instances.
One can also think of
making a stronger and stronger assumption on the instance
without loss of generality,
making it increasingly resemble a singleton instance.

\section{Paths in Tight Sets}
\label{sec:paths}
\begin{figure}[t]
  \centering
  \begin{tikzpicture}
\tikzset{arrow data/.style 2 args={decoration={markings,
         mark=at position #1 with \arrow{#2}},
         postaction=decorate}
      }

  \begin{scope}[scale=0.6]
    \begin{scope}
\draw[fill=gray!10!white, draw=gray!80!black] (0, 0) ellipse (4cm and 2cm); \end{scope}
    \begin{scope}
      \clip (0,0) ellipse (4cm and 2cm); 
      \draw (-2.3,-3) -- (-2.3,3);  
      \draw (-0.5,-3) -- (-0.5,3);  

      \draw (2.3,-3) -- (2.3,3);  
\end{scope}
      \node at (-3.1, 0) {\small $S_1$};
      \node at (-1.4, 0) {\small $S_2$};
      \node at (3.1, 0) {\small $S_\ell$};
      \node at (0.9, 0) {\Large $\ldots$};

      \draw[->, line width=0.4mm] (-4.6,0) -- (-3.55,0);
      \draw[->, line width=0.4mm] (-2.8,0) -- (-1.75,0);
      \draw[->, line width=0.4mm] (-0.9,0) -- (0.15,0);
      \draw[->, line width=0.4mm] (1.75,0) -- (2.8,0);
      \draw[->, line width=0.4mm] (3.55,0) -- (4.6,0);
      \node at (0,-3.5) {(a)};
  \end{scope}

  \begin{scope}[xshift=7.5cm,scale=0.8]
    \draw[fill=gray!10!white, draw=gray!80!black] (0, 0) ellipse (4cm and 2cm); \begin{scope}
      \draw[fill=gray!30!white, draw=gray!80!black] (-2.5, 0) ellipse (0.75cm and 1cm); \end{scope}
    \begin{scope}[xshift=1.0cm,rotate=30]
      \draw[fill=gray!30!white, draw=gray!80!black] (0, 0) ellipse (1cm and 1.5cm); \draw[fill=gray!50!white, draw=gray!80!black, rotate=10] (0, .75) ellipse (0.5cm and 0.5cm); \draw[fill=gray!50!white, draw=gray!80!black,rotate=-5] (0, -.75) ellipse (0.5cm and 0.6cm); \end{scope}
    \node[sssgvertex, fill=black] (u) at (-2.5, 0) {};
    \node  at (-2.5, 0.4) {$u$};
    \node[sssgvertex, fill=black] (v) at (2.9, 0.4) {};
    \node  at (2.9, 0.8) {$v$};
    \draw [black,arrow data={0.2}{stealth},
               arrow data={0.4}{stealth},
               arrow data={0.6}{stealth},
               arrow data={0.8}{stealth}] plot [smooth,tension=1] coordinates { (u) (-1.8,-1.2)  (-0.8,-0.4) (0,0) (1,0.5) (1.5, 0) (2,0.5) (2.7,-0.5)  (v)};
\node at (0,-2.7) {(b)};
  \end{scope}
\end{tikzpicture}
   \caption{\textbf{(a)} The structure of a tight set $S$ with strongly connected components $S_1, \ldots, S_\ell$. Every path traversing $S$ enters at a  vertex in $S_{\textrm{in}} \subseteq S_1$, then visits all strongly connected components, which form a ``path'' structure, before it exits from a vertex in $S_{\textrm{out}}\subseteq S_\ell$. \\ \textbf{(b)} The structure of the path $P$ from $u\in S_{\textrm{in}}$ to $v\in S_{\textrm{out}}$ for a tight set $S$ as given by Lemma~\ref{lem:short_path}. The path crosses the set that contains $u$ but not $v$ once and it crosses the sets of $\cL$  that are disjoint from $\{u,v\}$  at most twice. }
  \label{fig:structure}
\end{figure}

An instance $\cI = (G, \cL, \xs, y)$ will be fixed throughout this section. 
We say that a path $P$ \emph{crosses a set $S$ $k$ times} if $|P\cap \delta(S)|=k$.
We say
that a path $P$ \emph{traverses} a set $S$ if both endpoints of
$P$ are in $V\setminus S$ and $P$ crosses $S$ at least \bll{twice (once entering and once leaving).}
We now exhibit properties of paths traversing tight sets. In
particular, we show that
the strongly connected components of a tight set $S$ enjoy a nice path-like structure as depicted in Figure~\ref{fig:structure}.

Recall from Section~\ref{sec:prelim} that a set $S$ of vertices is tight if $x(\delta(S)) = 2$. Moreover, $S_{\mathrm{in}}$  and $S_{\mathrm{out}}$ denote those vertices of $S$ that have an incoming edge from outside of $S$ and those that have an outgoing edge to outside of $S$, respectively.
\begin{lemma}
  For a tight set $S \subsetneq V$ we have the following properties:
  \begin{enumerate}[label=\emph{(\alph*)}]\itemsep0mm
    \item Every path from a vertex $u\in S_{\textrm{in}}$ to a vertex $v \in
      S_{\textrm{out}}$ (and thus every path traversing $S$)  visits every strongly
      connected component of $S$.
    \label{item:path_through_every_scc}
    \item For every $u\in S_{\textrm{in}}$ and $v \in S$ there is a path from $u$ to $v$ inside $S$.
    The same holds for every $u \in S$ and $v \in S_{\textrm{out}}$.
    \label{item:exists_path}
  \end{enumerate}
  \label{lem:path_tight_set}
\end{lemma}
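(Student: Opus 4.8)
\emph{Plan.} The idea is to prove part (b) first, by a reachability argument that uses the \emph{full} feasibility of $x$ (all subtour-elimination constraints, not merely the tightness of $S$), and then to apply (b) as a black box --- to $S$ itself and to certain subsets of $S$ --- in order to read off the component structure needed for (a). Throughout, recall that $x$ is a strictly positive circulation with $x(\delta(W))\ge 2$, equivalently $x(\delta^+(W))=x(\delta^-(W))=\tfrac12 x(\delta(W))\ge 1$, for every $\emptyset\ne W\subsetneq V$, and that $x(\delta^+(S))=x(\delta^-(S))=1$. In particular $S_{\textrm{in}},S_{\textrm{out}}\ne\emptyset$, since $\delta^-(S)$ and $\delta^+(S)$ carry positive flow and $x$ is positive on every edge.

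\emph{Part (b).} Fix $u\in S_{\textrm{in}}$ and let $R\subseteq S$ be the set of vertices reachable from $u$ using only edges of $G[S]$. Suppose $R\subsetneq S$. No edge of $G[S]$ leaves $R$, so $\delta^+(R)\subseteq\delta^+(S)$ and hence $x(\delta^+(R))\le x(\delta^+(S))=1$; on the other hand $x(\delta^+(R))=\tfrac12 x(\delta(R))\ge 1$ by the constraint for the proper nonempty set $R$. Therefore $x(\delta^+(R))=1$, and strict positivity of $x$ forces $\delta^+(R)=\delta^+(S)$. Symmetrically, no edge of $G[S]$ enters $S\setminus R$ from $R$, so $\delta^-(S\setminus R)\subseteq\delta^-(S)$, while $x(\delta^-(S\setminus R))=x(\delta^+(S\setminus R))\ge 1=x(\delta^-(S))$, which forces $\delta^-(S\setminus R)=\delta^-(S)$. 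But $u\in S_{\textrm{in}}\cap R$ means some edge of $\delta^-(S)$ has head in $R$, a contradiction. Hence $R=S$, which is the first assertion of (b); the second follows by applying the first to the reversed graph, in which $x$ is again a strictly positive circulation satisfying all subtour-elimination constraints, $S$ is still tight, and the roles of $S_{\textrm{in}}$ and $S_{\textrm{out}}$ are exchanged.

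\emph{Part (a).} From (b), every vertex of $S$ is reachable inside $G[S]$ from any $u\in S_{\textrm{in}}$, so all of $S_{\textrm{in}}$ lies in one strongly connected component $D_1$, the unique source of $G[S]$, which reaches every component; dually $S_{\textrm{out}}$ lies in the unique sink $D_\ell$, reachable from every component. If $G[S]$ is strongly connected the claim is immediate, so assume it has $\ell\ge 2$ components. Call a union $A$ of components of $G[S]$ a \emph{prefix} if no edge of $G[S]$ enters $A$ from $S\setminus A$; then $D_1\subseteq A$ whenever $A\ne\emptyset$, and $A=S$ whenever $D_\ell\subseteq A$. Build a chain $D_1=A^{(1)}\subsetneq A^{(2)}\subsetneq\dots$ as follows. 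Given a prefix $A=A^{(i)}\ne S$, set $B:=S\setminus A$. Since $S_{\textrm{out}}\subseteq D_\ell\subseteq B$, no edge leaves $B$ towards $V\setminus S$, so $\delta^+(B)=\delta^+(S)$; and since $S_{\textrm{in}}\subseteq D_1\subseteq A$, no edge of $\delta^-(S)$ enters $B$, so $\delta^-(B)=\delta^+(A)$, which has $x$-value $x(\delta^-(A))=x(\delta^-(S))=1$ (using again that $A$ is a prefix containing $S_{\textrm{in}}$). Hence $x(\delta(B))=2$: the set $B$ is itself a tight set. Applying (b) to $B$, all edges of $\delta^-(B)=\delta^+(A)$ have heads in the unique source component $D_{i+1}$ of $G[B]$, which is a component of $G[S]$, and $A^{(i+1)}:=A\cup D_{i+1}$ is again a prefix: it is a union of components, and any edge entering $D_{i+1}$ from $S\setminus A^{(i+1)}\subseteq B\setminus D_{i+1}$ would enter the source of $G[B]$ from another component of $G[B]$. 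Each step enlarges $A^{(i)}$ by one component, so after $\ell$ steps $A^{(\ell)}=S$; the components are thereby enumerated $D_1,\dots,D_\ell$, and every edge of $\delta^+(A^{(i-1)})$ has its head in $D_i$. Now let $P$ be any path from $u\in S_{\textrm{in}}$ to $v\in S_{\textrm{out}}$: it visits $D_1\ni u$, and for each $i$ with $1<i\le\ell$ we have $u\in D_1\subseteq A^{(i-1)}$ while $v\in D_\ell$ with $D_\ell\ne D_j$ for $j<i$, so $v\notin A^{(i-1)}$ and $P$ must traverse an edge of $\delta^+(A^{(i-1)})$, hence visits $D_i$. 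Thus $P$ visits every component of $S$. Finally, any path that merely traverses $S$ contains a subpath from its first vertex in $S$ (which lies in $S_{\textrm{in}}$, as the preceding edge is in $\delta^-(S)$) to its last vertex in $S$ (which lies in $S_{\textrm{out}}$), and this subpath --- hence the whole path --- visits every component.

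\emph{Main obstacle.} The step I expect to be the crux is the observation that $S\setminus A$ is again a tight set for every prefix $A$; this is exactly what lets (b) be bootstrapped to isolate the components one at a time, and everything else is careful bookkeeping with the cuts $\delta^{\pm}(\cdot)$ and the fact that $x$ is strictly positive (so that equalities of $x$-values of nested edge sets upgrade to equalities of the sets themselves).
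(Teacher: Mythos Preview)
Your proof is correct and takes a genuinely different route from the paper. The paper proves (a) first, by a direct induction on the strongly connected components $S_1,\ldots,S_\ell$ in topological order, showing $\delta^-(S_1)=\delta^-(S)$, $\delta^-(S_k)=\delta^+(S_{k-1})$ for $k\ge 2$, and $\delta^+(S_\ell)=\delta^+(S)$; part (b) then falls out immediately since $S_{\textrm{in}}\subseteq S_1$, $S_{\textrm{out}}\subseteq S_\ell$, and consecutive components are linked by an edge. You reverse the order: (b) comes first via a clean reachability argument, and (a) is obtained by repeatedly applying (b) to the tight complements $S\setminus A^{(i)}$ of an increasing chain of prefixes. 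Both arguments ultimately expose the same linear chain structure; the paper's induction is shorter and avoids the prefix machinery, while your reachability proof of (b) is self-contained and perhaps more transparent as to why tightness forces a unique source and sink.

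One minor slip to fix: the sentence ``Since $S_{\textrm{out}}\subseteq D_\ell\subseteq B$, no edge leaves $B$ towards $V\setminus S$'' is misstated --- edges certainly do leave $B$ towards $V\setminus S$ (they form $\delta^+(S)$). What you need is that no edge leaves $A$ towards $V\setminus S$ (because $S_{\textrm{out}}\subseteq B$) and no edge leaves $B$ towards $A$ (the prefix property); together these give $\delta^+(B)=\delta^+(S)$. The rest of the argument is unaffected.
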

\begin{proof}
  We remark that~\ref{item:path_through_every_scc} can also be seen to  follow
  from the ``$\tau$-narrow cut'' structure  as introduced
  in~\cite{AnKS15} by setting $\tau=0$.  We give
  a different proof that we find simpler for our setting.
  
  Let $S_1,
  S_2, \ldots , S_\ell$ be the vertex sets of the  strongly connected
  components of $S$, indexed using a topological ordering. We thus have that
  each subgraph $G[S_i]$ is strongly connected and that there is no edge from
  a vertex in $S_i$ to a vertex in $S_j$ if $i> j$. 
  
  By the above, we must have $\delta^-(S_1) \subseteq \delta^-(S)$.
  Moreover, since $\xs(\delta^-(S_1)) \ge 1 = \xs(\delta^-(S))$,
  we can further conclude that $\delta^-(S_1) = \delta^-(S)$
  (recall that all edges have positive $\xs$-value)
  and $\xs(\delta^-(S_1)) = \xs(\delta^+(S_1)) = 1$
  (i.e., $S_1$ is a tight set).
  
  Similarly, we can show by induction on $k \ge 2$ that
  \[ \delta^-(S_k) = \delta^+(S_{k-1}) \quad \text{ and } \quad \xs(\delta^-(S_k)) = \xs(\delta^+(S_k)) = 1. \]
  To see this, note that
  $\delta^-(S_k) \subseteq \delta^-(S) \cup \bigcup_{i < k} \delta^+(S_i)$.
  However, $\delta^-(S) = \delta^-(S_1)$ (which is disjoint from $\delta^-(S_k)$),
  and for $i < k-1$, the induction hypothesis gives that $\delta^+(S_i) = \delta^-(S_{i+1})$ (which is also disjoint from $\delta^-(S_k)$).
  The only term left in the union is $i = k-1$ and so $\delta^-(S_k) \subseteq \delta^+(S_{k-1})$.
  Moreover, $1 \le \xs(\delta^-(S_k)) \le \xs(\delta^+(S_{k-1})) = 1$,
  which implies the statement for $k$.
  
  Finally, we have that $\delta^+(S_\ell) = \delta^+(S)$.
  To recap, all incoming edges of $S$ are into $S_1$,
  the set of outgoing edges of every component is the set of incoming edges of the next one,
  and all outgoing edges of $S$ are from $S_\ell$.
  This shows~\ref{item:path_through_every_scc}, i.e., 
  that every path traversing $S$ needs to enter through $S_1$,
  exit through $S_\ell$,
  and pass through every component on the way.

  Finally,~\ref{item:exists_path} follows because $S_{\textrm{in}} \subseteq S_1$ (similarly $S_{\textrm{out}} \subseteq S_\ell$),
  each two consecutive components are connected by an edge,
  and each component is strongly connected.
\end{proof}

\begin{lemma}
    Let $S \subsetneq V$ be a non-empty set such that $\cL \cup \{S\}$ is
    a laminar family.  Suppose $u, v \in S$ are two vertices such that there is
    a path from $u$ to $v$ inside $S$. Then we can in polynomial time find
    a path $P$ from $u$ to $v$ inside $S$ that crosses every set in $\cL$ at
    most twice.  Thus, the path satisfies $\cost(P) \le \sum_{R \in \cL
    : \ R \subsetneq S} 2 \cdot y_R = \valu(S)$.
    
    In addition, if  $u \in S_{\textrm{in}}$ or $v \in S_{\textrm{out}}$, then
    $P$ 
    crosses every  tight set $R\in\cL$, $R\subsetneq S$ at most
    $2 - |R \cap \{u,v\}|$ times.
    Thus, it satisfies
    $\cost(P) \le \sum_{R \in \cL  : \ R \subsetneq S} \left( 2 - |R \cap \{u,v\}| \right) \cdot y_R$.
    
    \label{lem:short_path}
\end{lemma}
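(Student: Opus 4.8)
The plan is to prove both assertions simultaneously, by induction on $|S|$, via a recursive contraction of the maximal sets of $\cL$ lying inside $S$. Precisely, I would prove the following: \emph{for every $S \subsetneq V$ with $\cL \cup \{S\}$ laminar and every $u,v \in S$ joined by a path inside $S$, there is a $u$-$v$ path $P$ inside $S$ that crosses every $R \in \cL$ with $R \subsetneq S$ at most twice; and if moreover $u \in S_{\mathrm{in}}$ or $v \in S_{\mathrm{out}}$, then $P$ crosses every such $R$ at most $2 - |R \cap \{u,v\}|$ times} (recall every $R \in \cL$ is tight). The cost statements of the lemma follow at once: since $P$ stays inside $S$ and $\cL \cup \{S\}$ is laminar, the only sets of $\cL$ that $P$ can cross are those $R$ with $R \subsetneq S$, so $\cost(P) = \sum_{e \in P} w_{\cI}(e) = \sum_{R \in \cL \,:\, R \subsetneq S} y_R \cdot \#\{e \in P : e \in \delta(R)\}$, and one substitutes the crossing bounds, recalling $\valu(S) = \sum_{R \in \cL \,:\, R \subsetneq S} 2 y_R$.

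For the induction, if no set of $\cL$ is strictly inside $S$, then any simple $u$-$v$ path in $G[S]$ works (one exists since, by hypothesis, a walk does), and both bounds are vacuous. Otherwise let $R_1, \dots, R_m$ be the inclusion-maximal sets of $\cL$ strictly inside $S$; by laminarity they are pairwise disjoint, and $|R_i| < |S|$. Contract each $R_i$ in $G[S]$ to a vertex $r_i$, obtaining $G'$, and let $Q$ be a simple path in $G'$ from the image $u'$ of $u$ to the image $v'$ of $v$; it exists because the given $u$-$v$ path projects to a $u'$-$v'$ walk in $G'$. For every $i$ with $r_i$ on $Q$, each $Q$-edge incident to $r_i$ lifts to an edge of $\delta^-(R_i)$ or $\delta^+(R_i)$ in $G[S]$, so its endpoint inside $R_i$ lies in $(R_i)_{\mathrm{in}}$ or $(R_i)_{\mathrm{out}}$ respectively; denote these (when present) $a_i \in (R_i)_{\mathrm{in}}$ and $b_i \in (R_i)_{\mathrm{out}}$. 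Since $R_i$ is tight, Lemma~\ref{lem:path_tight_set}\ref{item:exists_path} guarantees a path inside $R_i$ between the pair we need: $(a_i,b_i)$ if $r_i$ is internal to $Q$, $(u,b_i)$ if $r_i = u' \ne v'$, $(a_i,v)$ if $r_i = v' \ne u'$, and $(u,v)$ if $u' = v' = r_i$. Apply the inductive hypothesis to $R_i$ with that pair to get a path $P_i$ inside $R_i$. In every case but the last, one endpoint of $P_i$ lies in $(R_i)_{\mathrm{in}} \cup (R_i)_{\mathrm{out}}$, so the refined bound of the hypothesis applies to $P_i$; and in the last case, if $u \in S_{\mathrm{in}}$ or $v \in S_{\mathrm{out}}$ then $u \in (R_i)_{\mathrm{in}}$ or $v \in (R_i)_{\mathrm{out}}$ (the edge witnessing $u \in S_{\mathrm{in}}$ comes from outside $S$, hence from outside $R_i \subseteq S$), so it applies there too. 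Finally form $P$ by substituting $P_i$ for $r_i$ in $Q$ for every $i$ on $Q$ (splicing along the incident $Q$-edges), and shortcut the resulting walk to a simple $u$-$v$ path.

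The main work is bounding the crossings of the assembled $P$ against a set $T \in \cL$ with $T \subsetneq S$; by laminarity either $T = R_i$ or $T \subsetneq R_i$ for a unique $i$. If $T = R_i$, a direct inspection of where $r_i$ sits on $Q$ shows $P$ crosses $R_i$ exactly $0$, $1$, or $2$ times, and $\le 2 - |R_i \cap \{u,v\}|$ in the refined regime (e.g.\ if $u \in R_i$ then $r_i$ is an endpoint of $Q$ and $P$ crosses $R_i$ exactly once). If $T \subsetneq R_i$, the only edges of $P$ that can lie in $\delta(T)$ are those of $P_i$ together with the at most two $Q$-edges $e_i^-, e_i^+$ incident to $r_i$; and $e_i^-$ (resp.\ $e_i^+$) lies in $\delta(T)$ exactly when the corresponding endpoint $a_i$ (resp.\ $b_i$) of $P_i$ lies in $T$. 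Combining with the inductive refinement $\#\{e \in P_i : e \in \delta(T)\} \le 2 - |T \cap (\text{endpoints of }P_i)|$, the contributions of the splicing edges cancel against the terms already subtracted there, yielding $\le 2$ crossings in general and $\le 2 - |T \cap \{u,v\}|$ in the refined regime. Shortcutting only removes edges, so it cannot increase any count. This is the delicate bookkeeping step, the point being that a splicing edge crossing $T$ forces the matching endpoint of $P_i$ into $T$, which the stronger inductive bound has already paid for. Polynomiality is routine: the recursion depth is at most the depth of $\cL$, there are $\le |\cL|$ recursive calls in all, and each step (contraction, extracting a simple path, applying Lemma~\ref{lem:path_tight_set}\ref{item:exists_path}, shortcutting) runs in polynomial time.
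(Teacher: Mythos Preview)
Your bottom-up recursion (contract the maximal $R_i \subsetneq S$, take a simple path $Q$ in the quotient, expand each $r_i$ via the induction hypothesis) is a genuinely different route from the paper's top-down iterative fix-up, which starts with an arbitrary $u$-$v$ path inside $S$ and repeatedly reroutes the current path inside a maximal violating set $R$ using Lemma~\ref{lem:path_tight_set}\ref{item:exists_path}. Your crossing-count bookkeeping is correct whenever the recursive path $P_i$ actually exists, and in particular the refined assertion goes through: when $u' = v' = r_i$ and $u \in S_{\mathrm{in}}$ (resp.\ $v \in S_{\mathrm{out}}$), the witnessing boundary edge also shows $u \in (R_i)_{\mathrm{in}}$ (resp.\ $v \in (R_i)_{\mathrm{out}}$), so Lemma~\ref{lem:path_tight_set}\ref{item:exists_path} does supply a $u$-$v$ path inside $R_i$.

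The gap is in the \emph{general} assertion, in the sub-case $u' = v' = r_i$. There you again invoke Lemma~\ref{lem:path_tight_set}\ref{item:exists_path} for a $u$-$v$ path inside $R_i$, but that lemma requires $u \in (R_i)_{\mathrm{in}}$ or $v \in (R_i)_{\mathrm{out}}$, neither of which is given. And such a path need not exist: if the strongly connected components of $R_i$ are $C_1,\dots,C_\ell$ in topological order with $u \in C_j$, $v \in C_k$ and $j > k$, there is no $u$-$v$ path inside $R_i$, yet there may still be one inside $S$ that exits $R_i$ (necessarily from $C_\ell$) and re-enters (necessarily into $C_1$). Your construction, which in this case takes $Q$ to be the trivial path at $r_i$, then cannot produce anything. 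One repair: when $u' = v'$ and no $u$-$v$ path exists inside $R_i$, project the hypothesised $u$-$v$ path in $S$ to a nontrivial closed walk at $r_i$ in $G'$, extract a simple cycle through $r_i$, and expand with \emph{two} recursive calls on $R_i$ --- one for a $u$-$b_0$ path with $b_0 \in (R_i)_{\mathrm{out}}$ and one for an $a_0$-$v$ path with $a_0 \in (R_i)_{\mathrm{in}}$ --- both of which Lemma~\ref{lem:path_tight_set}\ref{item:exists_path} does furnish; the assembled path crosses $R_i$ exactly twice, which is all the general bound demands. The paper's iterative scheme sidesteps this entirely because it always keeps a path inside $S$ as the working object and only reroutes \emph{segments}, so it never needs to manufacture a path inside a subset from endpoints alone.
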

\begin{proof}
  Since  $\cL \cup \{S\}$ is a laminar family,
  any path inside $S$ only crosses those sets $R \in \cL$ that have $R \subsetneq S$.
  Now, to prove both statements, 
  it is enough to find, in polynomial time,  a path $P$ inside $S$ that for each $R\in \cL$ with $R \subsetneq S$ satisfies
  \begin{align*}
      |P \cap \delta(R)| \leq \begin{cases}
     2 & \mbox{if $|R\cap\{u,v\}| = 0$,} \\
     1 & \mbox{if $|R\cap\{u,v\}| = 1$,} \\
     2 & \mbox{if $|R\cap\{u,v\}| = 2$,} \\
     0 & \mbox{if $|R\cap\{u,v\}| = 2$ and $u \in S_\textrm{in}$ or $v \in S_\textrm{out}$.} \\
   \end{cases}
  \end{align*}

  The algorithm for finding  $P$ starts with any path $P$ from $u$ to $v$
  inside $S$. Such a path is guaranteed to exist by the assumptions of the
  lemma and can be easily found in polynomial time.  Now, while $P$ does not
  satisfy the above conditions, select a set $R\in \cL$ of \emph{maximum}
  cardinality that violates one of the above conditions. 
  We remark that the selected set $R$ is tight since $R \in \cL$. Therefore Lemma~\ref{lem:path_tight_set}\ref{item:exists_path} implies that there is a path from any $u' \in R$ to any $v' \in R$ inside $R$ if $u' \in R_{\textrm{in}}$ or $v' \in R_{\textrm{out}}$. Using this, the algorithm now modifies $P$  depending on which
  of the above conditions is violated:
  
  \begin{description}
    \item[Case 1: $|R\cap \{u,v\}| = 0.$]  Let $u'$ be the first vertex visited by $P$ in $R$ and let $v'$ be the last.
      Then $u' \in R_{\textrm{in}}$ and $v' \in R_{\textrm{out}}$, which implies by Lemma~\ref{lem:path_tight_set}\ref{item:exists_path}
  that there is a path $Q$ from $u'$ to $v'$ inside $R$. 
  We update $P$ by letting $Q$ replace the segment of $P$ from $u'$ to  $v'$.
  This ensures that the set $R$ is no longer violated, since the path $P$ now only enters and exits $R$ once.

  \item[Case 2: $|R \cap \{u,v\}| = 1.$] This case is similar to the previous one. Suppose that $u\in R $ and $v\not \in R$ (the other case is analogous). 
  Let $v'$ be the last vertex visited by $P$ in $R$.
  Then $v' \in R_{\textrm{out}}$, and again by Lemma~\ref{lem:path_tight_set}\ref{item:exists_path} there is a path  $Q$ from $u$ to $v'$ inside $R$. We update $P$ by letting $Q$ replace the segment of $P$ from $u$ to $v'$.
  This ensures that the set $R$ is no longer violated, since the path $P$ now only exits $R$ once.
  
\item[Case 3: $|R \cap \{u,v\}| = 2$.]
  Let $u'$ be the first vertex visited by $P$ in $R_\textrm{in}$.  By Lemma~\ref{lem:path_tight_set}\ref{item:exists_path}, there is a path  $Q$ from $u'$ to $v$ inside $R$. We
  modify $P$ by letting $Q$ replace the segment of $P$ from $u'$ to $v$.
  This ensures that the set $R$ is no longer violated, since the path $P$ now only enters and exits $R$ at most once.

\item[Case 4: {\normalfont $|R\cap\{u,v\}| = 2$ and $u \in
  S_\textrm{in}$ or $v \in S_\textrm{out}$.}]  Suppose that $u\in
  S_{\textrm{in}}$ (the case $v\in S_{\textrm{out}}$ is analogous). Then,
  as $R \subseteq S$, $R \cap S_\textrm{in} \subseteq R_\textrm{in}$. So, by
  Lemma~\ref{lem:path_tight_set}\ref{item:exists_path}, there is a path $Q$ from $u$ to $v$
  inside $R$. We replace $P$ by $Q$ and the set $R$ is no longer violated. 
\end{description}

At termination, the above algorithm returns a path satisfying all the desired
conditions and thus the lemma. It remains to argue  that the algorithm
terminates in polynomial time. A laminar family contains at most $2n-1$ sets,
so it is easy to efficiently identify a violated set $R$ of maximum
cardinality. The algorithm then, in polynomial time, modifies $P$ by simple path
computations so that the set $R$ is no longer violated. Moreover, since the
modifications are such that new edges are only added within the set $R$, they may only introduce new
violations to sets contained in $R$ -- sets of smaller cardinality. It follows, since
we always select a violated set of maximum cardinality, that any set $R$ in $\cL$  is
selected  in at most one iteration. Hence, the algorithm runs for at most
$|\cL| \leq 2n-1$ iterations (and so it terminates in polynomial time).
\end{proof}

\bll{The next lemma asserts that the strongly connected components of $S\in\cL$ form a laminar family together with $\cL$, a  property that will be needed in Section~\ref{sec:induce} in order to apply Lemma~\ref{lem:short_path}.}
  \begin{lemma}\label{lem:lam-strongly}
For a set $S\in \cL$, let $\mathcal{S}$
be the set of strongly connected components of $S$.
 Then  $\cL \cup \mathcal{S}$ is a laminar family.
  \end{lemma}
  \begin{proof}
 Let $\mathcal{S}=\{S_1,S_2,\ldots,S_\ell\}$,  indexed in a topological order.
    Towards a contradiction, suppose there exists a component $S_i$ and a set 
$R\in \cL$ such that $R\setminus S_i, S_i \setminus R$, and $S\cap R$ are all non-empty. Furthermore, since $\cL$ is a laminar family and $S_i \subseteq S\in \cL$, we must have $R \subsetneq S$. We can thus partition $R$ into the three sets
    \begin{align*}
      R_{<i} = R \cap (S_1 \cup \cdots \cup S_{i-1}), \qquad R_i = R \cap S_i, \qquad  R_{>i} = R \cap (S_{i+1} \cup \cdots \cup S_\ell)\,.
    \end{align*}
    In words, $R_{<i}$ is the part of $R$ that intersects vertices of the strongly connected components that are ordered topologically before $S_i$. Similarly, $R_{>i}$ is the part of $R$  that intersects vertices of the strongly connected components that are ordered topologically after $S_i$. Note that since $R$ is not contained in $S_i$, we have that either $R_{<i}$ or $R_{>i}$ is non-empty. We suppose $R_{<i} \neq \emptyset$ (the case $R_{>i} \neq \emptyset$ is analogous). 

    As $x$ is a feasible solution to $\LP(G,\mathbf{0})$, we have
    $x(\delta^-(R_{<i})) \geq 1$. Moreover, since $\delta(R_i\cup
    R_{>i},R_{<i})=\emptyset$ due to the topological ordering, we have
    $\delta^-(R_{<i}) \subseteq \delta^-(R)$ and thus
    \begin{align*}
      1  = x(\delta^-(R)) \geq x(\delta^-(R_{<i})) + x(\delta(S_i\setminus R_i, R_i))  \geq 1 + x(\delta(S_i\setminus R_i, R_i))\,,
    \end{align*}
    where the first equality follows since $R\in \cL$ is a tight set. However, this is a contradiction because $x(\delta(S_i \setminus R_i, R_i)) >0$; that holds since $S_i \setminus R_i = S_i \setminus R \neq \emptyset$ and $R_i \ne \emptyset$, $S_i$ is a strongly connected component, and $G$ only contains edges with strictly positive $x$-value.
  \end{proof}

 \newcommand{\cheapcontract}[0]{cheap-to-contract\xspace}
\newcommand{\expensivecontract}[0]{expensive-to-contract\xspace}
\newcommand{\contractible}[0]{contractible\xspace}
\newcommand{\reducible}[0]{reducible\xspace}

\section{Contracting and Inducing on a Tight Set}
\label{sec:contractinduce}
  In this section
  we generalize two natural graph-theoretic constructions
  that allow one to decompose the problem of finding a tour with respect to a vertex set $S$.
  The first relies on contracting $S$ (see Definition~\ref{def:contraction} in Section~\ref{sec:contraction})
  and the second relies on inducing on $S$ (see Definition~\ref{def:induced_instance} in Section~\ref{sec:induce}).

  \subsection{Contracting a Tight Set}
  \label{sec:contraction}
   
  Consider an instance $\mathcal{I} = (G, \cL,\xs, y)$.  Before defining the
  contraction of a set $S\in \cL$, we need to define the ``distance''
  functions $d_S$ and $D_S$. For $S\in \cL$ and $u,v\in S$, define
  $d_S(u,v)$ to be the minimum weight of a  path inside $S$ from $u$ to $v$   (if no such path exists, $d_S(u,v) = \infty$). We also let 
  \begin{align*}
    D_{S}(u,v) = \sum_{R\in \cL: \ u\in R \subsetneq S} y_R  + d_S(u,v) + \sum_{R\in \cL: \ v\in R \subsetneq S} y_R\,,
  \end{align*}
  which equals $d_S(u,v) + \sum_{R\in \cL:\   R \subsetneq S}|R\cap \{u,v\}|\cdot y_R$ if $u\neq v$, and $ \sum_{R\in \cL:\   u\in R \subsetneq S} 2y_R$ if $u=v$.
  We remark that $D_S(u,u)$ might be strictly positive.

\ifdefined\ACM
\begin{figure}[t]
  \centering
\begin{tikzpicture}[scale=0.70]
\tikzset{arrow data/.style 2 args={decoration={markings,
         mark=at position #1 with \arrow{#2}},
         postaction=decorate}
      }

  \begin{scope}[scale=0.6]
    \node at (0, 4.6) {\footnotesize Instance $\cI$};
    \draw[fill=gray!10!white, draw=gray!80!black] (0, 0) ellipse (4cm and 2cm) node[above = 1.20cm] {$S$}node[above left = 1.00cm and 0.9cm] {\scriptsize $5$};
    \begin{scope}
      \draw[fill=gray!30!white, draw=gray!80!black] (-2.5, 0.5) ellipse (0.75cm and 0.5cm) node[above  left= 0.20cm and -0.15cm] {\scriptsize $2$};
    \end{scope}
    \begin{scope}[xshift=1.8cm,rotate=30]
      \draw[fill=gray!30!white, draw=gray!80!black] (0, 0) ellipse (1.3cm and 1.5cm) node[above left= 0.65cm and 0.40cm] {\scriptsize $4$};
\draw[fill=gray!50!white, draw=gray!80!black] (0, -.75) ellipse (0.3cm and 0.4cm) node[above left = .12cm and 0.0cm] {\scriptsize $3$};
       \node[ssssgvertex, fill=black] (v1) at (0.1, 0.85) {};
\node[ssssgvertex, fill=black]  at (-0.0, -0.55) {};
       \node[ssssgvertex, fill=black] (v2) at (-0.0, -0.85) {};
       \node[ssssgvertex, fill=black] (t)at (-0.85, -0.15) {};
    \end{scope}
    \node[ssssgvertex, fill=black] (u1) at (-2.7, 0.4) {};
    \node[ssssgvertex, fill=black]  at (-2.2, 0.7) {};
    \node[ssssgvertex, fill=black] (u2) at (-2.5, -0.7) {};
\node[ssssgvertex, fill=black] (v) at (2.6, 0.2) {};
    \draw[fill=gray!30!white, draw=gray!80!black] (-0.7, .7) ellipse (0.2cm and 0.2cm) node[above = 0.08cm] {\scriptsize $1$};
    \node[ssssgvertex, fill=black] (mu)  at (-0.7, 0.7) {};
    \draw[fill=gray!30!white, draw=gray!80!black] (-0.7, -1.2) ellipse (0.2cm and 0.2cm) node[above = 0.08cm] {\scriptsize $2$};
    \node[ssssgvertex, fill=black] (md) at (-0.7, -1.2) {};
    \begin{scope}[xshift=-1.0cm]
      \draw[fill=gray!10!white, draw=gray!80!black, rotate around={55:(3.2, 3.0)}] (3.2, 3.0) ellipse (0.7cm and 1.2cm); \draw[fill=gray!30!white, draw=gray!80!black] (3.5, 2.8) ellipse (0.4cm and 0.4cm); \node[ssssgvertex, fill=black] (b1p) at (3.65, 2.9) {};
      \node[ssssgvertex, fill=black]  (b1) at (3.35, 2.65) {};
      \node[ssssgvertex, fill=black]  at (2.6, 3.4) {};
    \end{scope}
    \begin{scope}
      \draw[fill=gray!10!white, draw=gray!80!black] (-1.5, 3.2) ellipse (0.2cm and 0.2cm); \node[ssssgvertex, fill=black]  at (-1.5, 3.2) {};
      \draw[fill=gray!10!white, draw=gray!80!black] (-3.9, 2.0) ellipse (0.2cm and 0.2cm); \node[ssssgvertex, fill=black] (a1)  at (-3.9, 2.0) {};
    \end{scope}
    \begin{scope}
      \draw[fill=gray!10!white, draw=gray!80!black] (3.6, -2.3) ellipse (0.2cm and 0.2cm); \node[ssssgvertex, fill=black] (b2) at (3.6, -2.3) {};
    \end{scope}
    \begin{scope}
      \draw[fill=gray!10!white, draw=gray!80!black, rotate around={0:(0.0, 0.0)}] (0.0, -3.0) ellipse (0.6cm and 0.5cm); \node[ssssgvertex, fill=black]  at (0.25, -3.1) {};
      \node[ssssgvertex, fill=black]  at (-0.25, -2.9) {};
    \end{scope}
    \begin{scope}
      \draw[fill=gray!10!white, draw=gray!80!black] (-3.6, -2.3) ellipse (0.2cm and 0.2cm); \node[ssssgvertex, fill=black] (a2) at (-3.6, -2.3) {};
    \end{scope}
    \draw (a1) edge[dashed, ->] (u1);
    \draw (a2) edge[dashed, ->] (u2);
    \draw (v1) edge[dashed, ->] (b1);
    \draw (v) edge[dashed, ->, bend right] (b1p);
    \draw (v2) edge[dashed, ->] (b2);
\end{scope}

  \begin{scope}[scale=0.6, xshift=10cm]
    \node at (0, 4.6) {\footnotesize A tour of the instance $\cI/S$ };
    \node[sgvertex] (s) at (0, 0) {$s$};
    \begin{scope}[xshift=-1.0cm]
      \draw[fill=gray!10!white, draw=gray!80!black, rotate around={55:(3.2, 3.0)}] (3.2, 3.0) ellipse (0.7cm and 1.2cm); \draw[fill=gray!30!white, draw=gray!80!black] (3.5, 2.8) ellipse (0.4cm and 0.4cm); \node[ssssgvertex, fill=black] (b1p) at (3.65, 2.9) {};
      \node[ssssgvertex, fill=black]  (b1) at (3.35, 2.65) {};
      \node[ssssgvertex, fill=black](c)  at (2.6, 3.4) {};
    \end{scope}
    \begin{scope}
      \draw[fill=gray!10!white, draw=gray!80!black] (-1.5, 3.2) ellipse (0.2cm and 0.2cm); \node[ssssgvertex, fill=black] (c2) at (-1.5, 3.2) {};
      \draw[fill=gray!10!white, draw=gray!80!black] (-3.9, 2.0) ellipse (0.2cm and 0.2cm); \node[ssssgvertex, fill=black] (a1)  at (-3.9, 2.0) {};
    \end{scope}
    \begin{scope}
      \draw[fill=gray!10!white, draw=gray!80!black] (3.6, -2.3) ellipse (0.2cm and 0.2cm); \node[ssssgvertex, fill=black] (b2) at (3.6, -2.3) {};
    \end{scope}
    \begin{scope}
      \draw[fill=gray!10!white, draw=gray!80!black, rotate around={0:(0.0, 0.0)}] (0.0, -3.0) ellipse (0.6cm and 0.5cm); \node[ssssgvertex, fill=black] (d1) at (0.25, -3.1) {};
      \node[ssssgvertex, fill=black] (d2) at (-0.25, -2.9) {};
    \end{scope}
    \begin{scope}
      \draw[fill=gray!10!white, draw=gray!80!black] (-3.6, -2.3) ellipse (0.2cm and 0.2cm); \node[ssssgvertex, fill=black] (a2) at (-3.6, -2.3) {};
    \end{scope}
    \draw (a1) edge[->] node[below left = -0.1cm and -0.1cm] {\scriptsize $(u^1_{\textrm{in}}, s)$}(s);
    \draw (a2) edge[->] node[above left= -0.1cm and -0.1cm] {\scriptsize $(u^2_{\textrm{in}}, s)$} (s);
\draw (s) edge[->, bend right=60] node[below right= -0.1cm and -0.1cm] {\scriptsize $(s, v^2_{\textrm{out}})$} (b1p);
    \draw (s) edge[->] node[above right= -0.1cm and -0.1cm] {\scriptsize $(s, v^1_{\textrm{out}})$}(b2);
    \draw (b1p) edge[->] (b1);
    \draw (b1) edge[->] (c);
    \draw (c) edge[->] (c2);
    \draw (c2) edge[->] (a1);
    \draw (b2) edge[->] (d1);
    \draw (d1) edge[->] (d2);
    \draw (d2) edge[->] (a2);

\end{scope}

  \begin{scope}[scale=0.6, xshift=22cm]
    \node at (0, 4.6) {\footnotesize The lift of the tour to a subtour of $\cI$ };
    \draw[fill=gray!10!white, draw=gray!80!black] (0, 0) ellipse (4cm and 2cm) node[above = 1.20cm] {$S$} node[above left = 1.00cm and 0.9cm] {\scriptsize $5$};
    \begin{scope}
      \draw[fill=gray!30!white, draw=gray!80!black] (-2.5, 0.5) ellipse (0.75cm and 0.5cm) node[above  left= 0.20cm and -0.15cm] {\scriptsize $2$};
    \end{scope}
    \begin{scope}[xshift=1.8cm,rotate=30]
      \draw[fill=gray!30!white, draw=gray!80!black] (0, 0) ellipse (1.3cm and 1.5cm) node[above left= 0.65cm and 0.40cm] {\scriptsize $4$};
\draw[fill=gray!50!white, draw=gray!80!black] (0, -.75) ellipse (0.3cm and 0.4cm) node[above left = .12cm and 0.0cm] {\scriptsize $3$};
       \node[ssssgvertex, fill=black] (v1) at (0.1, 0.85) {};
\node[ssssgvertex, fill=black]  at (-0.0, -0.55) {};
       \node[ssssgvertex, fill=black] (v2) at (-0.0, -0.85) {};
       \node[ssssgvertex, fill=black] (t)at (-0.85, -0.15) {};
    \end{scope}
    \node[ssssgvertex, fill=black] (u1) at (-2.7, 0.4) {};
    \node[ssssgvertex, fill=black]  at (-2.2, 0.7) {};
    \node[ssssgvertex, fill=black] (u2) at (-2.5, -0.7) {};
\node[ssssgvertex, fill=black] (v) at (2.6, 0.2) {};
    \draw[fill=gray!30!white, draw=gray!80!black] (-0.7, .7) ellipse (0.2cm and 0.2cm) node[above = 0.08cm] {\scriptsize $1$};
    \node[ssssgvertex, fill=black] (mu)  at (-0.7, 0.7) {};
    \draw[fill=gray!30!white, draw=gray!80!black] (-0.7, -1.2) ellipse (0.2cm and 0.2cm) node[above = 0.08cm] {\scriptsize $2$};
    \node[ssssgvertex, fill=black] (md) at (-0.7, -1.2) {};
    \begin{scope}[xshift=-1.0cm]
      \draw[fill=gray!10!white, draw=gray!80!black, rotate around={55:(3.2, 3.0)}] (3.2, 3.0) ellipse (0.7cm and 1.2cm); \draw[fill=gray!30!white, draw=gray!80!black] (3.5, 2.8) ellipse (0.4cm and 0.4cm); \node[ssssgvertex, fill=black] (b1p) at (3.65, 2.9) {};
      \node[ssssgvertex, fill=black]  (b1) at (3.35, 2.65) {};
      \node[ssssgvertex, fill=black](c)  at (2.6, 3.4) {};
    \end{scope}
    \begin{scope}
      \draw[fill=gray!10!white, draw=gray!80!black] (-1.5, 3.2) ellipse (0.2cm and 0.2cm); \node[ssssgvertex, fill=black] (c2) at (-1.5, 3.2) {};
      \draw[fill=gray!10!white, draw=gray!80!black] (-3.9, 2.0) ellipse (0.2cm and 0.2cm); \node[ssssgvertex, fill=black] (a1)  at (-3.9, 2.0) {};
    \end{scope}
    \begin{scope}
      \draw[fill=gray!10!white, draw=gray!80!black] (3.6, -2.3) ellipse (0.2cm and 0.2cm); \node[ssssgvertex, fill=black] (b2) at (3.6, -2.3) {};
    \end{scope}
    \begin{scope}
      \draw[fill=gray!10!white, draw=gray!80!black, rotate around={0:(0.0, 0.0)}] (0.0, -3.0) ellipse (0.6cm and 0.5cm); \node[ssssgvertex, fill=black] (d1) at (0.25, -3.1) {};
      \node[ssssgvertex, fill=black] (d2) at (-0.25, -2.9) {};
    \end{scope}
    \begin{scope}
      \draw[fill=gray!10!white, draw=gray!80!black] (-3.6, -2.3) ellipse (0.2cm and 0.2cm); \node[ssssgvertex, fill=black] (a2) at (-3.6, -2.3) {};
    \end{scope}
    \draw (a1) edge[->]node[left= 0.0cm] {\scriptsize $(u^1_{\textrm{in}}, v^1_{\textrm{in}})$} (u1);
    \draw (a2) edge[->]node[left=  -0.05cm] {\scriptsize $(u^2_{\textrm{in}}, v^2_{\textrm{in}})$} (u2);
\draw (v) edge[->, bend right]node[right= 0.0cm] {\scriptsize $(u^2_{\textrm{out}}, v^2_{\textrm{out}})$} (b1p);
    \draw (v2) edge[->] node[right= 0.0cm] {\scriptsize $(u^1_{\textrm{out}}, v^1_{\textrm{out}})$}(b2);
    \draw (b1p) edge[->] (b1);
    \draw (b1) edge[->] (c);
    \draw (c) edge[->] (c2);
    \draw (c2) edge[->] (a1);
    \draw (b2) edge[->] (d1);
    \draw (d1) edge[->] (d2);
    \draw (d2) edge[->] (a2);
    \draw (u1) edge[->,decorate,decoration={snake,amplitude=.2mm,segment length=2pt,post length=1mm}] (md);
    \draw (u2) edge[->,decorate,decoration={snake,amplitude=.2mm,segment length=2pt,post length=1mm}] (mu);
    \draw (mu) edge[->,decorate,decoration={snake,amplitude=.2mm,segment length=2pt,post length=1mm}] (v1);
    \draw (v1) edge[->,decorate,decoration={snake,amplitude=.2mm,segment length=2pt,post length=1mm}] (v);
    \draw (md) edge[->,decorate,decoration={snake,amplitude=.2mm,segment length=2pt,post length=1mm}] (t);
    \draw (t) edge[->,decorate,decoration={snake,amplitude=.2mm,segment length=2pt,post length=1mm}] (v2);

\end{scope}
\end{tikzpicture}
   \caption{An example of the contraction of a tight set $S$ and the lift of a tour. Only $y$-values of the sets $R\in \cL: R\subseteq S$ are depicted. On the left, only edges that have one endpoint in $S$ are shown. These are exactly the edges that are incident to $s$ in the contracted instance. In the center, a tour of $\cI/S$ is illustrated, and on the right we depict the lift of that tour.}
  \label{fig:contraction}
\end{figure}
\else
\begin{figure}[t]
  \centering
  \hspace*{-3em}
  \begin{tikzpicture}[scale=0.95]
\tikzset{arrow data/.style 2 args={decoration={markings,
         mark=at position #1 with \arrow{#2}},
         postaction=decorate}
      }

  \begin{scope}[scale=0.6]
    \node at (0, 4.6) {\footnotesize Instance $\cI$};
    \draw[fill=gray!10!white, draw=gray!80!black] (0, 0) ellipse (4cm and 2cm) node[above = 1.20cm] {$S$}node[above left = 1.00cm and 0.9cm] {\scriptsize $5$};
    \begin{scope}
      \draw[fill=gray!30!white, draw=gray!80!black] (-2.5, 0.5) ellipse (0.75cm and 0.5cm) node[above  left= 0.20cm and -0.15cm] {\scriptsize $2$};
    \end{scope}
    \begin{scope}[xshift=1.8cm,rotate=30]
      \draw[fill=gray!30!white, draw=gray!80!black] (0, 0) ellipse (1.3cm and 1.5cm) node[above left= 0.65cm and 0.40cm] {\scriptsize $4$};
\draw[fill=gray!50!white, draw=gray!80!black] (0, -.75) ellipse (0.3cm and 0.4cm) node[above left = .12cm and 0.0cm] {\scriptsize $3$};
       \node[ssssgvertex, fill=black] (v1) at (0.1, 0.85) {};
\node[ssssgvertex, fill=black]  at (-0.0, -0.55) {};
       \node[ssssgvertex, fill=black] (v2) at (-0.0, -0.85) {};
       \node[ssssgvertex, fill=black] (t)at (-0.85, -0.15) {};
    \end{scope}
    \node[ssssgvertex, fill=black] (u1) at (-2.7, 0.4) {};
    \node[ssssgvertex, fill=black]  at (-2.2, 0.7) {};
    \node[ssssgvertex, fill=black] (u2) at (-2.5, -0.7) {};
\node[ssssgvertex, fill=black] (v) at (2.6, 0.2) {};
    \draw[fill=gray!30!white, draw=gray!80!black] (-0.7, .7) ellipse (0.2cm and 0.2cm) node[above = 0.08cm] {\scriptsize $1$};
    \node[ssssgvertex, fill=black] (mu)  at (-0.7, 0.7) {};
    \draw[fill=gray!30!white, draw=gray!80!black] (-0.7, -1.2) ellipse (0.2cm and 0.2cm) node[above = 0.08cm] {\scriptsize $2$};
    \node[ssssgvertex, fill=black] (md) at (-0.7, -1.2) {};
    \begin{scope}[xshift=-1.0cm]
      \draw[fill=gray!10!white, draw=gray!80!black, rotate around={55:(3.2, 3.0)}] (3.2, 3.0) ellipse (0.7cm and 1.2cm); \draw[fill=gray!30!white, draw=gray!80!black] (3.5, 2.8) ellipse (0.4cm and 0.4cm); \node[ssssgvertex, fill=black] (b1p) at (3.65, 2.9) {};
      \node[ssssgvertex, fill=black]  (b1) at (3.35, 2.65) {};
      \node[ssssgvertex, fill=black]  at (2.6, 3.4) {};
    \end{scope}
    \begin{scope}
      \draw[fill=gray!10!white, draw=gray!80!black] (-1.5, 3.2) ellipse (0.2cm and 0.2cm); \node[ssssgvertex, fill=black]  at (-1.5, 3.2) {};
      \draw[fill=gray!10!white, draw=gray!80!black] (-3.9, 2.0) ellipse (0.2cm and 0.2cm); \node[ssssgvertex, fill=black] (a1)  at (-3.9, 2.0) {};
    \end{scope}
    \begin{scope}
      \draw[fill=gray!10!white, draw=gray!80!black] (3.6, -2.3) ellipse (0.2cm and 0.2cm); \node[ssssgvertex, fill=black] (b2) at (3.6, -2.3) {};
    \end{scope}
    \begin{scope}
      \draw[fill=gray!10!white, draw=gray!80!black, rotate around={0:(0.0, 0.0)}] (0.0, -3.0) ellipse (0.6cm and 0.5cm); \node[ssssgvertex, fill=black]  at (0.25, -3.1) {};
      \node[ssssgvertex, fill=black]  at (-0.25, -2.9) {};
    \end{scope}
    \begin{scope}
      \draw[fill=gray!10!white, draw=gray!80!black] (-3.6, -2.3) ellipse (0.2cm and 0.2cm); \node[ssssgvertex, fill=black] (a2) at (-3.6, -2.3) {};
    \end{scope}
    \draw (a1) edge[dashed, ->] (u1);
    \draw (a2) edge[dashed, ->] (u2);
    \draw (v1) edge[dashed, ->] (b1);
    \draw (v) edge[dashed, ->, bend right] (b1p);
    \draw (v2) edge[dashed, ->] (b2);
\end{scope}

  \begin{scope}[scale=0.6, xshift=10cm]
    \node at (0, 4.6) {\footnotesize A tour of the instance $\cI/S$ };
    \node[sgvertex] (s) at (0, 0) {$s$};
    \begin{scope}[xshift=-1.0cm]
      \draw[fill=gray!10!white, draw=gray!80!black, rotate around={55:(3.2, 3.0)}] (3.2, 3.0) ellipse (0.7cm and 1.2cm); \draw[fill=gray!30!white, draw=gray!80!black] (3.5, 2.8) ellipse (0.4cm and 0.4cm); \node[ssssgvertex, fill=black] (b1p) at (3.65, 2.9) {};
      \node[ssssgvertex, fill=black]  (b1) at (3.35, 2.65) {};
      \node[ssssgvertex, fill=black](c)  at (2.6, 3.4) {};
    \end{scope}
    \begin{scope}
      \draw[fill=gray!10!white, draw=gray!80!black] (-1.5, 3.2) ellipse (0.2cm and 0.2cm); \node[ssssgvertex, fill=black] (c2) at (-1.5, 3.2) {};
      \draw[fill=gray!10!white, draw=gray!80!black] (-3.9, 2.0) ellipse (0.2cm and 0.2cm); \node[ssssgvertex, fill=black] (a1)  at (-3.9, 2.0) {};
    \end{scope}
    \begin{scope}
      \draw[fill=gray!10!white, draw=gray!80!black] (3.6, -2.3) ellipse (0.2cm and 0.2cm); \node[ssssgvertex, fill=black] (b2) at (3.6, -2.3) {};
    \end{scope}
    \begin{scope}
      \draw[fill=gray!10!white, draw=gray!80!black, rotate around={0:(0.0, 0.0)}] (0.0, -3.0) ellipse (0.6cm and 0.5cm); \node[ssssgvertex, fill=black] (d1) at (0.25, -3.1) {};
      \node[ssssgvertex, fill=black] (d2) at (-0.25, -2.9) {};
    \end{scope}
    \begin{scope}
      \draw[fill=gray!10!white, draw=gray!80!black] (-3.6, -2.3) ellipse (0.2cm and 0.2cm); \node[ssssgvertex, fill=black] (a2) at (-3.6, -2.3) {};
    \end{scope}
    \draw (a1) edge[->] node[below left = -0.1cm and -0.1cm] {\scriptsize $(u^1_{\textrm{in}}, s)$}(s);
    \draw (a2) edge[->] node[above left= -0.1cm and -0.1cm] {\scriptsize $(u^2_{\textrm{in}}, s)$} (s);
\draw (s) edge[->, bend right=60] node[below right= -0.1cm and -0.1cm] {\scriptsize $(s, v^2_{\textrm{out}})$} (b1p);
    \draw (s) edge[->] node[above right= -0.1cm and -0.1cm] {\scriptsize $(s, v^1_{\textrm{out}})$}(b2);
    \draw (b1p) edge[->] (b1);
    \draw (b1) edge[->] (c);
    \draw (c) edge[->] (c2);
    \draw (c2) edge[->] (a1);
    \draw (b2) edge[->] (d1);
    \draw (d1) edge[->] (d2);
    \draw (d2) edge[->] (a2);

\end{scope}

  \begin{scope}[scale=0.6, xshift=20cm]
    \node at (0, 4.6) {\footnotesize The lift of the tour to a subtour of $\cI$ };
    \draw[fill=gray!10!white, draw=gray!80!black] (0, 0) ellipse (4cm and 2cm) node[above = 1.20cm] {$S$} node[above left = 1.00cm and 0.9cm] {\scriptsize $5$};
    \begin{scope}
      \draw[fill=gray!30!white, draw=gray!80!black] (-2.5, 0.5) ellipse (0.75cm and 0.5cm) node[above  left= 0.20cm and -0.15cm] {\scriptsize $2$};
    \end{scope}
    \begin{scope}[xshift=1.8cm,rotate=30]
      \draw[fill=gray!30!white, draw=gray!80!black] (0, 0) ellipse (1.3cm and 1.5cm) node[above left= 0.65cm and 0.40cm] {\scriptsize $4$};
\draw[fill=gray!50!white, draw=gray!80!black] (0, -.75) ellipse (0.3cm and 0.4cm) node[above left = .12cm and 0.0cm] {\scriptsize $3$};
       \node[ssssgvertex, fill=black] (v1) at (0.1, 0.85) {};
\node[ssssgvertex, fill=black]  at (-0.0, -0.55) {};
       \node[ssssgvertex, fill=black] (v2) at (-0.0, -0.85) {};
       \node[ssssgvertex, fill=black] (t)at (-0.85, -0.15) {};
    \end{scope}
    \node[ssssgvertex, fill=black] (u1) at (-2.7, 0.4) {};
    \node[ssssgvertex, fill=black]  at (-2.2, 0.7) {};
    \node[ssssgvertex, fill=black] (u2) at (-2.5, -0.7) {};
\node[ssssgvertex, fill=black] (v) at (2.6, 0.2) {};
    \draw[fill=gray!30!white, draw=gray!80!black] (-0.7, .7) ellipse (0.2cm and 0.2cm) node[above = 0.08cm] {\scriptsize $1$};
    \node[ssssgvertex, fill=black] (mu)  at (-0.7, 0.7) {};
    \draw[fill=gray!30!white, draw=gray!80!black] (-0.7, -1.2) ellipse (0.2cm and 0.2cm) node[above = 0.08cm] {\scriptsize $2$};
    \node[ssssgvertex, fill=black] (md) at (-0.7, -1.2) {};
    \begin{scope}[xshift=-1.0cm]
      \draw[fill=gray!10!white, draw=gray!80!black, rotate around={55:(3.2, 3.0)}] (3.2, 3.0) ellipse (0.7cm and 1.2cm); \draw[fill=gray!30!white, draw=gray!80!black] (3.5, 2.8) ellipse (0.4cm and 0.4cm); \node[ssssgvertex, fill=black] (b1p) at (3.65, 2.9) {};
      \node[ssssgvertex, fill=black]  (b1) at (3.35, 2.65) {};
      \node[ssssgvertex, fill=black](c)  at (2.6, 3.4) {};
    \end{scope}
    \begin{scope}
      \draw[fill=gray!10!white, draw=gray!80!black] (-1.5, 3.2) ellipse (0.2cm and 0.2cm); \node[ssssgvertex, fill=black] (c2) at (-1.5, 3.2) {};
      \draw[fill=gray!10!white, draw=gray!80!black] (-3.9, 2.0) ellipse (0.2cm and 0.2cm); \node[ssssgvertex, fill=black] (a1)  at (-3.9, 2.0) {};
    \end{scope}
    \begin{scope}
      \draw[fill=gray!10!white, draw=gray!80!black] (3.6, -2.3) ellipse (0.2cm and 0.2cm); \node[ssssgvertex, fill=black] (b2) at (3.6, -2.3) {};
    \end{scope}
    \begin{scope}
      \draw[fill=gray!10!white, draw=gray!80!black, rotate around={0:(0.0, 0.0)}] (0.0, -3.0) ellipse (0.6cm and 0.5cm); \node[ssssgvertex, fill=black] (d1) at (0.25, -3.1) {};
      \node[ssssgvertex, fill=black] (d2) at (-0.25, -2.9) {};
    \end{scope}
    \begin{scope}
      \draw[fill=gray!10!white, draw=gray!80!black] (-3.6, -2.3) ellipse (0.2cm and 0.2cm); \node[ssssgvertex, fill=black] (a2) at (-3.6, -2.3) {};
    \end{scope}
    \draw (a1) edge[->]node[left= 0.0cm] {\scriptsize $(u^1_{\textrm{in}}, v^1_{\textrm{in}})$} (u1);
    \draw (a2) edge[->]node[left=  -0.05cm] {\scriptsize $(u^2_{\textrm{in}}, v^2_{\textrm{in}})$} (u2);
\draw (v) edge[->, bend right]node[right= 0.0cm] {\scriptsize $(u^2_{\textrm{out}}, v^2_{\textrm{out}})$} (b1p);
    \draw (v2) edge[->] node[right= 0.0cm] {\scriptsize $(u^1_{\textrm{out}}, v^1_{\textrm{out}})$}(b2);
    \draw (b1p) edge[->] (b1);
    \draw (b1) edge[->] (c);
    \draw (c) edge[->] (c2);
    \draw (c2) edge[->] (a1);
    \draw (b2) edge[->] (d1);
    \draw (d1) edge[->] (d2);
    \draw (d2) edge[->] (a2);
    \draw (u1) edge[->,decorate,decoration={snake,amplitude=.2mm,segment length=2pt,post length=1mm}] (md);
    \draw (u2) edge[->,decorate,decoration={snake,amplitude=.2mm,segment length=2pt,post length=1mm}] (mu);
    \draw (mu) edge[->,decorate,decoration={snake,amplitude=.2mm,segment length=2pt,post length=1mm}] (v1);
    \draw (v1) edge[->,decorate,decoration={snake,amplitude=.2mm,segment length=2pt,post length=1mm}] (v);
    \draw (md) edge[->,decorate,decoration={snake,amplitude=.2mm,segment length=2pt,post length=1mm}] (t);
    \draw (t) edge[->,decorate,decoration={snake,amplitude=.2mm,segment length=2pt,post length=1mm}] (v2);

\end{scope}
\end{tikzpicture}
   \caption{An example of the contraction of a tight set $S$ and the lift of a tour. Only $y$-values of the sets $R\in \cL: R\subseteq S$ are depicted. On the left, only edges that have one endpoint in $S$ are shown. These are exactly the edges that are incident to $s$ in the contracted instance. In the center, a tour of $\cI/S$ is illustrated, and on the right we depict the lift of that tour.}
  \label{fig:contraction}
\end{figure}
\fi
  The intuition of the definition of $D_{S}$ is as follows. After contracting
  $S$, all sets of the laminar family are still present in the contracted
  instance, except for the sets strictly contained in $S$. Now, after finding
  a tour in the contracted instance, we need to  lift it back to a subtour in the original instance. This is done as depicted in
  Figure~\ref{fig:contraction}: for each visit of the tour to $s$ (the vertex
  corresponding to the contraction of $S$) on the edges $(u^i_{\textrm{in}},
  s), (s, v^i_{\textrm{out}})$, we obtain a subtour of the original 
  instance by replacing $(u^i_{\textrm{in}},
  s), (s, v^i_{\textrm{out}})$  by the corresponding edges (i.e., by their
  preimages) $(u^i_{\textrm{in}}, v^i_{\textrm{in}}), (u^i_{\textrm{out}},
  v^i_{\textrm{out}})$ of $G$ together with the minimum-weight path inside $S$  from
  $v^i_{\textrm{in}}$ to $u^i_{\textrm{out}}$. The value $D_S(v^i_{\textrm{in}},
  u^i_{\textrm{out}})$ is defined to capture the weight increase incurred by this operation.  For example, in Figure~\ref{fig:contraction} we have
  \begin{align*}
 \underbrace{D_S(v^1_{\textrm{in}}, u^1_{\textrm{out}}) }_{= 22} =   \underbrace{\sum_{R\in \cL: \ v^1_{\textrm{in}}\in R \subsetneq S} y_R}_{=2} + \underbrace{d_S(v^1_{\textrm{in}}, u^1_{\textrm{out}})}_{= 2 + 2\cdot 2+ 4 + 3} + \underbrace{\sum_{R\in \cL: \ u^1_{\textrm{out}}\in R \subsetneq S} y_R}_{=3+4}\,. 
  \end{align*}

  Before formally defining the notions of contraction and lift, we state the following useful bound on $D_S(u,v)$.
  \begin{fact}
  	\label{fact:bound_on_dU}
  	For any $u, v \in S$
  	with $u \in S_\textrm{in}$ or $v \in S_\textrm{out}$ we have
  	\[ D_S(u,v) \le \valu(S). \]
  \end{fact}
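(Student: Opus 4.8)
The plan is to derive this immediately from \cref{lem:short_path}, applied with the set $S$ itself (which is allowed since $S \in \cL$, so $\cL \cup \{S\} = \cL$ is laminar and moreover every $R \in \cL$ is tight by \cref{def:lam}). First I would note that a path from $u$ to $v$ inside $S$ exists: by hypothesis either $u \in S_\textrm{in}$ or $v \in S_\textrm{out}$, and in either case \cref{lem:path_tight_set}\ref{item:exists_path} guarantees such a path. Hence $d_S(u,v) < \infty$ and \cref{lem:short_path} is applicable.

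Next I would invoke the ``in addition'' part of \cref{lem:short_path}: since $u \in S_\textrm{in}$ or $v \in S_\textrm{out}$, the lemma produces a path $P$ from $u$ to $v$ inside $S$ with
\[
d_S(u,v) \;\le\; \cost(P) \;\le\; \sum_{R \in \cL\,:\ R \subsetneq S} \bigl(2 - |R \cap \{u,v\}|\bigr)\, y_R .
\]
Combining this with the alternative expression $D_S(u,v) = d_S(u,v) + \sum_{R \in \cL\,:\ R \subsetneq S} |R \cap \{u,v\}|\, y_R$ given right after the definition of $D_S$, the terms $|R \cap \{u,v\}| \, y_R$ cancel against the $-|R \cap \{u,v\}|\, y_R$ contribution, leaving
\[
D_S(u,v) \;\le\; \sum_{R \in \cL\,:\ R \subsetneq S} 2\, y_R \;=\; \valu(S),
\]
which is exactly the claim.

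There is no real obstacle here — the statement is essentially a restatement of the refined bound in \cref{lem:short_path}; the only thing to be careful about is (i) checking that the hypothesis $u \in S_\textrm{in}$ or $v \in S_\textrm{out}$ is precisely what is needed both to guarantee existence of an internal $u$–$v$ path and to trigger the stronger ``$2 - |R \cap \{u,v\}|$'' crossing bound, and (ii) using the second, equivalent form of $D_S(u,v)$ so that the bookkeeping with the $y_R$ terms is transparent.
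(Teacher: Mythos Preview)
Your proof is correct and essentially identical to the paper's own argument: both invoke \cref{lem:path_tight_set}\ref{item:exists_path} for existence of the path, then the refined bound of \cref{lem:short_path}, and finish by combining with the alternative expression for $D_S(u,v)$. You even add the minor justification that $\cL \cup \{S\} = \cL$ is laminar, which the paper leaves implicit.
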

  \begin{proof}
 Lemma~\ref{lem:path_tight_set}\ref{item:exists_path}  says that  there is a path from $u$ to $v$ inside $S$.
  	Select $P$ to be the path from $u$ to $v$ as guaranteed by Lemma~\ref{lem:short_path}.
  	Since $u \in S_\textrm{in}$ or $v \in S_\textrm{out}$, we have
  	\[ d_S(u,v) \le \cost(P) \le \sum_{R \in \cL  : \ R \subsetneq S} \left( 2 - |R \cap \{u,v\}| \right) \cdot y_R \]
  	and thus
  	\begin{align*}
  	D_{S}(u,v)
  	&= d_S(u,v) + \sum_{R\in \cL: R \subsetneq S} |R \cap \{u,v\}| \cdot y_R \le \sum_{R \in \cL : \ R \subsetneq S} 2 \cdot y_R = \valu(S) \,.
  	\end{align*}
  \end{proof}
  
  We now define the notion of contracting a tight set for an ATSP instance. In short, the \emph{contraction} is the instance obtained by performing the classic graph contraction of $S$, modifying $\cL$ to remove the sets contained in $S$, and increasing the $y$-value of the new singleton $\{s\}$ corresponding to $S$ so as to become $y_S + \nicefrac{1}{2} \max_{u\in S_{\textrm{in}}, v\in S_{\textrm{out}}}D_S(u,v)$. This increase is done in order to pay for the maximum possible weight increase incurred when lifting a tour in the contraction back to a subtour in the original instance (as depicted in Figure~\ref{fig:contraction}, defined in Definition~\ref{def:lift}, and analyzed in Lemma~\ref{lem:lifting}).

  \begin{definition}[Contracting  a tight set]
    \label{def:contraction}
    The instance $(G',  \cL', x', y')$ obtained from $\mathcal{I} = (G,  \cL,\xs, y)$ by \emph{contracting} $S\in \cL$, denoted by
    $\mathcal{I}/S$, is defined as follows:
    \begin{itemize}
    \item The graph $G'$ equals $G/S$, i.e.,  the graph obtained from $G$ by contracting $S$. Let $s$ denote the new vertex of $G'$ that corresponds to the set $S$. 
    \item For each edge $e'\in E(G')$, $x'(e')$ equals $\xs(e)$, where $e\in E(G)$ is the preimage of $e'$ in $G$.\footnote{Recall that for notational convenience we allow parallel edges in $G/S$ and therefore the preimage is uniquely defined.}
    \item The laminar family $\cL'$ contains all remaining sets of $\cL$:
      \begin{align*}
        \cL'=  \{(R\setminus S) \cup \{s\} : R\in \cL, S \subseteq R\} \cup \{R: R\in \cL, S\cap R = \emptyset\}\,.
      \end{align*}
    \item The vector $y'$ equals  $y$ (via the natural mapping) on all sets but $\{s\}$.  For $\{s\}$ we define
      \begin{align*}
        y'_s  = y_S +\frac{1}{2}\max_{u\in S_{\textrm{in}}, v\in S_{\textrm{out}}} D_{S}(u,v)\,.
      \end{align*}      
    \end{itemize}
  \end{definition}
  We remark that $\cI/S$ as defined above is indeed an instance: $\cL'$
  is a laminar family of tight sets (since for each $R \in \cL'$ we have $x'(\delta(R')) = x(\delta(R))$, where $R$ is the preimage of $R'$ in $\cL$ via the natural mapping), $y'_R \ge 0$ is defined only for $R \in
  \cL'$, and $x'$ is a feasible solution to $\LP(G',\mathbf{0})$ that is strictly positive on all edges.

  The way we defined the  new dual weight $y'_{s}$ implies the natural property that the value of the linear programming solution does not increase after contracting a tight set:
  \begin{fact}
    \label{fact:no_increase}

     $ \valu(\mathcal{I}/S)  =   \valu(\cI) - \left(\valu_{\cI}(S)-\max_{u\in S_{\textrm{in}}, v\in S_{\textrm{out}}} D_{S}(u,v)\right) \leq \valu(\cI)$. \end{fact}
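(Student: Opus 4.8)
The plan is to compute $\valu(\cI/S)$ directly from the definition of the contracted instance and match it term by term against $\valu(\cI)$. Recall that $\valu(\cI)=\valu_\cI(V)=2\sum_{R\in\cL}y_R$, using that every $R\in\cL$ is a tight set with $\emptyset\ne R\subsetneq V$, and similarly $\valu(\cI/S)=2\sum_{R'\in\cL'}y'_{R'}$. First I would invoke laminarity to partition $\cL$ into four groups relative to $S$: the sets strictly inside $S$, the set $S$ itself, the sets strictly containing $S$, and the sets disjoint from $S$. By Definition~\ref{def:contraction}, the sets strictly inside $S$ vanish in $\cL'$; the set $S$ becomes the singleton $\{s\}$ with new weight $y'_s=y_S+\tfrac12\max_{u\in S_\textrm{in},\,v\in S_\textrm{out}}D_S(u,v)$; each $R\supsetneq S$ becomes $R\setminus S\cup\{s\}$ with unchanged weight $y_R$; and the sets disjoint from $S$ are unchanged. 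One checks that these assignments are a bijection between $\cL'$ and $\cL\setminus\{R\in\cL:R\subsetneq S\}$, so that no weights are double-counted.

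Summing the $y'$-values over $\cL'$ then gives
\[
\sum_{R'\in\cL'}y'_{R'}=y'_s+\sum_{R\in\cL:\,R\supsetneq S}y_R+\sum_{R\in\cL:\,R\cap S=\emptyset}y_R=\Bigl(\sum_{R\in\cL}y_R-\sum_{R\in\cL:\,R\subsetneq S}y_R\Bigr)+\tfrac12\max_{u\in S_\textrm{in},\,v\in S_\textrm{out}}D_S(u,v),
\]
where the second equality substitutes the value of $y'_s$ and cancels the $y_S$ term against the $S$-summand of $\sum_{R\in\cL}y_R$. Multiplying by $2$ yields
\[
\valu(\cI/S)=\valu(\cI)-\valu_\cI(S)+\max_{u\in S_\textrm{in},\,v\in S_\textrm{out}}D_S(u,v),
\]
which is exactly the claimed identity $\valu(\cI/S)=\valu(\cI)-\bigl(\valu_\cI(S)-\max_{u\in S_\textrm{in},v\in S_\textrm{out}}D_S(u,v)\bigr)$.

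For the inequality $\valu(\cI/S)\le\valu(\cI)$ it then suffices to show $\max_{u\in S_\textrm{in},\,v\in S_\textrm{out}}D_S(u,v)\le\valu_\cI(S)$. This is immediate from Fact~\ref{fact:bound_on_dU}: for every $u\in S_\textrm{in}$ and $v\in S_\textrm{out}$ — in particular for the maximizing pair — we have $D_S(u,v)\le\valu(S)=\valu_\cI(S)$. (The maximum is over a nonempty set of pairs, since strong connectivity of $G$ and $\emptyset\ne S\subsetneq V$ force $S_\textrm{in},S_\textrm{out}\ne\emptyset$.) I do not expect any genuine obstacle; the only point requiring care is the bookkeeping of which sets of $\cL$ survive contraction and the observation that $S$ is the unique set whose dual weight is modified, becoming the weight of the new singleton $\{s\}$.
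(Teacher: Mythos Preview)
Your proof is correct and follows essentially the same approach as the paper: both compute $\valu(\cI/S)=2\sum_{R'\in\cL'}y'_{R'}$ directly by observing that the surviving sets are exactly those in $\cL$ not strictly contained in $S$ (with unchanged weights except for $\{s\}$, where $y'_s$ contributes the extra $\tfrac12\max D_S$ term), and then invoke Fact~\ref{fact:bound_on_dU} for the inequality. Your explicit four-way partition and bijection bookkeeping is slightly more detailed than the paper's presentation, but the argument is the same.
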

  \begin{proof}
    By definition,
	  \begin{align*}
      \valu(\mathcal{I}/S) = 2\cdot \sum_{R \in \cL'} y'_R &= 2 \cdot y'_{s} + 2 \cdot \sum_{R \in \cL : \ R \not \subseteq S} y_R \\
        &= \max_{u\in S_{\textrm{in}}, v\in S_{\textrm{out}}} D_{S}(u,v) + 2 \cdot y_S + 2 \cdot \sum_{R \in \cL : \ R \not \subseteq S} y_R \\
	    &= \max_{u\in S_{\textrm{in}}, v\in S_{\textrm{out}}} D_{S}(u,v) + 2 \cdot \sum_{R \in \cL} y_R - 2 \cdot \sum_{R \in \cL : \ R \subsetneq S} y_R \\ 
	    &= \max_{u\in S_{\textrm{in}}, v\in S_{\textrm{out}}} D_{S}(u,v) + \valu(\cI) - \valu_\cI(S) 
    \end{align*}
    and so the equality of the statement holds. Finally, the inequality of the
    statement follows from Fact~\ref{fact:bound_on_dU}, which implies that
    $\max_{u\in S_{\textrm{in}}, v\in S_{\textrm{out}}} D_{S}(u,v) \leq
    \valu_{\cI}(S)$.
  \end{proof}
  Having defined the contraction of a tight set $S\in \cL$, we define the aforementioned operation of lifting  a tour of the contracted instance $\cI/S$ to a subtour in the original instance $\cI$.  
  When considering a tour (or a subtour), we order the edges according to an arbitrary but fixed Eulerian walk. This allows us to talk about consecutive edges. 
  \begin{definition}
    For a tour $T$ of $\mathcal{I}/S$, we define its \emph{lift}  to be the subtour  of $\cI$ obtained from
    $ T$ by replacing each consecutive pair $(u_{\textrm{in}},s),
    (s,v_{\textrm{out}})$ of incoming and outgoing edges incident to
    $s$ by their preimages $(u_{\textrm{in}}, v_{\textrm{in}}) $ and
    $(u_{\textrm{out}},v_{\textrm{out}})$ in $G$, together with  a
    minimum-weight path from $v_{\textrm{in}}$ to $u_{\textrm{out}}$ inside $S$.\footnote{
    	We remark that it is not crucial that the minimum-weight
    	path from $v_{\textrm{in}}$ to $u_{\textrm{out}}$ is selected to be
    	inside $S$; a minimum-weight path without this restriction would also work. We have chosen this definition as we find it more intuitive and it simplifies some arguments.}
    \label{def:lift}
  \end{definition}
  
  See Figure~\ref{fig:contraction} for an illustration.
  It follows that the lift is a subtour (i.e., an Eulerian multiset of edges that forms a single component), because we added paths between consecutive edges in the tour of $\cI/S$. However, the lift  is usually not  a tour of the instance $\cI$, as  it is not guaranteed to visit all the vertices in $S$. To extend the lift to a tour, we use the concept of inducing on the tight set $S$, which we introduce in Section~\ref{sec:induce}. 
  
  We complete this section  by bounding the weight of the lift of~$T$.    
  \begin{lemma}
      Let $T$ be a tour of the instance $\mathcal{I}/S$. Then the lift $F$ of $T$ satisfies 
$ \cost_{\mathcal{I}}(F) \leq \cost_{\mathcal{I}/S}(T).$
\label{lem:lifting}
  \end{lemma}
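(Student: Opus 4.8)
The plan is to track the weight of $T$ edge by edge as it is transformed into the lift $F$, and show that the weight never increases. The edges of $T$ come in two kinds: those incident to $s$, and those not incident to $s$. For an edge $e'$ of $T$ not incident to $s$, its preimage $e$ in $G$ is used in $F$ with the same multiplicity, and the key observation is that $\cost_{\cI}(e) = \cost_{\cI/S}(e')$. Indeed, the weight functions are induced by the dual vectors $y$ and $y'$, and $y' = y$ (via the natural bijection) on every set of $\cL'$ other than $\{s\}$; since $e$ has no endpoint in $S$, the set of laminar sets of $\cL$ that $e$ crosses corresponds bijectively — with equal $y$-values — to the set of laminar sets of $\cL'$ that $e'$ crosses, \emph{except possibly} for $\{s\}$ itself. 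But $e'$ crosses $\{s\}$ iff $e'$ is incident to $s$, which it is not, and correspondingly $e$ crosses no set strictly inside $S$ and does not cross $S$. Hence the per-edge weights agree, and the total contribution of the non-$s$ edges of $T$ to $\cost_{\cI/S}(T)$ equals their contribution (via preimages) to $\cost_{\cI}(F)$.

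Next I would handle the edges incident to $s$. Fix a consecutive pair $(u_{\textrm{in}},s),(s,v_{\textrm{out}})$ in the Eulerian traversal of $T$. In $\cI/S$ these two edges together have weight $\cost_{\cI/S}(u_{\textrm{in}},s) + \cost_{\cI/S}(s,v_{\textrm{out}})$. Each of these edges crosses $\{s\}$ (contributing $y'_s$ once each) plus the sets of $\cL'$ that separate $u_{\textrm{in}}$ (resp.\ $v_{\textrm{out}}$) from $\{s\}$; those latter sets are exactly the sets $R \in \cL$ with $R \not\subseteq S$ that contain $u_{\textrm{in}}$ (resp.\ $v_{\textrm{out}}$) but not $S$, with unchanged $y$-values. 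So the pair contributes $2 y'_s$ plus the ``outside-$S$'' crossing cost of the two preimage edges $(u_{\textrm{in}},v_{\textrm{in}})$, $(u_{\textrm{out}},v_{\textrm{out}})$. In the lift, this pair is replaced by those two preimage edges together with a minimum-weight path $Q$ inside $S$ from $v_{\textrm{in}}$ to $u_{\textrm{out}}$. The preimage edges $(u_{\textrm{in}},v_{\textrm{in}})$ and $(u_{\textrm{out}},v_{\textrm{out}})$ have, in $\cI$, the same ``outside-$S$'' crossing cost as above, \emph{plus} a contribution $\sum_{R \in \cL:\, v_{\textrm{in}} \in R \subsetneq S} y_R$ and $\sum_{R \in \cL:\, u_{\textrm{out}} \in R \subsetneq S} y_R$ respectively (from the sets strictly inside $S$ that they enter). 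And $\cost_{\cI}(Q) = d_S(v_{\textrm{in}}, u_{\textrm{out}})$ by definition of $d_S$ and the fact that $Q$ is a minimum-weight path inside $S$ — here one uses that a path inside $S$ crosses only sets $R \subsetneq S$, so its $\cI$-weight is $\sum_{R \in \cL: R \subsetneq S} |Q \cap \delta(R)| \cdot y_R$, minimized precisely by the shortest path, giving value $d_S(v_{\textrm{in}},u_{\textrm{out}})$ (note $d_S$ is finite here since $v_{\textrm{in}} \in S_{\textrm{in}}$, invoking Lemma~\ref{lem:path_tight_set}\ref{item:exists_path}). Adding these three contributions, the extra cost incurred in $\cI$ by the lift of this pair, beyond the shared ``outside-$S$'' part, is exactly $D_S(v_{\textrm{in}}, u_{\textrm{out}})$, while in $\cI/S$ the extra cost is $2 y'_s = 2 y_S + \max_{u\in S_{\textrm{in}}, v\in S_{\textrm{out}}} D_S(u,v) \ge \max_{u\in S_{\textrm{in}}, v\in S_{\textrm{out}}} D_S(u,v) \ge D_S(v_{\textrm{in}}, u_{\textrm{out}})$, since $v_{\textrm{in}} \in S_{\textrm{in}}$ and $u_{\textrm{out}} \in S_{\textrm{out}}$ (these vertices are heads/tails of edges crossing the boundary of $S$, hence lie in $S_{\textrm{in}}$, $S_{\textrm{out}}$ by definition).

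Summing over all consecutive $s$-incident pairs and adding the exact equality for the non-$s$ edges yields $\cost_{\cI}(F) \le \cost_{\cI/S}(T)$. I expect the main obstacle to be purely bookkeeping: carefully matching up, for each edge, which laminar sets it crosses before and after contraction, and making sure the $\{s\}$-crossing term $2y'_s$ is correctly charged exactly once per incoming/outgoing $s$-edge so that a pair of them pays $2y'_s$ total — and then checking that the definition $y'_s = y_S + \tfrac12 \max D_S$ was tuned precisely so that $2y'_s \ge D_S(v_{\textrm{in}},u_{\textrm{out}}) + 2y_S$ wait — more precisely that $2y'_s$ covers $D_S$ plus the doubled $y_S$ bookkeeping; getting this constant exactly right is where a sign or factor-of-two slip is most likely. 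A clean way to avoid errors is to write $\cost_{\cI}(e)$ and $\cost_{\cI/S}(e')$ as explicit sums over crossed laminar sets from the start, and compare sums.
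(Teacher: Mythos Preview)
Your approach is essentially the same as the paper's: split the cost into crossings of sets $R \in \cL$ with $R \not\subseteq S$ (which match exactly between $T$ and $F$) and crossings of sets $R \subseteq S$ (where each visit to $s$ contributes $2y'_s = 2y_S + \max D_S$ in $\cI/S$ versus $2y_S + D_S(v_{\textrm{in}}, u_{\textrm{out}})$ in $\cI$). The paper organizes this by laminar sets, you organize it by edges, but the accounting is identical.

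One bookkeeping slip to fix (which you already flagged): the preimage edges $(u_{\textrm{in}}, v_{\textrm{in}})$ and $(u_{\textrm{out}}, v_{\textrm{out}})$ each cross $S$ itself, so the ``extra cost in $\cI$'' per visit is $2y_S + D_S(v_{\textrm{in}}, u_{\textrm{out}})$, not just $D_S(v_{\textrm{in}}, u_{\textrm{out}})$; this then matches $2y'_s$ exactly up to the $\max$ slack. Also, your description of the ``outside-$S$'' sets crossed by $(u_{\textrm{in}},s)$ misses the case $R \supsetneq S$ with $u_{\textrm{in}} \notin R$; the cleanest fix is precisely what you suggest at the end---write both costs as sums $\sum_R |\delta(R) \cap \cdot|\, y_R$ and compare directly.
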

  \begin{proof}
    Consider the tour $T$ and let $(u^{(1)}_{\textrm{in}},s),
    (s,v^{(1)}_{\textrm{out}}), \ldots, (u^{(k)}_{\textrm{in}},s),
    (s,v^{(k)}_{\textrm{out}})$ be the edges that $T$ uses to visit the vertex $s$
    (which corresponds to the contracted set $S$). That is,
    $(u^{(i)}_{\textrm{in}},s)$ and $(s,v^{(i)}_{\textrm{out}})$ are the
    incoming and outgoing edge of the $i$-th visit of $T$ to $s$.  By the
    definition of contraction, we can write the weight of $T$ as
    \begin{align*}
      \cost_{\mathcal{I}/S} (T) &= \sum_{R \in \cL: \ R \not\subseteq S} \alpha_R y_R + 2 k \cdot y'_s \\
                                &= \sum_{R \in \cL: \ R \not\subseteq S} \alpha_R y_R + k \cdot \left(2\cdot y_S +  \max_{u\in S_{\textrm{in}}, v\in S_{\textrm{out}}} D_{S}(u,v)\right)\,,
    \end{align*}
    where $\alpha_R  = |\delta(R) \cap  T|$. We now compare this weight to that of the lift $F$. Let $(u^{(i)}_{\textrm{in}},v^{(i)}_{\textrm{in}})$ and $(u^{(i)}_{\textrm{out}},v^{(i)}_{\textrm{out}})$  be the edges of $G$ that are the preimages of $(u^{(i)}_{\textrm{in}},s)$ and $(s,v^{(i)}_{\textrm{out}})$.  The lift $F$ is obtained from $T$ by replacing $(u^{(i)}_{\textrm{in}},s), (s,v^{(i)}_{\textrm{out}})$ by $(u^{(i)}_{\textrm{in}},v^{(i)}_{\textrm{in}}), (u^{(i)}_{\textrm{out}},v^{(i)}_{\textrm{out}})$ and
    adding   a minimum-weight path inside $S$ from $v^{(i)}_{\textrm{in}}$ to $u^{(i)}_{\textrm{out}}$.  So $F$ crosses every $R\in \cL: R \not \subseteq S$ the same number of times $\alpha_R$ as $T$. To bound the weight incurred by crossing the tight sets ``inside'' $S$, note that the $i$-th visit to the set $S$ incurs a weight from crossing sets $R\in \cL: R \subseteq S$ that equals 
    \begin{align*}
      2y_S +\sum_{R\in \cL: \ v^{(i)}_{\textrm{in}}\in R \subsetneq S} y_R  + d_S(v^{(i)}_{\textrm{in}},u^{(i)}_{\textrm{out}}) + \sum_{R\in \cL: \ u^{(i)}_{\textrm{out}}\in R \subsetneq S} y_R  = 2y_S + D_S(v^{(i)}_{\textrm{in}}, u^{(i)}_{\textrm{out}})\,.
    \end{align*}
    Hence
    \begin{align*}
      \cost_{\mathcal{I}}(F) & = \sum_{R \in \cL: \ R \not\subseteq S} \alpha_R y_R  + \sum_{i=1}^k \left( 2\cdot y_S + D_S(v^{(i)}_{\textrm{in}}, u^{(i)}_{\textrm{out}}) \right)  \\
      & \leq \sum_{R \in \cL: \ R \not\subseteq S} \alpha_R y_R  + \sum_{i=1}^k \left( 2\cdot y_S + \max_{u\in S_{\textrm{in}}, v\in S_{\textrm{out}}} D_S(u, v) \right) \\
      & = \cost_{\mathcal{I}/S} (T)\,.
    \end{align*}
  \end{proof}

\subsection{Inducing on a Tight Set}
\label{sec:induce}
In this section we introduce our notion of induced instances. This concept  will be used for completing a lift of a tour of a contracted instance into a tour of the original instance (see Definition~\ref{def:contractible} of ``contractible'' below).
Inducing on a tight set $S$
is similar to contracting its complement $V \setminus S$
into a single vertex $\bar s$ (see Definition~\ref{def:contraction}),
though the resulting laminar family
and dual values
are somewhat different:
namely,
we let $y'_{\bar s} = \valu(S) / 2$ and
we remove $S$ (as well as all supersets of $S$) from $\cL'$.  The intuitive reason for the setting of $y'_{\bar s}$ is that each visit to $\bar s$ should pay for the  most expensive shortest paths in the strongly connected components of $S$  (see Figure~\ref{fig:induce} and the proof of Lemma~\ref{lem:get_contractible}).

We remark that the notion of inducing on $S$ for ATSP instances differs compared to the graph obtained by inducing on  $S$ (in the usual graph-theoretic sense), as here we also
 have the vertex $\bar s$ corresponding to the contraction of the
vertices not in $S$. This is needed to make sure that we
obtain an ATSP instance (in particular, that we obtain a feasible solution
$x'$ to the linear programming relaxation).
\begin{definition}
  \label{def:induced_instance}
  The instance $(G', \cL', x', y')$ obtained from $\cI = (G, \cL,\xs, y)$ by \emph{inducing} on
  a tight set $S \in \cL$, denoted by $\cI[S]$, is defined as follows:
  \begin{itemize}
    \item The graph $G'$ equals $G/\bar S$, i.e.,  the graph obtained from $G$ by contracting $\bar S = V\setminus S$. Let $\bar s$ denote the new vertex of $G'$ that corresponds to the set $\bar S$. 
    \item For each edge $e'\in E(G')$, $x'(e')$ equals $\xs(e)$, where $e\in E(G)$ is the preimage of $e'$ in $G$.\footnote{We again recall that parallel edges are allowed in $G/\bar S$, and thus the preimage of an edge is uniquely defined.}
    \item The laminar family $\cL'$ contains $\{\bar s\}$ and all sets that are strict subsets of $S$: \begin{align*}
        \cL' =  \{ R \in \cL : R \subsetneq S \} \cup \{\{\bar s \}\} \,.
      \end{align*}
    \item The vector $y'$ equals $y$ on the sets common to $\cL'$ and $\cL$. For the new set $\{\bar s\}$ we define  $y'_{\bar s} = \valu(S) / 2$.
  \end{itemize}
\end{definition}
  We remark that $\cI[S]$ in an instance: $\cL'$
  is a laminar family of tight sets, $y'_R \ge 0$ is defined only for $R \in
  \cL'$, and $x'$ is a feasible solution to $\LP(G',\mathbf{0})$ that is strictly positive on all edges.

As for the value of $\cI[S]$, it is comprised of the $y$-values of sets strictly inside $S$,
which contribute $\valu(S)$,
and that of $\{\bar s\}$,
which also contributes $2 y'_{\bar s} = \valu(S)$.
Thus we have
\begin{fact}
  \label{fact:value_of_induced_instance} 
  $\valu(\cI[S]) = 2 \valu(S) .$
\end{fact}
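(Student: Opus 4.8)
The plan is to simply unfold the definition of $\valu$ for the induced instance and match terms. Write $\cI[S] = (G', \cL', x', y')$ and let $V' = V(G') = S \cup \{\bar s\}$. By definition, $\valu(\cI[S]) = \valu_{\cI[S]}(V') = 2 \sum_{R \in \cL' : \, R \subsetneq V'} y'_R$, so the first step is to check that \emph{every} set of $\cL'$ is a strict subset of $V'$, hence the constraint $R \subsetneq V'$ is vacuous. This holds because $S$ is tight and therefore nonempty (a tight set satisfies $x(\delta(S)) = 2 > 0$), so $\{\bar s\} \subsetneq V'$; and every other member of $\cL'$ satisfies $R \subsetneq S \subsetneq V'$ by the definition of $\cL'$ in \cref{def:induced_instance}.

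The second step is to split the sum over $\cL'$ according to the two parts in the definition $\cL' = \{R \in \cL : R \subsetneq S\} \cup \{\{\bar s\}\}$, using that $y'$ agrees with $y$ on the sets inherited from $\cL$ and that $y'_{\bar s} = \valu(S)/2$:
\[
\valu(\cI[S]) = 2 \sum_{R \in \cL'} y'_R = 2 \sum_{R \in \cL : \, R \subsetneq S} y_R + 2 y'_{\bar s} = \valu_\cI(S) + \valu(S).
\]
Finally, recalling that $\valu(S)$ is shorthand for $\valu_\cI(S)$ when $\cI$ is clear from context (the definition of $\valu_\cI(S) = 2 \sum_{R \in \cL : \, R \subsetneq S} y_R$), both terms are equal to $\valu(S)$, giving $\valu(\cI[S]) = 2\valu(S)$, as claimed.

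There is essentially no obstacle here; the only point that requires a moment's care is the bookkeeping: one must make sure that the set $S$ itself is \emph{not} counted (it is removed from $\cL'$ by construction, together with all its supersets), and that the new singleton $\{\bar s\}$ is counted exactly once and contributes $2 y'_{\bar s} = \valu(S)$. Both are immediate from \cref{def:induced_instance}, so the proof is a two-line computation.
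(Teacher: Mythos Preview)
Your proof is correct and follows essentially the same approach as the paper: split the contribution to $\valu(\cI[S])$ into the sets $R \subsetneq S$ inherited from $\cL$ (giving $\valu(S)$) and the new singleton $\{\bar s\}$ (giving $2y'_{\bar s} = \valu(S)$). The paper states this in one sentence just before the Fact, whereas you have added the minor bookkeeping check that every $R \in \cL'$ is a strict subset of $V'$, but the argument is the same.
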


As alluded to above, we will use the instance $\cI[S]$ to find a  collection $F$ of subtours in the original instance $\cI$ such that $F$ plus a lift of a tour in $\cI/S$ form a tour of the instance $\cI$.  We say that such a set $F$ makes $S$ \emph{contractible}:

\begin{definition}
  We say that $S \in \cL$ is \emph{contractible} with respect to a collection  $F \subseteq E$ of subtours (i.e., $F$ is an Eulerian multiset of edges) if the lift of any tour of $\mathcal{I}/S$ plus the edge set $F$ 
  is a tour of $\mathcal{I}$. 
  \label{def:contractible}
\end{definition}
\ifdefined\ACM
\begin{figure}[t]
  \centering
  \begin{tikzpicture}[scale=0.9]
\tikzset{arrow data/.style 2 args={decoration={markings,
         mark=at position #1 with \arrow{#2}},
         postaction=decorate}
      }

  \begin{scope}[scale=0.6]
    \node at (0, 3.5) {\small Tight set $S = S_1 \cup S_2$};

    \draw[fill=gray!10!white, draw=gray!80!black] (0, 0) ellipse (4cm and 2cm) node[below left=1.4cm] {\scriptsize 1000};
    \node at (-1.7, 2.3) {$S_1$};
    \node at (1.8, 2.3) {$S_2$};

	\draw[fill=gray!30!white, draw=gray!80!black] (-2.5, 0.5) ellipse (0.25cm and 0.25cm) node[above = 0.1cm] {\scriptsize 5};
    \node[sssagvertex, fill=black] (u1) at (-2.5, 0.5) {};

	\draw[fill=gray!30!white, draw=gray!80!black] (-2.5, -0.7) ellipse (0.25cm and 0.25cm) node[below = 0.1cm] {\scriptsize 6};
    \node[sssagvertex, fill=black] (u2) at (-2.5, -0.7) {};

	\draw[fill=gray!30!white, draw=gray!80!black] (-0.6, -0.8) ellipse (0.25cm and 0.25cm) node[below = 0.1cm] {\scriptsize 3};
    \node[sssagvertex, fill=black] (u4) at (-0.6, -0.8) {};
    
    \node[sssagvertex, fill=black] (u3) at (-0.6, 0.8) {};
    
	\draw[fill=gray!30!white, draw=gray!80!black] (0.5, 0.8) ellipse (0.25cm and 0.25cm) node[below = 0.1cm] {\scriptsize 2};
    \node[sssagvertex, fill=black] (u5) at (0.5, 0.8) {};
    
    \node[sssagvertex, fill=black] (u6) at (0.5, -0.8) {};
    
	\begin{scope}[xshift=2.0cm,rotate=30]
      \draw[fill=gray!30!white, draw=gray!80!black] (0, 0) ellipse (1cm and 1.5cm) node[below = 0.7cm] {\scriptsize 9};
      \draw[fill=gray!50!white, draw=gray!80!black, rotate=10] (0, .75) ellipse (0.5cm and 0.5cm) node[right = 0.2cm] {\scriptsize 7};
      \draw[fill=gray!50!white, draw=gray!80!black,rotate=-5] (0, -.75) ellipse (0.5cm and 0.6cm) node[above right = 0.12cm] {\scriptsize 4};
       \node[sssagvertex, fill=black] (v1) at (-0.1, 0.85) {};
       \node[sssagvertex, fill=black] (v2) at (0, -0.75) {};
    \end{scope}
    
    \node[sssagvertex, fill=black] (v) at (3.3, 0.4) {};

	\draw (u1) edge[bend left=10,->] (u3);
    \draw (u3) edge[->] (u4);
    \draw (u4) edge[->,bend left=50] (u2);
    \draw (u4) edge[->] (u1);
    \draw (u2) edge[->] (u4);
    \draw (u3) edge[->] (u5);
    \draw (u4) edge[->] (u6);
    \draw (u5) edge[->] (v1);
    \draw (u6) edge[->] (v2);
    \draw (v2) edge[bend right=35,->] (v);
    \draw (v1) edge[->] (u6);
    \draw (v)  edge[bend right=70,->] (u5);

    \begin{scope}
      \clip (0,0) ellipse (4cm and 2cm); 
      \draw[thick] (0.05,-3) -- (0.05,3);  
    \end{scope}
    \draw ($(u1)- (1.7, -1.3)$) edge[densely dotted,->]  (u1);
    \draw ($(u2)- (1.7, 1.3)$) edge[densely dotted,->] (u2);
    \draw (v1) edge[densely dotted,->] ($(v1) + (2.2, 1.5)$);
\draw  (v) edge[densely dotted, bend left = 15, ->] ($(v) + (0.7, -2)$);
  \end{scope}

\begin{scope}[xshift=5cm, scale=0.6]
    \node at (0, 3.5) { \small A tour $F'$ of the instance $\mathcal{I}'$}; 

    \node[sgvertex] (bs) at (0, 2.5) {$\bar s$};
    \node at (0.55, 2.75) {\scriptsize 36}; 

	\draw[fill=gray!30!white, draw=gray!80!black] (-2.5, 0.5) ellipse (0.25cm and 0.25cm) node[above = 0.1cm] {\scriptsize 5};
    \node[sssagvertex, fill=black] (u1) at (-2.5, 0.5) {};

	\draw[fill=gray!30!white, draw=gray!80!black] (-2.5, -0.7) ellipse (0.25cm and 0.25cm) node[below = 0.1cm] {\scriptsize 6};
    \node[sssagvertex, fill=black] (u2) at (-2.5, -0.7) {};

	\draw[fill=gray!30!white, draw=gray!80!black] (-0.6, -0.8) ellipse (0.25cm and 0.25cm) node[below = 0.1cm] {\scriptsize 3};
    \node[sssagvertex, fill=black] (u4) at (-0.6, -0.8) {};
    
    \node[sssagvertex, fill=black] (u3) at (-0.6, 0.8) {};
    
	\draw[fill=gray!30!white, draw=gray!80!black] (0.5, 0.8) ellipse (0.25cm and 0.25cm) node[below = 0.1cm] {\scriptsize 2};
    \node[sssagvertex, fill=black] (u5) at (0.5, 0.8) {};
    
    \node[sssagvertex, fill=black] (u6) at (0.5, -0.8) {};
    
	\begin{scope}[xshift=2.0cm,rotate=30]
      \draw[fill=gray!30!white, draw=gray!80!black] (0, 0) ellipse (1cm and 1.5cm) node[below = 0.7cm] {\scriptsize 9};
      \draw[fill=gray!50!white, draw=gray!80!black, rotate=10] (0, .75) ellipse (0.5cm and 0.5cm) node[right = 0.2cm] {\scriptsize 7};
      \draw[fill=gray!50!white, draw=gray!80!black,rotate=-5] (0, -.75) ellipse (0.5cm and 0.6cm) node[above right = 0.12cm] {\scriptsize 4};
       \node[sssagvertex, fill=black] (v1) at (-0.1, 0.85) {};
       \node[sssagvertex, fill=black] (v2) at (0, -0.75) {};
    \end{scope}
    
    \node[sssagvertex, fill=black] (v) at (3.3, 0.4) {};

	\draw (u1) edge[bend left=10,thick,->] (u3);
    \draw (u3) edge[->,line width=0.15] (u4);
    \draw (u4) edge[->,line width=0.15,bend left=50] (u2);
    \draw (u4) edge[->,line width=0.15] (u1);
    \draw (u2) edge[thick,->] (u4);
    \draw (u3) edge[thick,->] node[above] {\scriptsize $e_2$} (u5);
    \draw (u4) edge[thick,->] node[above] {\scriptsize $e_5$} (u6);
    \draw (u5) edge[thick,->] (v1);
    \draw (u6) edge[thick,->] (v2);
    \draw (v2) edge[bend right=35,thick,->] (v);
    \draw (v1) edge[->,line width=0.15,] (u6);
    \draw (v)  edge[bend right=70,line width=0.15,->] (u5);

    \draw (bs) edge[thick,->,bend right=10] node[above] {\scriptsize $e_1$} (u1) ;
    \draw (bs) edge[thick,->,bend right=85] node[above = 0.07cm] {\scriptsize $e_4$} (u2);
    \draw (v1) edge[thick,->,bend left=10] node[above right = -0.16cm] {\scriptsize $e_3$} (bs);
\draw  (v) edge[thick,->,bend right=20] node[above] {\scriptsize $e_6$} (bs);
  \end{scope}

\begin{scope}[xshift=10cm,scale=0.6]
    \node at (0, 3.5) { \small Collection $F$ of subtours}; 

	\draw[fill=gray!10!white, draw=gray!80!black] (0, 0) ellipse (4cm and 2cm);
    \node at (-1.7, 2.3) {$S_1$};
    \node at (1.8, 2.3) {$S_2$};

	\draw[fill=gray!30!white, draw=gray!80!black] (-2.5, 0.5) ellipse (0.25cm and 0.25cm) node[above = 0.1cm] {\scriptsize 5};
    \node[sssagvertex, fill=black] (u1) at (-2.5, 0.5) {};

	\draw[fill=gray!30!white, draw=gray!80!black] (-2.5, -0.7) ellipse (0.25cm and 0.25cm) node[below = 0.1cm] {\scriptsize 6};
    \node[sssagvertex, fill=black] (u2) at (-2.5, -0.7) {};

	\draw[fill=gray!30!white, draw=gray!80!black] (-0.6, -0.8) ellipse (0.25cm and 0.25cm) node[below = 0.1cm] {\scriptsize 3};
    \node[sssagvertex, fill=black] (u4) at (-0.6, -0.8) {};
    
    \node[sssagvertex, fill=black] (u3) at (-0.6, 0.8) {};
    
	\draw[fill=gray!30!white, draw=gray!80!black] (0.5, 0.8) ellipse (0.25cm and 0.25cm) node[below = 0.1cm] {\scriptsize 2};
    \node[sssagvertex, fill=black] (u5) at (0.5, 0.8) {};
    
    \node[sssagvertex, fill=black] (u6) at (0.5, -0.8) {};
    
	\begin{scope}[xshift=2.0cm,rotate=30]
      \draw[fill=gray!30!white, draw=gray!80!black] (0, 0) ellipse (1cm and 1.5cm) node[below = 0.7cm] {\scriptsize 9};
      \draw[fill=gray!50!white, draw=gray!80!black, rotate=10] (0, .75) ellipse (0.5cm and 0.5cm) node[right = 0.2cm] {\scriptsize 7};
      \draw[fill=gray!50!white, draw=gray!80!black,rotate=-5] (0, -.75) ellipse (0.5cm and 0.6cm) node[above right = 0.12cm] {\scriptsize 4};
       \node[sssagvertex, fill=black] (v1) at (-0.1, 0.85) {};
       \node[sssagvertex, fill=black] (v2) at (0, -0.75) {};
    \end{scope}
    
    \node[sssagvertex, fill=black] (v) at (3.3, 0.4) {};

	\draw (u1) edge[bend left=10,thick,->] (u3);
    \draw (u3) edge[->,dashed,thick] (u4);
    \draw (u4) edge[->,dashed,thick,bend left=50] (u2);
    \draw (u4) edge[dashed,thick,->] (u1);
    \draw (u2) edge[thick,->] (u4);
\draw (u5) edge[thick,->] (v1);
    \draw (u6) edge[thick,->] (v2);
    \draw (v2) edge[bend right=35,thick,->] (v);
    \draw (v1) edge[dashed,thick,->] (u6);
    \draw (v)  edge[bend right=70,dashed,thick,->] (u5);

    \begin{scope}
      \clip (0,0) ellipse (4cm and 2cm); 
      \draw[thick] (0.05,-3) -- (0.05,3);  
    \end{scope}
\end{scope}

\end{tikzpicture}
   \caption{In the left figure we depict a tight set $S\in \cL$ with two strongly connected components $S_1$ and $S_2$. The induced instance (center figure) is obtained by contracting $\bar S = V\setminus S$ into a vertex $\bar s$ and removing the tight set $S$ from $\cL$. The solid edges are paths and edges of a tour in the induced instance. In Lemma~\ref{lem:get_contractible} we obtain a collection of subtours in the original instance (right figure) by adding the dashed paths, resulting in  a tour of each strongly connected component.}
  \label{fig:induce}
\end{figure}
\else
\begin{figure}[t]
  \centering
  \begin{tikzpicture}
\tikzset{arrow data/.style 2 args={decoration={markings,
         mark=at position #1 with \arrow{#2}},
         postaction=decorate}
      }

  \begin{scope}[scale=0.6]
    \node at (0, 3.5) {\small Tight set $S = S_1 \cup S_2$};

    \draw[fill=gray!10!white, draw=gray!80!black] (0, 0) ellipse (4cm and 2cm) node[below left=1.4cm] {\scriptsize 1000};
    \node at (-1.7, 2.3) {$S_1$};
    \node at (1.8, 2.3) {$S_2$};

	\draw[fill=gray!30!white, draw=gray!80!black] (-2.5, 0.5) ellipse (0.25cm and 0.25cm) node[above = 0.1cm] {\scriptsize 5};
    \node[sssagvertex, fill=black] (u1) at (-2.5, 0.5) {};

	\draw[fill=gray!30!white, draw=gray!80!black] (-2.5, -0.7) ellipse (0.25cm and 0.25cm) node[below = 0.1cm] {\scriptsize 6};
    \node[sssagvertex, fill=black] (u2) at (-2.5, -0.7) {};

	\draw[fill=gray!30!white, draw=gray!80!black] (-0.6, -0.8) ellipse (0.25cm and 0.25cm) node[below = 0.1cm] {\scriptsize 3};
    \node[sssagvertex, fill=black] (u4) at (-0.6, -0.8) {};
    
    \node[sssagvertex, fill=black] (u3) at (-0.6, 0.8) {};
    
	\draw[fill=gray!30!white, draw=gray!80!black] (0.5, 0.8) ellipse (0.25cm and 0.25cm) node[below = 0.1cm] {\scriptsize 2};
    \node[sssagvertex, fill=black] (u5) at (0.5, 0.8) {};
    
    \node[sssagvertex, fill=black] (u6) at (0.5, -0.8) {};
    
	\begin{scope}[xshift=2.0cm,rotate=30]
      \draw[fill=gray!30!white, draw=gray!80!black] (0, 0) ellipse (1cm and 1.5cm) node[below = 0.7cm] {\scriptsize 9};
      \draw[fill=gray!50!white, draw=gray!80!black, rotate=10] (0, .75) ellipse (0.5cm and 0.5cm) node[right = 0.2cm] {\scriptsize 7};
      \draw[fill=gray!50!white, draw=gray!80!black,rotate=-5] (0, -.75) ellipse (0.5cm and 0.6cm) node[above right = 0.12cm] {\scriptsize 4};
       \node[sssagvertex, fill=black] (v1) at (-0.1, 0.85) {};
       \node[sssagvertex, fill=black] (v2) at (0, -0.75) {};
    \end{scope}
    
    \node[sssagvertex, fill=black] (v) at (3.3, 0.4) {};

	\draw (u1) edge[bend left=10,->] (u3);
    \draw (u3) edge[->] (u4);
    \draw (u4) edge[->,bend left=50] (u2);
    \draw (u4) edge[->] (u1);
    \draw (u2) edge[->] (u4);
    \draw (u3) edge[->] (u5);
    \draw (u4) edge[->] (u6);
    \draw (u5) edge[->] (v1);
    \draw (u6) edge[->] (v2);
    \draw (v2) edge[bend right=35,->] (v);
    \draw (v1) edge[->] (u6);
    \draw (v)  edge[bend right=70,->] (u5);

    \begin{scope}
      \clip (0,0) ellipse (4cm and 2cm); 
      \draw[thick] (0.05,-3) -- (0.05,3);  
    \end{scope}
    \draw ($(u1)- (1.7, -1.3)$) edge[densely dotted,->]  (u1);
    \draw ($(u2)- (1.7, 1.3)$) edge[densely dotted,->] (u2);
    \draw (v1) edge[densely dotted,->] ($(v1) + (2.2, 1.5)$);
\draw  (v) edge[densely dotted, bend left = 15, ->] ($(v) + (0.7, -2)$);
  \end{scope}

\begin{scope}[xshift=5cm, scale=0.6]
    \node at (0, 3.5) { \small A tour $F'$ of the instance $\mathcal{I}'$}; 

    \node[sgvertex] (bs) at (0, 2.5) {$\bar s$};
    \node at (0.55, 2.75) {\scriptsize 36}; 

	\draw[fill=gray!30!white, draw=gray!80!black] (-2.5, 0.5) ellipse (0.25cm and 0.25cm) node[above = 0.1cm] {\scriptsize 5};
    \node[sssagvertex, fill=black] (u1) at (-2.5, 0.5) {};

	\draw[fill=gray!30!white, draw=gray!80!black] (-2.5, -0.7) ellipse (0.25cm and 0.25cm) node[below = 0.1cm] {\scriptsize 6};
    \node[sssagvertex, fill=black] (u2) at (-2.5, -0.7) {};

	\draw[fill=gray!30!white, draw=gray!80!black] (-0.6, -0.8) ellipse (0.25cm and 0.25cm) node[below = 0.1cm] {\scriptsize 3};
    \node[sssagvertex, fill=black] (u4) at (-0.6, -0.8) {};
    
    \node[sssagvertex, fill=black] (u3) at (-0.6, 0.8) {};
    
	\draw[fill=gray!30!white, draw=gray!80!black] (0.5, 0.8) ellipse (0.25cm and 0.25cm) node[below = 0.1cm] {\scriptsize 2};
    \node[sssagvertex, fill=black] (u5) at (0.5, 0.8) {};
    
    \node[sssagvertex, fill=black] (u6) at (0.5, -0.8) {};
    
	\begin{scope}[xshift=2.0cm,rotate=30]
      \draw[fill=gray!30!white, draw=gray!80!black] (0, 0) ellipse (1cm and 1.5cm) node[below = 0.7cm] {\scriptsize 9};
      \draw[fill=gray!50!white, draw=gray!80!black, rotate=10] (0, .75) ellipse (0.5cm and 0.5cm) node[right = 0.2cm] {\scriptsize 7};
      \draw[fill=gray!50!white, draw=gray!80!black,rotate=-5] (0, -.75) ellipse (0.5cm and 0.6cm) node[above right = 0.12cm] {\scriptsize 4};
       \node[sssagvertex, fill=black] (v1) at (-0.1, 0.85) {};
       \node[sssagvertex, fill=black] (v2) at (0, -0.75) {};
    \end{scope}
    
    \node[sssagvertex, fill=black] (v) at (3.3, 0.4) {};

	\draw (u1) edge[bend left=10,thick,->] (u3);
    \draw (u3) edge[->,line width=0.15] (u4);
    \draw (u4) edge[->,line width=0.15,bend left=50] (u2);
    \draw (u4) edge[->,line width=0.15] (u1);
    \draw (u2) edge[thick,->] (u4);
    \draw (u3) edge[thick,->] node[above] {\scriptsize $e_2$} (u5);
    \draw (u4) edge[thick,->] node[above] {\scriptsize $e_5$} (u6);
    \draw (u5) edge[thick,->] (v1);
    \draw (u6) edge[thick,->] (v2);
    \draw (v2) edge[bend right=35,thick,->] (v);
    \draw (v1) edge[->,line width=0.15,] (u6);
    \draw (v)  edge[bend right=70,line width=0.15,->] (u5);

    \draw (bs) edge[thick,->,bend right=10] node[above] {\scriptsize $e_1$} (u1) ;
    \draw (bs) edge[thick,->,bend right=85] node[above = 0.07cm] {\scriptsize $e_4$} (u2);
    \draw (v1) edge[thick,->,bend left=10] node[above right = -0.16cm] {\scriptsize $e_3$} (bs);
\draw  (v) edge[thick,->,bend right=20] node[above] {\scriptsize $e_6$} (bs);
  \end{scope}

\begin{scope}[xshift=10cm,scale=0.6]
    \node at (0, 3.5) { \small Collection $F$ of subtours}; 

	\draw[fill=gray!10!white, draw=gray!80!black] (0, 0) ellipse (4cm and 2cm);
    \node at (-1.7, 2.3) {$S_1$};
    \node at (1.8, 2.3) {$S_2$};

	\draw[fill=gray!30!white, draw=gray!80!black] (-2.5, 0.5) ellipse (0.25cm and 0.25cm) node[above = 0.1cm] {\scriptsize 5};
    \node[sssagvertex, fill=black] (u1) at (-2.5, 0.5) {};

	\draw[fill=gray!30!white, draw=gray!80!black] (-2.5, -0.7) ellipse (0.25cm and 0.25cm) node[below = 0.1cm] {\scriptsize 6};
    \node[sssagvertex, fill=black] (u2) at (-2.5, -0.7) {};

	\draw[fill=gray!30!white, draw=gray!80!black] (-0.6, -0.8) ellipse (0.25cm and 0.25cm) node[below = 0.1cm] {\scriptsize 3};
    \node[sssagvertex, fill=black] (u4) at (-0.6, -0.8) {};
    
    \node[sssagvertex, fill=black] (u3) at (-0.6, 0.8) {};
    
	\draw[fill=gray!30!white, draw=gray!80!black] (0.5, 0.8) ellipse (0.25cm and 0.25cm) node[below = 0.1cm] {\scriptsize 2};
    \node[sssagvertex, fill=black] (u5) at (0.5, 0.8) {};
    
    \node[sssagvertex, fill=black] (u6) at (0.5, -0.8) {};
    
	\begin{scope}[xshift=2.0cm,rotate=30]
      \draw[fill=gray!30!white, draw=gray!80!black] (0, 0) ellipse (1cm and 1.5cm) node[below = 0.7cm] {\scriptsize 9};
      \draw[fill=gray!50!white, draw=gray!80!black, rotate=10] (0, .75) ellipse (0.5cm and 0.5cm) node[right = 0.2cm] {\scriptsize 7};
      \draw[fill=gray!50!white, draw=gray!80!black,rotate=-5] (0, -.75) ellipse (0.5cm and 0.6cm) node[above right = 0.12cm] {\scriptsize 4};
       \node[sssagvertex, fill=black] (v1) at (-0.1, 0.85) {};
       \node[sssagvertex, fill=black] (v2) at (0, -0.75) {};
    \end{scope}
    
    \node[sssagvertex, fill=black] (v) at (3.3, 0.4) {};

	\draw (u1) edge[bend left=10,thick,->] (u3);
    \draw (u3) edge[->,dashed,thick] (u4);
    \draw (u4) edge[->,dashed,thick,bend left=50] (u2);
    \draw (u4) edge[dashed,thick,->] (u1);
    \draw (u2) edge[thick,->] (u4);
\draw (u5) edge[thick,->] (v1);
    \draw (u6) edge[thick,->] (v2);
    \draw (v2) edge[bend right=35,thick,->] (v);
    \draw (v1) edge[dashed,thick,->] (u6);
    \draw (v)  edge[bend right=70,dashed,thick,->] (u5);

    \begin{scope}
      \clip (0,0) ellipse (4cm and 2cm); 
      \draw[thick] (0.05,-3) -- (0.05,3);  
    \end{scope}
\end{scope}

\end{tikzpicture}
   \caption{In the left figure we depict a tight set $S\in \cL$ with two strongly connected components $S_1$ and $S_2$. The induced instance (center figure) is obtained by contracting $\bar S = V\setminus S$ into a vertex $\bar s$ and removing the tight set $S$ from $\cL$. The solid edges are paths and edges of a tour in the induced instance. In Lemma~\ref{lem:get_contractible} we obtain a collection of subtours in the original instance (right figure) by adding the dashed paths, resulting in  a tour of each strongly connected component.}
  \label{fig:induce}
\end{figure}
\fi

As an example, if $F$ were a subtour visiting every vertex of $S$,
then $S$ would be contractible with respect to $F$.
The following lemma shows that, in general, it is sufficient to find a tour of $\cI[S]$ in order to make $S$ contractible (see also Figure~\ref{fig:induce}). 
\begin{lemma}
  \label{lem:get_contractible}
  Given a tour $T$ of $\cI[S]$, 
  we can in polynomial time find
  a  collection $F \subseteq E$ of subtours
  such that
  $S$ is contractible with respect to $F$
  and
  $\cost_{\cI}(F) \le \cost_{\cI[S]}(T)$.
\end{lemma}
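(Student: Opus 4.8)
The plan is to open up the given tour $T$ at the contracted vertex and re-close the resulting pieces inside the strongly connected components of $S$. Let $\bar s$ denote the vertex of $\cI[S]$ obtained by contracting $\bar S=V\setminus S$, put $k=|\delta_T^+(\bar s)|$, and recall from \cref{lem:path_tight_set} and its proof that $S$ splits into strongly connected components $S_1,\dots,S_\ell$ arranged in a path structure: $S_\textrm{in}\subseteq S_1$, $S_\textrm{out}\subseteq S_\ell$, every edge entering $S_j$ comes from $S_{j-1}$ and every edge leaving $S_j$ goes to $S_{j+1}$. First I would fix an Eulerian traversal of $T$ and let $W_1,\dots,W_k$ be the portions between consecutive visits to $\bar s$; each $W_i$ lies inside $S$, starts at a vertex of $S_\textrm{in}$, ends at a vertex of $S_\textrm{out}$, and the $W_i$ together use exactly the edges of $T$ not incident to $\bar s$. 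Since a walk from $S_\textrm{in}$ to $S_\textrm{out}$ must cross every cut $\delta^+(S_1\cup\dots\cup S_m)$, each $W_i$ traverses $S_1,\dots,S_\ell$ in order, meeting each $S_j$ in a single contiguous sub-walk $W_i^j$ from some $a_{ij}\in(S_j)_\textrm{in}$ to some $b_{ij}\in(S_j)_\textrm{out}$; moreover $\bigcup_i W_i^j$ consists of exactly the $T$-edges inside $S_j$ and hence (counting trivial sub-walks) meets every vertex of $S_j$.

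Next I would assemble $F$ component by component. For each $j$ I would concatenate the $k$ sub-walks $W_1^j,\dots,W_k^j$ into one closed walk $F_j$ inside $S_j$ by inserting, for each $i$, a path $Q_{ij}$ from $b_{ij}$ to $a_{i+1,j}$ (indices cyclic modulo $k$) that stays inside $S_j$. Such a path exists because $S_j$ is strongly connected, and \cref{lem:short_path} applied to the set $S_j$ --- one verifies that $\cL\cup\{S_j\}$ is laminar, as every $R\in\cL$ with $R\subsetneq S$ is contained in a single component of $S$ or contains $S_j$ --- lets us take it with $\cost_\cI(Q_{ij})\le\sum_{R\in\cL:\,R\subsetneq S_j}2y_R=\valu(S_j)$. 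Setting $F=\bigcup_{j=1}^\ell F_j$ gives a collection of subtours (one per component, with $V(F_j)=S_j$), computable in polynomial time.

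I would then check contractibility and the cost bound together. For any tour $T'$ of $\cI/S$, its lift $F'$ retains all edges of $T'$ not incident to the contracted vertex $s$, hence visits every vertex of $V\setminus S$ and is connected, and for each visit of $T'$ to $s$ it contains a path inside $S$ from a vertex of $S_\textrm{in}$ to a vertex of $S_\textrm{out}$, which by \cref{lem:path_tight_set} meets every component $S_j$. Since $V(F_j)=S_j$, the multiset $F'\cup F$ is therefore Eulerian, connected, and spans $V$ --- a tour of $\cI$ --- so $S$ is contractible with respect to $F$. For the weight, $F$ lies inside the components of $S$, so writing $T_j$ for the $T$-edges inside $S_j$,
\[
\cost_\cI(F)=\sum_{j=1}^\ell\Bigl(\,\sum_{\substack{R\in\cL\\ R\subsetneq S_j}}|\delta(R)\cap T_j|\,y_R+\sum_{i=1}^k\cost_\cI(Q_{ij})\Bigr)\le\sum_{\substack{R\in\cL\\ R\subsetneq S}}|\delta_{G'}(R)\cap T|\,y_R+k\sum_{j=1}^\ell\valu(S_j),
\]
using that each edge of $T_j$ crossing such an $R$ also crosses $R$ in $\cI[S]$, that the families $\{R\subsetneq S_j\}_j$ are pairwise disjoint subfamilies of $\{R\subsetneq S\}$, and that $\cost_\cI(Q_{ij})\le\valu(S_j)$. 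The same disjointness gives $\sum_j\valu(S_j)\le\valu(S)$, and since $|\delta_{G'}(\{\bar s\})\cap T|=2k$ with $y'_{\bar s}=\valu(S)/2$ we have $\sum_{R\subsetneq S}|\delta_{G'}(R)\cap T|\,y_R=\cost_{\cI[S]}(T)-k\valu(S)$; combining yields $\cost_\cI(F)\le\cost_{\cI[S]}(T)$, as required.

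The step I expect to be the main obstacle is the cost accounting: the connecting paths $Q_{ij}$ must fit within the ``budget'' $k\valu(S)$ released by the inflated dual value $y'_{\bar s}=\valu(S)/2$ at $\bar s$, which is exactly why one needs the per-component bound $\cost_\cI(Q_{ij})\le\valu(S_j)$ from \cref{lem:short_path} together with the additivity $\sum_j\valu(S_j)\le\valu(S)$. A related subtlety is that $F$ must make $S$ contractible for \emph{every} later tour of $\cI/S$: since such a lift is only guaranteed to touch each component of $S$ once, one is forced to make each $F_j$ span all of $S_j$ rather than merely reuse the pieces of $T$, and the construction above is tailored to achieve precisely this while staying within budget.
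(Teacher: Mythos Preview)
Your argument is correct and follows essentially the same route as the paper: decompose the tour $T$ at its visits to $\bar s$, restrict to each strongly connected component $S_j$ of $S$, and close up the resulting $k$ sub-walks inside $S_j$ using paths of cost at most $\valu(S_j)$ obtained from \cref{lem:short_path}; the cost of these $k\ell$ patching paths is then charged against the $k$ visits of $T$ to $\bar s$, each of which contributes $2y'_{\bar s}=\valu(S)\ge\sum_j\valu(S_j)$. The one place where you are a bit hasty is the laminarity of $\cL\cup\{S_j\}$: your stated criterion (``contained in a single component or contains $S_j$'') is not literally correct, since an $R\in\cL$ with $R\subsetneq S$ can also be a union of several components none of which is $S_j$---but then $R\cap S_j=\emptyset$, so laminarity still holds; the paper proves this claim carefully via a flow-counting contradiction.
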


\begin{proof}
  Let $S_1, \ldots, S_\ell$ be the strongly connected components of $S$ indexed using a topological ordering.
  We will use $T$ to obtain a low-weight tour $F_i$ inside each $S_i$,
  and define $F$ to be the union of these tours.
  Then $S$ is contractible with respect to $F$.
  Indeed, the lift of any tour of $\cI / S$ must contain a path traversing $S$,
  and any such path visits every connected component by Lemma~\ref{lem:path_tight_set}\ref{item:path_through_every_scc}.
  
  Let us fix one component $S_i$.
  We obtain the tour $F_i$ of $S_i$
  by reproducing the movements of $T$ inside $S_i$
  (recall that we think of $T$ as a cyclically ordered Eulerian walk).
  More precisely,
  we retain those edges of $T$ that are inside $S_i$
  and, every time $T$ exits $S_i$ on an edge $(u_\textrm{out}, v_\textrm{out}) \in \delta^+(S_i)$
  and then returns to $S_i$ on an edge $(u_\textrm{in}, v_\textrm{in}) \in \delta^-(S_i)$,
  we also insert a minimum-weight path from $u_\textrm{out}$ to $v_\textrm{in}$
  inside $S_i$.
  Such a path exists because $S_i$ is strongly connected. (This step corresponds to adding the dashed paths in Figure~\ref{fig:induce}.)
  Then we set $F = F_1 \cup \ldots \cup F_\ell$.

  It remains to show that $F$ has low weight, i.e., that
  $\cost_{\cI}(F) \leq \cost_{\cI[S]}(T)$.
  For this,
  let $k$ be the number of times the tour $T$ visits the auxiliary vertex $\bar s$.
  The weight incurred by every such visit is at least $2 y'_{\bar s} = \valu(S)$
  (since the set $\{\bar s\}$ is crossed twice in each visit).
  Thus we have
  \[ \cost_{\cI[S]}(T) \ge k \cdot \valu(S) + \sum_{i = 1}^{\ell} \cost_{\cI[S]}(T \cap E(S_i)) = k \cdot \valu(S) + \sum_{i = 1}^{\ell} \cost_{\cI}(T \cap E(S_i))  \,. \]
  On the other hand,
  each tour $F_i$ consists of all those edges of $T$ that are inside $S_i$,
  as well as $k$ shortest paths between some pairs of vertices in $S_i$. Indeed, if $T$ makes $k$ visits to the auxiliary vertex $\bar s$, then we add exactly $k$ paths inside $S_i$ due to the path-like structure of the strongly connected components of  a tight set $S$ (Lemma~\ref{lem:path_tight_set}).

By Lemma~\ref{lem:lam-strongly}, $\cL\cup\{S_i\}$ is a laminar family. Consequently,
 Lemma~\ref{lem:short_path} is applicable to
  obtain that a shortest path between two vertices inside $S_i$ has
  weight  at most  $\valu(S_i)$. 
 Recall that $F_i$ consists of all those edges of $T$ that are inside $S_i$,
  as well as $k$ shortest paths between some pairs of vertices in $S_i$.
  Therefore, the weight of $F$ is
  \begin{align*}
      \cost_{\cI}(F) &= \sum_{i = 1}^{\ell} \cost_{\cI}(F_i) \\
  	               &\le \sum_{i = 1}^{\ell} \Brac{ k \cdot \valu(S_i) + \cost_{\cI}(T \cap E(S_i)) } \\
  	               &\le k \cdot \valu(S) + \sum_{i = 1}^{\ell} \cost_{\cI}(T \cap E(S_i)) \\
                   &\le \cost_{\cI[S]}(T) \,,
  \end{align*}
  as required.
\end{proof}
\section{Reduction to Irreducible Instances}
\label{sec:reducetoirreducible}
  In this section we reduce the problem of approximating ATSP on general (laminarly-weighted) instances
  to that of approximating ATSP on irreducible instances. Specifically,
  Theorem~\ref{thm:reduction_to_irreducible} says that any approximation
  algorithm for irreducible instances can be turned into an algorithm for
  general instances while losing only a constant factor in the approximation
  guarantee.

  We now define the notions of \emph{reducible sets} and \emph{irreducible instances}. The intuition behind them is as follows. The operations of contracting and inducing on a tight set $S$ introduced in the last section naturally lead to the following recursive algorithm:
  \begin{enumerate}
    \item Select a tight set $S\in \cL$.
    \item Find a tour $T_S$ in the induced instance $\cI[S]$. Via Lemma~\ref{lem:get_contractible}, $T_S$ yields a set $F_S$ that makes $S$ contractible.
    \item Recursively find a tour $T$ in the contraction $\cI/S$.
    \item Output $F_S$ plus the lift of $T$.
  \end{enumerate}
  For this scheme to yield a good approximation guarantee, we need to ensure that we can find a good approximate tour $T_S$ in $\cI[S]$ and that contracting the set $S$ results in  a ``significant'' decrease in the value of the LP solution.  If it does, we refer to the set $S$ as \emph{reducible}: 
  \begin{definition}
    We say that a set $S\in \cL$ is \emph{\reducible} if 
    \begin{align*}
      \max_{u\in S_{\textrm{in}}, v\in S_{\textrm{out}}} D_{S}(u,v) < \delta \cdot \valu(S)\,,
    \end{align*}
    otherwise we say that $S$ is \emph{irreducible}.
    We also say that the instance $\mathcal{I}$  is \emph{irreducible}  if no set $S\in \cL$ is \reducible. 
    \label{def:reducible}
  \end{definition}
We will use the value $\delta=\deltaval$; however, we keep it as a
parameter $\delta \in (\nicefrac12, 1)$ to exhibit the dependence of the approximation ratio on this value.

Note that singleton sets are never reducible. Moreover, we have the following observation:
\begin{fact}
  \label{fact:irreducibleinv}
  Consider an instance $\cI = (G, \cL, \xs, y)$ and a set $S\in \cL$. If every set $R\in \cL: R \subsetneq S$ is irreducible, then $\cI[S]$ is irreducible. In particular, if $\cI$ is an irreducible instance, then $\cI[S]$ is irreducible for every $S\in \cL$.
\end{fact}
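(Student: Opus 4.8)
The plan is to verify, for each member of the laminar family $\cL'$ of $\cI[S] = (G', \cL', x', y')$, that it is not reducible in $\cI[S]$. By construction $\cL' = \{R \in \cL : R \subsetneq S\} \cup \{\{\bar s\}\}$, so there are two cases. The singleton $\{\bar s\}$ is handled instantly, since singleton sets are never reducible (as noted just before the statement). Thus the real content is to show that every $R \in \cL$ with $R \subsetneq S$, which is irreducible in $\cI$ by hypothesis, remains irreducible in $\cI[S]$.

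The key point I would establish is that, from the local perspective of such a set $R$, the instances $\cI$ and $\cI[S]$ agree on everything entering the definition of reducibility. First, since $R \subseteq S$ and $R \cap \bar S = \emptyset$, forming $G' = G/\bar S$ only redirects edges having an endpoint in $\bar S$ to the new vertex $\bar s$, and $\bar s \notin S \supseteq R$; hence the edges of $G'$ incident to $R$ are in natural (weight-preserving) bijection with those of $G$, and an edge enters resp.\ leaves $R$ in $G'$ if and only if its preimage does in $G$, so $R_{\textrm{in}}$ and $R_{\textrm{out}}$ are the same sets in $\cI$ and in $\cI[S]$. Second, for an edge $e$ with both endpoints in $R$, laminarity of $\cL$ implies that only sets $Q \in \cL$ with $Q \subsetneq R$ can cross $e$; all such $Q$ lie in $\cL'$ with $y'_Q = y_Q$, and $\{\bar s\}$ does not cross $e$, so $w_{\cI[S]}(e) = w_{\cI}(e)$ for every edge inside $R$. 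Since moreover $G'[R] = G[R]$, this gives $d_R^{\cI[S]}(u,v) = d_R^{\cI}(u,v)$ for all $u,v \in R$, and, together with the matching boundary terms $\sum_{Q \in \cL : u \in Q \subsetneq R} y_Q$ (again using that $\{Q \in \cL' : Q \subsetneq R\} = \{Q \in \cL : Q \subsetneq R\}$ with equal $y$-values), yields $D_R^{\cI[S]}(u,v) = D_R^{\cI}(u,v)$. Third, the same identification of the sets strictly below $R$ gives $\valu_{\cI[S]}(R) = \valu_{\cI}(R)$.

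Combining these observations, the quantities $\max_{u \in R_{\textrm{in}},\, v \in R_{\textrm{out}}} D_R(u,v)$ and $\delta \cdot \valu(R)$ are identical in $\cI$ and in $\cI[S]$, so the inequality witnessing irreducibility of $R$ in $\cI$ holds verbatim in $\cI[S]$. Hence every member of $\cL'$ is irreducible in $\cI[S]$, i.e., $\cI[S]$ is irreducible, which proves the first assertion. The ``in particular'' statement is then immediate: if $\cI$ is irreducible then every set of $\cL$ is irreducible, in particular every $R \in \cL$ with $R \subsetneq S$, so the hypothesis of the first assertion is met.

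I do not expect a genuine obstacle; the one thing requiring care is the laminarity bookkeeping in the second step — checking that no set ``appears'' or ``disappears'' when comparing $\cL$ and $\cL'$ below $R$, and that the induced weight function is genuinely unchanged on edges lying inside $R$ (so that shortest paths inside $R$, hence $d_R$ and $D_R$, are preserved).
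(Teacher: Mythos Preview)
Your proposal is correct and follows essentially the same approach as the paper: handle the singleton $\{\bar s\}$ trivially, then for each $R \subsetneq S$ argue that $R_{\textrm{in}}$, $R_{\textrm{out}}$, $D_R$, and $\valu(R)$ are unchanged between $\cI$ and $\cI[S]$, so irreducibility of $R$ transfers. If anything, you spell out the preservation of $d_R$ and the induced weights inside $R$ more carefully than the paper, which simply asserts that $D_R$ is identical because the sets strictly below $R$ agree.
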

\begin{proof}
  Let $\cI[S] = (G', \cL', x', y')$. By definition, $\cL' =  \{ R \in \cL : R \subsetneq S \} \cup \{\{\bar s \}\}$. Clearly, the singleton set $\{\bar s \}$ is irreducible. Now consider a set $R \in \cL : R \subsetneq S$; we need to show that $R$ is irreducible in $\cI[S]$. Note that $R$ is also present in $\cI$ and that the sets $\{Q\in \cL:  Q \subsetneq R\}$ and $\{Q\in \cL': Q\subsetneq R\}$ are identical. This implies that the distance function $D_R$  is identical in the instances $\cI$ and $\cI[S]$.  Moreover, the sets $R_{\textrm{in}}$ and $R_{\textrm{out}}$ are also the same in the two instances.  Therefore,  as $R$ is irreducible in $\cI$ by assumption, we have that $R$ is also irreducible in $\cI[S]$. 
\end{proof}

  The above fact implies that if we select $S\in \cL$ to be a \emph{minimal} reducible set, then  the instance $\cI[S]$ is irreducible. Hence, we only need to be able to find an approximate tour $T_S$  for \emph{irreducible} instances (in Step 2 of the above recursive algorithm). This is the idea behind the following theorem, and its proof is based on formally analyzing the aforementioned approach. 
  \begin{theorem}
  Let $\cA$ be a polynomial-time $\rho$-approximation algorithm (with respect to the Held--Karp lower bound) for irreducible instances. Then
  there is a polynomial-time $\frac{2\rho}{1-\delta}$-approximation algorithm (with respect to the Held--Karp lower bound) for general instances.
  \label{thm:reduction_to_irreducible}
\end{theorem}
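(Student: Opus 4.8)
The plan is to make the recursive algorithm sketched just before the theorem precise and to analyze its approximation ratio by tracking how the LP value decreases. First I would handle the base case: if the instance $\cI$ is itself irreducible, run $\cA$ directly and output its tour, which has weight at most $\rho \cdot \valu(\cI) \le \frac{2\rho}{1-\delta}\valu(\cI)$. Otherwise, pick a \emph{minimal} reducible set $S \in \cL$. By \cref{fact:irreducibleinv}, since every $R \in \cL$ with $R \subsetneq S$ is irreducible, the induced instance $\cI[S]$ is irreducible, so we may apply $\cA$ to $\cI[S]$ to obtain a tour $T_S$ with $\cost_{\cI[S]}(T_S) \le \rho \cdot \valu(\cI[S]) = 2\rho \cdot \valu_\cI(S)$, using \cref{fact:value_of_induced_instance}. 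By \cref{lem:get_contractible} this yields (in polynomial time) a collection $F_S$ of subtours in $\cI$ with $\cost_\cI(F_S) \le \cost_{\cI[S]}(T_S) \le 2\rho\,\valu_\cI(S)$ and such that $S$ is contractible with respect to $F_S$. Then recurse on the contracted instance $\cI/S$ (which is a valid instance, as noted after \cref{def:contraction}, and has strictly fewer sets in its laminar family, so the recursion terminates and runs in polynomial time) to get a tour $T$ of $\cI/S$; by \cref{lem:lifting} its lift $F$ has $\cost_\cI(F) \le \cost_{\cI/S}(T)$, and by \cref{def:contractible} the edge multiset $F \cup F_S$ is a tour of $\cI$. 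Output $F \cup F_S$.

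For the analysis, let $\opt_\rho(\cJ)$ denote the weight of the tour the algorithm produces on instance $\cJ$; I want to prove $\opt_\rho(\cI) \le \frac{2\rho}{1-\delta}\valu(\cI)$ by induction on $|\cL|$. Using the decomposition above, $\opt_\rho(\cI) \le \cost_\cI(F_S) + \cost_\cI(F) \le 2\rho\,\valu_\cI(S) + \opt_\rho(\cI/S)$. By the inductive hypothesis applied to $\cI/S$, $\opt_\rho(\cI/S) \le \frac{2\rho}{1-\delta}\valu(\cI/S)$. Now I invoke \cref{fact:no_increase}, which gives
\[
\valu(\cI/S) = \valu(\cI) - \Bigl(\valu_\cI(S) - \max_{u \in S_\textrm{in}, v \in S_\textrm{out}} D_S(u,v)\Bigr).
\]
Since $S$ is reducible, $\max_{u,v} D_S(u,v) < \delta \cdot \valu_\cI(S)$, so the decrease $\valu_\cI(S) - \max_{u,v}D_S(u,v) > (1-\delta)\valu_\cI(S)$, i.e.\ $\valu(\cI/S) < \valu(\cI) - (1-\delta)\valu_\cI(S)$. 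Plugging in,
\[
\opt_\rho(\cI) \le 2\rho\,\valu_\cI(S) + \frac{2\rho}{1-\delta}\bigl(\valu(\cI) - (1-\delta)\valu_\cI(S)\bigr) = \frac{2\rho}{1-\delta}\valu(\cI),
\]
since the $\pm 2\rho\,\valu_\cI(S)$ terms cancel. This closes the induction.

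A few details need care but should be routine. One must check that in the recursive call on $\cI/S$ the algorithm still makes sense — in particular that contracting $S$ does not destroy minimality/irreducibility structure needed later; but the induction is on $|\cL|$ alone, and at each level we re-examine the (smaller) laminar family from scratch, so nothing cross-level is required. One should also confirm that $F_S \cup F$ is Eulerian and connected: connectivity and the Eulerian property both follow from \cref{def:contractible}, since $F$ is a lift of a tour of $\cI/S$ and $S$ is contractible with respect to $F_S$. Finally, the running time is polynomial: each recursion level removes at least one set from $\cL$ (so at most $|\cL| \le 2n-1$ levels), and each level invokes $\cA$ once plus the polynomial-time procedures of \cref{lem:get_contractible} and \cref{lem:lifting}. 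I expect the main conceptual obstacle to be none too deep here — the real work has been front-loaded into \cref{fact:no_increase}, \cref{lem:get_contractible}, and \cref{lem:lifting}; the only thing to be vigilant about is that $\valu_\cI(S) > 0$ is not actually needed (the telescoping works regardless), and that when $\max_{u,v}D_S(u,v)$ could in principle be compared against $\valu_\cI(S)$ via \cref{fact:bound_on_dU}, the reducibility hypothesis is exactly the strict inequality that makes the coefficient bookkeeping go through.
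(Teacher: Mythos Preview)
Your proposal is correct and follows essentially the same approach as the paper: pick a minimal reducible set $S$, apply $\cA$ to the irreducible instance $\cI[S]$ to make $S$ contractible via \cref{lem:get_contractible}, recurse on $\cI/S$, lift, and combine; the analysis via \cref{fact:no_increase} and the reducibility inequality is identical. The only cosmetic difference is that the paper inducts on the number of vertices (noting $|S|\ge 2$) rather than on $|\cL|$; your choice also works, but to justify that $|\cL|$ strictly decreases you implicitly need that a reducible $S$ has some $R\in\cL$ with $R\subsetneq S$, which follows since reducibility forces $\valu_\cI(S)>0$.
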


\begin{proof}
Consider a general instance $\cI = (G,  \cL, \xs, y)$. If it is
irreducible, we can simply return the result of a single call to $\cA$. So
assume that $\cI$ is not irreducible,
i.e., that $\cL$ contains a reducible set.
Let $S\in \cL$ be a minimal
(inclusion-wise) \reducible set, i.e., one such that
all subsets $R\in \cL: R \subsetneq S$ are irreducible.

We will work with the induced instance
$\cI[S]$.
Recall that $\valu(\cI[S]) = 2 \valu(S)$ (Fact~\ref{fact:value_of_induced_instance}).
Moreover, $\cI[S]$ is irreducible by Fact~\ref{fact:irreducibleinv}. We can therefore use $\cA$ to find a tour $T_S$ of
$\cI[S]$. Since $\cA$ is a $\rho$-approximation algorithm, we have 
\begin{align*}
  \cost_{\cI[S]}(T_S) \leq \rho \cdot \valu(\cI[S]) = 2\rho \valu(S) \,. 
\end{align*}

Next, we invoke the algorithm of Lemma~\ref{lem:get_contractible} to obtain
a  collection $F_S \subseteq E$ of subtours 
such that $S$ is contractible with respect to $F_S$ and
\begin{equation}
  \label{eq:cost_F}
  \cost_{\cI}(F_S) \le \cost_{\cI[S]}(T_S) \le 2 \rho \valu(S) \,.
\end{equation}
Now we recursively solve the contraction $\cI / S$.
(This is a smaller instance than $\cI$, because $|\cI / S| = |\cI| - |S| + 1$ and $|S| \ge 2$ since a singleton set $S$ would not have been reducible.)
Let $T$ be the tour obtained from the recursive call,
and let $F$ be the lift of $T$ to $\cI$.
We finally return $F_S \cup F$.
This is a tour of $\cI$,
as $S$ is contractible with respect to $F_S$.

The running time of this algorithm is polynomial
since each recursive call consists at most of:
one call to $\cA$,
the algorithm of Lemma~\ref{lem:get_contractible},
simple graph operations
and one recursive call (for a smaller instance).

Finally, let us show
that this is a $\frac{2\rho}{1-\delta}$-approximation algorithm
by induction on the instance size.
We have
\begin{align*}
	\cost_{\cI}(F \cup F_S) &= \cost_{\cI}(F) + \cost_{\cI}(F_S) \\
	&\le \cost_{\cI / S}(T) + \cost_{\cI}(F_S) \\
	&\le \frac{2 \rho}{1 - \delta} \valu(\cI / S) + 2 \rho \valu(S) \\
	&< \frac{2 \rho}{1 - \delta} \Brac{ \valu(\cI) - (1 - \delta) \valu(S) } + 2 \rho \valu(S) \\
	&= \frac{2 \rho}{1 - \delta} \valu(\cI),
\end{align*}
where the first inequality is by Lemma~\ref{lem:lifting},
the second follows since $T$ is a $\frac{2 \rho}{1 - \delta}$-approximate solution for $\cI / S$
and by \eqref{eq:cost_F},
and the strict inequality is by Fact~\ref{fact:no_increase}
and the reducibility of $S$.
This shows that $F \cup F_S$ is a $\frac{2 \rho}{1 - \delta}$-approximate solution for $\cI$.
\end{proof}

\section{Backbones and Reduction to Vertebrate Pairs}
\label{sec:reducetovertebrate}

In this section we further reduce the task of approximating ATSP to that of finding a tour  in instances with a \emph{backbone}. For an example of such an instance see the right part of Figure~\ref{fig:intro} on page~\pageref{fig:intro}.
\begin{definition}
  We say that an instance $\cI = (G, \cL, \xs, y)$ and a subtour $B$ form a \emph{vertebrate pair} if
  every $S\in \cL$ with $|S| \geq 2$ is visited by $B$, i.e., $S \cap V(B)\neq \emptyset$.
  The set $B$ is referred to as the \emph{backbone} of the vertebrate pair.
\end{definition}

The main result of this section,
 Theorem~\ref{thm:fromSimpletoGeneral}, takes as input 
  an algorithm for
 vertebrate pairs that returns a tour with a weight bound depending on the backbone, and shows that this implies a constant-factor approximation algorithm for irreducible
 instances.
Combining this
 with Theorem~\ref{thm:reduction_to_irreducible} allows us to reduce the
 problem of approximating ATSP on general instances to that of approximating ATSP on
 vertebrate pairs.
 
 The proof of Theorem~\ref{thm:fromSimpletoGeneral} is done in two steps. First, in Section~\ref{sec:backbone}, we give an efficient algorithm for finding a \emph{quasi-backbone} $B$ of an irreducible instance -- a subtour that visits a large (weighted) fraction of the sets in $\cL$. We use the term \emph{quasi-backbone} as $B$ might not visit \emph{all} non-singleton sets, as would be required for a backbone. 
 Then, in Section~\ref{sec:simpleirr}, we give the reduction to vertebrate pairs via a recursive algorithm (similar to the proof of Theorem~\ref{thm:reduction_to_irreducible} in the previous section).

\subsection{Finding a Quasi-Backbone}
\label{sec:backbone}

\ifdefined\ACM
\begin{figure}[t]
  \centering
  \begin{tikzpicture}[scale=0.85,
    my style/.style={append after command={\pgfextra{\node [right=-0.1cm] at (\tikzlastnode.mid east) {\tiny \tikzlastnode};}
         },
       },
  ]
\tikzset{arrow data/.style 2 args={decoration={markings,
         mark=at position #1 with \arrow{#2}},
         postaction=decorate}
      }

  \begin{scope}[yscale=0.7, xscale=0.55]
    \node at (0.5, 3.6) {\begin{minipage}{4.5cm}\scriptsize The rerouting inside $S$ when obtaining the quasi-backbone $B$ from a lift $B'$ of a  tour $T$ of the instance $\cI'$ obtained by contracting maximal sets in $\cL$.\end{minipage}};
    \draw[fill=gray!10!white, draw=gray!80!black] (0.3, 0) ellipse (4.3cm and 2.5cm); \begin{scope}
      \draw[fill=gray!30!white, draw=gray!80!black] (-2.5, 0.7) ellipse (0.95cm and 0.7cm);\end{scope}
    \begin{scope}[xshift=1.0cm, yshift=0.6cm, rotate=90]
      \draw[fill=gray!30!white, draw=gray!80!black] (0, 0) ellipse (1.3cm and 2.0cm);\draw[fill=gray!50!white, draw=gray!80!black] (0, -.85) ellipse (0.6cm and 0.9cm) node[above left = .30cm and -0.2cm] {\scriptsize $R_2$};
      \draw[fill=gray!50!white, draw=gray!80!black] (0.3, 1.1) ellipse (0.5cm and 0.5cm);\node[ssssgvertex, fill=black] (a) at (0.1, 0.85) {};
       \node[ssssgvertex, fill=black] (r) at (-0.85, 0.45) {};
\draw[fill=gray!70!white, draw=gray!80!black] (0.0, -0.7) ellipse (0.4cm and 0.4cm);\node[ssssgvertex, fill=black] (b) at (-0.15, -0.55) {}; \node[ssssgvertex, fill=black] (c) at (0.15, -0.85) {};\node[ssssgvertex, fill=black]  (d) at (-0.2, -1.4) {}; \node[ssssgvertex, fill=black] (e) at (0.2, -1.4) {};\node[ssssgvertex, fill=black] (f)  at (0.4, 1.2) {};
    \end{scope}
    \node[ssssgvertex, fill=black] (g) at (-2.7, 0.4) {}; \node at (-2.85, 0.8) {\scriptsize $u^S_{\textrm{max}}$};
    \node[ssssgvertex, fill=black] (h) at (-2.2, 0.7) {}; \node[ssssgvertex, fill=black] (i) at (-2.8, -0.7) {};
    \node[ssssgvertex, fill=black]  (j) at (-2.4, -1.4)  {};
    \node at (-2.00, -1.2) {\scriptsize $u^S$};

\begin{scope}[xshift=-1cm, yshift=-2cm]
      \draw[fill=gray!30!white, draw=gray!80!black] (0.5, 0.22) ellipse (0.80cm and 0.6cm) node[above  left= 0.30cm and -0.15cm] {\scriptsize $R_1$};
      \node[ssssgvertex, fill=black] (k) at (0.0, 0.0) {};
      \node[ssssgvertex, fill=black] (l) at (0.5, 0.5) {};
      \node[ssssgvertex, fill=black] (m) at (1.0, 0.0) {};
    \end{scope}
    \begin{scope}[xshift=1.8cm, yshift=-1.5cm]
      \draw[fill=gray!30!white, draw=gray!80!black, rotate=30] (0, 0) ellipse (0.70cm and 0.5cm);\node[ssssgvertex, fill=black] (n) at (-0.25, -0.25) {};
      \node[ssssgvertex, fill=black] (o) at (0.25, 0.25) {};
    \end{scope}
      \node[ssssgvertex, fill=black] (p) at (-1.25, -0.5) {};
      \node[ssssgvertex, fill=black] (s) at (3.4, 0.5) {};
      \node at (3.6, 0.9) {\scriptsize $v_{\textrm{max}}^{S}$};
      \node[ssssgvertex, fill=black] (t) at (3.7, -0.25) {};
      \node[ssssgvertex, fill=black] (u) at (3.2, -1.25) {};
      \node at (2.9, -1.1) {\scriptsize $v^{S}$};

    \draw ($(j) - (2, -0.2)$) edge[->] (j);
    \draw (j) edge[->] (i);
    \draw (i) edge[->] (g);
    \draw (g) edge[->] (p);
    \draw (p) edge[->] (a);
    \draw (a) edge[->] (r);
    \draw (r) edge[->] (n);
    \draw (n) edge[->] (o);
    \draw (o) edge[->] (s);
    \draw (s) edge[->] (t);
    \draw (t) edge[->] (u);
    \draw (u) edge[->] ($(u) + (1.5, -1.0)$);
  \end{scope}

  \begin{scope}[yscale=0.7, xscale=0.55, xshift=10cm]
    \node at (0.5, 3.6) {\begin{minipage}{4.5cm}\scriptsize The lift $F$ of the tour $T'$ found in the vertebrate pair $(\cI', B)$, where $\cI'$ was obtained by contracting $R_1$ and $R_2$. \end{minipage}};
    \draw[fill=gray!10!white, draw=gray!80!black] (0.3, 0) ellipse (4.3cm and 2.5cm); \begin{scope}
      \draw[fill=gray!30!white, draw=gray!80!black] (-2.5, 0.7) ellipse (0.95cm and 0.7cm);\end{scope}
    \begin{scope}[xshift=1.0cm, yshift=0.6cm, rotate=90]
      \draw[fill=gray!30!white, draw=gray!80!black] (0, 0) ellipse (1.3cm and 2.0cm);\draw[fill=gray!50!white, draw=gray!80!black] (0, -.85) ellipse (0.6cm and 0.9cm) node[above left = .30cm and -0.2cm] {\scriptsize $R_2$};
      \draw[fill=gray!50!white, draw=gray!80!black] (0.3, 1.1) ellipse (0.5cm and 0.5cm);\node[ssssgvertex, fill=black] (a) at (0.1, 0.85) {};
       \node[ssssgvertex, fill=black] (r) at (-0.85, 0.45) {};
\draw[fill=gray!70!white, draw=gray!80!black] (0.0, -0.7) ellipse (0.4cm and 0.4cm);\node[ssssgvertex, fill=black] (b) at (-0.15, -0.55) {}; \node[ssssgvertex, fill=black] (c) at (0.15, -0.85) {};\node[ssssgvertex, fill=black]  (d) at (-0.2, -1.4) {}; \node[ssssgvertex, fill=black] (e) at (0.2, -1.4) {};\node[ssssgvertex, fill=black] (f)  at (0.4, 1.2) {};
    \end{scope}
    \node[ssssgvertex, fill=black] (g) at (-2.7, 0.4) {}; \node[ssssgvertex, fill=black] (h) at (-2.2, 0.7) {}; \node[ssssgvertex, fill=black] (i) at (-2.8, -0.7) {};
    \node[ssssgvertex, fill=black]  (j) at (-2.4, -1.4)  {};

\begin{scope}[xshift=-1cm, yshift=-2cm]
      \draw[fill=gray!30!white, draw=gray!80!black] (0.5, 0.22) ellipse (0.80cm and 0.6cm) node[above  left= 0.30cm and -0.15cm] {\scriptsize $R_1$};
      \node[ssssgvertex, fill=black] (k) at (0.0, 0.0) {};
      \node[ssssgvertex, fill=black] (l) at (0.5, 0.5) {};
      \node[ssssgvertex, fill=black] (m) at (1.0, 0.0) {};
    \end{scope}
    \begin{scope}[xshift=1.8cm, yshift=-1.5cm]
      \draw[fill=gray!30!white, draw=gray!80!black, rotate=30] (0, 0) ellipse (0.70cm and 0.5cm);\node[ssssgvertex, fill=black] (n) at (-0.25, -0.25) {};
      \node[ssssgvertex, fill=black] (o) at (0.25, 0.25) {};
    \end{scope}
      \node[ssssgvertex, fill=black] (p) at (-1.25, -0.5) {};
      \node[ssssgvertex, fill=black] (s) at (3.4, 0.5) {};
      \node[ssssgvertex, fill=black] (t) at (3.7, -0.25) {};
      \node[ssssgvertex, fill=black] (u) at (3.2, -1.25) {};

    \draw ($(j) - (2, -0.2)$) edge[->] (j);
    \draw (j) edge[->] (i);
    \draw (i) edge[->] (g);
    \draw (g) edge[->] (p);
    \draw (p) edge[->] (a);
    \draw (a) edge[->] (r);
    \draw (r) edge[->] (n);
    \draw (n) edge[->] (o);
    \draw (o) edge[->] (s);
    \draw (s) edge[->] (t);
    \draw (t) edge[->] (u);
    \draw (u) edge[->] ($(u) + (1.5, -1.0)$);

    \draw (h) edge[bend left,->] (p);
    \draw (p) edge[bend right,->] (k);
    \draw (m) edge[->] (r);
    \draw (r) edge[->] (b);
    \draw (c) edge[->, bend right] (f);
    \draw (f) edge[->, bend right]  (h); 
\draw (k) edge[->] (m);
    \draw (b) edge[->] (d);
    \draw (d) edge[->] (c);
  \end{scope}

  \begin{scope}[yscale=0.7, xscale=0.55, xshift=20cm]
    \node at (0.5, 3.6) {\begin{minipage}{4cm}\scriptsize The final tour $T$ obtained by adding results of recursive calls on $R_1$ and $R_2$ (i.e., on $\cI[R_1]$ and $\cI[R_2]$).\end{minipage}};
    \draw[fill=gray!10!white, draw=gray!80!black] (0.3, 0) ellipse (4.3cm and 2.5cm); \begin{scope}
      \draw[fill=gray!30!white, draw=gray!80!black] (-2.5, 0.7) ellipse (0.95cm and 0.7cm);\end{scope}
    \begin{scope}[xshift=1.0cm, yshift=0.6cm, rotate=90]
      \draw[fill=gray!30!white, draw=gray!80!black] (0, 0) ellipse (1.3cm and 2.0cm);\draw[fill=gray!50!white, draw=gray!80!black] (0, -.85) ellipse (0.6cm and 0.9cm) node[above left = .30cm and -0.2cm] {\scriptsize $R_2$};
      \draw[fill=gray!50!white, draw=gray!80!black] (0.3, 1.1) ellipse (0.5cm and 0.5cm);\node[ssssgvertex, fill=black] (a) at (0.1, 0.85) {};
       \node[ssssgvertex, fill=black] (r) at (-0.85, 0.45) {};
\draw[fill=gray!70!white, draw=gray!80!black] (0.0, -0.7) ellipse (0.4cm and 0.4cm);\node[ssssgvertex, fill=black] (b) at (-0.15, -0.55) {}; \node[ssssgvertex, fill=black] (c) at (0.15, -0.85) {};\node[ssssgvertex, fill=black]  (d) at (-0.2, -1.4) {}; \node[ssssgvertex, fill=black] (e) at (0.2, -1.4) {};\node[ssssgvertex, fill=black] (f)  at (0.4, 1.2) {};
    \end{scope}
    \node[ssssgvertex, fill=black] (g) at (-2.7, 0.4) {}; \node[ssssgvertex, fill=black] (h) at (-2.2, 0.7) {}; \node[ssssgvertex, fill=black] (i) at (-2.8, -0.7) {};
    \node[ssssgvertex, fill=black]  (j) at (-2.4, -1.4)  {};

\begin{scope}[xshift=-1cm, yshift=-2cm]
      \draw[fill=gray!30!white, draw=gray!80!black] (0.5, 0.22) ellipse (0.80cm and 0.6cm) node[above  left= 0.30cm and -0.15cm] {\scriptsize $R_1$};
      \node[ssssgvertex, fill=black] (k) at (0.0, 0.0) {};
      \node[ssssgvertex, fill=black] (l) at (0.5, 0.5) {};
      \node[ssssgvertex, fill=black] (m) at (1.0, 0.0) {};
    \end{scope}
    \begin{scope}[xshift=1.8cm, yshift=-1.5cm]
      \draw[fill=gray!30!white, draw=gray!80!black, rotate=30] (0, 0) ellipse (0.70cm and 0.5cm);\node[ssssgvertex, fill=black] (n) at (-0.25, -0.25) {};
      \node[ssssgvertex, fill=black] (o) at (0.25, 0.25) {};
    \end{scope}
      \node[ssssgvertex, fill=black] (p) at (-1.25, -0.5) {};
      \node[ssssgvertex, fill=black] (s) at (3.4, 0.5) {};
      \node[ssssgvertex, fill=black] (t) at (3.7, -0.25) {};
      \node[ssssgvertex, fill=black] (u) at (3.2, -1.25) {};

    \draw ($(j) - (2, -0.2)$) edge[->] (j);
    \draw (j) edge[->] (i);
    \draw (i) edge[->] (g);
    \draw (g) edge[->] (p);
    \draw (p) edge[->] (a);
    \draw (a) edge[->] (r);
    \draw (r) edge[->] (n);
    \draw (n) edge[->] (o);
    \draw (o) edge[->] (s);
    \draw (s) edge[->] (t);
    \draw (t) edge[->] (u);
    \draw (u) edge[->] ($(u) + (1.5, -1.0)$);

    \draw (h) edge[bend left,->] (p);
    \draw (p) edge[bend right,->] (k);
    \draw (m) edge[->] (r);
    \draw (r) edge[->] (b);
    \draw (c) edge[->, bend right] (f);
    \draw (f) edge[->, bend right]  (h); 
    \draw (k) edge[->] (m);
    \draw (b) edge[->] (d);
    \draw (d) edge[->] (c);

    \draw (m) edge[<-, bend right] (l);
    \draw (l) edge[<-, bend right] (k);
    \draw (k) edge[<-, bend right] (m);

    \draw (e) edge[->, bend right] (c);
    \draw (c) edge[->] (b);
    \draw (b) edge[->, bend right] (d);
    \draw (d) edge[->] (e);

\end{scope}

\end{tikzpicture}
   \caption{An illustration of the steps in the proofs of Lemma~\ref{lem:skeleton} (left) and Theorem~\ref{thm:fromSimpletoGeneral} (center and right). Only one maximal set $S\in \cL$ is shown.}
  \label{fig:vertebrate}
\end{figure}
\else
\begin{figure}[t]
  \centering
  \begin{tikzpicture}[scale=0.95,
    my style/.style={append after command={\pgfextra{\node [right=-0.1cm] at (\tikzlastnode.mid east) {\tiny \tikzlastnode};}
         },
       },
  ]
\tikzset{arrow data/.style 2 args={decoration={markings,
         mark=at position #1 with \arrow{#2}},
         postaction=decorate}
      }

  \begin{scope}[yscale=0.7, xscale=0.55]
    \node at (0.5, 3.6) {\begin{minipage}{4.5cm}\scriptsize The rerouting inside $S$ when obtaining the quasi-backbone $B$ from a lift $B'$ of a  tour $T$ of the instance $\cI'$ obtained by contracting maximal sets in $\cL$.\end{minipage}};
    \draw[fill=gray!10!white, draw=gray!80!black] (0.3, 0) ellipse (4.3cm and 2.5cm); \begin{scope}
      \draw[fill=gray!30!white, draw=gray!80!black] (-2.5, 0.7) ellipse (0.95cm and 0.7cm);\end{scope}
    \begin{scope}[xshift=1.0cm, yshift=0.6cm, rotate=90]
      \draw[fill=gray!30!white, draw=gray!80!black] (0, 0) ellipse (1.3cm and 2.0cm);\draw[fill=gray!50!white, draw=gray!80!black] (0, -.85) ellipse (0.6cm and 0.9cm) node[above left = .30cm and -0.2cm] {\scriptsize $R_2$};
      \draw[fill=gray!50!white, draw=gray!80!black] (0.3, 1.1) ellipse (0.5cm and 0.5cm);\node[ssssgvertex, fill=black] (a) at (0.1, 0.85) {};
       \node[ssssgvertex, fill=black] (r) at (-0.85, 0.45) {};
\draw[fill=gray!70!white, draw=gray!80!black] (0.0, -0.7) ellipse (0.4cm and 0.4cm);\node[ssssgvertex, fill=black] (b) at (-0.15, -0.55) {}; \node[ssssgvertex, fill=black] (c) at (0.15, -0.85) {};\node[ssssgvertex, fill=black]  (d) at (-0.2, -1.4) {}; \node[ssssgvertex, fill=black] (e) at (0.2, -1.4) {};\node[ssssgvertex, fill=black] (f)  at (0.4, 1.2) {};
    \end{scope}
    \node[ssssgvertex, fill=black] (g) at (-2.7, 0.4) {}; \node at (-2.85, 0.8) {\scriptsize $u^S_{\textrm{max}}$};
    \node[ssssgvertex, fill=black] (h) at (-2.2, 0.7) {}; \node[ssssgvertex, fill=black] (i) at (-2.8, -0.7) {};
    \node[ssssgvertex, fill=black]  (j) at (-2.4, -1.4)  {};
    \node at (-2.00, -1.2) {\scriptsize $u^S$};

\begin{scope}[xshift=-1cm, yshift=-2cm]
      \draw[fill=gray!30!white, draw=gray!80!black] (0.5, 0.22) ellipse (0.80cm and 0.6cm) node[above  left= 0.30cm and -0.15cm] {\scriptsize $R_1$};
      \node[ssssgvertex, fill=black] (k) at (0.0, 0.0) {};
      \node[ssssgvertex, fill=black] (l) at (0.5, 0.5) {};
      \node[ssssgvertex, fill=black] (m) at (1.0, 0.0) {};
    \end{scope}
    \begin{scope}[xshift=1.8cm, yshift=-1.5cm]
      \draw[fill=gray!30!white, draw=gray!80!black, rotate=30] (0, 0) ellipse (0.70cm and 0.5cm);\node[ssssgvertex, fill=black] (n) at (-0.25, -0.25) {};
      \node[ssssgvertex, fill=black] (o) at (0.25, 0.25) {};
    \end{scope}
      \node[ssssgvertex, fill=black] (p) at (-1.25, -0.5) {};
      \node[ssssgvertex, fill=black] (s) at (3.4, 0.5) {};
      \node at (3.6, 0.9) {\scriptsize $v_{\textrm{max}}^{S}$};
      \node[ssssgvertex, fill=black] (t) at (3.7, -0.25) {};
      \node[ssssgvertex, fill=black] (u) at (3.2, -1.25) {};
      \node at (2.9, -1.1) {\scriptsize $v^{S}$};

    \draw ($(j) - (2, -0.2)$) edge[->] (j);
    \draw (j) edge[->] (i);
    \draw (i) edge[->] (g);
    \draw (g) edge[->] (p);
    \draw (p) edge[->] (a);
    \draw (a) edge[->] (r);
    \draw (r) edge[->] (n);
    \draw (n) edge[->] (o);
    \draw (o) edge[->] (s);
    \draw (s) edge[->] (t);
    \draw (t) edge[->] (u);
    \draw (u) edge[->] ($(u) + (1.5, -1.0)$);
  \end{scope}

  \begin{scope}[yscale=0.7, xscale=0.55, xshift=10cm]
    \node at (0.5, 3.6) {\begin{minipage}{4.5cm}\scriptsize The lift $F$ of the tour $T'$ found in the vertebrate pair $(\cI', B)$, where $\cI'$ was obtained by contracting $R_1$ and $R_2$. \end{minipage}};
    \draw[fill=gray!10!white, draw=gray!80!black] (0.3, 0) ellipse (4.3cm and 2.5cm); \begin{scope}
      \draw[fill=gray!30!white, draw=gray!80!black] (-2.5, 0.7) ellipse (0.95cm and 0.7cm);\end{scope}
    \begin{scope}[xshift=1.0cm, yshift=0.6cm, rotate=90]
      \draw[fill=gray!30!white, draw=gray!80!black] (0, 0) ellipse (1.3cm and 2.0cm);\draw[fill=gray!50!white, draw=gray!80!black] (0, -.85) ellipse (0.6cm and 0.9cm) node[above left = .30cm and -0.2cm] {\scriptsize $R_2$};
      \draw[fill=gray!50!white, draw=gray!80!black] (0.3, 1.1) ellipse (0.5cm and 0.5cm);\node[ssssgvertex, fill=black] (a) at (0.1, 0.85) {};
       \node[ssssgvertex, fill=black] (r) at (-0.85, 0.45) {};
\draw[fill=gray!70!white, draw=gray!80!black] (0.0, -0.7) ellipse (0.4cm and 0.4cm);\node[ssssgvertex, fill=black] (b) at (-0.15, -0.55) {}; \node[ssssgvertex, fill=black] (c) at (0.15, -0.85) {};\node[ssssgvertex, fill=black]  (d) at (-0.2, -1.4) {}; \node[ssssgvertex, fill=black] (e) at (0.2, -1.4) {};\node[ssssgvertex, fill=black] (f)  at (0.4, 1.2) {};
    \end{scope}
    \node[ssssgvertex, fill=black] (g) at (-2.7, 0.4) {}; \node[ssssgvertex, fill=black] (h) at (-2.2, 0.7) {}; \node[ssssgvertex, fill=black] (i) at (-2.8, -0.7) {};
    \node[ssssgvertex, fill=black]  (j) at (-2.4, -1.4)  {};

\begin{scope}[xshift=-1cm, yshift=-2cm]
      \draw[fill=gray!30!white, draw=gray!80!black] (0.5, 0.22) ellipse (0.80cm and 0.6cm) node[above  left= 0.30cm and -0.15cm] {\scriptsize $R_1$};
      \node[ssssgvertex, fill=black] (k) at (0.0, 0.0) {};
      \node[ssssgvertex, fill=black] (l) at (0.5, 0.5) {};
      \node[ssssgvertex, fill=black] (m) at (1.0, 0.0) {};
    \end{scope}
    \begin{scope}[xshift=1.8cm, yshift=-1.5cm]
      \draw[fill=gray!30!white, draw=gray!80!black, rotate=30] (0, 0) ellipse (0.70cm and 0.5cm);\node[ssssgvertex, fill=black] (n) at (-0.25, -0.25) {};
      \node[ssssgvertex, fill=black] (o) at (0.25, 0.25) {};
    \end{scope}
      \node[ssssgvertex, fill=black] (p) at (-1.25, -0.5) {};
      \node[ssssgvertex, fill=black] (s) at (3.4, 0.5) {};
      \node[ssssgvertex, fill=black] (t) at (3.7, -0.25) {};
      \node[ssssgvertex, fill=black] (u) at (3.2, -1.25) {};

    \draw ($(j) - (2, -0.2)$) edge[->] (j);
    \draw (j) edge[->] (i);
    \draw (i) edge[->] (g);
    \draw (g) edge[->] (p);
    \draw (p) edge[->] (a);
    \draw (a) edge[->] (r);
    \draw (r) edge[->] (n);
    \draw (n) edge[->] (o);
    \draw (o) edge[->] (s);
    \draw (s) edge[->] (t);
    \draw (t) edge[->] (u);
    \draw (u) edge[->] ($(u) + (1.5, -1.0)$);

    \draw (h) edge[bend left,->] (p);
    \draw (p) edge[bend right,->] (k);
    \draw (m) edge[->] (r);
    \draw (r) edge[->] (b);
    \draw (c) edge[->, bend right] (f);
    \draw (f) edge[->, bend right]  (h); 
\draw (k) edge[->] (m);
    \draw (b) edge[->] (d);
    \draw (d) edge[->] (c);
  \end{scope}

  \begin{scope}[yscale=0.7, xscale=0.55, xshift=20cm]
    \node at (0.5, 3.6) {\begin{minipage}{4cm}\scriptsize The final tour $T$ obtained by adding results of recursive calls on $R_1$ and $R_2$ (i.e., on $\cI[R_1]$ and $\cI[R_2]$).\end{minipage}};
    \draw[fill=gray!10!white, draw=gray!80!black] (0.3, 0) ellipse (4.3cm and 2.5cm); \begin{scope}
      \draw[fill=gray!30!white, draw=gray!80!black] (-2.5, 0.7) ellipse (0.95cm and 0.7cm);\end{scope}
    \begin{scope}[xshift=1.0cm, yshift=0.6cm, rotate=90]
      \draw[fill=gray!30!white, draw=gray!80!black] (0, 0) ellipse (1.3cm and 2.0cm);\draw[fill=gray!50!white, draw=gray!80!black] (0, -.85) ellipse (0.6cm and 0.9cm) node[above left = .30cm and -0.2cm] {\scriptsize $R_2$};
      \draw[fill=gray!50!white, draw=gray!80!black] (0.3, 1.1) ellipse (0.5cm and 0.5cm);\node[ssssgvertex, fill=black] (a) at (0.1, 0.85) {};
       \node[ssssgvertex, fill=black] (r) at (-0.85, 0.45) {};
\draw[fill=gray!70!white, draw=gray!80!black] (0.0, -0.7) ellipse (0.4cm and 0.4cm);\node[ssssgvertex, fill=black] (b) at (-0.15, -0.55) {}; \node[ssssgvertex, fill=black] (c) at (0.15, -0.85) {};\node[ssssgvertex, fill=black]  (d) at (-0.2, -1.4) {}; \node[ssssgvertex, fill=black] (e) at (0.2, -1.4) {};\node[ssssgvertex, fill=black] (f)  at (0.4, 1.2) {};
    \end{scope}
    \node[ssssgvertex, fill=black] (g) at (-2.7, 0.4) {}; \node[ssssgvertex, fill=black] (h) at (-2.2, 0.7) {}; \node[ssssgvertex, fill=black] (i) at (-2.8, -0.7) {};
    \node[ssssgvertex, fill=black]  (j) at (-2.4, -1.4)  {};

\begin{scope}[xshift=-1cm, yshift=-2cm]
      \draw[fill=gray!30!white, draw=gray!80!black] (0.5, 0.22) ellipse (0.80cm and 0.6cm) node[above  left= 0.30cm and -0.15cm] {\scriptsize $R_1$};
      \node[ssssgvertex, fill=black] (k) at (0.0, 0.0) {};
      \node[ssssgvertex, fill=black] (l) at (0.5, 0.5) {};
      \node[ssssgvertex, fill=black] (m) at (1.0, 0.0) {};
    \end{scope}
    \begin{scope}[xshift=1.8cm, yshift=-1.5cm]
      \draw[fill=gray!30!white, draw=gray!80!black, rotate=30] (0, 0) ellipse (0.70cm and 0.5cm);\node[ssssgvertex, fill=black] (n) at (-0.25, -0.25) {};
      \node[ssssgvertex, fill=black] (o) at (0.25, 0.25) {};
    \end{scope}
      \node[ssssgvertex, fill=black] (p) at (-1.25, -0.5) {};
      \node[ssssgvertex, fill=black] (s) at (3.4, 0.5) {};
      \node[ssssgvertex, fill=black] (t) at (3.7, -0.25) {};
      \node[ssssgvertex, fill=black] (u) at (3.2, -1.25) {};

    \draw ($(j) - (2, -0.2)$) edge[->] (j);
    \draw (j) edge[->] (i);
    \draw (i) edge[->] (g);
    \draw (g) edge[->] (p);
    \draw (p) edge[->] (a);
    \draw (a) edge[->] (r);
    \draw (r) edge[->] (n);
    \draw (n) edge[->] (o);
    \draw (o) edge[->] (s);
    \draw (s) edge[->] (t);
    \draw (t) edge[->] (u);
    \draw (u) edge[->] ($(u) + (1.5, -1.0)$);

    \draw (h) edge[bend left,->] (p);
    \draw (p) edge[bend right,->] (k);
    \draw (m) edge[->] (r);
    \draw (r) edge[->] (b);
    \draw (c) edge[->, bend right] (f);
    \draw (f) edge[->, bend right]  (h); 
    \draw (k) edge[->] (m);
    \draw (b) edge[->] (d);
    \draw (d) edge[->] (c);

    \draw (m) edge[<-, bend right] (l);
    \draw (l) edge[<-, bend right] (k);
    \draw (k) edge[<-, bend right] (m);

    \draw (e) edge[->, bend right] (c);
    \draw (c) edge[->] (b);
    \draw (b) edge[->, bend right] (d);
    \draw (d) edge[->] (e);

\end{scope}

\end{tikzpicture}
   \caption{An illustration of the steps in the proofs of Lemma~\ref{lem:skeleton} (left) and Theorem~\ref{thm:fromSimpletoGeneral} (center and right). Only one maximal set $S\in \cL$ is shown.}
  \label{fig:vertebrate}
\end{figure}
\fi

We give an efficient algorithm for calculating a low-weight \emph{quasi-backbone} of an irreducible instance.
 \begin{definition}
   For an instance $\cI = (G, \cL, \xs, y)$,
  we call a subtour $B$ a \emph{quasi-backbone}    if 
  \begin{align*}
    2\sum_{S\in  \cL^*} y_S \leq (1-\delta) \valu(\cI)\,,
  \end{align*}
	where $\cL^* = \{S\in \cL: S\cap V(B)  = \emptyset\}$ contains those  laminar sets that  $B$  does \emph{not} visit. 
  \label{def:quasiskeleton}
\end{definition}
Recall that $\delta = \deltaval$ is the parameter in Definition~\ref{def:reducible} (of irreducible instances).
Also note that a backbone is not necessarily a quasi-backbone, as it
may not satisfy the above inequality if much $y$-value is on
singleton sets. 
Recall that $\nw = 18+ \epsilon$ denotes
the approximation guarantee for singleton instances 
as in Corollary~\ref{cor:singleton}.
\begin{lemma}
  There is a polynomial-time algorithm that,
  given an irreducible instance $\cI = (G,\cL, \xs, y)$,
  constructs a quasi-backbone $B$ such that $\cost(B) \leq (\nw + 3) \valu(\cI)$.
  \label{lem:skeleton}
\end{lemma}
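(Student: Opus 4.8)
The plan is to reduce to the singleton case already handled by \cref{cor:singleton}, and then locally repair the resulting subtour inside each contracted set. Let $S_1,\dots,S_m$ be the maximal non-singleton sets of $\cL$; these are pairwise disjoint, and since each $S_i$ is a \emph{maximal} non-singleton of the laminar family $\cL$, no set of $\cL$ strictly contains any $S_i$. Let $\cI'=\cI/S_1/\cdots/S_m$ be obtained by contracting all of them (the order is irrelevant as the $S_i$ are disjoint). Then $\cI'$ is a singleton instance, and iterating \cref{fact:no_increase} gives $\valu(\cI')\le\valu(\cI)$. I would run the algorithm of \cref{cor:singleton} on $\cI'$ to obtain a tour $T$ with $\cost_{\cI'}(T)\le\nw\cdot\valu(\cI')\le\nw\cdot\valu(\cI)$, and let $B'$ be the lift of $T$ back to $\cI$, lifting through the contractions one at a time; by \cref{lem:lifting} this satisfies $\cost_\cI(B')\le\cost_{\cI'}(T)$. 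Note that $B'$ is a subtour that visits every vertex outside $\bigcup_i S_i$ and enters each $S_i$ at least once.

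Next I would repair $B'$ inside each $S_i$ (see the left part of \cref{fig:vertebrate}). Since $\cI$ is irreducible, $S_i$ is not reducible, so there are $u^{S_i}\in (S_i)_{\mathrm{in}}$ and $v^{S_i}\in (S_i)_{\mathrm{out}}$ with $D_{S_i}(u^{S_i},v^{S_i})\ge\delta\cdot\valu(S_i)$. Fix one visit of $B'$ to $S_i$: it enters at some $p_i\in(S_i)_{\mathrm{in}}$, follows a minimum-weight path inside $S_i$ to some $q_i\in(S_i)_{\mathrm{out}}$, and leaves. Replace that inner path by the concatenation of (i) a path inside $S_i$ from $p_i$ to $u^{S_i}$, (ii) the path $P_i$ inside $S_i$ from $u^{S_i}$ to $v^{S_i}$ guaranteed by \cref{lem:short_path}, and (iii) a path inside $S_i$ from $v^{S_i}$ to $q_i$. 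Parts (i) and (iii) exist and can be chosen of weight at most $\valu(S_i)$ by \cref{lem:path_tight_set}\ref{item:exists_path} and the first bound of \cref{lem:short_path} (using $p_i\in(S_i)_{\mathrm{in}}$ for (i) and $q_i\in(S_i)_{\mathrm{out}}$ for (iii)), and $\cost_\cI(P_i)\le\valu(S_i)$. The resulting $B$ is still a subtour (one $p_i$--$q_i$ walk was swapped for another), it satisfies $S_i\cap V(B)\neq\emptyset$ for every $i$, and since the replacement only increases weight, $\cost_\cI(B)\le\cost_\cI(B')+\sum_i 3\valu(S_i)\le\nw\valu(\cI)+3\valu(\cI)=(\nw+3)\valu(\cI)$, where $\sum_i\valu(S_i)\le\valu(\cI)$ by laminarity. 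All steps run in polynomial time.

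For the quasi-backbone inequality of \cref{def:quasiskeleton}, note first that since $B$ visits every vertex outside $\bigcup_i S_i$ and every $S_i$, any $R\in\cL$ with $R\cap V(B)=\emptyset$ must have $R\subsetneq S_i$ for some $i$; moreover, because $B$ contains $P_i$ together with its endpoints $u^{S_i},v^{S_i}$, such an $R$ satisfies $|P_i\cap\delta(R)|=0$ and $R\cap\{u^{S_i},v^{S_i}\}=\emptyset$. Writing $u=u^{S_i},v=v^{S_i}$ and using $d_{S_i}(u,v)\le\cost_\cI(P_i)=\sum_{R\subsetneq S_i}|P_i\cap\delta(R)|\,y_R$ together with $|P_i\cap\delta(R)|+|R\cap\{u,v\}|\le 2$ for every $R\in\cL$ with $R\subsetneq S_i$ (the second bound of \cref{lem:short_path}, valid since every such $R$ is tight and $u\in(S_i)_{\mathrm{in}}$), we get
\[
\delta\cdot\valu(S_i)\;\le\; D_{S_i}(u,v)\;\le\;\sum_{R\in\cL:\,R\subsetneq S_i}\bigl(|P_i\cap\delta(R)|+|R\cap\{u,v\}|\bigr)y_R\;\le\;2\!\!\sum_{\substack{R\in\cL:\,R\subsetneq S_i\\ R\cap V(B)\neq\emptyset}}\!\! y_R\,,
\]
since the sets $R\subsetneq S_i$ with $R\cap V(B)=\emptyset$ contribute $0$. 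As $\valu(S_i)=2\sum_{R\in\cL:\,R\subsetneq S_i}y_R$, this rearranges to $2\sum_{R\in\cL:\,R\subsetneq S_i,\,R\cap V(B)=\emptyset}y_R\le(1-\delta)\valu(S_i)$; summing over $i$ and using $\sum_i\valu(S_i)\le\valu(\cI)$ yields $2\sum_{S\in\cL^*}y_S\le(1-\delta)\valu(\cI)$.

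The main obstacle is exactly this last step: one must route $B$ inside $S_i$ along a path that is simultaneously cheap enough not to inflate $\cost_\cI(B)$ and thorough enough that every set $R\subsetneq S_i$ missed by $B$ contributes nothing to $D_{S_i}(u^{S_i},v^{S_i})$, while still allowing the visited sets to be charged only $2y_R$ each. Taking the $u^{S_i}$--$v^{S_i}$ leg to be the \cref{lem:short_path} path is what makes these requirements compatible: its weight is at most $\valu(S_i)$, it upper-bounds $d_{S_i}(u^{S_i},v^{S_i})$ so that the irreducibility inequality still applies to it, and its crossing property supplies the charging bound. A secondary point to get right is that contracting all maximal non-singleton sets yields a bona fide singleton instance whose LP value has not increased, so that \cref{cor:singleton} and \cref{lem:lifting} can legitimately be chained.
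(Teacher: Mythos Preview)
Your proposal is correct and follows essentially the same approach as the paper's proof: contract the maximal sets to obtain a singleton instance, apply \cref{cor:singleton}, lift, and then inside each maximal set reroute through a pair $(u^{S_i},v^{S_i})$ witnessing irreducibility via the \cref{lem:short_path} path, using the crossing bound of that path for the quasi-backbone inequality. The only cosmetic differences are that the paper takes $(u^{S_i},v^{S_i})$ to be the maximizer of $D_{S_i}$ (equivalent to your choice of any witness to $D_{S_i}\ge\delta\valu(S_i)$), and that the phrase ``since the replacement only increases weight'' is a bit misleading---what you actually use is that the three replacement paths have total weight at most $3\valu(S_i)$ while the removed segment has nonnegative weight.
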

\begin{proof}
Let $\cLmax$ be the family of all maximal sets in $\cL$.
We define $\cI'$ to be the instance obtained from $\cI$ by contracting all sets in $\cLmax$; recall the definition of the contraction operation from Section~\ref{sec:contraction}. 
By Fact~\ref{fact:no_increase}, the LP value does not increase, i.e.,
$\valu(\cI') \le \valu(\cI)$.
In $\cI'$, all sets in the laminar family are singletons,
therefore the new instance is a singleton instance and we can use the $\nw$-approximation algorithm (Corollary~\ref{cor:singleton})
to find a tour $T$ in $\cI'$
with $\cost_{\cI'}(T) \le \nw \valu(\cI') \le \nw \valu(\cI)$.

Now, to obtain a subtour $B$ of the original instance $\cI$,
we consider the lift $B'$ of $T$ back to $\cI$ (see Definition~\ref{def:lift}).
The lift  $B'$ is a subtour of low weight. Indeed,  $\cost_\cI(B') \leq \cost_{\cI'}(T) \leq \nw \valu(\cI)$ by Lemma~\ref{lem:lifting}. 
It also visits every maximal set $S \in \cLmax$.
However, it might not yet satisfy the inequality of Definition~\ref{def:quasiskeleton}. We therefore slightly modify the subtour $B'$ to obtain $B$ as follows. For each set $S\in \cLmax$:
\begin{enumerate}
  \item Suppose the first visit\footnote{Recall that the edges of  the subtour $B'$ are ordered by an Eulerian walk.} to $S$
in the subtour $B'$
arrives at a vertex $u^S \in S_\textrm{in}$ and departs from a vertex $v^S \in S_\textrm{out}$.

  \item Replace the segment of $B'$ from $u^S$ to $v^S$ by the union of:
    \begin{itemize}
      \item a shortest path from $u^S$ to $u^S_{\textrm{max}}$,
      \item a path from $u^S_{\textrm{max}}$ to $v^S_{\textrm{max}}$ inside $S$ as given by Lemma~\ref{lem:short_path},
      \item and a shortest path from $v^S_{\textrm{max}}$ to $v^S$,
    \end{itemize}
    where $u^S_\textrm{max} \in S_\textrm{in}$ and $v^S_\textrm{max} \in S_\textrm{out}$
are selected to maximize $D_S(u^S_\textrm{max}, v^S_\textrm{max})$.
\end{enumerate}
See the left part of Figure~\ref{fig:vertebrate} for an illustration.
The existence of the second path above is guaranteed by
Lemma~\ref{lem:path_tight_set}\ref{item:exists_path}
since $u^S_{\textrm{max}} \in S_{\textrm{in}}$.
It is clear that the obtained multiset $B$ 
is a subtour  (since $B'$ is a subtour) and that the algorithm for finding $B$ runs in polynomial time. 
It remains to bound the weight of $B$ and to show that $B$ satisfies the property of a quasi-backbone, i.e., the inequality of Definition~\ref{def:quasiskeleton}.

For the former, note that the weight of $B$ is at most the weight of the lift $B'$ plus the weight of the three paths added for each set $S\in \cLmax$. 
For such a set $S\in \cLmax$, the weight of the path from $u^S$ to $u^S_{\textrm{max}}$ is at most  $\valu(S)$ since there is a path from $u^S\in S_{\textrm{in}}$ to $u^S_{\textrm{max}}$ \emph{inside $S$} by Lemma~\ref{lem:path_tight_set}\ref{item:exists_path} and such a path can be selected to have weight at most $\valu(S)$ by Lemma~\ref{lem:short_path}. By the same argument, we have that the weight of the path from $v^S_{\textrm{max}}$ to $v^S\in S_{\textrm{out}}$ is at most $\valu(S)$. Finally, by applying Lemma~\ref{lem:short_path} again, we have that \bll{the weight of} the path added from $u^S_{\textrm{max}}$ to $v^S_{\textrm{max}}$ is also bounded by $\valu(S)$. It follows that
\begin{align*}
  \cost(B) \leq \cost(B') + 3\cdot \sum_{S\in \cLmax} \valu(S) \leq \cost(B') + 3 \valu(\cI) \leq (\nw  + 3) \valu(\cI)\,,
\end{align*}
as required.
(In the second inequality we used that the sets $S \in \cLmax$ are disjoint.)

We proceed to prove that $B$ satisfies  the inequality of Definition~\ref{def:quasiskeleton}. Recall that $\cL^* = \{S\in \cL: S\cap V(B)  = \emptyset\}$ contains those  \bll{sets in $\cL$} that  $B$  does not visit.
As $B$ visits  every $S \in \cLmax$
(i.e., $\cLmax \cap \cL^* = \emptyset$), it
 is enough to show the following:

\begin{claim*}
	For every $S \in \cLmax$
	we have
	\[ \sum_{R \in \cL^* : \ R \subsetneq S} 2 y_R \le (1 - \delta) \valu(S) \,. \]
\end{claim*}
Once we have this claim, the property of a quasi-backbone indeed follows:
\[ 2 \sum_{R \in \cL^*} y_R = \sum_{S \in \cLmax} \sum_{R \in \cL^*: \ R \subsetneq S} 2 y_R
\le \sum_{S \in \cLmax} (1 - \delta) \valu(S) \le (1 - \delta) \valu(\cI) \,. \]
\begin{proof}[Proof of Claim]
  \renewcommand\qedsymbol{$\diamond$}

  The intuition behind the claim is that, when forming $B$, we have added a path $P$ from $u^S_{\textrm{max}}$ to $v^S_{\textrm{max}}$.
  Since $S$ is irreducible, this path $P$ has a large weight.
  However, it is chosen so that it crosses each set in $\cL$ at most twice.
  Thus it must cross most (weighted by value) sets of $\cL$ contained in $S$.

  Now we proceed with the formal proof.
  As  $u^S_{\textrm{max}} \in S_{\textrm{in}}$,  the path $P$ inside $S$
from $u^S_\textrm{max}$ to $v^S_\textrm{max}$ 
that we have obtained from Lemma~\ref{lem:short_path}
crosses every  tight set $R\in \cL$ at most
$2 - |R \cap \{u^S_\textrm{max}, v^S_\textrm{max}\}|$ times.
Moreover, $P$ (a subset of $B$) does not cross any set $R \in \cL^*$.
Therefore
\[ d_S(u^S_{\textrm{max}}, v^S_{\textrm{max}}) \leq  \cost(P) \le \sum_{R \in \cL \setminus \cL^* : \ R \subsetneq S} \left( 2 - |R \cap \{u^S_\textrm{max}, v^S_\textrm{max}\}| \right) \cdot y_R \,. \]
Furthermore, we have that the quasi-backbone $B$ visits all sets in $\cL_{\textrm{max}}$ and visits both vertices $u^S_{\textrm{max}}$ and $v^S_{\textrm{max}}$. Therefore it must visit all sets $R \in \cL$ for which $R\cap \{u^S_{\textrm{max}}, v^S_{\textrm{max}}\}$ is non-empty; i.e., for all $R \in \cL^*$ we have $|R \cap \{u^S_\textrm{max}, v^S_\textrm{max}\}| = 0$. It follows that that the quasi-backbone visits most (weighted by value) laminar sets. If $u^S_{\textrm{max}}=v^S_{\textrm{max}}$, then we  have 
\[
\sum_{R \in \cL \setminus \cL^* : \ R \subsetneq S} 2 y_R=D_S(u^S_\textrm{max},v^S_\textrm{max}),
\]
and if  $u^S_{\textrm{max}}\neq v^S_{\textrm{max}}$, then
\begin{align*}
  \sum_{R \in \cL \setminus \cL^* : \ R \subsetneq S} 2 y_R &= \sum_{R \in \cL \setminus \cL^* : \ R \subsetneq S} (2 - |R \cap \{u^S_\textrm{max}, v^S_\textrm{max}\}|) \cdot y_R + \sum_{R \in \cL \setminus \cL^* : \ R \subsetneq S} |R \cap \{u^S_\textrm{max}, v^S_\textrm{max}\}| \cdot y_R \\[2mm]
&\geq  d_S(u^S_{\textrm{max}},v^S_{\textrm{max}})  + \sum_{R \in \cL : \ R \subsetneq S} |R \cap \{u^S_\textrm{max}, v^S_\textrm{max}\}| \cdot y_R  \\
&= D_S(u^S_\textrm{max},v^S_\textrm{max}) 
\end{align*}
In both cases, we have $D_S(u^S_\textrm{max},v^S_\textrm{max})\ge \delta \valu(S)$ by
the choice of $u^S_{\textrm{max}}, v^S_{\textrm{max}}$ and by the irreducibility of $S$.
The claim now follows:
\[ \sum_{R \in \cL^* : \ R \subsetneq S} 2 y_R = \valu(S) - \sum_{R \in \cL \setminus \cL^* : \ R \subsetneq S} 2 y_R \le \valu(S) - \delta \valu(S) = (1 - \delta) \valu(S) \,. \]
\end{proof}
The proof of the above claim completes the proof of Lemma~\ref{lem:skeleton}.
\end{proof}

\subsection{Obtaining a Vertebrate Pair via Recursive Calls}
\label{sec:simpleirr}
We now prove the main result of
Section~\ref{sec:reducetovertebrate}. Recall the notation
$\lb_\cI(\bar B)$ introduced in \eqref{eq:lb-B} on page~\pageref{eq:lb-B}.
\begin{theorem}
Let $\cA$ be a polynomial-time algorithm that, given a vertebrate pair
$(\cI', B)$,
returns a tour of $\cI'$ with weight at most 
\[ \kappa \valu(\cI') + \eta \lb_{\cI'}(\bar B)+ \cost_{\cI'}(B)\]
for some $\kappa, \eta \ge 0$.
Then there is a polynomial-time $\rho$-approximation algorithm (with respect to the Held--Karp relaxation) for
ATSP for irreducible instances,
where 
\[
\rho = 
\frac{\kappa +\eta(1-\delta)+\nw+3}{2\delta-1}.
\]
\label{thm:fromSimpletoGeneral}
\end{theorem}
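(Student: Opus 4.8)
The proof will follow the recursive ``contract-and-induce'' template already used in the proof of Theorem~\ref{thm:reduction_to_irreducible}, but now the base case of the recursion is handled by first constructing a quasi-backbone (Lemma~\ref{lem:skeleton}) and then converting the resulting instance into a vertebrate pair so that $\cA$ applies. Concretely, given an irreducible instance $\cI=(G,\cL,\xs,y)$, I would first run the algorithm of Lemma~\ref{lem:skeleton} to obtain a quasi-backbone $B$ with $\cost(B)\le(\nw+3)\valu(\cI)$ and $2\sum_{S\in\cL^*}y_S\le(1-\delta)\valu(\cI)$, where $\cL^*=\{S\in\cL: S\cap V(B)=\emptyset\}$. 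The sets in $\cL^*$ are exactly the laminar sets not yet ``pinned down'' by the backbone; since $B$ visits every maximal non-singleton set, each $S\in\cL^*$ is strictly contained in some set visited by $B$. I would then contract all \emph{maximal} sets of $\cL^*$ (call the contracted instance $\cI'$), turning $(\cI',B)$ into a vertebrate pair: after contracting the maximal sets of $\cL^*$, every non-singleton laminar set that survives is visited by $B$, because any surviving non-singleton set either was visited by $B$ originally or contains a contracted vertex coming from a set $B$ did visit. Apply $\cA$ to $(\cI',B)$ to get a tour of weight at most $\kappa\valu(\cI')+\eta\lb_{\cI'}(\bar B)+\cost_{\cI'}(B)$, lift it back through all the contractions (Lemma~\ref{lem:lifting}), and recursively solve $\cI[R]$ for each maximal $R\in\cL^*$ to make $R$ contractible via Lemma~\ref{lem:get_contractible}; combine everything into a single tour of $\cI$.

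\textbf{Bounding the cost.} The weight of the returned tour decomposes as: (i) the lifted tour of $\cI'$, (ii) the contribution of the recursive calls on the sets $\cI[R]$, $R\in\cL^*$ maximal, and (iii) the cost of $B$ itself, which is already absorbed in (i) via the $\cost_{\cI'}(B)$ term. For (i), I would use $\valu(\cI')\le\valu(\cI)$ (Fact~\ref{fact:no_increase}), $\cost_{\cI'}(B)\le\cost_\cI(B)\le(\nw+3)\valu(\cI)$ (Lemma~\ref{lem:skeleton}), and crucially $\lb_{\cI'}(\bar B)\le 2\sum_{S\in\cL^*}y_S\le(1-\delta)\valu(\cI)$ --- this is the point where the quasi-backbone guarantee pays off, since the lower-bound of vertices \emph{outside} the backbone in $\cI'$ comes precisely from the $y$-values of the (contracted) sets in $\cL^*$. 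For (ii), each recursive call on $\cI[R]$ is an irreducible instance (Fact~\ref{fact:irreducibleinv}, since $\cI$ is irreducible so all its subsets are), so by the induction hypothesis it returns a tour of weight at most $\rho\cdot\valu(\cI[R])=2\rho\valu(R)$ (Fact~\ref{fact:value_of_induced_instance}); Lemma~\ref{lem:get_contractible} then gives a subtour $F_R$ of $\cI$ with $\cost_\cI(F_R)\le 2\rho\valu(R)$, and summing over the disjoint maximal $R\in\cL^*$ gives $\sum_R 2\rho\valu(R)=2\rho\sum_R\sum_{Q\subsetneq R}y_Q\le \rho\cdot 2\sum_{S\in\cL^*}y_S\le\rho(1-\delta)\valu(\cI)$. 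Putting the three pieces together, the total is at most $[\kappa+\eta(1-\delta)+(\nw+3)]\valu(\cI)+\rho(1-\delta)\valu(\cI)$, and I would then verify that the choice $\rho=\nw\frac{\kappa+\eta(1-\delta)+\nw+3}{2\delta-1}$ makes this $\le\rho\valu(\cI)$ --- i.e. check $\kappa+\eta(1-\delta)+\nw+3+\rho(1-\delta)\le\rho$, which rearranges to $\rho(2\delta-1)\ge\nw(\kappa+\eta(1-\delta)+\nw+3)$; here the extra factor $\nw$ and the $2\delta-1$ in the denominator (recall $\delta=\deltaval>1/2$) are needed to also cover the singleton-instance approximation used inside Lemma~\ref{lem:skeleton}. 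Actually I expect the $\nw$ factor enters because when $\cL^*=\emptyset$ (no recursion) one still must solve the contracted singleton instance, so the base case itself costs $\nw\valu(\cI)$-ish; I'd structure the induction so this is the true base case and the recursive inequality closes.

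\textbf{Main obstacle.} The delicate point is verifying that after contracting the maximal sets of $\cL^*$ the pair $(\cI',B)$ is genuinely a vertebrate pair, and simultaneously that $B$ is still a valid subtour of $\cI'$ with the claimed cost and that $\lb_{\cI'}(\bar B)$ is controlled. One has to argue carefully that no non-singleton set of $\cL'$ escapes $B$: a non-singleton set of $\cL'$ is either (a) a set of $\cL$ disjoint from all contracted sets --- but then it is not in $\cL^*$ (else it would be contained in a maximal member of $\cL^*$ and contracted or equal to one, hence a singleton in $\cL'$), so $B$ visits it; or (b) a set containing one of the new singleton vertices $r$ coming from contracting some maximal $R\in\cL^*$ --- then it strictly contains $R$, hence was a proper superset of a set in $\cL^*$ in $\cI$, and since $B$ visits every maximal non-singleton set and $R$ is properly inside one such, $B$ visits this superset too. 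I also need to double-check the edge cases: that contracting members of $\cL^*$ doesn't disturb the dual values on sets outside (it doesn't, by Definition~\ref{def:contraction}, except it could in principle raise $y'$ on the new singletons, but those new singletons then have $y'$-values that should \emph{not} count toward $\lb_{\cI'}(\bar B)$ beyond what Lemma~\ref{lem:get_contractible} already pays for --- this is exactly why one induces on $R$ with $y'_{\bar s}=\valu(R)/2$ rather than just taking the contracted $y'$-value). Reconciling the bookkeeping between ``the $y'$-value the contraction assigns to the new vertex'' and ``the recursive cost $2\rho\valu(R)$ paid by solving $\cI[R]$'' is where I'd spend the most care; everything else is routine arithmetic.
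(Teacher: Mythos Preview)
Your approach is exactly the paper's: build the quasi-backbone $B$ via Lemma~\ref{lem:skeleton}, contract the maximal (non-singleton) sets of $\cL^*$ to obtain a vertebrate pair $(\cI',B)$, call $\cA$, lift back, and recurse on the induced instances $\cI[R]$ for the contracted sets. Your case analysis verifying that $(\cI',B)$ is a vertebrate pair, and your bound $\lb_{\cI'}(\bar B)\le 2\sum_{S\in\cL^*}y_S\le(1-\delta)\valu(\cI)$, are both correct and match the paper.

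Two corrections to your arithmetic. First, you dropped a factor of $2$ in the recursive cost: by definition $\valu(R)=2\sum_{Q\in\cL:\,Q\subsetneq R}y_Q$, so
\[
\sum_{R}2\rho\,\valu(R)=4\rho\sum_{R}\sum_{Q\subsetneq R}y_Q\le 2\rho\cdot 2\sum_{S\in\cL^*}y_S\le 2\rho(1-\delta)\valu(\cI),
\]
not $\rho(1-\delta)\valu(\cI)$. With this correction the final inequality becomes $\kappa+\eta(1-\delta)+\nw+3+2\rho(1-\delta)\le\rho$, which rearranges to $\rho=(\kappa+\eta(1-\delta)+\nw+3)/(2\delta-1)$ --- this is the source of the $2\delta-1$, not anything to do with a singleton base case.

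Second, the extra $\nw$ prefactor you are trying to justify is simply a typo in the theorem statement: the paper's own proof sets $\rho=(\kappa+\eta(1-\delta)+\nw+3)/(2\delta-1)$, and the numerics in Section~\ref{sec:completepuzzle} confirm this (yielding $\rho<55.61$, whereas an additional factor of $\nw\approx 18$ would give $\approx 1000$). So there is no hidden ``base-case'' $\nw$ to account for; once you fix the factor of $2$ above, your calculation closes exactly as in the paper.
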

The essence of the theorem is that if we have an algorithm for
vertebrate pairs where the approximation factor is bounded by a
constant factor of the value of the instance and the weight of the
backbone, then this translates to a constant-factor approximation for
ATSP in arbitrary irreducible instances (with no backbone given).
The proof of this theorem
is somewhat similar to that of Theorem~\ref{thm:reduction_to_irreducible},
in that
the algorithm presented here
will call itself recursively on smaller instances,
as well as invoking the black-box algorithm $\cA$
(once per recursive call). The complicated dependence on the
parameters is due to the recursive arguments. We will optimize the
parameters $\kappa$ and $\eta$ in
Section~\ref{sec:completepuzzle}. 

\begin{proof}
\newcommand{\cLsmax}{\cL^*_\textrm{max}}
We briefly discuss the intuition first.
Consider an irreducible instance $\cI = (G, \cL, \xs, y)$. By Lemma~\ref{lem:skeleton}, we can find a quasi-backbone $B$ -- a subtour such that 
$2\sum_{S \in \cL^*} y_S \le (1 - \delta) \valu(\cI)$, where as before $\cL^* = \{S \in \cL: S\cap V(B) = \emptyset\}$ contains the laminar sets that the quasi-backbone does \emph{not} visit. 
This is a small fraction of the entire optimum $\valu(\cI)$,
so we can afford to run an expensive procedure (say, a $2 \rho$-approximation)
on the unvisited sets (using recursive calls)
so as to make them contractible.
Once we contract all these sets,
 $B$ will become  a backbone in the contracted instance and we will have thus obtained a vertebrate pair,
 on which the algorithm $\cA$ can be applied.\footnote{
   Note that we never actually find a backbone of the original, uncontracted instance.
 }
See Figure~\ref{fig:vertebrate} for an illustration.

We now formally describe the $\rho$-approximation algorithm $\cA_{\textrm{irr}}$ for irreducible instances. Given an irreducible instance $\cI = (G, \cL, \xs, y)$,  it proceeds as follows:
\begin{enumerate}
  \item   Invoke  the algorithm of Lemma~\ref{lem:skeleton}
to obtain a quasi-backbone $B$
with $\cost_{\cI}(B) \le (\nw + 3) \valu(\cI)$.
Denote by $\cLsmax$ the family of maximal
(inclusion-wise) \emph{non-singleton}
sets in $\cL^*  = \{ S \in \cL : S \cap V(B) = \emptyset \}$.
(For example, in Figure~\ref{fig:vertebrate}, $R_1$ and $R_2$ are two such sets.)
  \item For each $S\in \cLsmax$, recursively call $\cA_{\textrm{irr}}$ to find a tour $T_S$ in the instance $\cI[S]$ (which is irreducible by Fact~\ref{fact:irreducibleinv}). Then use $T_S$ and the algorithm of Lemma~\ref{lem:get_contractible} to find  a collection $F_S$ of subtours such that $S$ is contractible with respect to $F_S$ and $\cost_\cI(F_S) \leq \cost_{\cI[S]} (T_S)$. 
  \item Let $\cI'=(G',\cL',x',y')$ be the instance obtained
from $\cI$ by contracting all the maximal sets $S \in \cLsmax$; let
$V'$ denote the contracted ground set. 
We have that $(\cI', B)$ is a vertebrate pair by construction:
we have contracted all \bll{sets in $\cL$} that were not visited by $B$
into single vertices, and so  $B$ is  a backbone of $\cI'$. 
Note that
\[
\lb_{\cI'}(\bar B)=2\sum_{v\in V'\setminus V(B)}y'_v\le 2\sum_{S \in \cL^*} y_S
\le (1 - \delta) \valu(\cI).
\]
The first inequality follows by the definition of contraction, using
Fact~\ref{fact:bound_on_dU}.
We can
invoke the algorithm $\cA$ on the vertebrate pair $(\cI', B)$; by the hypothesis
of the theorem, it
returns a tour $T'$ of $\cI'$ such that 
\begin{equation}\label{eq:w-T}
w_{\cI'}(T') \leq \kappa \valu(\cI') + \eta (1 - \delta) \valu(\cI)
+w_{\cI'}(B).
\end{equation}
 \item Finally, return the tour $T$ consisting of the lift $F$ of $T'$ to $\cI$  together with $\bigcup_{S\in \cLsmax} F_S$. (See the center and right parts of Figure~\ref{fig:vertebrate} for an illustration.)
\end{enumerate}
We remark that $T$ is indeed a tour of $\cI$ since all sets $S\in \cLsmax$ are contractible with respect to $\bigcup_{S\in \cLsmax} F_S$. 

Having described the algorithm, it remains to show that $\cA_{\textrm{irr}}$ runs in polynomial time and that it has an approximation guarantee of $\rho$. 

For the former, we bound the total number of recursive calls that $\cA_{\textrm{irr}}$ makes. We claim that the total number of recursive calls on input $\cI = (G, \cL, x, y)$ is at most the cardinality of $\cL_{\geq 2} = \{S\in \cL: |S| \geq 2\}$. The proof is by induction on $|\cL_{\geq 2}|$. For the base case, i.e.,  when $|\cL_{\geq 2}| = 0$, there are no recursive calls since there are no non-singleton sets in $\cL^* \subseteq \cL$ and so $\cLsmax = \emptyset$. For the inductive step, suppose that $\cLsmax = \{S_1, S_2, \ldots, S_\ell\} \subseteq \cL^*$  and so there are $\ell$ recursive calls in this iteration -- on the instances $\cI[S_1], \cI[S_2], \ldots, \cI[S_\ell]$.  If we let $\cL^i_{\geq 2}$ denote the non-singleton laminar sets of $\cI[S_i]$ then, by the definition of inducing on a tight set, for every $R \in \cL^i_{\geq 2}$ we have $R \subsetneq S_i$ and $R\in \cL_{\geq 2}$. It follows by the induction hypothesis that  the total number of recursive calls that $\cA_{\textrm{irr}}$ makes is
\begin{align*}
  \ell + \sum_{i=1}^\ell |\cL_{\geq 2}^i| \leq \ell + |\cL_{\geq 2}| - \ell = |\cL_{\geq 2}| \,,
\end{align*}
where the inequality holds because the sets $\cL^i_{\geq 2}$ are disjoint and
$\cL^1_{\geq 2} \cup \cL^2_{\geq 2} \cup \cdots \cup \cL^\ell_{\geq 2}
\subseteq \cL_{\geq 2} \setminus \{S_1, S_2, \ldots, S_\ell\}$.  Hence, the
total number of recursive calls $\cA_{\textrm{irr}}$ makes is $|\cL_{\geq 2}| \le |\cL|$,
which is at most linear in $|V|$.
The fact that $\cA_{\textrm{irr}}$ runs in polynomial time now follows because each call runs in polynomial time. Indeed, the algorithm of Lemma~\ref{lem:skeleton},
the algorithm of Lemma~\ref{lem:get_contractible}, and $\cA$ all run in polynomial time.

We now complete the proof of the theorem by showing that
$\cA_{\textrm{irr}}$ is a $\rho$-approximation algorithm. From
\eqref{eq:w-T} and by  Lemma~\ref{lem:lifting} we have that the weight $w_{\cI}(F)$ of the lift $F$ of $T'$ is at most
  \[\cost_{\cI'}(T') \le \kappa \valu(\cI') + \eta (1 - \delta) \valu(\cI)
+w_{\cI'}(B) \le 
(\kappa +\eta(1-\delta)+\nw+3)\valu(\cI),
  \]
where the second inequality follows by Fact~\ref{fact:no_increase}
and since $w_{\cI'}(B) = w_{\cI}(B)$
($\cI'$ arises by contracting only sets not visited by $B$, which preserves the weight of $B$)
and $w_{\cI}(B) \leq (\nw + 3) \valu(\cI)$.

Now, to show that $\cost(T) = \cost(F) + \cost\left( \bigcup_{S\in \cLsmax} F_S \right)\leq \rho \valu(\cI)$,  we proceed by induction on the total number of recursive calls. In the base case, when no recursive calls are made, we have $\cost(T) = \cost(F) \leq  \cost_{I'}(T') \leq (\kappa +\eta(1-\delta)+\nw+3) \valu(\cI) \leq \rho \valu(\cI)$. For the inductive step, the induction hypothesis yields that for each $S\in \cLsmax$ we have
\[ \cost(F_S) \leq \cost_{\cI[S]}(T_S) \leq \rho \valu(\cI[S]) = 2\rho\valu(S) \, , \]
where the equality is by Fact~\ref{fact:value_of_induced_instance}. Hence
\begin{align*}
  \cost\left( \bigcup_{S\in \cLsmax} F_S \right) &= \sum_{S \in \cLsmax} \cost(F_S) \le \sum_{S \in \cLsmax} 2 \rho \valu(S)  \\
&= \sum_{S \in \cLsmax} 2 \rho \sum_{R \in \cL^* : \ R \subsetneq S} 2 y_R
\le 2 \rho \sum_{R \in \cL^*} 2 y_R \le 2 \rho (1 - \delta) \valu(\cI)\,.
\end{align*}
\bll{The second equality uses that $\valu(S) = \sum_{R \in \cL : \ R \subsetneq S} 2 y_R = \sum_{R \in \cL^* : \ R \subsetneq S} 2 y_R$ for all $S \in \cLsmax$:
as $S \cap V(B) = \emptyset$, any $R \in \cL$ with $R \subsetneq S$ also has $R \cap V(B) = \emptyset$ and thus $R \in \cL^*$.}
The last inequality holds because $B$ is a quasi-backbone of $\cI$ (see Definition~\ref{def:quasiskeleton}).  Summing up the weight of the lift $F$ of $T'$ and of $\bigcup_{S\in \cLsmax} F_S$ we get
\begin{align*}
  \cost(T) \leq \left(\kappa +\eta(1-\delta)+\nw+3 + 2\rho(1-\delta) \right) \valu(\cI) = \rho \valu(\cI)\,,
\end{align*}
by the selection of  $\rho$ to equal
$ (\kappa +\eta(1-\delta)+\nw+3)/(2\delta-1)$.
This concludes the inductive step and the proof of the theorem.
\end{proof}

\part{Solving \EPC}
\label{part:solvingvert}
In this part we solve \EPC{} on vertebrate pairs. \bl{Recall the definition of \EPC{} given in Section~\ref{sec:LCdef}: we are given an instance $\cI = (G, \cL, x, y)$ with 
a subtour $B$ in $G$,
and a partition $(V_1, V_2, \ldots, V_k)$
	of $V\setminus V(B)$, where each $V_i$ is strongly connected. Our goal is to find a collection $F$ of  subtours   of $E$ crossing each of the sets $V_i$, and satisfying certain ``local'' and ``global'' cost bounds.}
The main
technical concept in the argument
is that of \emph{witness flows}.
On a high level,
we want
every subtour $T$ in
our solution to \EPC{}
to be forced to intersect the backbone $B$
if $T$ crosses a non-singleton set in the laminar family $\cL$.
Every subtour that does not
cross any such set locally
behaves  as in a singleton instance with regard to its cost,
and it is easy to account for those subtours.
On the other hand, we are able to take care of the cost of all subtours that
do cross some such set (and thus also intersect $B$), together with $B$,
using a global cost argument.
The witness flow is a tool
that allows us to enforce this crucial property
in our solution to \EPC{}.
It is inspired by a general method of ensuring connectivity
in integer/linear programming formulations for graph problems,
which requires the existence of a flow
(supported on the LP solution)
between the pairs of vertices
that should be connected.

We note that witness flows used in this paper can be seen as a more
concise variant of a previous argument using split graphs in the
conference version \cite{SvenssonTV18}. Split graphs have been first
used in a similar role in \cite{SvenssonTV16}.

By the reductions in  the previous parts, this is sufficient for obtaining a constant-factor approximation algorithm. We combine all the ingredients and calculate the obtained ratio in Section~\ref{sec:completepuzzle}.

\section{Algorithm for Vertebrate Pairs}
\label{sec:solvingvertebrate}

In this section we consider a vertebrate pair $(\mathcal{I},B)$ and
prove the following theorem and corollary. This provides the algorithm required in Theorem~\ref{thm:fromSimpletoGeneral}.
\begin{theorem}
There exists a $(4,2\valu(\cI)+\lb_\cI(\bar B))$-light algorithm for \EPC{} for
  vertebrate pairs
$(\mathcal{I},B)$.
  \label{thm:finish_skeleton}
\end{theorem}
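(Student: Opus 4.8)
The plan is to produce the required collection $F$ of subtours by rounding the LP circulation $x$ of the vertebrate pair to an integral edge multiset, while enforcing three conditions at once: that $F$ crosses each of the cuts $\delta^+(V_1),\dots,\delta^+(V_k)$; that the out-degrees of $F$ at singleton vertices with positive dual weight stay bounded; and that every connected component of $F$ that crosses a non-singleton set of $\cL$ contains a vertex of $V(B)$. The crossing of the sets $V_i$ is handled exactly as in the proof of \cref{thm:lcapprox}: for each $V_i$ we introduce an auxiliary vertex $a_i$ and reroute one unit of $x$-flow through it, so that any integral circulation carrying one unit through each $a_i$ crosses every $V_i$. The third condition is the purpose of the \emph{witness flow}: alongside the integral edge-multiplicity vector $f$ (from which $F$ is read off) we also require an auxiliary flow $g$ with $0\le g\le f$ whose feasibility certifies exactly this connectivity-to-$B$ property; equivalently, a subtour of $F$ disjoint from $V(B)$ is then forbidden to cross any non-singleton set of $\cL$, and such a subtour locally behaves like one in a singleton instance.

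First I would formulate an auxiliary network LP in the variables $(f,g)$: flow conservation for $f$, unit throughput through each $a_i$, the box constraints $0\le g\le f$, the routing demands that force the witness-flow property, and integral out-degree upper bounds at the singleton vertices. Since all of these are flow-type (network-matrix) constraints, the polytope is integral, so it suffices to exhibit one fractional feasible point with $O(1)$ value on every edge and with bounded singleton out-degrees, and then to extract a vertex of the polytope in polynomial time; this vertex is an integral $f$ together with an integral witness flow $g$. The fractional point is built from $x$ itself: $x$ is a circulation with $x(\delta(S))=2$ for every tight set, and the subtour-elimination constraints give $x(\delta^+(A))\ge 1$ for every proper nonempty $A$; since $(\cI,B)$ is a vertebrate pair, $B$ visits every non-singleton $S\in\cL$, so $x$ carries a unit of flow from each such $S$ to $V(B)$ (every cut separating them has $x$-value at least $1$). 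Superimposing these witness flows with an appropriate multiple of $x$ and with the rerouting through the vertices $a_i$ produces a feasible $(f,g)$; balancing the multiplicities so that $g\le f$ holds together with the desired out-degree bounds is the technical heart of the construction.

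With $F$ and its witness flow in hand, the lightness analysis splits into the two promised cases. If $T$ is a subtour in $F$ with $V(T)\cap V(B)=\emptyset$, then by the witness-flow property $T$ crosses no non-singleton set of $\cL$, so every edge $(u,v)\in T$ has $w_\cI(u,v)=y_u+y_v$ and hence
\[
w_\cI(T)=\sum_{e\in T}w_\cI(e)=\sum_{v\in V(T)}|\delta_T(v)|\,y_v=2\sum_{v\in V(T)}|\delta^+_T(v)|\,y_v\le 8\sum_{v\in V(T)}y_v=4\,\lb(T),
\]
using the out-degree bound $|\delta^+_T(v)|\le 4$ at singletons $v$ with $y_v>0$ that the rounding guarantees for subtours avoiding $V(B)$. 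For the collection $F_B$ of subtours meeting $B$, one bounds $w_\cI(F_B)=\sum_{S\in\cL}y_S\,|\delta_{F_B}(S)|$ by a global counting argument: the throughput and out-degree bounds of the rounding, together with $x(\delta(S))=2$ for every tight set and the fact that $B$ visits every non-singleton set, yield a bound on $|\delta_{F_B}(S)|$ for each $S\in\cL$; summing these bounds weighted by $y_S$ gives $w_\cI(F_B)\le 2\valu(\cI)+\lb_\cI(\bar B)$.

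The main obstacle is the second step. One must set up the witness-flow LP so that it is genuinely integral and so that the fractional solution derived from $x$ satisfies all demands simultaneously; in particular, the laminar nesting of $\cL$ has to be handled with care, because a single subtour of $F$ may cross several nested non-singleton sets and the witness flow must connect all of them to $V(B)$ without over-using any edge, and the rerouting through the auxiliary vertices $a_i$ must be coordinated with the witness-flow demands. Once the rounding is in place, the weight accounting is a routine charging argument of the kind already carried out in \cref{sec:LocalToGlobal}.
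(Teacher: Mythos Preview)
Your high-level plan matches the paper's: introduce auxiliary vertices $a_i$, round a circulation together with an auxiliary ``witness flow'' that certifies connectivity to $B$, and then split the lightness analysis into subtours that miss $B$ (locally singleton-like) and the rest (bounded by the total weight of $F$). However, several of the concrete choices you make do not work as stated.

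First, you attach the auxiliary vertices directly to the partition classes $V_i$. The paper does not: it first passes to subsets $U_i\subseteq V_i$, obtained by intersecting $V_i$ with a minimal set in $\cL_{\ge 2}\cup\{V\}$ and taking a source strongly-connected component. This guarantees that each $U_i$ is either contained in or disjoint from every non-singleton laminar set. This matters in two places. After rounding, you must repair Eulerianness by adding a walk $P_i$ inside the $i$-th set; if that set is $V_i$, then $P_i$ may cross sets in $\cL_{\ge 2}$ and its cost is not controlled by $\lb_\cI(\bar B)$. And the connectivity argument (your ``$|\delta^+_F(V_i)|\ge 1$'') is not immediate from $|\delta^-_F(U_i)|\ge 1$: the paper needs the witness-flow property to promote a crossing of $U_i$ into a crossing of $V_i$ (Claim in the proof of the theorem).

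Second, your description of the witness flow as ``superimposing, for each non-singleton $S\in\cL$, a unit flow from $S$ to $V(B)$'' is the wrong primitive. There are up to linearly many nested sets, and stacking one flow per set neither stays bounded by a constant multiple of $x$ nor survives a single integral rounding. The paper's mechanism is different and is the crux of the argument: it assigns each vertex a \emph{level} (the index of the smallest set in $\cL_{\ge 2}\cup\{V\}$ containing it), classifies edges as forward/backward/neutral, and requires a \emph{single} flow $f\le x$ that saturates all forward edges, avoids all backward edges, and has nonnegative divergence outside $V(B)$. One such $f$ handles all laminar sets simultaneously (Lemma on witness flows), and its existence for $x$ is proved by LP duality, not by stacking per-set flows.

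Third, ``the polytope is integral because the constraints are flow-type'' is not enough for a \emph{coupled} pair $(z,f)$ with $0\le f\le z$, simultaneous degree bounds on both $f$ and $z-f$, and divergence constraints on $f$. The paper constructs an explicit network matrix on a four-layer tree (Lemma~\ref{lem:TU}) to get total unimodularity for the joint system; this is what lets you round $2x$ and its witness flow together and is also where the degree bound $4$ (via $\lceil 2p\rceil+\lceil 2q\rceil\le 3$ plus one from the repair path) comes from. Finally, the bound on $w_\cI(F_B)$ is not a per-set crossing count; it is simply $w_\cI(F_B)\le w_\cI(F)\le 2\valu(\cI)+\lb_\cI(\bar B)$, where the first term is the cost of the rounded $2x'$ and the second is the total cost of the repair walks $P_i$, which are cheap precisely because they live inside the $U_i$'s.
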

(Refer to Definition~\ref{def:light_algorithm} of a light algorithm on page~\pageref{def:light_algorithm}.)
Combined with Theorem~\ref{thm:LocalToGlobal}, we immediately obtain
the following.
\begin{corollary}  \label{cor:finish_skeleton}
For every $\epsilon > 0$ there is a polynomial-time algorithm that,
given a vertebrate pair $(\mathcal{I},B)$,
returns a  tour $T$ of $\mathcal{I}$  with $w(T) \leq 2 \valu(\cI)+ (37+36\varepsilon)
\lb_\cI(\bar B)
+ w(B)$.
\end{corollary}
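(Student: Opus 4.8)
The plan is to derive Corollary~\ref{cor:finish_skeleton} as a direct consequence of Theorem~\ref{thm:finish_skeleton} and the reduction Theorem~\ref{thm:LocalToGlobal}. Theorem~\ref{thm:finish_skeleton} provides a $(4, 2\valu(\cI) + \lb_\cI(\bar B))$-light algorithm $\cA$ for \EPC{} on the vertebrate pair $(\cI, B)$. That is, in the notation of Theorem~\ref{thm:LocalToGlobal}, we have $\alpha = 4$ and $\wD = 2\valu(\cI) + \lb_\cI(\bar B)$. So I would simply instantiate Theorem~\ref{thm:LocalToGlobal} with this algorithm $\cA$: for any $\varepsilon > 0$, it produces in polynomial time a tour of $\cI$ of weight at most
\[
9(1+\varepsilon)\alpha\,\lb_\cI(\bar B) + \wD + w_\cI(B)
= 36(1+\varepsilon)\,\lb_\cI(\bar B) + 2\valu(\cI) + \lb_\cI(\bar B) + w(B).
\]

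The only remaining step is arithmetic bookkeeping: collect the two terms proportional to $\lb_\cI(\bar B)$. We have $36(1+\varepsilon)\lb_\cI(\bar B) + \lb_\cI(\bar B) = \big(37 + 36\varepsilon\big)\lb_\cI(\bar B)$, which gives exactly the claimed bound $w(T) \le 2\valu(\cI) + (37 + 36\varepsilon)\lb_\cI(\bar B) + w(B)$. The running time is polynomial in $n$, in $1/\varepsilon$, and in the running time of $\cA$; since $\cA$ runs in polynomial time by Theorem~\ref{thm:finish_skeleton}, the overall algorithm is polynomial-time for each fixed $\varepsilon$, matching the statement of the corollary. One should also note that the case $V(B) = V$ is trivially handled (then $B$ is already a tour), and otherwise Theorem~\ref{thm:LocalToGlobal} applies as stated.

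There is essentially no obstacle here — the corollary is a packaging statement that merely substitutes the concrete parameters $\alpha = 4$ and $\wD = 2\valu(\cI) + \lb_\cI(\bar B)$ from Theorem~\ref{thm:finish_skeleton} into the generic guarantee of Theorem~\ref{thm:LocalToGlobal} and simplifies. All the genuine work lies in proving Theorem~\ref{thm:finish_skeleton} itself (the witness-flow construction for \EPC{} on vertebrate pairs) and Theorem~\ref{thm:LocalToGlobal} (the local-to-global reduction), both of which we may invoke as given. Thus the proof is a two-line derivation, and I would write it as such.
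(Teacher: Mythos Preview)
Your proposal is correct and matches the paper's approach exactly: the paper states the corollary as an immediate consequence of Theorem~\ref{thm:finish_skeleton} combined with Theorem~\ref{thm:LocalToGlobal}, without even spelling out the arithmetic you carried out. Your substitution of $\alpha=4$, $\wD=2\valu(\cI)+\lb_\cI(\bar B)$ into the bound $9(1+\varepsilon)\alpha\,\lb_\cI(\bar B)+\wD+w(B)$ is precisely what is intended.
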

Throughout this section we will assume that $B \ne \emptyset$.
  In the special case when $B = \emptyset$, it must be the case that
  $\cL_{\geq 2} = \emptyset$ and thus the instance is singleton; in that case, 
  we simply apply the strictly better $(2,0)$-light algorithm of
  Theorem~\ref{thm:lcapprox}.

\medskip

We now formulate our main technical lemma. Let $\cL_{\geq 2}$ denote the family of
non-singleton sets in 
$\cL$.

\NewDocumentCommand{\statementoftechnicallemma}{s}{There is a polynomial-time algorithm that solves the following problem.
  Let $(\cI,B)$ be a vertebrate pair,
  and let $U_1, \ldots, U_{\ell} \subseteq V\setminus V(B)$ be disjoint non-empty vertex sets  such
  that the subgraphs $G[U_1], \ldots, G[U_{\ell}]$ are strongly connected and for every $S\in \cL_{\geq 2}$ and $i = 1, \ldots, \ell$ we have either
  $U_i \cap S = \emptyset$ or $U_i \subseteq S$.
  Then the algorithm
  finds a collection of subtours $F\subseteq E$ such that:
  \begin{enumerate}[label=\emph{(\alph*)}]\itemsep0mm
      \item \IfBooleanF{#1}{\label{lc:cost}} $w(F) \leq 2
        \valu(\cI)+\lb_{\cI}(\bar B)$, 
      \item \IfBooleanF{#1}{\label{lc:cross}} $|\delta^-_F(U_i) | \geq 1$ for every $i
        = 1,\ldots, \ell$, 
      \item \IfBooleanF{#1}{\label{lc:degree}} $|\delta^-_F(v)| \leq
        4$  whenever  $x(\delta^-(v))=1$,
   \item  \IfBooleanF{#1}{\label{lc:nobadcycle}} any subtour in $F$ that crosses a 
     set in $\cL_{\geq 2}$ visits a vertex of the backbone.
 \end{enumerate}
}
\begin{lemma}
 \statementoftechnicallemma
 \label{lem:mainLCATSP}
 \end{lemma}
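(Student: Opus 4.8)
The plan is to realize $F$ by rounding a single network-flow-type relaxation, following the ``witness flow'' paradigm described in the introduction to this part.

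\textbf{Auxiliary graph.} First pass to a digraph $\tilde G$ obtained from $G$ by contracting $V(B)$ to a single vertex $b$ (recording $B$, so that a visit to $b$ can later be lifted to a short walk inside $V(B)$ using the connectivity of $B$) and, for each $U_i$, introducing an auxiliary vertex $a_i$ through which we reroute exactly one unit of $x$, exactly as in the proof of \cref{thm:lcapprox}. This yields a circulation $x'$ on $\tilde G$ with $x'(\delta^\pm(a_i))=1$ for all $i$ and $x'(\delta^-(v))\le 1$ for every tight singleton $v\neq b$. The key structural point, which is what makes a backbone useful, is that after the contraction every non-singleton set of $\cL$ contains $b$; call an edge of $\tilde G$ leaving or entering one of these sets a \emph{crossing edge}.

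\textbf{Witness flow and rounding.} Now set up the following integral feasibility problem on $\tilde G$: find a circulation $z\ge 0$ together with a flow $g$ with $0\le g\le z$ (the \emph{witness flow}) such that $b$ is the only vertex at which $g$ has nonzero excess, $g_e=z_e$ on every crossing edge, $z(\delta^\pm(a_i))=1$, and $z(\delta^-(v))\le 4$ for every tight singleton $v$. The relaxation is nonempty: taking $z=x'$, one builds a fractional witness flow $g^*$ by routing the flow carried by the crossing edges to $b$ while staying under $x'$; this is possible because every non-singleton (contracted) set contains $b$, and inside such a set the path-like structure of tight sets (\cref{lem:path_tight_set}: every path traversing a tight set meets all of its strongly connected components) together with the vertebrate property that $B$ reaches into the set lets one connect the incoming and outgoing crossing edges to $b$ within $x'$. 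The region just described has a flow-type structure --- a circulation together with a flow forced to agree with it on a fixed edge subset can be encoded as two coupled flows in a layered auxiliary network --- so an integral $(z,g)$ obeying the same degree bounds can be extracted in polynomial time (e.g.\ by iterative rounding).

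\textbf{Back to $G$ and connectivity repair.} Undoing the two modifications turns $z$ into an edge multiset of $G$ whose in/out imbalance is confined to $O(1)$ vertices inside each $U_i$ and to a bounded number of visits to $V(B)$; repair it as in \cref{thm:lcapprox} by inserting strongly connected paths inside each $U_i$ (possible since $G[U_i]$ is strongly connected) and a bounded number of walks inside $V(B)$ along $B$. Call the resulting collection of subtours $F$. The hypothesis that each $U_i$ is disjoint from or contained in every $S\in\cL_{\ge 2}$ is used here: no non-singleton set lies strictly inside any $U_i$, so the repair paths added inside $U_i$ cross no non-singleton set (which protects property~(d)), and the $U_i$ are pairwise disjoint, so these paths are vertex-disjoint and the degree bound survives.

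\textbf{Verification, and the main obstacle.} Property~(b) holds because the unit of flow sent through $a_i$ survives rounding; property~(c) follows from the capacities $z(\delta^-(v))\le 4$ plus the bounded repair; property~(d) is exactly what the witness flow buys: if a subtour $T$ of $F$ uses a crossing edge $e$ then $g_e=z_e\ge 1$, and following $g$ forward from $e$ --- all of whose edges lie in $T$ --- reaches $b$, i.e.\ $T$ meets $V(B)$. Property~(a) then needs a more delicate, construction-specific accounting, splitting $w(F)$ into the part $2\valu(\cI)$ charged to the global structure (the backbone-touching subtours, carried by the LP-bounded circulation $z$ and the witness flow) and the part $\lb_\cI(\bar B)$ charged locally to the singletons outside $V(B)$, using that every non-$B$ subtour crosses no non-singleton set by~(d). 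I expect the main obstacle to be the construction of the fractional witness flow $g^*$: one must route \emph{all} crossing flow to the contracted backbone within the budget given by $x'$, which is precisely where the strongly-connected-component path structure of tight sets has to be interlocked with the assumption that $B$ visits every non-singleton set; a secondary difficulty is keeping the accounting in~(a) tight enough to land on the claimed constants together with the degree bound~$4$.
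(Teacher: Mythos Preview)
Your overall architecture---build a fractional witness flow for $x$, reroute through auxiliary vertices $a_i$, round the pair simultaneously, then repair inside each $U_i$---matches the paper. But the witness-flow formulation you wrote down does not yield property~(d). Requiring $g_e=z_e$ on \emph{every} crossing edge is too symmetric: then $g=z$ is itself feasible, and more generally, for any subtour $T$ of $z$ that crosses a set $S$, the net $g$-flow into $S\cap V(T)$ equals the net $z$-flow, which is zero (since $z$ is a circulation), so no contradiction arises from the excess condition at non-$b$ vertices. The paper instead orients the crossings via a level function $\level(v)=\min\{i:v\in S_i\}$ on an enumeration $S_1,\ldots$ of $\cL_{\ge 2}\cup\{V\}$: an edge $(u,v)$ is \emph{forward} if $\level(v)<\level(u)$ and \emph{backward} if $\level(u)<\level(v)$, and the witness flow satisfies $f_e=z_e$ on forward edges but $f_e=0$ on backward edges, together with $f(\delta^+(v))\ge f(\delta^-(v))$ for $v\notin V(B)$. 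Then for the minimal $S_i$ met by $T$, all edges of $T$ entering $S_i$ are forward and all edges leaving are backward, so $f$ has strictly positive net inflow into $S_i\cap V(T)$, forcing $V(B)\cap V(T)\neq\emptyset$. Your ``follow $g$ to $b$'' heuristic does not go through without this asymmetry.

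Second, the paper does \emph{not} contract $V(B)$. If you do, then lifting an integral circulation on $\tilde G$ back to $G$ leaves imbalances at vertices of $V(B)$, and repairing them by walks along $B$ costs on the order of $w(B)$, which is not in the budget $2\valu(\cI)+\lb_\cI(\bar B)$ of~(a). The paper simply works on $G$ and exempts $V(B)$ from the excess inequality; then the rounded $\bar z$ is already a circulation on $G$ except at two points inside each $U_i$, which are repaired for free inside $\cL_{\ge 2}$ since $U_i$ is laminar with $\cL_{\ge 2}$. On the points you flagged: the existence of the fractional witness flow for $x$ is established by LP duality (not a direct routing), and the simultaneous integral rounding of $(z,f)$ is done by exhibiting a single network matrix that is therefore totally unimodular. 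One more detail you are missing for the constant $4$ in~(c): the paper reroutes only $x(X_i^-)=\tfrac12$ through each $a_i$ and applies the rounding to $z=2x'$; the degree bound then comes from $\lceil 2f'(\delta^-(v))\rceil+\lceil 2(x'-f')(\delta^-(v))\rceil\le 3$ plus one repair path.
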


Notice that the requirements
on the disjoint sets $U_1, \ldots, U_\ell$ imply that
$\cL_{\geq 2} \cup \{U_1, \ldots, U_\ell\}$ is a laminar family in which the sets $U_1 , \ldots, U_\ell$ are minimal (see the left part of Figure~\ref{fig:lemma}).
We also remark that property~\ref{lc:nobadcycle} will be important for analyzing the lightness of our tour.
Indeed, it will imply that any subtour in our solution $F^\star$  to \EPC{} that is disjoint from the backbone does not cross a set in $\cL_{\geq 2}$.  
Thus any edge $(u,v)$ in such a subtour will have weight equal to $y_u + y_v$. Intuitively, this
(almost) reduces the problem to the singleton case.

The proof will be given in
Section~\ref{sec:witness-flow}, using the concept of {\em witness flows} that allow us to enforce the crucial property~\ref{lc:nobadcycle}.

\begin{figure}[t]
  \centering
  \begin{tikzpicture}
\tikzset{arrow data/.style 2 args={decoration={markings,
         mark=at position #1 with \arrow{#2}},
         postaction=decorate}
      }

  \begin{scope}[scale=0.8]
    \draw[fill=gray!15!white, draw=gray!80!black] (-2.5, -0.5) ellipse (1.5cm and 2cm); \draw[fill=none,thick, densely dotted, draw=gray!80!black,rotate around={20:(-3.2,-1.5)}] (-3.2, -1.5) ellipse (0.40cm and 0.7cm); \begin{scope}
      \draw[fill=gray!40!white, draw=gray!80!black] (-2.5, 0) ellipse (0.75cm and 1cm); \end{scope}
    \begin{scope}[xshift=2.0cm,rotate=30]
      \draw[fill=gray!15!white, draw=gray!80!black] (0, 0) ellipse (1cm and 1.5cm); \draw[fill=gray!40!white, draw=gray!80!black,rotate=-5] (0, -.75) ellipse (0.5cm and 0.6cm); \draw[fill=none,thick, densely dotted, draw=gray!80!black,rotate=0] (-0.1, -.6) ellipse (0.25cm and 0.2cm); \draw[fill=none,thick, densely dotted, draw=gray!80!black,rotate=0] (-0.3, .45) ellipse (0.30cm and 0.5cm); \end{scope}
      \draw[fill=gray!15!white, draw=gray!80!black,rotate=-5] (0.5, -1.70) ellipse (0.5cm and 0.6cm); \draw[fill=none,thick, densely dotted, draw=gray!80!black,rotate=0] (0.4, -1.90) ellipse (0.30cm and 0.25cm); \node[ssssgvertex, minimum size = 0pt, fill=none, draw=none] (u) at (-2.4, 0.5) {};
    \node[ssssgvertex, minimum size = 0pt, fill=none, draw=none] (a) at (-2.7, 0.0) {};
    \node[ssssgvertex, minimum size = 0pt, fill=none, draw=none] (b) at (-2.4, -0.5) {};
    \node[ssssgvertex, minimum size = 0pt, fill=none, draw=none] (c) at (-3.0, -1.5) {};
    \node[ssssgvertex, minimum size = 0pt, fill=none, draw=none] (d) at (-2.0, -1.8) {};

    \node[ssssgvertex, minimum size = 0pt, fill=none, draw=none] (e) at (0.2, -2.0) {};
    \node[ssssgvertex, minimum size = 0pt, fill=none, draw=none] (f) at (0.35, -1.5) {};
    \node[ssssgvertex, minimum size = 0pt, fill=none, draw=none] (g) at (0.5, -2.0) {};

    \node[ssssgvertex, minimum size = 0pt, fill=none, draw=none] (h) at (1.6, 0.8) {};
    \node[ssssgvertex, minimum size = 0pt, fill=none, draw=none] (i) at (1.4, 0.4) {};

    \node[ssssgvertex, minimum size = 0pt, fill=none, draw=none] (j) at (2.3, 0.3) {};
    \node[ssssgvertex, minimum size = 0pt, fill=none, draw=none] (k) at (2.3, -0.4) {};
    \node[ssssgvertex, minimum size = 0pt, fill=none, draw=none] (l) at (2.5, -1.0) {};

    \draw [black, thick] plot [smooth cycle,tension=1.0] coordinates { (h) (u) (d)  (f) (l)  };
\end{scope}

  \begin{scope}[xshift=8.0cm,scale=0.8]
    \draw[fill=gray!15!white, draw=gray!80!black] (-2.5, -0.5) ellipse (1.8cm and 2cm); \draw[fill=gray!40!white, draw=gray!80!black,rotate around={-60:(-2.7,-1.5)}] (-2.7, -1.5) ellipse (0.40cm and 0.7cm); \begin{scope}
      \draw[fill=gray!40!white, draw=gray!80!black] (-2.5, 0) ellipse (1.25cm and 0.8cm) node[above left = 0cm and 0.3cm] {\small $S$};
    \end{scope}
    \begin{scope}
      \clip(-2.5, 0) ellipse (1.25cm and 0.8cm);
      \draw [pattern=north west lines, pattern color=gray!70!black, draw=none]  (-1,-0.5) ellipse (2cm and 1.5cm);
\end{scope}
    \draw [draw=black, dashed]  (-1,-0.5) ellipse (2cm and 1.5cm) node [above right = 0.2cm and 0.5cm] {\small $V_i$};\draw[fill=white,thick, densely dotted, draw=gray!80!black] (-2, 0) ellipse (0.45cm and 0.35cm) node {\small $U_i$}; \end{scope}

\end{tikzpicture}
   \caption{On the left, the ``dotted'' sets $U_1, \ldots, U_\ell$ of
    Lemma~\ref{lem:mainLCATSP} are depicted. \\ On the right, we show
    how the set $U_i$ is obtained by the algorithm for \EPC{} in the
    proof of Theorem~\ref{thm:finish_skeleton}: $V_i$ is intersected with a minimal non-singleton set $S$ to obtain $V_i'$ (the striped area). Then, $U_i$ is a source component in the decomposition of $V_i'$ into strongly connected components. This implies that there are no edges from $V_i' \setminus U_i$ to $U_i$ and so any edge in $\delta(V_i \setminus U_i, U_i)$ must come from outside of $V_i'$ and thus cross the tight set $S$.}
  \label{fig:lemma}
\end{figure}

\begin{proof}[Proof of Theorem~\ref{thm:finish_skeleton}]
Let $(V_1,V_2,\ldots,V_k)$ be the input partition of $V\setminus
V(B)$ in the  \EPC{} problem. We will apply
Lemma~\ref{lem:mainLCATSP} for a collection $(U_1,U_2,\ldots,U_k)$ of disjoint subsets
with $U_i\subseteq V_i$, defined as follows.
For $i=1,\ldots, k$,
  let $V'_i$ be the intersection of $V_i$ with a minimal set
  $S\in \cL_{\geq 2}\cup \{V\}$ with $S \cap V_i \neq \emptyset$.  Then  consider a decomposition of $V_i'$ into
strongly connected components (with respect to $G[V_i']$).  Let
$U_i \subseteq V_i'$ be the vertex set of a source component in this
decomposition. That is, there is no edge from $V_i'\setminus
U_i$ to $U_i$ in $G$ (see also the right part of Figure~\ref{fig:lemma}). By construction, the sets $U_1,\ldots, U_k$ satisfy the conditions of Lemma~\ref{lem:mainLCATSP};
in particular,  since $V_i'$ (and thus $U_i$) is a subset of the minimal set $S$ chosen above, it follows that $\cL_{\geq 2}\cup \{U_i\}$ is a laminar family, \bll{
 and there are no subsets $S' \subsetneq U_i$, $S' \in \cL_{\ge 2}$.
}
We let $F$ be the Eulerian multiset guaranteed by
Lemma~\ref{lem:mainLCATSP}. 

The rest of the proof is dedicated to showing that $F$ satisfies the
requirement of an $(4,2\valu(\cI)+\lb_\cI(\bar B))$-light  algorithm. Let us start with the connectivity requirement.

\begin{claim}
  \label{cl:lcsol}
  We have  $|\delta^-_{F}(V_i)| \geq 1$  for  $i=1, 2, \ldots, k$. 
\end{claim}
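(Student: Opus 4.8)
The goal is to show that the multiset $F$ produced via Lemma~\ref{lem:mainLCATSP} crosses every $V_i$ in the incoming direction, which is the connectivity requirement of \EPC{}. The plan is to reduce this to property~\ref{lc:cross} of Lemma~\ref{lem:mainLCATSP}, which only guarantees $|\delta^-_F(U_i)| \ge 1$ for the smaller sets $U_i \subseteq V_i'\subseteq V_i$. So the work is to promote a crossing of $U_i$ into a crossing of $V_i$.

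First I would recall exactly how $U_i$ was constructed: we took $V_i' = V_i \cap S$ for a minimal $S \in \cL_{\ge 2} \cup \{V\}$ meeting $V_i$, and then $U_i$ is the vertex set of a \emph{source} strongly connected component of $G[V_i']$. The key structural consequence is that there is no edge of $G$ from $V_i' \setminus U_i$ into $U_i$. Now take any edge $e = (p,q) \in \delta^-_F(U_i)$, which exists by~\ref{lc:cross}; so $q \in U_i$ and $p \notin U_i$. Since $U_i$ is a source component of $G[V_i']$, $p$ cannot lie in $V_i' \setminus U_i$, hence $p \notin V_i'$. Two cases: either $p \notin V_i$, in which case $e \in \delta^-(V_i)$ and we are done; or $p \in V_i \setminus V_i' = V_i \setminus S$, which means $e$ is an edge from outside $S$ into $U_i \subseteq S$, i.e. $e \in \delta^-(S)$. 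But $S \in \cL_{\ge 2}$ (when $S = V$ the second case is vacuous since $V_i \subseteq V$), so $e$ crosses the non-singleton tight set $S$. By property~\ref{lc:nobadcycle}, the subtour of $F$ containing $e$ visits a vertex of the backbone $B$; but $V(B) \cap V_i = \emptyset$ by the definition of the \EPC{} partition (the $V_i$ partition $V \setminus V(B)$), so that subtour must also leave $V_i$ at some point, yielding an edge in $\delta^-_F(V_i)$ (equivalently $\delta^+_F(V_i)$, and since $F$ is Eulerian these have equal size).

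Let me make the last step cleaner so there is no handwaving about "leaving $V_i$": the subtour $T$ in $F$ containing $e$ is connected and Eulerian, it visits $q \in U_i \subseteq V_i$ and also visits some vertex of $V(B)$, which lies outside $V_i$. A connected subgraph with a vertex inside $V_i$ and a vertex outside $V_i$ must contain an edge of $\delta(V_i)$; since $T$ is Eulerian, $|\delta^-_T(V_i)| = |\delta^+_T(V_i)| \ge 1$. Hence $|\delta^-_F(V_i)| \ge |\delta^-_T(V_i)| \ge 1$. In either case $|\delta^-_F(V_i)| \ge 1$, proving the claim.

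I do not expect any serious obstacle here — this claim is essentially bookkeeping that extracts exactly the payoff of having designed $U_i$ as a source component and of having property~\ref{lc:nobadcycle} in the technical lemma. The only mild subtlety is handling the degenerate case $S = V$ (then $V_i' = V_i$ and the second case simply cannot occur, so the argument is even shorter), and remembering to invoke the Eulerian property to convert an incoming-boundary statement into the symmetric outgoing one when needed. The genuinely hard content lives in Lemma~\ref{lem:mainLCATSP} itself (proved later via witness flows), not in this corollary-style claim.
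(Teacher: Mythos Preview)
Your proof is correct and follows essentially the same approach as the paper: pick an edge in $\delta^-_F(U_i)$ guaranteed by property~\ref{lc:cross}, argue via the source-component property that if it does not already cross $V_i$ then it must cross the set $S\in\cL_{\ge 2}$, and then invoke property~\ref{lc:nobadcycle} to force the containing subtour through $V(B)$ and hence across $V_i$. Your explicit handling of the degenerate case $S=V$ and the use of the Eulerian property to convert the boundary crossing are minor clarifications, but the logic is identical to the paper's.
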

\begin{proof}
 By property~\ref{lc:cross} of Lemma~\ref{lem:mainLCATSP},
there exists an edge $e\in\delta^-_{F}(U_i)$.
Then either $e\in \delta_{F}^-(V_i)$ (in which case we are done), or $e\in
\delta_{F}(V_i\setminus U_i,U_i)$. Assume the latter case.

 Using that $U_i$ was a source component in the decomposition of $V_i'$ into strongly connected components, $e$ must enter 
a set in $\cL_{\geq 2}$.  Indeed, recall that $U_i$ was selected so that there is no edge from $V_i' \setminus U_i$ to $U_i$. 
Since $e\in
\delta_{F}(V_i\setminus U_i,U_i)$
and
$\delta(V_i' \setminus U_i, U_i) = \emptyset$,
we must have
$e \in \delta(V_i \setminus V_i', U_i) \subseteq \delta(V_i \setminus V_i', V_i')$.
However, $V_i'$ was obtained by intersecting $V_i$ with a minimal set $S\in \cL_{\geq 2} \cup \{V\}$ with $S\cap V_i \neq \emptyset$.
Thus we must have
$e\in \delta^-_{F}(S)$ (and $S \ne V$). Now, property
\ref{lc:nobadcycle} of Lemma~\ref{lem:mainLCATSP} guarantees that the
connected component (i.e., the subtour) of $F$ containing $e$ must visit
$V(B)$. This subtour thus visits both $V_i$ (the head of $e$ is in $U_i \subseteq V_i$) and $V(B)$, which is disjoint from $V_i$. As such, the subtour must cross $V_i$, i.e., we have 
$|\delta_{F}^-(V_i)| \geq 1$ as required.
\end{proof}

Next, let us consider subtours in $F$ that are disjoint from $B$.

\begin{claim}
Let $T$ be a subtour in $F$ with $V(T) \cap V(B) = \emptyset$.  Then $w(T)\le 4\lb(T)$.
\end{claim}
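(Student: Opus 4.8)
The plan is to derive the bound directly from properties \ref{lc:nobadcycle} and \ref{lc:degree} of Lemma~\ref{lem:mainLCATSP}, essentially reprising the singleton-instance computation from the proof of Theorem~\ref{thm:lcapprox}.

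First I would argue that $T$ crosses no set of $\cL_{\geq 2}$. Indeed, if some edge of $T$ belonged to $\delta(S)$ for a set $S\in\cL_{\geq 2}$, then the subtour in $F$ containing that edge --- which is $T$ itself --- would, by property~\ref{lc:nobadcycle}, have to visit a vertex of the backbone, contradicting $V(T)\cap V(B)=\emptyset$. Hence, for every edge $(u,v)$ of $T$, the only sets of $\cL$ in whose boundary $(u,v)$ can lie are the singletons among $\{u\},\{v\}$, so (using the convention $y_w=0$ when $\{w\}\notin\cL$) we have $w(u,v)=\sum_{S\in\cL:\,(u,v)\in\delta(S)}y_S=y_u+y_v$.

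Next I would record the degree bound at vertices of positive $y$-value. Since $\cI$ is a laminarly-weighted instance, every set of $\cL$ is tight; in particular $x(\delta^-(v))=1$ whenever $y_v>0$. Property~\ref{lc:degree} then gives $|\delta^-_F(v)|\le 4$ for such $v$, and since $F$ is Eulerian also $|\delta^+_F(v)|=|\delta^-_F(v)|\le 4$. Moreover, as $T$ is a connected component of $F$, every edge of $F$ incident to a vertex of $V(T)$ already lies in $T$, so $|\delta_T(v)|=|\delta_F(v)|$ for $v\in V(T)$. Combining the two ingredients by charging edge weights to endpoints, exactly as in the proof of Theorem~\ref{thm:lcapprox},
\begin{align*}
w(T)=\sum_{e\in T}w(e)=\sum_{(u,v)\in T}(y_u+y_v)
&=\sum_{v\in V(T)}|\delta_T(v)|\,y_v
=2\sum_{v\in V(T)}|\delta^+_F(v)|\,y_v\\
&\le 8\sum_{v\in V(T)}y_v
=4\sum_{v\in V(T)}2y_v
=4\,\lb(T),
\end{align*}
where the summand for $v$ vanishes unless $y_v>0$, in which case $|\delta^+_F(v)|\le 4$, and where we used $\lb(v)=2y_v$ and $\lb(T)=\lb(V(T))$.

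I do not expect any genuine obstacle here: both structural inputs are immediate from Lemma~\ref{lem:mainLCATSP}, and the rest is bookkeeping identical in spirit to the singleton case. The only point that needs a moment's care is the identification $|\delta_T(v)|=|\delta_F(v)|$ for $v\in V(T)$, which is what licenses passing from the per-edge identity $w(u,v)=y_u+y_v$ to a clean sum over vertices to which the per-vertex degree bound \ref{lc:degree} applies.
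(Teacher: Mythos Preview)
Your proof is correct and follows essentially the same approach as the paper: use property~\ref{lc:nobadcycle} to conclude that every edge of $T$ has weight $y_u+y_v$, then use property~\ref{lc:degree} together with tightness of singletons in $\cL$ to bound $|\delta_F(v)|\le 8$ at vertices with $y_v>0$, and sum. You are in fact slightly more explicit than the paper in justifying the step $|\delta_T(v)|=|\delta_F(v)|$ for $v\in V(T)$.
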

\begin{proof}
Recall that the lower bound is
  \[ \lb(T) = 2\sum_{v\in V(T)} y_{v} \,. \]
  To bound the weight of $T$, note that by property~\ref{lc:nobadcycle} of Lemma~\ref{lem:mainLCATSP}, the edges of $T$ do not cross any tight set in $\cL_{\geq 2}$.   Therefore any edge $(u,v)$ in $T$ has weight $y_u + y_v$ and so
  \[ w(T) = \sum_{e\in T} w(e) = \sum_{v\in V(T)}  |\delta_{F}(v)| y_{v} \le 8 \sum_{v\in V(T)}   y_{v} = 4 \lb(T) \,, \]
  where for the inequality we used that $y_v$ is only strictly
  positive if $\xs(\delta^-(v)) = 1$ (see Definition~\ref{def:lam}), in which case $|\delta_{F}(v)| = 2 |\delta^-_{F}(v)|\leq 8$ using
  property~\ref{lc:degree} of Lemma~\ref{lem:mainLCATSP}.
\end{proof}

Finally, let $F_B \subseteq F$ be the collection of subtours in $F$ that intersect $B$.   Then, $w(F_B)\le w(F)\le 2\valu(\cI)+\lb_\cI(\bar B)$ by
property~\ref{lc:cost} of Lemma~\ref{lem:mainLCATSP}. This completes
the proof that $F$ is a $(4,2\valu(\cI)+\lb_\cI(\bar B))$-light edge set.
\end{proof}

The rest of this section is devoted to the proof of the
main technical
Lemma~\ref{lem:mainLCATSP}.
\subsection{Witness flows}
\label{sec:witness-flow}

Recall that  $\cL_{\geq 2}$ denotes the family of non-singleton sets in
$\cL$. Let us use an indexing $\cL_{\geq 2}\cup\{V\} = \{S_1, S_2, \ldots, S_\ell \}$ such that $2 \leq |S_1| \leq |S_2| \leq \cdots \leq |S_\ell| = |V|$.
For a vertex $v\in V$
let  $$\level(v) = \min \{i: v\in S_i\}$$ be the index of the first (smallest)
set that contains $v$.
We use these levels to define a partial order $\prec$ on the vertices:
let  $v\prec v'$ if
$\level(v) < \level(v')$.  
This partial order is used to classify the edges as follows.
An edge $(u,v) \in E$ is a 
\begin{itemize}
  \item \emph{forward} edge if $v \prec u$,
  \item \emph{backward} edge if $u \prec v$,
\end{itemize}
and otherwise  it is a \emph{neutral} edge. 
Let $E_f$, $E_b$ and $E_n$ denote the sets of forward, backward, and
neutral edges respectively.

\begin{definition}
Let $z:E\to \R$ be a circulation. We say that $f:E\to \R$ is a \emph{witness
flow} for $z$ if
\begin{enumerate}[label=\emph{(\alph*)}]\itemsep0mm
    \item \label{flow:capacities} $0\leq f\leq z$, 
    \item \label{flow:conservation} $f(\delta^+(v)) \geq f(\delta^-(v))$ for every $v\in V \setminus V(B)$,
  \item \label{flow:backward}  $f(e) = 0$  for each backward edge $e\in E_b$, 
    \item \label{flow:forward} $f(e) = z(e)$  for each forward edge $e\in E_f$.
  \end{enumerate}
We say that a circulation $z$ is {\em witnessed} if there exists a witness
flow for $z$.
\end{definition}

The following lemma reveals the importance of witness flows. By a
component of a circulation $z$ we mean a connected component of
the edge set $\supp(z) = \{ e \in E : z_e > 0 \}$.
\begin{lemma}\label{lem:witness-cross}
Let $z$ be a witnessed circulation.
Any component $C$ of $z$ that crosses a set in $\cL_{\geq 2}$ must intersect $B$.
\end{lemma}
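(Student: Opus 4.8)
The plan is to argue by contradiction: suppose $C$ is a component of a witnessed circulation $z$ that crosses some set in $\cL_{\geq 2}$ but is disjoint from $B$. Let $f$ be a witness flow for $z$. First I would observe that since $C$ crosses some $S_j \in \cL_{\geq 2}$, there is at least one forward edge and at least one backward edge with both endpoints in $V(C)$: indeed, as $z$ restricted to $C$ is itself a circulation (the in- and out-degrees balance on every vertex), and $C$ has vertices both inside $S_j$ and outside $S_j$, following the support of $z$ around we must traverse $\delta^-(S_j)$ and $\delta^+(S_j)$; more carefully, there is a vertex of minimum level in $V(C)$ and a vertex of strictly larger level in $V(C)$ (since $C$ meets both $S_j$ and $V \setminus S_j$, and every vertex of $V$ lies in $S_\ell = V$), so walking along a cycle of $z$ through $C$ we must at some point decrease level (a forward edge) and at some point increase level (a backward edge). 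Let me make this precise by picking a vertex $v^\star \in V(C)$ achieving the maximum level $m = \max_{v \in V(C)} \level(v)$; since $C$ crosses a non-singleton set, $m \geq \level(S_j\text{-exterior vertex}) \geq 2$ and there is a vertex of $C$ of level strictly less than $m$ (any vertex of $C$ outside the level-$m$ set, which exists because $C$ is connected in $\supp(z)$ and there is such a vertex as $C$ is not contained in a single minimal set — wait, I should instead just use that $C$ crosses $S_j$, so $V(C) \cap S_j \neq \emptyset \neq V(C)\setminus S_j$).

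The key computation is a level-counting / flow-balance argument on $f$. Consider $T = \{v \in V(C) : \level(v) \geq t\}$ for an appropriate threshold $t$ such that $\emptyset \neq T \subsetneq V(C)$ (such $t$ exists because $V(C)$ contains vertices of at least two distinct levels, as just argued). Every edge of $C$ entering $T$ is a backward edge (level strictly increases) and every edge of $C$ leaving $T$ is a forward edge (level strictly decreases) — here I use that the sets $S_i$ are nested so levels are well-defined and that $C \subseteq \supp(z)$. By the witness-flow properties \ref{flow:backward} and \ref{flow:forward}, we get $f(\delta^-(T)) = 0$ (no $f$-flow on backward edges) while $f(\delta^+(T)) = z(\delta^+(T)) > 0$ (all forward edges carry full $z$-flow, and there is at least one such edge leaving $T$ in $C$, with positive $z$-value since $C \subseteq \supp(z)$). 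On the other hand, summing the flow-conservation inequality \ref{flow:conservation} over all $v \in T$ — which is legitimate precisely because $T \cap V(B) = \emptyset$, as $C$ is disjoint from $B$ — yields $f(\delta^+(T)) \geq f(\delta^-(T))$ after the interior-edge contributions cancel. This is only consistent if $f(\delta^+(T)) \le f(\delta^-(T)) = 0$, contradicting $f(\delta^+(T)) > 0$.

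So the skeleton is: (1) extract from "$C$ crosses a set in $\cL_{\geq 2}$" the existence of a forward edge inside $C$, equivalently a nonempty proper "upper level set" $T$ of $V(C)$; (2) note all edges of $\supp(z)$ crossing the boundary of $T$ within $C$ are forward (outgoing) or backward (incoming); (3) apply \ref{flow:forward}, \ref{flow:backward} to conclude $f(\delta^+(T)) = z(\delta^+(T)) > 0$ and $f(\delta^-(T)) = 0$; (4) apply \ref{flow:conservation} summed over $T$ — using $V(B) \cap T = \emptyset$ — to get $f(\delta^+(T)) \le f(\delta^-(T))$, a contradiction. I expect the main obstacle to be step (1): making airtight the claim that crossing some $S_j \in \cL_{\geq 2}$ forces a forward edge within the component, i.e. that $V(C)$ genuinely contains two vertices of different level connected through $\supp(z)$ so that a valid threshold $t$ exists. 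This needs the definition of "crosses" ($\delta_z(S_j) \neq \emptyset$ with the component actually spanning the cut) together with the nestedness of $\{S_i\}$; once the right $T$ is in hand, the rest is a routine flow-balance cancellation that I would not belabor.
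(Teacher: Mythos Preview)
Your overall strategy matches the paper's proof, but there is a genuine sign error that breaks the contradiction. With your choice $T=\{v\in V(C):\level(v)\ge t\}$ (the \emph{high}-level vertices), edges of $C$ entering $T$ are backward and edges leaving $T$ are forward, so indeed $f(\delta^-(T))=0$ and $f(\delta^+(T))=z(\delta^+(T))>0$. Summing \ref{flow:conservation} over $T$ gives $f(\delta^+(T))\ge f(\delta^-(T))$, i.e.\ (positive) $\ge 0$. This is perfectly consistent; your sentence ``This is only consistent if $f(\delta^+(T))\le f(\delta^-(T))=0$'' is a non sequitur.

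The fix is to take the complementary set: the \emph{low}-level vertices. The paper does exactly this, setting $U=S_i\cap V(C)$ where $i$ is the smallest index with $S_i\cap V(C)\ne\emptyset$ (equivalently, the vertices of minimum level in $V(C)$). Then edges of $C$ entering $U$ are forward and edges leaving $U$ are backward, so $f(\delta^-(U))=z(\delta^-(U))>0$ while $f(\delta^+(U))=0$; now summing \ref{flow:conservation} yields $0=f(\delta^+(U))\ge f(\delta^-(U))>0$, the desired contradiction. Your step~(1) worry is easily handled in this formulation: $V(C)\setminus S_i\ne\emptyset$ because otherwise laminarity of $S_i$ and the crossed set $S_j$ would force $S_j\subsetneq S_i$, giving an index smaller than $i$ that meets $V(C)$.
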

\begin{proof}
Let $f$ be a witness flow for $z$.
Take $i$ to be the smallest value such that $S_i \cap V(C) \ne \emptyset$.
Then we must also have $V(C) \setminus S_i \ne \emptyset$, as otherwise $\cL_{\geq 2}\cup\{V(C)\}$ would be laminar, contradicting the choice of $C$.
So, $C$ must have an edge entering $S_i$;
moreover,
all edges of $C$ entering $S_i$ are forward edges,
and
all edges of $C$ exiting $S_i$ are backward edges.
Let $U = S_i \cap V(C)$.
Then $f(\delta^-(U)) > 0$ and $f(\delta^+(U)) = 0$.
If we had $U \cap V(B) = \emptyset$,
then \ref{flow:conservation} would imply that
\[ 0 > f(\delta^+(U)) - f(\delta^-(U)) = \sum_{v \in U} f(\delta^+(v)) - f(\delta^-(v)) \ge 0 \,, \]
a contradiction.
\end{proof}

The edge set $F$ in Lemma~\ref{lem:mainLCATSP} will be obtained from a
witnessed integer circulation; in particular, we will use the witness
flow to derive property \ref{lc:nobadcycle} in Lemma~\ref{lem:mainLCATSP}. We start by showing that
the Held-Karp solution $x$ is a witnessed circulation.

\begin{lemma}\label{lem:x-witness}
The Held-Karp solution $x$ is a witnessed circulation.
\end{lemma}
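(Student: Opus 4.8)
The plan is to construct an explicit witness flow $f$ for $x$. Conditions~(c) and~(d) of the definition already pin down $f$: it must be $0$ on every backward edge and equal to $x$ on every forward edge. The only freedom is the value of $f$ on the neutral edges $E_n$, and there condition~(a) asks only for $f\le x$ --- crucially, $f$ is allowed to take negative values on neutral edges. So the whole problem is to find $g\colon E_n\to\R$ with $g\le x$ on $E_n$ such that, setting $f=g$ on $E_n$, $f=x$ on $E_f$ and $f=0$ on $E_b$, we get $f(\delta^+(v))\ge f(\delta^-(v))$ for every $v\in V\setminus V(B)$. Writing $d(v)=x(\delta^-(v)\cap E_f)-x(\delta^+(v)\cap E_f)$ for the forward imbalance of $x$ at $v$, this says exactly that $g$ is a flow in the neutral subgraph $(V,E_n)$ obeying the upper capacities $x|_{E_n}$ but \emph{no} lower capacities, whose net outflow at each non-backbone vertex is at least $d(v)$.

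First I would reduce the existence of such a $g$ to a cut condition. Applying LP duality (Farkas' lemma) to this flow-feasibility problem, an infeasibility certificate is a potential $\mu\colon V\to\R_{\ge 0}$ that vanishes on $V(B)$, satisfies $\mu_u\ge\mu_w$ for every neutral arc $(u,w)$, and violates a certain weighted inequality; decomposing $\mu$ into its threshold sets via $\mu=\int_0^\infty \mathbf 1_{\{\mu\ge t\}}\,dt$, and noting that a threshold set $A=\{\mu\ge t\}$ (for $t>0$) is precisely a vertex set with $A\cap V(B)=\emptyset$ and \emph{no neutral edge entering} $A$, one sees that $x$ is a witnessed circulation as soon as
\[
x(\delta^-(A)\cap E_b)\ \ge\ x(\delta^+(A)\cap E_b)\qquad\text{for every such }A.
\]
Using that $x$ is a circulation and that no neutral edge enters $A$, this rearranges into the equivalent form $x(\delta^+(A)\cap(E_f\cup E_n))\ge x(\delta^-(A)\cap E_f)$: the non-backward $x$-mass leaving $A$ dominates the forward $x$-mass entering $A$. (Alternatively one obtains the same condition from a direct max-flow/min-cut argument: route each vertex's forward imbalance out along neutral edges, absorbing it at the backbone or at vertices of the opposite imbalance.)

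The remaining, and main, task is to verify this inequality for every $A\subseteq V\setminus V(B)$ with no neutral edge entering it, and I expect this to be the real obstacle. The structural ingredients I would use are: (i) a forward edge with head $w$ necessarily enters the tight set $S_{\level(w)}$, and a backward edge with tail $u$ necessarily leaves $S_{\level(u)}$ --- both immediate from the definition of $\level$; (ii) all of these sets are non-singleton, hence crossed by the backbone $B$ (vertebrate-pair hypothesis), while $A$ is disjoint from $V(B)$, so $A$ meets each of them in a \emph{proper} subset; and (iii) a tight set is a path of strongly connected components, with every incoming edge into the first component and every outgoing edge out of the last, and $x(\delta^-(S))=x(\delta^+(S))=1$ (Lemma~\ref{lem:path_tight_set}). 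Grouping the forward edges entering $A$ by the level-set of their heads and the backward edges leaving $A$ by the level-set of their tails, and inducting on the laminar family --- the base case being $A$ contained in the top level (vertices in no non-singleton set), where there are no forward edges into $A$ and the inequality is trivial --- I would charge the forward-in $x$-mass against non-backward-out $x$-mass across each relevant tight set $S$, the slack being furnished precisely by $B$'s crossing of $S$, which lies in $S\setminus A$. Making this charging precise, and handling the degenerate case where $B$ lies entirely inside a single maximal set of $\cL$, is the delicate part; everything else is routine.
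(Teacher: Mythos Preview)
Your reduction to a cut condition via Farkas is correct and elegant: allowing $f$ to be negative on neutral edges (which the definition permits) does collapse the problem to showing that for every $A\subseteq V\setminus V(B)$ with $\delta^-(A)\cap E_n=\emptyset$ one has $x(\delta^-(A)\cap E_f)\le x(\delta^+(A)\cap(E_f\cup E_n))$. But you stop exactly where the proof begins. The sentence ``Making this charging precise \dots\ is the delicate part; everything else is routine'' is an admission that the core argument is missing. What remains is not a matter of tidying up: one must exhibit, for an arbitrary such $A$, a concrete inequality derived from the tightness of the laminar sets and the fact that $B$ enters each of them. Your sketch (``group by level of the head, induct on the laminar family, charge against the slack furnished by $B$'s crossing'') does not specify which tight-set inequality is being invoked, how the charges are disjoint, or why the induction closes; as written it could equally well ``prove'' the statement without the hypothesis $A\cap V(B)=\emptyset$, which is false.

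For comparison, the paper also argues via LP duality but in a different and complete way. It maximizes $\sum_{e\in E_f}f(e)$ subject to $0\le f\le x$, $f|_{E_b}=0$, and the excess constraints, and shows the optimum equals $x(E_f)$ by analysing the dual. Among optimal dual solutions it picks one minimizing $\sum_v\pi_v$; if $\pi\neq 0$, it takes the \emph{minimal} level $i^*$ at which some $\pi_t>0$, sets $S=S_{i^*}$ and $T=\{u\in S:\pi_u>0\}$, and observes that (i) by minimality of $i^*$ every vertex of $T$ sits at level exactly $i^*$, so $\delta(T,S\setminus T)\subseteq E_f\cup E_n$ and $\delta(S\setminus T,T)\subseteq E_b\cup E_n$; and (ii) since $B$ meets $S$ but $\pi|_{V(B)}=0$, $T\subsetneq S$, whence $x(\delta^-(S\setminus T))\ge 1$ combined with $x(\delta^-(S))\le 1$ forces $x(\delta(T,S\setminus T))\ge x(\delta(V\setminus S,T))$. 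This single tight-set inequality is what lets one lower $\pi$ on $T$ without increasing the dual objective---and it is precisely the step your sketch gestures at but never pins down. If you want to salvage your route, the analogous move would be: take a \emph{minimal} level $i$ with $S_i\cap A\neq\emptyset$, set $T=A\cap S_i$, use tightness of $S_i$ and $T\subsetneq S_i$ (from $B\cap S_i\neq\emptyset$, $B\cap A=\emptyset$) to get $x(\delta(T,S_i\setminus T))\ge x(\delta(V\setminus S_i,T))$, and then argue that replacing $A$ by $A\setminus T$ preserves the hypotheses while accounting for all forward edges into $T$. That is the missing lemma; without it the proposal is incomplete.
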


Before giving the full proof, let us motivate the existence of 
a witness flow $f$
in a simple example scenario
where there is only one non-singleton set $S \in \cL_{\geq 2}$.
Then we have $E_f = \delta^-(S)$ and $E_b = \delta^+(S)$,
i.e., the forward/backward edges are exactly the incoming/outgoing edges of $S$.
The subtour elimination constraints imply
(via the min-cut max-flow theorem)
that $x$ supports a unit flow between any pair of vertices.
Let $f$ be such a flow from any vertex outside $S$ to a vertex $v \in S \cap V(B)$
(such a $v$ exists by the backbone property).
It is easy to see that $f$ satisfies the conditions of the claim.
Indeed, since $S$ is a tight set,
$f$ saturates all incoming (forward) edges.
It also does not leave $S$, i.e., use any backward edges.
The proof of the general case uses an argument based on LP duality
to argue the existence of $f$.

\begin{proof}[Proof of Lemma~\ref{lem:x-witness}]
We find a witness flow  $f$ in polynomial time by solving the following linear program:
  \begin{align*}
  \arraycolsep=1.4pt\def\arraystretch{1.2}
  \begin{array}{lrlr}
    \mbox{maximize} \qquad & \displaystyle \sum_{e\in E_f} f(e) \\[7mm]
  \mbox{subject to} \qquad  & \displaystyle f(\delta^+(v)) \geq &  \displaystyle f(\delta^-(v)) & \text{ for  } v\in V\setminus V(B), \\  
  & \displaystyle f(e) = & 0 & \text{ for }e\in E_b, \\
  &0 \leq  f \leq & \xs.
  \end{array}
  \end{align*}
  By the constraints of the linear program, we have that $f$
  satisfies~\ref{flow:capacities},~\ref{flow:conservation},
  and~\ref{flow:backward}. It remains to
  verify~\ref{flow:forward}, or equivalently, to show that the
  optimum value of this program equals $x(E_f)$.

This will be shown using the dual linear program. The variables
  $(\pi_v)_{v\in V}$ correspond to the first set of constraints, and
  $(z(e))_{e\in E_f\cup E_n}$ to the capacity constraints on forward
  and neutral edges. No such variables are needed for backward
  edges. For notational simplicity, we introduce $\pi_v$ also for
  $v\in V(B)$, and set $\pi_v=0$ in this case. The dual program can be
  written as follows.
 \begin{align*}
  \arraycolsep=1.4pt\def\arraystretch{1.2}
  \begin{array}{lrlr}
    \mbox{minimize} \qquad & \displaystyle \sum_{e\in E_f\cup E_n} x(e)z(e) \\[7mm]
  \mbox{subject to} \qquad  & \displaystyle  \pi_v-\pi_u+z(u,v)\ge&  1   &  \text{ for } (u,v)\in  E_f, \\  
  & \displaystyle  \pi_v-\pi_u+z(u,v)\ge&  0 & \text{ for } (u,v)\in
                                               E_n, \\  
 & \displaystyle \pi_v=& 0 & \text{ for } v\in V(B),\\
& \pi, z \ge& 0. &
  \end{array}
  \end{align*}
  Note that setting $\pi =\vec{0}$, $z(e)=1$ if $e\in E_f$, and $z(e)=0$
if $e\in E_n$ is a feasible solution with objective value $x(E_f)$. We complete the proof by showing that this is an optimal dual
solution.

Let us select a dual optimal
solution $(\pi,z)$ that minimizes $\pi(V)$. We show
that this minimum value is 0, that is, there exists a dual optimal
solution with $\pi= \vec{0}$. This immediately implies that the above solution is
a dual optimal one,
because given $\pi = \vec{0}$ we get a constraint $z(u,v) \ge 1$ for all $(u,v) \in E_f$,
making the objective value at least $x(E_f)$.

Towards a contradiction, assume that $\pi$ is not everywhere zero. Let us select a vertex $t \in V$ such that $\pi_t>0$, and $\level(t)=i^*$
is minimal among all such vertices. Let $S=S_{i^*}$, and define
$T=\{u\in S: \pi_u>0\}$. Since $S\cap V(B)\neq
\emptyset$, and $\pi_u=0$ for all $u\in V(B)$, we see that $T$ is a
proper subset of $S$. Let $F= \delta(V\setminus S,T)$ and
$F'=\delta(T,S\setminus T)$. A depiction of the sets is as follows:
\begin{center}
  \begin{tikzpicture}
    \begin{scope}[scale=0.6]
      \draw[fill=gray!10!white, draw=gray!80!black] (0, 0) ellipse (4cm and 2cm) node[above = 1.1cm ] {\small $S = S_{i^*}$};
      \draw[fill=none, draw=gray!80!black, dashed] (-1.8, 0) ellipse (1.5cm and 1cm) node {\small $T$};
      \begin{scope}[xshift=0.2cm]
        \draw[fill=gray!30!white, draw=gray!80!black] (1.8, 0.45) ellipse (1.cm and 0.7cm);
        \draw[fill=gray!50!white, draw=gray!80!black] (2.1, 0.45) ellipse (0.3cm and 0.4cm);
      \end{scope}
      \draw[fill=gray!30!white, draw=gray!80!black] (1.4, -1.05) ellipse (0.6cm and 0.3cm);
      \draw (-1.8, 2.5) edge[->, thick] node[above left = 0.2cm and -0.01cm] {\small $F$} (-1.8, 0.5);
      \draw (-1.4, 0) edge[->, thick] node[above right = -0.05cm and -0.1cm] {\small $F'$} (1, 0);
      \draw (-1.4, 0.5) edge[->, thick, bend left=10] (1.6, 0.7);
    \end{scope}

  \end{tikzpicture}
\end{center}
Let us show that the edges between
$T$ and $S\setminus T$ can be only of certain types.
\begin{claim*}
$F'=\delta(T,S\setminus T)\subseteq E_f\cup E_n$ and $\delta(S\setminus T,T)\subseteq E_b\cup E_n$.
\end{claim*}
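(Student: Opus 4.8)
The plan is to prove the claim directly from the definitions of the edge classes and the choice of $t$. Recall that $T = \{u \in S : \pi_u > 0\}$, where $S = S_{i^*}$ and $i^* = \level(t)$ is minimal among all vertices with positive $\pi$. The key structural fact I would exploit is that every vertex $u \in T$ has $\level(u) \ge i^*$ (since $\pi_u > 0$), and moreover $u \in S = S_{i^*}$ forces $\level(u) = i^*$ exactly: indeed $\level(u) \le i^*$ because $u \in S_{i^*}$, and $\level(u) \ge i^*$ by minimality of $i^*$. So all vertices of $T$ sit at exactly level $i^*$.

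First I would handle $F' = \delta(T, S \setminus T)$. Take an edge $(u,v)$ with $u \in T$, $v \in S \setminus T$. We have $\level(u) = i^*$ as just argued. Since $v \in S = S_{i^*}$, we have $\level(v) \le i^*$, i.e. $\level(v) \le \level(u)$. Thus either $\level(v) < \level(u)$, making $(u,v)$ a forward edge (as $v \prec u$), or $\level(v) = \level(u)$, making it neutral. In no case can $u \prec v$ hold, so $(u,v) \notin E_b$. Hence $F' \subseteq E_f \cup E_n$. Symmetrically, for $\delta(S \setminus T, T)$: take $(u,v)$ with $u \in S \setminus T$, $v \in T$. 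Now $\level(v) = i^*$ and $u \in S_{i^*}$ gives $\level(u) \le i^* = \level(v)$. So either $\level(u) < \level(v)$, making $(u,v)$ a backward edge (since $u \prec v$), or $\level(u) = \level(v)$, making it neutral; it cannot be a forward edge. Hence $\delta(S \setminus T, T) \subseteq E_b \cup E_n$.

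I do not expect any real obstacle here — this is a routine unwinding of the level/order definitions, and the only subtlety is the observation that membership in $S_{i^*}$ pins down $\level$ precisely on $T$ via the minimality of $i^*$. The claim is presumably a stepping stone toward deriving a contradiction by constructing a modified dual solution that reroutes $\pi$-mass on $T$ (lowering $\pi(V)$ while staying feasible, using that edges out of $T$ within $S$ are forward/neutral so their $z$-constraints have slack, and edges into $T$ are backward/neutral so they carry no constraint forcing $z$ up). That later argument is where the work lies, but for the stated claim the proof is the short case analysis above.

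Concretely, I would write:

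\begin{proof}[Proof of Claim]
\renewcommand\qedsymbol{$\diamond$}
By the choice of $t$, every vertex $u$ with $\pi_u > 0$ satisfies $\level(u) \ge i^*$. In particular, every $u \in T$ has $\level(u) \ge i^*$; but also $u \in T \subseteq S = S_{i^*}$, so $\level(u) \le i^*$. Hence $\level(u) = i^*$ for all $u \in T$.

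Consider an edge $(u,v) \in \delta(T, S \setminus T)$. Then $u \in T$, so $\level(u) = i^*$, and $v \in S = S_{i^*}$, so $\level(v) \le i^* = \level(u)$. Thus we never have $\level(u) < \level(v)$, i.e. $(u,v)$ is not a backward edge; so $(u,v) \in E_f \cup E_n$, proving $F' \subseteq E_f \cup E_n$.

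Now consider $(u,v) \in \delta(S \setminus T, T)$. Then $v \in T$, so $\level(v) = i^*$, and $u \in S = S_{i^*}$, so $\level(u) \le i^* = \level(v)$. Thus we never have $\level(v) < \level(u)$, i.e. $(u,v)$ is not a forward edge; so $(u,v) \in E_b \cup E_n$.
\end{proof}
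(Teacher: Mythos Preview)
Your proof is correct and essentially identical to the paper's own argument. Both establish that every $u \in T$ has $\level(u) = i^*$ (the paper phrases this as ``for any $S_i \subsetneq S$ we have $\pi_v = 0$ for all $v \in S_i$, thus $S_i \cap T = \emptyset$''), and then observe that since the other endpoint lies in $S = S_{i^*}$, its level is at most $i^*$, which rules out backward edges for $F'$ and forward edges for $\delta(S\setminus T, T)$.
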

\begin{proof}[Proof of Claim]
    \renewcommand\qedsymbol{$\diamond$}
By the choice of $i^*$, for any $S_i\subsetneq S$ we have that
$\pi_v=0$ for all $v\in S_i$; thus $S_i\cap T=\emptyset$, and so for every $u \in T$ we have $\level(u) = i^*$. Therefore,
for every $(u,v)\in F'$, we must have $\level(u)=i^*\ge \level(v)$, and
for every $(u,v)\in \delta(S\setminus T,T)$, $\level(v)=i^*\ge \level(u)$.
\end{proof}

Let us construct another dual solution $(\pi',z')$ as follows. Let
$\varepsilon=\min\{\pi_u:\ u\in T\}$, and let
\begin{align*}
\pi'_u =  \begin{cases}
   \pi_u-\varepsilon &
    \mbox{if }u\in T, \\
\pi_u &  \mbox{otherwise,}
  \end{cases}
\quad \mbox{and} \quad
  z'(e) =  \begin{cases}
   z(e)+\varepsilon &
    \mbox{if }e\in F\cap (E_f\cup E_n), \\
  z(e)-\varepsilon &
    \mbox{if }e\in F', \\
z(e) &  \mbox{otherwise.}
  \end{cases}
\end{align*}
We show that $(\pi',z')$ is another optimal solution. Then,
$\pi'(V)<\pi(V)$ gives a contradiction to the choice of $(\pi,z)$.
The proof proceeds in two steps: first we show feasibility and then optimality.
\smallskip

\noindent {\bf Feasibility.} We have $\pi' \ge 0$ by the choice of $\varepsilon$. To show $z'\ge
0$, note that for $e=(u,v)\in F'$, by the Claim and the dual constraints for $e \in E_f \cup E_n$ we have $0\le \pi_v-\pi_u+z(u,v)\le
z(u,v)-\varepsilon$, where for the second inequality we used that $\pi_u\ge \varepsilon$ and $\pi_v=0$. Thus,
$z(e)\ge \varepsilon$ and $z'(e) = z(e) - \varepsilon \geq 0$ for every $e\in F'$.  For an edge $e\not \in F'$, we have $z'(e) \geq z(e) \geq 0$ and so we can conclude that $z' \ge {0}$.
We also have $\pi'_v = 0$ for $v \in V(B)$ since $T \cap V(B) = \emptyset$.

It remains to verify the constraints for $(u,v) \in E_f \cup E_n$.
For every $(u,v)\in (F\cup F') \cap (E_f \cup E_n)$, as well as for edges not in $\delta(T)$,
we have $\pi'_v-\pi'_u+z'(u,v)=\pi_v-\pi_u+z(u,v)$, and therefore the
constraint remains valid.
Thus we may have $\pi'_v-\pi'_u+z'(u,v)\neq
\pi_v-\pi_u+z(u,v)$ in only two cases: either  {\bf (i)} if $(u,v)\in
\delta(T,V\setminus S)$, or {\bf (ii)} if $(u,v)\in
\delta(S\setminus T,T)$.

In case {\bf (i)}, the constraint on $(u,v)$ remains valid since  $\pi'_v-\pi'_u+z'(u,v)=\pi_v
-\pi_u+z(u,v)+\varepsilon$. In case {\bf (ii)}, $\pi'_u=0$,
$\pi'_v\ge 0$, and $z'(u,v)\ge 0$. Further, we have shown in the above
Claim that $(u,v)\in E_b\cup E_n$. There is no constraint for
$(u,v)\in E_b$, and $\pi'_v-\pi'_u+z'(u,v)\ge 0$ holds if $(u,v)\in E_n$.

\smallskip 

\noindent {\bf Optimality.}
When changing $(\pi,z)$ to $(\pi',z')$, the objective value
increases by $\varepsilon(x(F \cap (E_f \cup E_n)) - x(F' \cap (E_f \cup E_n))) = \varepsilon(x(F\setminus E_b)-x(F'))$;
we will show that this is non-positive.
Recall that $S$ is either a tight set or $V$, so that $x(\delta^-(S)) \le 1$. Since $T \subsetneq S$, we have 
$1\le x(\delta^-(S\setminus T))=x(\delta^-(S))-x(F)+x(F') \le 1 - x(F) + x(F')$, and
therefore $x(F')\ge x(F)\ge x(F\setminus E_b)$. Thus, the objective value does not increase,
therefore $(\pi',z')$ must be optimal.

The existence of the optimal solution $(\pi',z')$ with $\pi'(V) < \pi(V)$ contradicts
the choice of $(\pi,z)$, which completes the proof of
Lemma~\ref{lem:x-witness}.
\end{proof}

Let us state one more lemma that enables rounding fractional witness
flows to integer ones. Recall that in the proof of
Theorem~\ref{thm:lcapprox} for singleton instances a key step was to
round a fractional circulation to an integer one, a simple corollary
of the integrality of the network flow polyhedron. We will now need a
stronger statement as we need to round a circulation $z$ along with 
a witness flow $f$ consistently. We formulate the
following general statement (which does not
assume that we have a vertebrate pair or that $f$ is a witness flow).

\begin{lemma}\label{lem:TU}
For a digraph $G=(V,E)$ and edge weights $w:E\to \R$,
consider $z,f:E\to\R_+$ such that $z$ is a circulation and $f\le z$.
Then
there exist integer-valued vectors $\bar z, \bar f: E\to\Z_+$ with
$w^\top \bar z\le w^\top z$ satisfying the following properties:
\begin{enumerate}[(a)]
\item\label{round:circ} $\bar z$ is a circulation,
\item \label{round:balance} $\bar f(\delta^+(v))\ge \bar f(\delta^-(v))$ whenever $f(\delta^+(v))\ge  f(\delta^-(v))$,
\item\label{round:node-cap-f} $\lfloor  f(\delta^-(v))\rfloor\le \bar f(\delta^-(v))\le \lceil
  f(\delta^-(v))\rceil$ for every node $v\in V$,
\item \label{round:node-cap-g} $\lfloor  g(\delta^-(v))\rfloor\le \bar g(\delta^-(v))\le \lceil
  g(\delta^-(v))\rceil$ for every node $v\in V$, where $g=z-f$ and $\bar
  g=\bar z-\bar f$.
\item \label{round:cap}  
  $\bar f\le \bar z$; $\bar f_e=\bar z_e$ whenever $f_e=z_e$, and $\bar f_e=0$
  whenever $f_e=0$.
\end{enumerate}
\end{lemma}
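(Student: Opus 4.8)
The statement is a rounding lemma asserting that a fractional circulation $z$ together with a sub-flow $f\le z$ can be simultaneously rounded to integer vectors $\bar z,\bar f$ with $\bar z$ still a circulation, $w^\top\bar z\le w^\top z$, the roundings of $f$ and of $g=z-f$ respecting the floor/ceiling of the in-degrees at every vertex, conservation preserved at balanced nodes, and the cap constraints $\bar f\le\bar z$ preserved exactly where $f=z$ or $f=0$. The natural approach is to encode all of this as an integral polyhedron via total unimodularity. The plan is to build an auxiliary bipartite-like (more precisely: network-matrix) system whose integer points are exactly the pairs $(\bar z,\bar f)$ we want, show its constraint matrix is totally unimodular, observe that $(z,f)$ lies in the corresponding polytope (with the right-hand sides chosen from the fractional data, e.g.\ $\lfloor f(\delta^-(v))\rfloor$ and $\lceil f(\delta^-(v))\rceil$ as bounds), and then take an optimal integral vertex minimizing $w^\top\bar z$.

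Concretely, first I would split each edge $e$ into two parallel ``channels'', one carrying $f_e$ and one carrying $g_e=z_e-f_e$; a vector on these $2|E|$ channels is a pair $(\bar f,\bar g)$, and we set $\bar z=\bar f+\bar g$. The constraints to impose are: (i) flow conservation of $\bar z$ at every vertex, i.e.\ $\bar f(\delta^+(v))+\bar g(\delta^+(v))=\bar f(\delta^-(v))+\bar g(\delta^-(v))$; (ii) at every balanced node $v$ (those with $f(\delta^+(v))\ge f(\delta^-(v))$) the inequality $\bar f(\delta^+(v))\ge\bar f(\delta^-(v))$; (iii) degree bounds $\lfloor f(\delta^-(v))\rfloor\le\bar f(\delta^-(v))\le\lceil f(\delta^-(v))\rceil$ and likewise for $g$; (iv) channel capacity bounds: on the $f$-channel, $0\le\bar f_e$ and $\bar f_e\le \lceil z_e\rceil$ generally, but $\bar f_e=\bar z_e$ forced where $f_e=z_e$ (equivalently $\bar g_e=0$, i.e.\ drop that $g$-channel) and $\bar f_e=0$ forced where $f_e=0$ (drop that $f$-channel); similarly for the $g$-channel. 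All of these are bounds on single variables, on sums of variables incident to a vertex ($\delta^+$ or $\delta^-$), or on the signed incidence (flow conservation) — i.e.\ every row is either a unit vector, a $0/1$ ``all outgoing'' or ``all incoming'' row, or a network-matrix row. The incidence matrix of a directed graph is TU, and the matrix whose rows are ``sum over $\delta^-(v)$'' and ``sum over $\delta^+(v)$'' for all $v$ is an interval/network matrix and hence also TU; combining them (node-conservation rows, in-degree rows, out-degree rows, and identity rows for capacities) still yields a TU matrix — this is the standard fact that the constraint matrix of a flow problem with node-degree bounds is TU (see e.g.\ Schrijver~\cite{Schrijver03}, Chapters 11 and 19). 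Therefore the polytope $P$ defined by (i)--(iv) has integral vertices.

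Next I would check that the fractional point $(f,g)$ itself lies in $P$: (i) holds since $z=f+g$ is a circulation; (ii) holds by hypothesis since at a balanced node $f(\delta^+(v))\ge f(\delta^-(v))$; (iii) holds because $f(\delta^-(v))$ lies between its own floor and ceiling, ditto $g$; (iv) holds because $0\le f_e\le z_e\le\lceil z_e\rceil$, and the forced equalities are built into the model by deletion of channels, not violated. Hence $P\ne\emptyset$, and minimizing the linear objective $w^\top(\bar f+\bar g)$ over $P$ is attained at an integral vertex $(\bar f,\bar g)$ with $w^\top\bar z=w^\top(\bar f+\bar g)\le w^\top(f+g)=w^\top z$. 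Finally I would translate back: $\bar z:=\bar f+\bar g$ is a circulation by (i) hence \ref{round:circ}; \ref{round:balance} is (ii); \ref{round:node-cap-f} and \ref{round:node-cap-g} are the two halves of (iii) (noting $\bar g(\delta^-(v))=\bar z(\delta^-(v))-\bar f(\delta^-(v))$ and that $\bar z(\delta^-(v))$ need not be bounded, only the split is); and \ref{round:cap} is (iv), using $\bar f\le\bar f+\bar g=\bar z$, $\bar f_e=\bar z_e\iff\bar g_e=0$ which is forced where $f_e=z_e$, and $\bar f_e=0$ forced where $f_e=0$. Since all constraints are explicit and polynomial-sized and the polytope is integral, the rounding can moreover be computed in polynomial time (LP over a TU system, or a direct combinatorial flow algorithm).

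The main obstacle I anticipate is getting the TU bookkeeping exactly right when the node-conservation rows, the in-degree rows, and the out-degree rows all appear together over the \emph{same} split-channel variables: one must be careful that adding the single-variable capacity rows and the degree-bound rows does not destroy total unimodularity, and that the ``forced equality'' constraints $\bar f_e=\bar z_e$ / $\bar f_e=0$ are modeled by deleting variables (which preserves TU) rather than by adding equality rows that couple $\bar f_e$ and $\bar g_e$ (which is fine too, being a unit-coefficient row, but cleanest handled by deletion). Beyond that, the only delicate point is making sure property~\ref{round:balance} is imposed only at the balanced nodes — at unbalanced nodes there is no constraint, so we simply omit that row — and checking that the fractional solution satisfies every imposed row, which is routine. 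Once the TU claim is in hand, everything else is a direct unpacking.
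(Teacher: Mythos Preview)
Your approach is the same as the paper's: split each edge into an $f$-channel and a $g$-channel, set up the linear system whose rows encode conservation of $\bar z=\bar f+\bar g$, the balance inequality on $\bar f$ at balanced nodes, and the in-degree bounds on $\bar f$ and $\bar g$ separately, then appeal to total unimodularity to round. The feasibility check and the translation back to properties \ref{round:circ}--\ref{round:cap} are exactly as you describe.

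The gap is the TU claim itself. You write that ``combining them \ldots still yields a TU matrix --- this is the standard fact that the constraint matrix of a flow problem with node-degree bounds is TU.'' But this is \emph{not} a single-commodity flow with node-degree bounds: the conservation rows couple the $f$- and $g$-variables together, while the balance rows and the two in-degree bound rows act on the $f$-variables and $g$-variables \emph{separately}. Stacking two TU matrices does not give a TU matrix in general, and a column of your system has up to five nonzeros (for an $f$-edge $(u,v)$: conservation at $u$, conservation at $v$, balance at $u$, balance at $v$, $f$-in-degree at $v$), which is outside the reach of the usual two-ones-per-column criterion. So the assertion needs proof, and this is exactly where the paper does the work.

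The paper's argument is to exhibit the matrix explicitly as a network matrix in the sense of Tutte (hence TU by Theorem~\ref{thm:TU}). One builds a tree $T$ on $4|V|+1$ nodes: a root $r$, and for each $v\in V$ four nodes $v^g,v^f,v^{f'},v^{g'}$ with tree edges $(r,v^g),(v^g,v^f),(v^f,v^{f'}),(v^g,v^{g'})$. Each $f$-channel edge $(u,v)$ is modeled as a non-tree edge $(u^f,v^{f'})$, and each $g$-channel edge as $(u^g,v^{g'})$. The tree-path signature of $(u^f,v^{f'})$ is then precisely the vector $(-1,-1,+1,+1,+1)$ in the rows $(r,u^g),(u^g,u^f),(r,v^g),(v^g,v^f),(v^f,v^{f'})$, and these four row types correspond exactly to conservation, balance, $f$-in-degree, and $g$-in-degree. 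So the constraint matrix is a network matrix and TU follows. You should supply this (or an equivalent) construction; your ``main obstacle I anticipate'' paragraph correctly flags the issue but does not resolve it.
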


The proof relies on the total unimodularity of network matrices. Let
$(V,E)$ be a directed graph and $T = (V,E_T)$ a directed tree on the
same node set. These define a {\em network matrix} $B\in
\mathbb{Z}^{|E_T|\times |E|}$ as follows. For every $e=(u,v)\in E$, let $P_e$
be the unique undirected $u-v$ path in the tree $T$. Then we set
$B_{e_T,e}=1$ if $e_T$ occurs in  forward direction in $P_{e}$,
$B_{e_T,e}=-1$ if $e_T$ occurs in the backward direction in $P_{e}$, and
$B_{e_T,e}=0$ if $e_T$ does not occur in $P_e$. We will use the
following result (see e.g.~\cite[(34) in Sec 19.3]{SchrijverLPIP}):

\begin{theorem}[\cite{Tutte65}] \label{thm:TU}
Every network matrix is totally unimodular.
\end{theorem}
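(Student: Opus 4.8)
\textbf{Proof plan for Theorem~\ref{thm:TU} (total unimodularity of network matrices).}
The plan is to prove that an arbitrary square submatrix of a network matrix has determinant in $\{-1,0,1\}$ by induction on the size of the submatrix. Fix a digraph $(V,E)$ and a directed spanning tree $T=(V,E_T)$ defining the network matrix $B$ with rows indexed by $E_T$ and columns by $E$, where $B_{e_T,e}=\pm1$ according to whether $e_T$ occurs forward or backward on the unique tree path $P_e$ joining the endpoints of $e$, and $0$ otherwise. It suffices to show: every square submatrix $M$ of $B$ has $\det M\in\{-1,0,1\}$. The base case, $1\times 1$ submatrices, is immediate since every entry of $B$ is in $\{-1,0,1\}$.

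For the inductive step, consider a $k\times k$ submatrix $M$ with rows corresponding to a subset $E_T'\subseteq E_T$ of $k$ tree edges and columns to a subset $E'\subseteq E$ of $k$ non-tree-indexed edges. First I would dispose of the easy cases: if some column of $M$ is zero, $\det M=0$; if some column has a single nonzero entry, expand the determinant along that column, reducing to a $(k-1)\times(k-1)$ submatrix of $B$, and apply the induction hypothesis together with the fact that the entry is $\pm1$. The heart of the argument is the remaining case, where every column of $M$ has at least two nonzero entries. Here I would exploit the tree structure: pick a tree edge $e_T\in E_T'$ that is a leaf of the subforest of $T$ induced by $E_T'$ — more precisely, choose $e_T=(p,q)\in E_T'$ such that one of its endpoints, say the child side in the rooted tree $T$, has no other edge of $E_T'$ incident below it. The key combinatorial observation is that for any edge $e=(u,v)$, the tree path $P_e$ uses $e_T$ if and only if exactly one of $u,v$ lies in the component of $T-e_T$ on the child side; equivalently, writing $W$ for the set of vertices separated from the root by removing $e_T$, we have $B_{e_T,e}\ne 0$ iff $e\in\delta(W)$, with sign $+1$ for one direction and $-1$ for the other. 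The plan is to use the leaf property to argue that the row of $e_T$ in $M$ can be cleared against another row via an integer row operation that preserves the determinant, or alternatively to perform a column operation: there is a standard manipulation where one replaces a leaf tree edge $e_T$ by contracting it in $T$ (merging $p$ and $q$), which changes some path incidences in a controlled way — each path $P_e$ that used $e_T$ now skips it, and the effect on the matrix is an integer unimodular transformation. Tracking this carefully shows $\det M$ equals $\pm\det M'$ for $M'$ a square submatrix of the network matrix of the contracted tree, which is smaller, and the induction hypothesis finishes the argument.

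The main obstacle I anticipate is making the contraction/row-operation step fully rigorous: one must verify that (i) contracting a suitably chosen leaf tree edge $e_T$ yields again a directed tree on the smaller vertex set (no issue since contracting a tree edge keeps it a tree), (ii) the new network matrix's entries relate to the old ones by exactly the expected $\pm1$ and $0$ pattern after the contraction, and crucially (iii) the determinant is multiplied only by $\pm1$ under the induced transformation — this requires either finding a column of $M$ that becomes a unit column after the operation, or identifying the transformation as elementary. A clean way around the bookkeeping is the classical route: show directly that $B$ is the incidence-type matrix relating cycle space and cut space, namely that $B$ is (up to reordering and signs) of the form obtained from a larger matrix by the relation between fundamental cycles and the tree; then invoke the interval/consecutive-ones-like structure. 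But the self-contained inductive contraction argument above is the approach I would actually write out, since it only uses elementary determinant expansion and the tree-path characterization of the entries.

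An alternative, perhaps cleaner, backup plan is to prove total unimodularity via the Ghouila-Houri criterion: show that every subset $R$ of rows (tree edges) of $B$ can be partitioned into $R_1\cup R_2$ so that for every column $e$, $\sum_{e_T\in R_1}B_{e_T,e}-\sum_{e_T\in R_2}B_{e_T,e}\in\{-1,0,1\}$. For a network matrix this partition can be read off geometrically: orient the subforest on $R$, root each tree component, and assign each tree edge $e_T$ to $R_1$ or $R_2$ according to the parity of its depth; then for any edge $e=(u,v)$, the tree path $P_e$ alternates direction-consistently so that the signed sum telescopes to a value in $\{-1,0,1\}$. I would present whichever of these two proofs compiles more smoothly; both rely only on the path characterization of entries already given in the statement, plus the relevant characterization theorem for total unimodularity, and neither requires machinery beyond what a reader of this section already has.
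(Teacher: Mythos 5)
The paper does not prove this statement at all: Theorem~\ref{thm:TU} is invoked as a black-box classical result, cited to Tutte and to \cite[Theorem 13.20]{Schrijver03}, so there is no in-paper argument to compare yours against. Your two proposed routes are both the standard textbook proofs, and either can be made to work. For the inductive route, the cleanest way to organize the bookkeeping you are worried about is to first observe that \emph{every submatrix of a network matrix is again a network matrix}: deleting a column is deleting an arc of $(V,E)$, and deleting the row of a tree edge $e_T$ is contracting $e_T$ in $T$ (every path $P_e$ simply drops $e_T$, which is exactly the entry pattern of the contracted tree's network matrix). This reduces the claim to showing that every \emph{square} network matrix has determinant in $\{-1,0,1\}$, which one proves by taking a leaf vertex $u$ of $T$ with incident tree edge $a'$, reorienting so all arcs at $u$ point away from $u$ (sign changes of rows/columns), and then subtracting the column of one arc $a_0=(u,w_0)$ from the columns of the other arcs incident to $u$ --- the resulting columns are the columns of the rerouted arcs $(w_0,w_1)$, after which the row of $a'$ has a single nonzero and cofactor expansion plus induction finishes; this is the ``unimodular transformation'' you gesture at, realized concretely as column operations. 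For the Ghouila--Houri route, your ``parity of depth'' coloring is not quite right as stated: the path $P_e$ restricted to the tree edges in $R$ need not be contiguous in $T$, so depths in $T$ do not alternate by one along it. You must first contract all tree edges outside $R$, properly $2$-color the vertices of the contracted tree, and assign each $e_T\in R$ to $R_1$ or $R_2$ according to the color of its head; then each step of the (contracted) path $P_e$ contributes the color of the vertex it arrives at, and the alternating sum telescopes to a value in $\{-1,0,1\}$. With either of these repairs your plan yields a complete and correct proof.
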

\newcommand{\blower}{b^{\mathrm{lower}}}
\newcommand{\bupper}{b^{\mathrm{upper}}}
\newcommand{\dlower}{d^{\mathrm{lower}}}
\newcommand{\dupper}{d^{\mathrm{upper}}}

Together with \cite[Theorem 19.1 and (4) in Sec 19.1]{SchrijverLPIP},  it follows that 
any LP of the form $\min \{ c^\top x : \blower\le Bx\le \bupper, \ell\le x\le u\}$ has an
integer optimal solution for any integer vectors $\blower,\bupper\in \mathbb{Z}^{|E_T|}$ and $\ell,u\in \mathbb{Z}^{|E|}$.
A fundamental
example of network matrices is the incidence matrix of a directed graph
$G=(V,E)$. Consider the digraph
$(V\cup\{r\},E)$ obtained by adding a new vertex $r$ and the tree $T=(V\cup
\{r\}, \{(r,u): u\in V\})$ that forms a star. The associated network
matrix is identical to the incidence matrix of $G$. The proof below
extends this simple construction to capture
all degree requirements.

\begin{proof}[Proof of Lemma~\ref{lem:TU}]
Let us introduce new variables $\bar g=\bar z-\bar f$ (the notation already
used in property~\ref{round:node-cap-g}). The integrality of $\bar z$ is equivalent to the
integrality of $\bar f$ and $\bar g$.

Let us construct a network matrix $B\in \mathbb{Z}^{4|V|\times 2|E|}$
as follows. We define a tree
$T=(V',E_T)$, where $V'$ contains a
root node $r$, and for each $v\in V$, we include four nodes $v^g$,
$v^f$, $v^{g'}$, $v^{f'}$ and four edges $(r,v^g)$, $(v^g,v^f)$,
$(v^f,v^{f'})$ and $(v^{g},v^{g'})$ in $E_T$. Let us define
the directed graph $(V',E')$, where $E'=E_f'\cup E_g'$ is obtained as follows: for each
$(u,v)\in E$, we add an edge $(u^f, v^{f'})$ to $E_f'$ and an edge $(u^g ,v^{g'})$ to $E_g'$.
Let $B$ be the network matrix corresponding to $(V',E')$ and $T$ (see Figure~\ref{fig:network_matrix} for an example). That is,
$B$ is an $|E_T|\times |E'|$ matrix such that 
\begin{itemize}
\item the column corresponding to $(u^f ,v^{f'})\in E_f'$ has a $-1$ entry
  in the rows corresponding to the edges $(r,u^g)$ and $(u^g,u^f)$, and $+1$ entries
  corresponding to the edges $(r,v^g)$, $(v^g,v^{f})$, and $(v^f,v^{f'})$; all other
  entries are $0$;
\item the column corresponding to $(u^g, v^{g'})\in E_g'$ has a $-1$ entry
  in the row corresponding to the edge $(r,u^g)$, and $+1$ entries
  corresponding to the edges $(r,v^g)$ and $(v^g,v^{g'})$; all other
  entries are $0$.
\end{itemize}

\begin{figure}[t!]
\centering

\subfloat {
\begin{tikzpicture}[scale=0.85]

	\node[sgvertex,minimum size=17pt] (r) at (0,0) {$r$};
	
	\begin{scope}
		\node[sgvertex,minimum size=17pt] (vg) at (0,-2) {$v^g$};
		\node[sgvertex,minimum size=17pt] (vf) at (-1,-4) {$v^f$};
		\node[sgvertex,minimum size=17pt] (vfp) at (0,-6) {$v^{f'}$};
		\node[sgvertex,minimum size=17pt] (vgp) at (1,-4) {$v^{g'}$};
		\draw (r) edge[->] node[] {} (vg);
		\draw (vg) edge[->] node[] {} (vf);
		\draw (vf) edge[->] node[] {} (vfp);
		\draw (vg) edge[->] node[] {} (vgp);
	\end{scope}
	
	\begin{scope}[xshift=4.5cm]
		\node[sgvertex,minimum size=17pt] (ug) at (0,-2) {$u^g$};
		\node[sgvertex,minimum size=17pt] (uf) at (-1,-4) {$u^f$};
		\node[sgvertex,minimum size=17pt] (ufp) at (0,-6) {$u^{f'}$};
		\node[sgvertex,minimum size=17pt] (ugp) at (1,-4) {$u^{g'}$};
		\draw (r) edge[->] node[] {} (ug);
		\draw (ug) edge[->] node[] {} (uf);
		\draw (uf) edge[->] node[] {} (ufp);
		\draw (ug) edge[->] node[] {} (ugp);
	\end{scope}
	
	\begin{scope}[xshift=-4.5cm]
		\node[sgvertex,minimum size=17pt] (wg) at (0,-2) {$w^g$};
		\node[sgvertex,minimum size=17pt] (wf) at (-1,-4) {$w^f$};
		\node[sgvertex,minimum size=17pt] (wfp) at (0,-6) {$w^{f'}$};
		\node[sgvertex,minimum size=17pt] (wgp) at (1,-4) {$w^{g'}$};
		\draw (r) edge[->] node[] {} (wg);
		\draw (wg) edge[->] node[] {} (wf);
		\draw (wf) edge[->] node[] {} (wfp);
		\draw (wg) edge[->] node[] {} (wgp);
	\end{scope}
	
	\draw (vg) edge[->, bend right=20, dashed] node[] {} (wgp);
	\draw (ug) edge[->, bend right=20, dashed] node[] {} (vgp);
	\draw (vf) edge[->, bend left=20, dashed] node[] {} (wfp);
	\draw (uf) edge[->, bend left=20, dashed] node[] {} (vfp);

\end{tikzpicture}
}

\subfloat {
\begin{tabular}{l|ll:ll|l}
	part of $v$'s: & inc.~$f$-flow & inc.~$g$-flow & out.~$f$-flow & out.~$g$-flow & corresponds to \\
	& $(u^f, v^{f'})$ & $(u^g, v^{g'})$ & $(v^f, w^{f'})$ & $(v^g, w^{g'})$ & $v$'s constraint: \\
	\hline
	$(r, v^g)$ & 1 & 1 & -1 & -1 & \ref{round:circ} \\
	$(v^g, v^f)$ & 1 & 0 & -1 & 0 & \ref{round:balance} \\
	$(v^f, v^{f'})$ & 1 & 0 & 0 & 0 & \ref{round:node-cap-f} \\
	$(v^g, v^{g'})$ & 0 & 1 & 0 & 0 & \ref{round:node-cap-g} \\
	\hdashline
	$(r,u^g)$ & -1 & -1 & 0 & 0 & \\
	$(u^g, u^f)$ & -1 & 0 & 0 & 0 & \\
	$(u^f, u^{f'})$ & 0 & 0 & 0 & 0 & \\
	$(u^g, u^{g'})$ & 0 & 0 & 0 & 0 & \\
	\hline
\end{tabular}
}

\caption{
	The network matrix in the proof of Lemma~\ref{lem:TU}.\\
	\textbf{Top:} fragment of the tree $T = (V',E_T)$ corresponding to three vertices $u$, $v$, $w$ and two edges $(u,v)$, $(v,w)$ (whose images are dashed).\\
	\textbf{Bottom:} fragment of the network matrix $B$ corresponding to vertices $u$, $v$ and edges $(u,v)$, $(v,w)$,
illustrating how rows of $B$ encode the degree requirements on vertex $v$.
}
\label{fig:network_matrix}
\end{figure}
 
Then $B$ is a totally unimodular matrix according to Theorem~\ref{thm:TU}.
Using the variables $\bar f$ and $\bar g$, the system
\ref{round:circ}-\ref{round:cap} can be equivalently written in the form
\[
\begin{aligned}
\blower \le B (\bar f,\bar g) &\le \bupper \\
\dlower \le (\bar f,\bar g) &\le \dupper,
\end{aligned}
\]
where $\blower,\bupper,\dlower,\dupper$ are vectors with integer or infinite entries.
(Here, $(\bar f, \bar g) \in \R^{|E|+|E|}$ is a column vector, and $B(\bar f, \bar g)$ the matrix-vector product.)
Namely, the variable $\bar f(u,v)$ corresponds to the column for
$(u^f,v^{f'})\in E_f'$, and the variable $\bar g(u,v)$ corresponds to the column for
$(u^g,v^{g'})\in E_g'$. See also Figure~\ref{fig:network_matrix}.
\begin{itemize}
\item The rows for $(r,v^g)\in E_T$ describe the
flow conservation constraints as in \ref{round:circ} with
$\blower(r,v^g)=\bupper(r,v^{g})=0$.
\item The rows for $(v^g,v^f)\in E_T$ describe the constraints as in
\ref{round:balance}, with $\blower(r,v^f)=-\infty$ and $\bupper(r,v^f)=0$ for those $v \in V$ with
$f(\delta^+(v))\ge f(\delta^-(v))$, and $\bupper(r,v^f)=+\infty$ for other $v \in V$.
\item The rows for $(v^f,v^{f'})\in E_T$ describe the constraints as
  in \ref{round:node-cap-f} with $\blower(v^f,v^{f'})=\lfloor
  f(\delta^-(v))\rfloor$ and $\bupper(v^f,v^{f'})=\lceil
  f(\delta^-(v))\rceil$.
\item The rows for $(v^g,v^{g'})\in E_T$ describe the constraints as
  in \ref{round:node-cap-g} with $\blower(v^g,v^{g'})=\lfloor
  g(\delta^-(v))\rfloor$ and $\bupper(v^f,v^{f'})=\lceil
  g(\delta^-(v))\rceil$ (recall that $g=z-f$).
\item The capacity constraints $\dlower \le (\bar f,\bar g) \le \dupper$ are used to
describe \ref{round:cap}.
\end{itemize}

With $g=z-f$, the vector $(f,g)$ is a feasible solution to
the system. Consider the cost function $(w,w)\in \R^{2|E|}$. 
Using total unimodularity, there must be an integer solution $(\bar
f,\bar g)$ with $(w,w)^\top (\bar f,\bar g)\le (w,w)^\top (f,g)=w^\top
z$.
Thus, $\bar z=\bar f+\bar g$ and $\bar f$ satisfy the statement of the lemma.
\end{proof}

 Equipped with 
the above lemmas, we are ready to prove
Lemma~\ref{lem:mainLCATSP}.

\begin{proof}[Proof of Lemma~\ref{lem:mainLCATSP}]
We are given a vertebrate pair $(\cI,B)$ and disjoint non-empty vertex
sets 
$U_1, \ldots, U_{\ell} \subseteq V\setminus V(B)$ such
  that the subgraphs $G[U_1], \ldots, G[U_{\ell}]$ are strongly connected and for every $S\in \cL_{\geq 2}$ and $i = 1, \ldots, \ell$, we have either
  $U_i \cap S = \emptyset$ or $U_i \subseteq S$.
For the Held-Karp solution $x$, let $f$ be the witness flow guaranteed
by Lemma~\ref{lem:x-witness}.

We can assume that each edge $e$ has either $f(e)=0$ or $f(e)=x(e)$
without loss of generality, by breaking $e$ up into two parallel
copies and dividing its $x$ and $f$ values between them
appropriately.
We say that an edge $e$ is {\em marked} if $f(e)=x(e)$ and
  {\em unmarked} otherwise.

Let us now introduce a convenient decomposition of $x$
that we will obtain and utilize in our algorithm.
 By a {\em 2-cycle} we mean a closed walk
  that visits every vertex at most twice and contains every edge at
  most once.
A 2-cycle $C\subseteq E$ is {\em consistent} if, for any two
consecutive edges $(u,v), (v,v')\in C$ with $v\notin V(B)$, if 
$(u,v)$ is marked then $(v,v')$ is also marked.
\begin{claim}\label{claim:cycle-decomp}
We can in polynomial time decompose $x$  into consistent 2-cycles. That
is, there is a polynomial-time algorithm that outputs consistent 2-cycles $C_1,C_2,\ldots,C_\ell$
and multipliers $\lambda_1,\lambda_2,\ldots,\lambda_\ell\ge 0$ such
that $x=\sum_{i=1}^\ell \lambda_i \one_{C_i}$, where $\one_{C}\in \{0,1\}^E$ denotes the indicator vector of a 2-cycle $C$.
\end{claim}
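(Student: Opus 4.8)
The plan is to reduce the statement to the standard fact that a nonnegative circulation decomposes into simple cycles, by passing to an auxiliary ``split'' digraph $G'$ that remembers whether flow moves through a vertex along marked or unmarked edges. Recall that, as noted just before the claim, we may assume $f(e)\in\{0,x(e)\}$ for every edge $e$, so that being \emph{marked} ($f(e)=x(e)$) or \emph{unmarked} ($f(e)=0$) is a property of an individual edge. In $G'$ I would replace each $v\in V$ by four copies $v^m_{\mathrm{in}},v^u_{\mathrm{in}},v^m_{\mathrm{out}},v^u_{\mathrm{out}}$; every marked (unmarked) edge of $G$ entering $v$ is rerouted to enter $v^m_{\mathrm{in}}$ ($v^u_{\mathrm{in}}$), and every marked (unmarked) edge leaving $v$ is rerouted to leave $v^m_{\mathrm{out}}$ ($v^u_{\mathrm{out}}$). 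Inside the $v$-gadget I always add the internal edges $(v^m_{\mathrm{in}},v^m_{\mathrm{out}})$, $(v^u_{\mathrm{in}},v^m_{\mathrm{out}})$ and $(v^u_{\mathrm{in}},v^u_{\mathrm{out}})$, and I additionally add $(v^m_{\mathrm{in}},v^u_{\mathrm{out}})$ precisely when $v\in V(B)$.

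The first step is to lift $x$ to a nonnegative circulation $x'$ on $G'$ that agrees with $x$ on the copies of the original edges. On the internal edges of the $v$-gadget for $v\notin V(B)$ the values are forced: one sets $x'(v^m_{\mathrm{in}},v^m_{\mathrm{out}})=f(\delta^-(v))$, $x'(v^u_{\mathrm{in}},v^m_{\mathrm{out}})=f(\delta^+(v))-f(\delta^-(v))$ and $x'(v^u_{\mathrm{in}},v^u_{\mathrm{out}})=x(\delta^-(v))-f(\delta^+(v))$; for $v\in V(B)$ one takes any feasible solution of the $2\times 2$ transportation problem with row sums $(f(\delta^-(v)),\,x(\delta^-(v))-f(\delta^-(v)))$ and column sums $(f(\delta^+(v)),\,x(\delta^+(v))-f(\delta^+(v)))$, e.g.\ the north-west-corner solution. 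Checking that this is a valid nonnegative circulation is the one place that requires care: nonnegativity of the three forced values uses $f\le x$ (property~\ref{flow:capacities}), the circulation identity $x(\delta^-(v))=x(\delta^+(v))$, and the witness inequality $f(\delta^+(v))\ge f(\delta^-(v))$ (property~\ref{flow:conservation}); flow conservation at each of the four copies of each vertex is then a direct computation, and for $v\in V(B)$ feasibility of the transportation problem is immediate because its row and column totals both equal $x(\delta^-(v))$. Everything here is computable in polynomial time.

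Next I would decompose $x'$ into simple cycles in the standard way --- repeatedly peel off a simple cycle contained in the support of the current residual circulation, with multiplier equal to its minimum value --- obtaining $x'=\sum_{i=1}^{\ell}\lambda_i\one_{C'_i}$ after polynomially many steps. Let $C_i$ be the projection of $C'_i$ to $G$ obtained by contracting each gadget and deleting internal edges; note $C_i$ is a nonempty closed walk since each gadget is acyclic, so $C'_i$ uses at least one original edge, and $\sum_i\lambda_i\one_{C_i}=x$ because restricting $x'$ to the original edges gives $x$. It then remains to verify that each $C_i$ is a consistent $2$-cycle. Edge-simplicity is immediate: a simple cycle of $G'$ uses each original edge at most once. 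For the vertex bound, the number of visits of $C_i$ to $v$ equals the number of internal edges of the $v$-gadget used by $C'_i$; since $C'_i$ is simple it uses the node $v^m_{\mathrm{out}}$ at most once and $v^u_{\mathrm{out}}$ at most once, hence at most one internal edge entering each, i.e.\ at most two internal edges in total, so $C_i$ visits $v$ at most twice. Finally, for $v\notin V(B)$ the only internal edge out of $v^m_{\mathrm{in}}$ ends at $v^m_{\mathrm{out}}$, whose outgoing original edges are exactly the marked edges leaving $v$; therefore, in $C_i$, a marked edge entering $v$ is always followed by a marked edge leaving $v$, which is exactly the consistency condition (and at vertices of $V(B)$ there is nothing to check).

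The hard part is really only the routine bookkeeping of the first step --- getting the gadget's internal-edge set exactly right (enough edges that $x$ lifts, few enough that consistency is forced) and checking nonnegativity and conservation in both the $v\notin V(B)$ and $v\in V(B)$ cases. A minor point to keep track of throughout is that the parallel copies created when normalizing $f$ to $\{0,x(e)\}$-values are treated as distinct edges, and that the projection $C'_i\mapsto C_i$ genuinely produces a closed walk (consecutive original edges of $C'_i$ meet at the same gadget).
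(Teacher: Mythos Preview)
Your proof is correct, but it takes a genuinely different route from the paper's. The paper argues directly: it repeatedly grows a walk on the support of the current residual of $x$, following the rule ``after a marked edge continue marked if possible, after an unmarked edge continue unmarked if possible,'' and terminates via a small case analysis (second visit to a backbone vertex; second visit to a non-backbone vertex if the types match or all outgoing edges are marked; third visit to a non-backbone vertex, choosing which of the two possible segments to close off). It then checks that the peeled-off walk is a consistent 2-cycle and that property~\ref{flow:conservation} is preserved for the remainder.

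Your split-graph reduction sidesteps this case analysis entirely: consistency is encoded in the gadget (no edge $v^m_{\mathrm{in}}\to v^u_{\mathrm{out}}$ for $v\notin V(B)$), the at-most-two-visits bound falls out of simplicity in $G'$, and the whole argument reduces to the off-the-shelf simple-cycle decomposition of a circulation. This is exactly the ``split graph'' device that the paper mentions was used in the conference version~\cite{SvenssonTV18} and in~\cite{SvenssonTV16}, and which the authors replaced here by the direct walk argument together with witness flows. The trade-off is that your approach is shorter and more structural, while the paper's avoids introducing the auxiliary graph and keeps everything phrased in terms of $x$ and $f$ on $G$; neither approach dominates the other, and both yield the claim with the same polynomial running time.
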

\begin{proof}
The proof uses a variant of the standard cycle decomposition argument. We identify
2-cycles by constructing walks on edges in the support of
$x$. The algorithm identifies and removes 2-cycles one by one. After
the removal of every cycle, we maintain property \ref{flow:conservation} of
witness flows: for every $v\in V \setminus V(B)$ the outgoing flow
amount on
marked edges is greater or equal to the incoming flow amount on marked
edges.

We find walks using the following procedure. We start on an arbitrary
(marked or unmarked) edge. 
If the walk uses a marked edge then let us select the next edge as a
marked edge whenever
possible, and after an unmarked edge let us continue on an
unmarked edge if possible. We terminate according to the following
rules:
\begin{enumerate}
\item If we visit a node $v\in V(B)$ the second time, then we terminate
  the current walk. We select $C$ as the segment of the walk between the
  first and second visits to $v$.
  (If we started the walk from $v$, we also count this as a visit.)
\item When visiting a node $v\in V\setminus V(B)$ for the second time, \label{cond:rule2}
  we terminate if
  the outgoing edge of the first visit
  and
  the incoming edge of the second visit
  are of the same type
  (marked or unmarked).
  We also terminate if every outgoing edge of $v$ is marked.
  We select $C$ as the segment of
  the walk between the first and second visits to $v$.
\item If we visit a node in $v\in V\setminus V(B)$ the third time,
  then we always terminate. Let $(v,u_1)$ and $(v,u_2)$ be the edges where
  the walk left $v$ for the first and second time, and let $(z_1,v)$
  and $(z_2,v)$ be the edges where we arrived to $v$ the second and
  third times. If $(v,u_2)$ and $(z_2,v)$ are both unmarked or both
  marked, then we let $C$ be the segment of the walk between
  $(v,u_2)$ and $(z_2,v)$. Otherwise, we let $C$ be the segment of the
  walk between $(v,u_1)$ and $(z_2,v)$ (visiting $v$ twice).
  (These edges are of the same type,
  and so are $(z_1,v)$ and $(v,u_2)$.)
\end{enumerate}
Given $C$, we let
$\lambda$ denote the minimum flow value on any edge of $C$. We
decrease the value of $x$ on every edge of $C$ by $\lambda$
(deleting an edge if its $x$-value becomes zero), and also
the value of $f$ on every marked edge of $C$. We add $C$ to the
decomposition with coefficient $\lambda$
and proceed to finding the
next 2-cycle.

Assume property \ref{flow:conservation} of witnessed flows holds when
we start the walk. Thus, whenever the walk enters a node $v\in
V\setminus V(B)$ on a marked edge, it can continue on a marked edge \bll{(or terminate, if it is the second visit, and the first outgoing edge was marked)}.
The above rules guarantee that $C$ will be a consistent 2-cycle: if we
close $C$ in a node $v\in V\setminus V(B)$, then if the incoming edge
is marked, the outgoing edge will also be marked.
Moreover, if the walk is not terminated, then it can continue on a yet-unused edge.

It remains to show that property
\ref{flow:conservation}  is maintained after removing $C$. Assume we
decrease the outgoing flow at a node $v\in V\setminus V(B)$ on a
marked edge. First, notice that there can be at most one outgoing
marked edge from $v$ on $C$. If there was also an incoming marked edge on $C$, then
the marked flow decreases by the same amount on these two
edges. If all incoming edges were unmarked, then our rules
implies that all outgoing edges at $v$ are marked. In this case,
\ref{flow:conservation}  is trivially maintained
(as $x$ is, and remains, a circulation).

The proof can be immediately turned into a polynomial-time
algorithm. Notice that the number of edges decreases by at least one
at the removal of every cycle.
\end{proof}

\begin{figure}[t]
	\centering
	\begin{tabular}{lllll}
		 & graph & integral & obtained from the previous by \\
		\hline
		$x$      & $G$ & no & input to Lemma~\ref{lem:mainLCATSP} \\
		$x, f$   & $G$ & no & finding a witness flow (Lemma~\ref{lem:x-witness}) \\
		$x', f'$ & $G'$ & no & introducing $a_i$, redirecting edges, subtracting $x_i$ and $f_i$ \\
		$\bar z', \bar f'$ & $G'$ & yes & rounding (Lemma~\ref{lem:TU} applied to $z=2x'$ and $2f'$) \\
		$\bar z, \bar f$ & $G$ & yes & un-redirecting edges, mapping back to $G$ \\
		$z^*, f^*$ & $G$ & yes & adding walks $P_i$ (to $\bar f$ only if $\bar f'(\delta^-(a_i)) = 1$)
\end{tabular}
	\caption{This table summarizes the various circulations and flows that appear in our algorithm, in order.}
	\label{tab:flows}
\end{figure}

We will construct an auxiliary graph $G'=(V',E')$ similarly as in the proof of
Theorem~\ref{thm:lcapprox}. 
For convenience, Figure~\ref{tab:flows} gives an overview of the different steps, graphs and flows used by our algorithm.
We select edge sets $X_i^-$, $X_i^+$ and
flows $x_i$ and $ f_i$ as follows:
\begin{itemize}
\item For every $U_i$ we can select
(possibly by subdividing edges as above)
a subset of incoming edges
$X_i^- \subseteq \delta^-(U_i)$ with
  $x(X_i^-)= \nicefrac12$ such that either all edges $e\in X_i^-$ are marked
  or all are unmarked.
  This is possible since $x(\delta^-(U_i)) \ge 1$ (and all edges are either marked or unmarked).

\item Take the decomposition of $x$ into 2-cycles as guaranteed by Claim~\ref{claim:cycle-decomp},
and follow the incoming edges in $X^-_i$ in the decomposition.
We let $X^+_i$ be the set of edges on which these walks first leave $U_i$
after entering on an edge in $X^-_i$.
We let $x_i$ denote the respective $x$-flow on the 
  segments of these walks connecting  the heads of edges in $X_i^-$ and the tails
  of edges in $X_i^+$. On the same edges, we define $f_i$ by
  $f_i(e)= x_i(e)$ for every marked edge and $f_i(e)=0$
  for every unmarked edge.
\end{itemize}

Note that we have $0 \le f - f_i \le x - x_i$. \bl{We further claim that 
$(f - f_i)(\delta^+(v)) \ge (f - f_i)(\delta^-(v) \setminus X_i^-)$ for every $v \in V
\setminus V(B)$.
To see this, consider the consistent 2-cycles in the decomposition. In each of these $2$-cycles an incoming marked edge must be followed by an outgoing marked edge when considering a vertex $v\in V \setminus V(B)$. This gives a (fractional) pairing of incoming and outgoing edges of $v$ such that each incoming marked edge is paired with an outgoing marked edge.   By our selection of $X_i^-$ and $x_i$ using the consistent $2$-cycles, we have that the marked incoming edges that were not used by $x_i$ or $X_i^-$ are still paired with marked outgoing edges that were not used by $x_i$. We thus have $(f - f_i)(\delta^+(v)) \ge (f - f_i)(\delta^-(v) \setminus X_i^-)$ for every $v \in V
\setminus V(B)$ as claimed. }

We now transform $G$ into a new graph $G'$, $x$ into a new circulation
$x'$, and $f$ into a new $f'$, as follows. For every $i=1,\ldots, \ell$, we introduce a
new auxiliary vertex $a_i$ and redirect
all edges in $X_i^-$ to point to $a_i$, and those in $X_i^+$ to point from
$a_i$.  We  subtract the flow $x_i$ from $x$ and $f_i$ from
$f$ inside $U_i$; hence the resulting
vector $x'$ will be a circulation in $G'$, and $f'\le x'$.
We now have
$f'(\delta^+(v)) \ge f'(\delta^-(v))$ for all $v \in V \setminus V(B)$ \bl{(using that $(f - f_i)(\delta^+(v)) \ge (f - f_i)(\delta^-(v) \setminus X_i^-)$ for every $v \in V
\setminus V(B)$ as explained above)}.
We also have 
$x'(\delta^-(a_i)) = x'(\delta^+(a_i)) =\nicefrac12$, and either
$f'(\delta^-(a_i))=0$ (in the case when $f(X^-_i) = 0$)
or
$f'(\delta^-(a_i))=f'(\delta^+(a_i))=\nicefrac12$ (in the case when
$f(X^-_i) = \nicefrac12$: since in this case all edges in $X_i^+$ must
also be marked \bl{due to the facts that the $2$-cycles are consistent and $U_i \cap V(B) = \emptyset$)}.
 We define the
weights $w'(e)=w(e)$ if $e$ was not modified, and for every
redirected edge $e$, we set $w'(e)$ as the weight of the edge it was redirected from.
Thus, the total weight may only decrease: $\sum_{e\in
  E'}w'(e)x'(e)\leq \sum_{e\in E}w(e)x(e)=\valu(\cI)$ (it decreases if
the flows $x_i$ are nonzero on edges with positive weight). 

Let us now apply Lemma~\ref{lem:TU} in the graph $G'$ with weights
$w'$ and $z=2x'$ and
$2f'$ in place of $f$. We thus obtain the integer vectors $\bar z'$
and $\bar f'$ with ${w'}^\top \bar z'\le {w'}^\top z$.  Note in
particular that properties \ref{round:node-cap-f},~\ref{round:node-cap-g} 
imply that
$\bar z'(\delta^-(a_i))=1$ for every auxiliary vertex $a_i$;
this is because
$z(\delta^-(a_i))=1$
and
$2f'(\delta^-(a_i)) \in \{0,1\}$.

Now we map $\bar z'$ and $\bar f'$ back to the vectors $\bar z$ and
$\bar f$ in the original graph $G$. Namely, if $e$ is incident to an
auxiliary vertex $a_i$, then we reverse the redirection of this edge,
and move $\bar z$ and $\bar f$ to the original graph. 

The vector $\bar z$ may not be a circulation. Specifically, since the in- and
out-degree of $a_i$ were exactly $1$ in $\bar z'$, in each component $U_i$ there
is a pair of vertices $u_i$, $v_i$ which are the head and tail,
respectively, of the mapped-back edges adjacent to $a_i$. These are the only
vertices whose in-degree may differ from their out-degree. (They differ unless
$u_i = v_i$.) To repair this, for each $i = 1,\ldots, \ell$ with
$u_i\neq v_i$, we route a path $P_i$ from
$u_i$ to $v_i$ in $U_i$; this is always possible as we assumed that
$U_i$ is strongly connected.

 We obtain a circulation $z^*$ from $\bar
z$ by increasing the value by 1 on every edge of every such path $P_i$. Further, we
obtain $f^*$ from $\bar f$ as follows. If $\bar f'(\delta^-(a_i))=0$,
then we let $f^*$ be identical to $\bar f$ inside $U_i$. If $\bar
f'(\delta^-(a_i))=1$, then we increase the value of $\bar f$ by 1 on
every edge of $P_i$. 

\begin{claim} \label{claim:bound_on_wtz}
We have $w^\top z^*\le 2\valu(\cI)+\lb_{\cI}(\bar B)$, and $f^*$ is a
witness flow for $z^*$.
\end{claim}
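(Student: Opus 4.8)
The claim has two halves: the weight bound $w^\top z^* \le 2\valu(\cI) + \lb_\cI(\bar B)$ and the assertion that $f^*$ is a witness flow for $z^*$. The plan is to track the weight and the flow-properties through each of the transformations recorded in Figure~\ref{tab:flows}.

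For the weight bound, I would start from $\sum_{e\in E'} w'(e) x'(e) \le \valu(\cI)$, which was established when we defined $x'$ (the weight only decreases when we subtract the $\hat x_i$). Applying Lemma~\ref{lem:TU} to $z = 2x'$ gives ${w'}^\top \bar z' \le {w'}^\top (2x') \le 2\valu(\cI)$. Mapping $\bar z'$ back to $\bar z$ on $G$ preserves this, since the weights $w'$ on redirected edges were defined exactly as the weight of the original edge: $w^\top \bar z = {w'}^\top \bar z' \le 2\valu(\cI)$. Finally, $z^*$ differs from $\bar z$ only by adding, for each $i$ with $u_i \ne v_i$, one copy of a walk $P_i$ inside $U_i$. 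Since $U_i \subseteq V \setminus V(B)$ and $U_i$ is contained in a minimal set containing it (or $V$), the edges of $P_i$ are interior to $U_i$; each such edge $(u,v)$ crosses only singleton sets $\{u\}, \{v\}$ with $u, v \in V\setminus V(B)$. A minimum-weight walk from $u_i$ to $v_i$ inside $U_i$ — which we may take $P_i$ to be — has weight at most $\valu(U_i) \le 2\sum_{v \in U_i} y_v$ (here I would invoke Lemma~\ref{lem:short_path}, using that $\cL_{\geq 2} \cup \{U_i\}$ is laminar, as noted right after Lemma~\ref{lem:mainLCATSP}). Summing over the disjoint sets $U_i$, all contained in $V \setminus V(B)$, the total added weight is at most $2\sum_{v\in V\setminus V(B)} y_v = \lb_\cI(\bar B)$. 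Adding the two contributions gives $w^\top z^* \le 2\valu(\cI) + \lb_\cI(\bar B)$.

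For the witness-flow part, I would check the four defining properties \ref{flow:capacities}--\ref{flow:forward} for the pair $(f^*, z^*)$ on $G$. Properties \ref{flow:capacities} ($f^* \le z^*$) and the forward/backward edge conditions \ref{flow:backward},\ref{flow:forward} should follow from the corresponding properties of $(2f', 2x')$ on $G'$: the paths $P_i$ lie inside $U_i$ and hence consist of neutral edges (both endpoints at the same level, since $U_i$ is inside a single minimal non-singleton set or inside $V$), so adding $P_i$ to both $\bar z$ and $\bar f$ — or to neither — does not affect the forward/backward conditions, and property~\ref{round:cap} of Lemma~\ref{lem:TU} transports $\bar f' \le \bar z'$ and the equality/zero pattern. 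The substantive point is the conservation inequality \ref{flow:conservation}: $f^*(\delta^+(v)) \ge f^*(\delta^-(v))$ for every $v \in V \setminus V(B)$. On $G'$, Lemma~\ref{lem:TU}\ref{round:balance} gives $\bar f'(\delta^+(v)) \ge \bar f'(\delta^-(v))$ at every $v$ where $2f'(\delta^+(v)) \ge 2f'(\delta^-(v))$, which holds for all $v \in V\setminus V(B)$ by construction of $f'$. Mapping back to $G$, this inequality holds at every vertex except possibly the $u_i, v_i$ (the endpoints of the mapped-back edges at $a_i$), where the in- and out-degrees can be off by one. The repair is exactly the point of how $f^*$ is defined from $\bar f$: if $\bar f'(\delta^-(a_i)) = 0$, then the mapped-back edges at $a_i$ carry no $\bar f$-flow, so $\bar f$ already satisfies conservation at $u_i$ and $v_i$ and we leave it unchanged; if $\bar f'(\delta^-(a_i)) = 1$, then the mapped-back incoming edge at $u_i$ carries one unit of $\bar f$ and the outgoing edge at $v_i$ carries one unit, so increasing $\bar f$ by $1$ along all edges of $P_i$ from $u_i$ to $v_i$ restores conservation along the whole path. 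Checking this case analysis carefully at $u_i$ and $v_i$ is the main obstacle, but it is a routine bookkeeping argument once one is careful about which of the two cases applies and about the fact that the $U_i$ are disjoint, so the paths $P_i$ do not interfere with each other.
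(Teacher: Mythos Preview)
Your proposal is correct and follows the same route as the paper: the weight bound is argued identically (the paper simply notes that the $P_i$ do not cross any set in $\cL_{\ge 2}$ rather than invoking Lemma~\ref{lem:short_path}, but either works), and the witness-flow verification is the same case analysis at $u_i,v_i$.

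One point deserves more than ``routine bookkeeping'': in the case $\bar f'(\delta^-(a_i)) = 1$ you assert that the outgoing mapped-back edge at $v_i$ carries one unit of $\bar f$, i.e.\ $\bar f'(\delta^+(a_i)) = 1$. This is true but requires an argument: $\bar f'(\delta^-(a_i)) = 1$ forces (via property~\ref{round:node-cap-f}) that $2f'(\delta^-(a_i)) \ge 1$, so we are in the all-marked case where $f'(\delta^-(a_i)) = f'(\delta^+(a_i)) = \tfrac12$; then property~\ref{round:balance} of Lemma~\ref{lem:TU} applied at $a_i$ gives $\bar f'(\delta^+(a_i)) \ge \bar f'(\delta^-(a_i)) = 1$, and $\bar f' \le \bar z'$ with $\bar z'(\delta^+(a_i)) = 1$ forces equality. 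Without this, the unit of slack at $v_i$ needed to absorb the extra incoming $f^*$-flow from $P_i$ is not established. (In the other case $\bar f'(\delta^-(a_i)) = 0$, your claim that \emph{both} mapped-back edges carry no $\bar f$-flow need not hold for the outgoing one --- but any positive flow there only helps conservation at $v_i$, so the conclusion stands.)
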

\begin{proof}
By the construction, $w^\top \bar z = w'^\top \bar z' \le w'^\top z = 2 w'^\top x' \le 2 w^\top x = 2\valu(\cI)$. We obtained
$z^*$ from $\bar z$ by increasing it on a set of disjoint paths $P_i$ in
$V\setminus V(B)$, and these paths do not cross any set in $\cL_{\ge
  2}$. Thus, their total cost is bounded by $\lb_{\cI}(\bar B) =2\sum_{v\in V\setminus
  V(B)} y_v$.

\bl{We next verify that $f^*$ is a witness flow for $z^*$. Property 
  \ref{flow:capacities}, i.e., that $f^* \leq z^*$ is clear from the construction. We proceed to verify properties \ref{flow:backward} and \ref{flow:forward}, which state that $f^*(e) = 0$ for each backward edge $e$ and $f^*(e) = z^*(e)$ for each forward edge, respectively. This holds for the initial witness flow $f$. We are now going to use that all edges in $G[U_i]$ are neutral since for every $S\in \cL_{\geq 2}$ either $U_i \cap S = \emptyset$ or $U_i \subseteq S$.  It follows that $f'$ and $f$ only differ in neutral edges and hence $f'$ also satisfies properties \ref{flow:backward} and \ref{flow:forward}. By Lemma~\ref{lem:TU}, we thus have that $\bar f'$ also satisfies these properties. Finally, $f^*$ is obtained from $\bar f'$  by reversing the redirection of the edges incident to the auxiliary vertices and potentially increasing the flow value on neutral edges. As these operations maintain the properties, we have that $f^*$ satisfies properties \ref{flow:backward} and \ref{flow:forward}.}

Finally, for  \ref{flow:conservation}, recall that we had
$f'(\delta^+(v)) \geq  f'(\delta^-(v))$ for every $v\in V \setminus
V(B)$, and hence the same holds for $\bar f'$ by
Lemma~\ref{lem:TU}. For $\bar f$, this condition can be violated only
at some nodes $u_i$ for some values of $i=1,2,\ldots,\ell$ \bl{for which $\bar f' (\delta^-(a_i)) = \bar f'(\delta^+(a_i))  = 1$ and $u_i \neq v_i$}. We
obtain $f^*$ from $\bar f$ by increasing the flow value  on the  path
$P_i$ from $u_i$ to $v_i$ for such $i$; this increases $f^*(\delta^+(u_i))$ so that $f^*(\delta^+(u_i))
\geq  f^*(\delta^-(u_i))$, while not violating $f^*(\delta^+(v_i))
\geq  f^*(\delta^-(v_i))$ \bl{since $\bar f' (\delta^-(a_i)) = \bar f'(\delta^+(a_i))  = 1$ implies} $\bar f(\delta^+(v_i))
\geq  \bar f(\delta^-(v_i))+1$.
\end{proof}

Let $F$ be the Eulerian edge multiset obtained by taking $z^*_e$
copies of edge $e$. The proof concludes by showing that $F$ satisfies
all requirements of Lemma~\ref{lem:mainLCATSP}. The cost bound~\ref{lc:cost} follows by
Claim~\ref{claim:bound_on_wtz} since $w(F)=w^\top z^*$. The connectivity requirement~\ref{lc:cross},
namely that $|\delta^-_F(U_i) | \geq 1$ for every $i
        = 1,\ldots, \ell$, is immediate by the construction.

 Let us now
        show \ref{lc:degree}, that is, $|\delta^-_F(v)| =z^*(\delta^-(v))\leq 4$  for every
        $v\in V$ such that $x(\delta^-(v))=1$.
        Note that we have $x'(\delta^-(v)) \le 1$.
        Denote
        $g' = x' - f'$
        and
        $\bar g' = \bar x' - \bar f'$.
        By properties
        \ref{round:node-cap-f}, \ref{round:node-cap-g}
        of
        Lemma~\ref{lem:TU}
        we have
        \[
        \bar z'(\delta^-(v)) = \bar f'(\delta^-(v)) + \bar g'(\delta^-(v))
        \le
        \ceil{2f'(\delta^-(v))} + \ceil{2g'(\delta^-(v))}
        \le 3 \,, \]
        as the maximum value of the function
        $\ceil{2p} + \ceil{2q}$
        subject to $p + q \le 1$
        is $3$.
        Adding the paths $P_i$ \bl{and reversing the redirection of edges incident to auxiliary vertices $a_i$} may further
increase $z^*(\delta^-(v))$ over $\bar z'(\delta^-(v))$ by $1$.

Property
   \ref{lc:nobadcycle} requires that every subtour in $F$ that crosses a tight
     set in $\cL_{\geq 2}$ visit a vertex of the backbone. This 
     follows from Lemma~\ref{lem:witness-cross} since $z^*$ is a witnessed circulation.
\end{proof}

 \section{Completing the Puzzle: Proof of Theorem~\ref{thm:constantATSP}}
\label{sec:completepuzzle}
We now combine the techniques and algorithms of the previous sections
to obtain a constant-factor approximation algorithm for ATSP. In
multiple steps, we have reduced ATSP to finding
tours for vertebrate pairs. Every reduction step was polynomial-time and increased the
approximation ratio by a constant factor.  Hence, together they give a
constant-factor approximation algorithm for ATSP.

We now give an overview of these reductions and set the parameters. 
 Throughout,
$\varepsilon>0$ will be a fixed small value. We set $\delta=\deltaval$.
All approximation guarantees are with respect to the optimum value of
the Held-Karp relaxation $\LP(G,w)$.
The reduction proceeds using the following algorithmic subroutines.
\begin{itemize}
\item Corollary~\ref{cor:singleton} provides a polynomial-time
  $\nw$-approximation $\cA_{\mathrm{S}}$
  for singleton instances, with $\nw=18+\epsilon$.
 This will be used to find a (quasi-)backbone for
  irreducible instances (Lemma~\ref{lem:skeleton}).
\item Algorithm $\cA_{\mathrm{ver}}$, which, for a vertebrate pair
  $(\cI,B)$, finds a tour of cost $\kappa \valu(\cI)+\eta \lb_B(V)+w(B)$, where
  $\kappa= 2$ and $\eta=37+36\varepsilon$ (Corollary~\ref{cor:finish_skeleton}).
Algorithm $\cA_{\mathrm{ver}}$ uses the reduction of
  Theorem~\ref{thm:LoctoGlo} from \EPC{} to  ATSP.
\item Algorithm $\cA_{\mathrm{irr}}$ which, provided
  $\cA_{\mathrm{ver}}$ as above, obtains a
    polynomial-time $\rho$-approximation algorithm for irreducible instances, where 
$\rho=(\kappa +\eta(1-\delta)+\nw+3)/(2\delta-1)<55.61$ for
sufficiently small $\varepsilon>0$ (Theorem~\ref{thm:fromSimpletoGeneral}).
\item Algorithm $\cA_{\mathrm{lam}}$, which converts the
  $\rho$-approximation algorithm $\cA_{\mathrm{irr}}$ to a
  $2\rho/(1-\delta)$-approximation algorithm for an arbitrary
  laminarly-weighted instance $\cI$. Here,  $2\rho/(1-\delta)<\finalval$ (Theorem~\ref{thm:reduction_to_irreducible}).
\item Our final Algorithm $\cA_{\mathrm{ATSP}}$, which reduces an arbitrary
  input   weighted digraph $(G,w)$ to a laminarly-weighted
  instance, keeping the same approximation ratio (Theorem~\ref{thm:laminar}).
\end{itemize}

All in all we have thus obtained a polynomial-time  algorithm for  ATSP that returns a tour of value at most $\finalval$ times the Held-Karp lower bound.

\paragraph{Integrality gap}
\cref{thm:constantATSP} of course implies an upper bound of $\finalval$
on the integrality gap of the Held-Karp relaxation.
However, if we do not require a polynomial-time algorithm,
then the loss factor of $9(1+\varepsilon)$ in the reduction of \cref{thm:LocalToGlobal}
can be decreased to $5$.
Therefore non-constructively we can have
$\nw' = 10$
and
$\eta' = 1 + 5 \cdot 4$
(instead of $\eta = 1 + 9 \cdot (1 + \varepsilon) \cdot 4$),
which yields $\rho' < 35.04$
and a final integrality gap of at most $\finalintegralitygap$.
\begin{theorem} \label{thm:constantATSPintegralitygap}
  The integrality gap of the asymmetric Held-Karp relaxation is at most $\finalintegralitygap$.
\end{theorem}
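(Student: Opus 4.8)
The plan is to re-run, essentially verbatim, the chain of reductions of \cref{sec:completepuzzle}, but to invoke everywhere the \emph{existence} of a cheap tour instead of its polynomial-time computability. The only place this gains anything is \cref{thm:LocalToGlobal}: an $(\alpha,\wD)$-light algorithm for \EPC{} on $(\cI,B)$ gives a tour of weight at most $5\alpha\,\lb_{\cI}(\bar B)+\wD+w_{\cI}(B)$ if we do not insist on finding it efficiently, whereas only $9(1+\varepsilon)\alpha\,\lb_{\cI}(\bar B)+\wD+w_{\cI}(B)$ is achievable in polynomial time. Since the integrality gap is a purely existential quantity, I would use the factor $5$ throughout, which also lets us drop all $\varepsilon$'s.

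Concretely, first I would revisit the singleton case: \cref{thm:lcapprox} supplies a $(2,0)$-light algorithm for \EPC{} on singleton instances with $B=\emptyset$, and since $\lb_{\cI}(V)=\valu(\cI)$ for such instances, the non-constructive half of \cref{thm:LocalToGlobal} yields a tour of weight at most $5\cdot 2\cdot\valu(\cI)=10\valu(\cI)$; thus $\nw'=10$ replaces \cref{cor:singleton}. Next, for vertebrate pairs, \cref{thm:finish_skeleton} gives a $(4,\,2\valu(\cI)+\lb_{\cI}(\bar B))$-light algorithm, so the same non-constructive bound produces a tour of weight at most $5\cdot 4\cdot\lb_{\cI}(\bar B)+2\valu(\cI)+\lb_{\cI}(\bar B)+w_{\cI}(B)=2\valu(\cI)+21\,\lb_{\cI}(\bar B)+w_{\cI}(B)$, i.e., the hypothesis of \cref{thm:fromSimpletoGeneral} holds with $\kappa'=2$ and $\eta'=1+5\cdot 4=21$ (replacing \cref{cor:finish_skeleton}). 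I would then read the proof of \cref{thm:fromSimpletoGeneral} non-constructively --- noting that it now calls \cref{lem:skeleton} with the improved bound $\nw'$, so the quasi-backbone has weight at most $(\nw'+3)\valu(\cI)$ --- to conclude that every irreducible instance admits a tour of weight at most $\rho'\valu(\cI)$, where $\rho'=(\kappa'+\eta'(1-\delta)+\nw'+3)/(2\delta-1)$ is obtained by the computation of \cref{sec:completepuzzle} with $\nw'$ in place of $\nw$; with $\delta=\deltaval$ this evaluates to $\rho'<35.04$. Finally, the proofs of \cref{thm:reduction_to_irreducible} and \cref{thm:laminar} are bound-preserving reductions (a recursion and a weight-shift), so read existentially they upgrade $\rho'$ for irreducible instances first to $2\rho'/(1-\delta)$ for all laminarly-weighted instances and then to all edge-weighted digraphs; with $\delta=\deltaval$ and $\rho'<35.04$ this is at most $\finalintegralitygap$, which is the claimed bound on the integrality gap of $\LP(G,w)$.

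The work here is almost entirely bookkeeping rather than new mathematics. The one point I would be careful about is that each of \cref{thm:fromSimpletoGeneral,thm:reduction_to_irreducible,thm:laminar} really can be read as a statement about the existence of a tour of the stated weight (equivalently, about the integrality gap) and not merely about polynomial-time algorithms: this is clear from their proofs, since each introduces no loss beyond the explicit multiplicative factor and each recursion bottoms out in finitely many steps. The remaining task is the arithmetic: checking $\rho'<35.04$ and then $2\rho'/(1-\delta)\le\finalintegralitygap$ for $\delta=\deltaval$, $\nw'=10$, $\kappa'=2$, $\eta'=21$, which is the most error-prone part of the argument.
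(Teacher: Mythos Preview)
Your proposal is correct and follows essentially the same approach as the paper: use the non-constructive bound $5\alpha\,\lb_{\cI}(\bar B)+\wD+w_{\cI}(B)$ from \cref{thm:LocalToGlobal} in place of the $9(1+\varepsilon)$ factor, yielding $\nw'=10$, $\eta'=1+5\cdot 4=21$, hence $\rho'<35.04$ and a final bound of $2\rho'/(1-\delta)\le\finalintegralitygap$. The paper's own argument is a one-paragraph sketch making exactly these substitutions.
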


\paragraph{Asymmetric Traveling Salesman \emph{Path} Problem}

In the Asymmetric Traveling Salesman Path Problem (ATSPP),
in addition to the usual ATSP input,
we are also given two special vertices $s, t \in V$
and wish to find a walk that visits all vertices,
starts from $s$, and ends at $t$.
Feige and Singh~\cite{FeigeS07}
proved that if there is a $\beta$-approximation algorithm for ATSP,
then there is a $((2+\epsilon)\beta)$-approximation algorithm for ATSPP
for any $\epsilon > 0$. Together with \cref{thm:constantATSP}, this implies:
\begin{corollary}
  There is a polynomial-time algorithm for ATSPP that returns a tour of
  value at most $\finalvalFS$ times the integral optimum.
\end{corollary}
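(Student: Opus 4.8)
The plan is to combine \cref{thm:constantATSP} with the black-box reduction of Feige and Singh~\cite{FeigeS07}. First I would note that \cref{thm:constantATSP} gives a polynomial-time algorithm for ATSP producing a tour of weight at most $\finalval$ times the Held-Karp lower bound; since the Held-Karp lower bound never exceeds the weight of an optimal tour, this is in particular a polynomial-time $\finalval$-approximation algorithm for ATSP with respect to the integral optimum. (As usual, we work in the graphic formulation, where the triangle inequality holds, so this is exactly the setting required by the Feige--Singh reduction.)

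Next I would apply the reduction of~\cite{FeigeS07}: given any $\beta$-approximation algorithm for ATSP and any $\epsilon > 0$, it yields a polynomial-time $((2+\epsilon)\beta)$-approximation algorithm for ATSPP, with the guarantee measured against the integral optimum of the ATSPP instance. Instantiating this with $\beta = \finalval$ produces a polynomial-time $((2+\epsilon)\cdot\finalval)$-approximation algorithm for ATSPP.

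Finally it remains only to pick $\epsilon$ small enough. Since $2\cdot\finalval = 1012 < \finalvalFS$, it suffices to take any $\epsilon$ with $\epsilon\cdot\finalval \le \finalvalFS - 2\finalval = 2$, for instance $\epsilon = 1/\finalval$; then $(2+\epsilon)\cdot\finalval \le \finalvalFS$, which gives the claimed bound. There is essentially no technical obstacle: the only points worth checking are that the Feige--Singh reduction is stated relative to the integral optimum (not a linear-programming relaxation) and that it is a polynomial-time reduction composable with our algorithm — both hold, and the analogous statement relative to the Held-Karp-type lower bound for ATSPP (needed for an integrality-gap bound rather than an approximation guarantee) is instead provided by~\cite{KohneTV18}, as already noted above.
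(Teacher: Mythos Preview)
Your proposal is correct and follows exactly the same approach as the paper: the corollary is stated immediately after recalling the Feige--Singh reduction $(2+\epsilon)\beta$ and is derived from \cref{thm:constantATSP} with $\beta=\finalval$, choosing $\epsilon$ small enough that $(2+\epsilon)\cdot\finalval\le\finalvalFS$. Your additional remarks about the guarantee being relative to the integral optimum (rather than the LP bound, which is handled separately via~\cite{KohneTV18}) are accurate and mirror the paper's own distinction.
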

Note that the approximation ratio here is not bounded in terms of the Held-Karp lower bound\footnote{
For ATSPP, this is defined as the optimal value of a relaxation that is similar to $\LP(G,w)$, with the differences that $x(\delta^+(s)) - x(\delta^-(s)) = 1$, $x(\delta^-(t)) - x(\delta^+(t)) = 1$, and the cuts $S$ with $s\in S$ are only required to have at least one outgoing edge (but possibly no incoming edge).}.
Köhne, Traub and Vygen~\cite{KohneTV18} give a similar reduction as Feige and Singh,
but for integrality gaps,
with a loss of $\beta \mapsto 4 \beta - 3$.
Together with \cref{thm:constantATSPintegralitygap}, this implies:
\begin{corollary}
  The integrality gap of the asymmetric Held-Karp relaxation for ATSPP is at most $\finalintegralitygapKTV$.
\end{corollary}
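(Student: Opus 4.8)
The plan is to combine the integrality gap bound for ATSP established in Theorem~\ref{thm:constantATSPintegralitygap} with the black-box reduction of K\"ohne, Traub and Vygen~\cite{KohneTV18}, which transfers integrality gap bounds from ATSP to ATSPP at the cost of the transformation $\beta \mapsto 4\beta - 3$. Concretely, Theorem~\ref{thm:constantATSPintegralitygap} gives that the integrality gap of the asymmetric Held-Karp relaxation $\LP(G,w)$ for ATSP is at most $\finalintegralitygap$. Feeding $\beta = \finalintegralitygap$ into the KTV reduction yields an upper bound of $4 \cdot \finalintegralitygap - 3 = \finalintegralitygapKTV$ on the integrality gap of the corresponding Held-Karp-type relaxation for ATSPP (the relaxation described in the footnote accompanying the earlier ATSPP corollary).

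First I would recall the precise statement we are importing: \cite{KohneTV18} shows that if the integrality gap of the asymmetric Held-Karp relaxation for ATSP is at most $\beta$, then the integrality gap of the analogous relaxation for ATSPP is at most $4\beta - 3$. This is purely a statement about the relationship between the two LP relaxations and their integral optima, so no additional structural work on our side is required --- in particular we do not need the constructive (polynomial-time) version of our result here, only the existential integrality gap bound, which is exactly what Theorem~\ref{thm:constantATSPintegralitygap} provides (using the improved loss factor of $5$ rather than $9(1+\varepsilon)$ in Theorem~\ref{thm:LocalToGlobal}, as discussed in the integrality gap paragraph of Section~\ref{sec:completepuzzle}).

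The only remaining step is the arithmetic: with $\beta = \finalintegralitygap$ we obtain $4\beta - 3 = \finalintegralitygapKTV$, which is the claimed bound. There is essentially no obstacle here, since all the heavy lifting was done in establishing Theorem~\ref{thm:constantATSPintegralitygap} and in the cited reduction; the mild care needed is only to make sure we are quoting the integrality gap version of the KTV reduction (loss $4\beta - 3$), as opposed to their or Feige--Singh's approximation-algorithm version (loss $(2+\epsilon)\beta$), and that the relaxation whose gap we bound for ATSPP is the one defined in the footnote of the preceding corollary.
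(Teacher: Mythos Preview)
Your proposal is correct and matches the paper's approach exactly: the corollary is obtained by plugging the ATSP integrality gap bound $\beta=\finalintegralitygap$ from Theorem~\ref{thm:constantATSPintegralitygap} into the K\"ohne--Traub--Vygen reduction $\beta\mapsto 4\beta-3$, yielding $4\cdot\finalintegralitygap-3=\finalintegralitygapKTV$.
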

Finally, since their reduction is constructive, together with \cref{thm:constantATSP} we get:
\begin{corollary}
  There is a polynomial-time algorithm for ATSPP that returns a tour of
  value at most $\finalvalKTV$ times the Held-Karp lower bound.
\end{corollary}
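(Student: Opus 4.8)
The plan is to combine the polynomial-time guarantee of \cref{thm:constantATSP} with the constructive black-box reduction from ATSP to ATSPP due to Köhne, Traub and Vygen~\cite{KohneTV18}. First I would invoke \cref{thm:constantATSP} to obtain a polynomial-time algorithm $\cA_{\mathrm{ATSP}}$ that, on any edge-weighted strongly connected digraph $(G,w)$, returns a tour of weight at most $\finalval$ times the optimal value of $\LP(G,w)$, i.e.\ the Held-Karp lower bound.

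Next I would recall the reduction of~\cite{KohneTV18}: it shows that any polynomial-time algorithm which, for every ATSP instance, outputs a tour of weight at most $\beta$ times the Held-Karp lower bound can be converted, in polynomial time, into an algorithm which, for every ATSPP instance $(G,w,s,t)$, outputs an $s$--$t$ walk visiting all vertices of weight at most $(4\beta-3)$ times the corresponding Held-Karp lower bound for ATSPP (the relaxation with the modified degree constraints at $s$ and $t$ and the one-sided cut constraints for sets containing $s$, as described above). The key point is that, unlike the earlier Feige--Singh reduction~\cite{FeigeS07} — which relates the output only to the integral optimum — this reduction preserves the comparison against the LP value, and moreover the transformation is algorithmic rather than merely existential.

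Applying this reduction with $\beta = \finalval$ then yields a polynomial-time algorithm for ATSPP whose output has weight at most
\[
4 \cdot \finalval - 3 = \finalvalKTV
\]
times the Held-Karp lower bound, which is exactly the claimed statement.

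There is essentially no technical obstacle here; the only things to check are that the hypothesis required by the reduction of~\cite{KohneTV18} — a polynomial-time ATSP algorithm whose guarantee is stated with respect to the Held-Karp bound — is precisely what \cref{thm:constantATSP} supplies, and that the composed algorithm still runs in polynomial time. Both are immediate. (As a sanity check, applying the same reduction to the non-constructive bound of \cref{thm:constantATSPintegralitygap} reproduces the integrality-gap statement of the preceding corollary, since $4 \cdot \finalintegralitygap - 3 = \finalintegralitygapKTV$.)
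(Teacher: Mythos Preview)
Your proposal is correct and matches the paper's own argument essentially verbatim: the paper simply notes that the K\"ohne--Traub--Vygen reduction is constructive and applies it with \cref{thm:constantATSP} to obtain the factor $4\cdot\finalval-3=\finalvalKTV$. There is nothing further to add.
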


\section{Conclusion}
\label{sec:conclusion}

In this paper we gave the first constant-factor approximation algorithm for ATSP. The result was obtained in three steps. First, we gave a generic reduction from ATSP to \EPC{}. Then, we showed how to simplify general ATSP instances to very structured instances (vertebrate pairs) by only incurring a constant-factor loss in the approximation guarantee. Finally,  these instances were solved using the connection to \EPC{}.

Subsequent to our work, Traub and Vygen \cite{Traub2020} have attained a substantially better approximation guarantee $22+\varepsilon$, with a matching integrality gap $22$. The main improvement in the approximation factor is due to eliminating the reduction to irreducible instances, and using a variant of Subtour Partition Cover for arbitrary instances in a careful recursive framework. 
It remains open whether one can get an approximation algorithm (nearly) matching the integrality gap of the  Held--Karp relaxation, and whether the current lower bound on the gap is tight.

\begin{openquestion}
 Is the integrality gap of the standard LP relaxation upper-bounded by $2$?
\end{openquestion}
We remark that the above question is also open for the simpler case of unweighted instances, where the best known upper bound is $13$~\cite{Svensson15}.

As mentioned in the introduction, Asadpour et al.~\cite{AsadpourGMGS10}
introduced a different approach for ATSP based on so-called thin spanning
trees. Our algorithm does not imply a better construction of such trees and the
$O(\textrm{poly}\log\log n)$-thin trees of~\cite{AnariG15} remain the best such
(non-constructive) result. Whether trees
of better thinness exist is an interesting question. Also, as shown in~\cite{An2010}, the construction of
$O(1)$-thin trees would lead to a constant-factor approximation algorithm for the
bottleneck ATSP problem. There, we are given a
complete digraph with edge weights satisfying the triangle inequality,
and we wish to find
a Hamiltonian cycle that minimizes the maximum edge weight. A tight
$2$-approximation algorithm for bottleneck \emph{symmetric} TSP was given
already in~\cite{FLEISCHNER197429,Lau1981,PARKER1984269}, but no constant-factor approximation is known for
bottleneck ATSP. 
\begin{openquestion}
 Is there a $O(1)$-approximation algorithm  for bottleneck ATSP? 
\end{openquestion}
We believe that this is an interesting open question in
itself, and progress on it may shed light on the existence of $O(1)$-thin trees.

\section*{Acknowledgments}

The authors are grateful to Johan H\aa stad and Hyung-Chan An for inspiring
discussions and valuable comments that influenced this work, and to
Jens Vygen for useful feedback on the  manuscript.
We are grateful to Andr\'as Frank for pointing us to the reference \cite{Karzanov}.
We are also grateful to the Simons Foundation which has sponsored several workshops and programs where numerous inspiring discussions have taken place.

\bibliographystyle{alpha}
{\small \bibliography{../atsp}}

\end{document}